\documentclass[%
aps,
twocolumn,
superscriptaddress,
amsmath,amssymb,
pra,
floatfix,
]{revtex4-2}
\usepackage{url}
\usepackage{graphicx}% Include figure files
\usepackage{placeins}
\usepackage{dcolumn}% Align table columns on decimal point
\usepackage{bm}% bold math
\usepackage[colorlinks, linkcolor=red, anchorcolor=green, citecolor=blue]{hyperref}
\usepackage{physics}
\usepackage{dsfont}
\usepackage{tikz,pgfplots}
\usepackage{enumerate}
\usepackage{subfigure}
\usepackage{bbold}
\usepackage{makecell}
\usepackage{array}

\makeatletter
\newcommand{\thickhline}{%
    \noalign {\ifnum 0=`}\fi \hrule height 1pt
    \futurelet \reserved@a \@xhline
}
\newcolumntype{"}{@{\hskip\tabcolsep\vrule width 1pt\hskip\tabcolsep}}
\makeatother

\usepackage{multirow}
\usepackage{tabu}
\usepackage{textcomp,booktabs}
\usepackage{colortbl}
\usepackage[T1]{fontenc}
\definecolor{mygray}{gray}{0.9}
\definecolor{mypink}{rgb}{0.99,0.91,0.95}
\definecolor{mycyan}{cmyk}{0.3,0,0,0}

\newcommand{\id}{\text{id}}

\newcommand{\mB}{\mathcal{B}}
\newcommand{\mT}{\mathcal{T}}
\newcommand{\mE}{\mathcal{E}}
\newcommand{\mF}{\mathcal{F}}

\newcommand{\mD}{\mathcal{D}}

\newcommand{\mL}{\mathcal{L}}

\newcommand{\mO}{\mathcal{O}}

\newcommand{\mH}{\mathcal{H}}

\newcommand{\mU}{\mathcal{U}}
\newcommand{\T}{\mathbf{T}}

\newcommand{\1}{\mathbb{1}}

\DeclareMathOperator*{\argmin}{arg\,min}

\makeatletter
\newcommand*{\rom}[1]{\expandafter\@slowromancap\romannumeral #1@}
\makeatother

\makeatletter
\newcommand*{\Relbarfill@}{\arrowfill@\Relbar\Relbar\Relbar}
\newcommand*{\xeq}[2][]{\ext@arrow 0055\Relbarfill@{#1}{#2}}
\makeatother

\usepackage{amsthm}
\newtheorem*{thm*}{Theorem}
\newtheorem{thm}{Theorem}
\newtheorem{lem}{Lemma}
\newtheorem{cor}{Corollary}

\usepackage{tcolorbox}
\tcbuselibrary{theorems}

\newtcbtheorem[number within=section, use counter*=thm]{mythm}{Theorem}%
{colback=red!5,colframe=red!35!red,fonttitle=\bfseries}{thm}

\usepackage{tcolorbox}
\tcbuselibrary{theorems}

\newtcbtheorem[number within=section, use counter*=thm]{mylem}{Lemma}%
{colback=green!5,colframe=green!35!black,fonttitle=\bfseries}{lem}

\usepackage{tcolorbox}
\tcbuselibrary{theorems}

\newtcbtheorem[number within=section, use counter*=thm]{mydef}{Definition}%
{colback=yellow!5,colframe=yellow!70!black,fonttitle=\bfseries}{def}

\usepackage{tcolorbox}
\tcbuselibrary{theorems}

\newtcbtheorem[number within=section, use counter*=thm]{mycor}{Corollary}%
{colback=blue!5,colframe=blue!70!black,fonttitle=\bfseries}{cor}

\usepackage{tcolorbox}
\tcbuselibrary{theorems}

\newtcbtheorem[number within=section, use counter*=thm]{myprop}{Proposition}%
{colback=pink!5,colframe=pink!70!black,fonttitle=\bfseries}{prop}

\pgfplotsset{compat=1.17}

\begin{document}

%\preprint{APS/123-QED}

%%%%%%%%%%%%%%%%%%%%%%%%%%%%%%%%%%%%%%%%%%%%%%%%%%%%%%%%%%%%%%%%%%%%%%%%%%%%%%%%%%%%%%%%%%%%%%%%%%%%%%%%%%%%%%%%%%%%%%

\title{Practical Quantum Broadcasting}% Force line breaks with \\
% \thanks{A footnote to the article title}%

%%%%%%%%%%%%%%%%%%%%%%%%%%%%%%%%%%%%%%%%%%%%%%%%%%%%%%%%%%%%%%%%%%%%%%%%%%%%%%%%%%%%%%%%%%%%%%%%%%%%%%%%%%%%%%%%%%%%%%

\author{Ximing Wang}
\email{canoming.sktt@gmail.com}
\affiliation{Nanyang Quantum Hub, School of Physical and Mathematical Sciences, Nanyang Technological University, Singapore 639798, Singapore}

\author{Yunlong Xiao}
\email{mathxiao123@gmail.com}
\affiliation{Institute of High Performance Computing (IHPC), Agency for Science, Technology and Research (A*STAR), 1 Fusionopolis Way, \#16-16 Connexis, Singapore 138632, Republic of Singapore}
%\affiliation{Quantum Innovation Centre (Q.InC), Agency for Science Technology and Research (A*STAR), 2 Fusionopolis Way, Innovis \#08-03, Singapore 138634, Republic of Singapore}

%%%%%%%%%%%%%%%%%%%%%%%%%%%%%%%%%%%%%%%%%%%%%%%%%%%%%%%%%%%%%%%%%%%%%%%%%%%%%%%%%%%%%%%%%%%%%%%%%%%%%%%%%%%%%%%%%%%%%%

\begin{abstract}
Incorporating sample efficiency, by requiring the number of states consumed by broadcasting does not exceed that of a naive prepare-and-distribute strategy, gives rise to the no practical quantum broadcasting theorem.
To navigate this limitation, we introduce approximate and probabilistic virtual broadcasting and derive analytic expressions for their optimal sample complexity overheads.
Allowing deviations at the receivers restores sample efficiency even in the 1-to-2 approximate setting, whereas probabilistic protocols obey a stronger no-go theorem that excludes all sample efficient 1-to-2 implementations for arbitrary dimension and success probability.
Rather counterintuitive, this obstruction does not persist at larger receiver numbers: for qubit systems, practical 1-to-6 virtual broadcasting becomes attainable.
These results elevate sample complexity from a technical constraint to a defining operational principle, opening an unexplored route to the efficient distribution of quantum information.
%These results establish sample complexity as a defining operational principle and open an unexplored direction for efficient quantum information distribution.
\end{abstract}

\maketitle

%%%%%%%%%%%%%%%%%%%%%%%%%%%%%%%%%%%%%%%%%%%%%%%%%%%%%%%%%%%%%%%%%%%%%%%%%%%%%%%%%%%%%%%%%%%%%%%%%%%%%%%%%%%%%%%%%%%%%%

\noindent \textbf{Introduction}--Distributing information to multiple recipients is a primitive that underlies the scalability of any communication architecture. 
Quantum theory, however, imposes a fundamental restriction~\cite{Wootters1982clone,DIEKS1982clone,RevModPhys.77.1225}: 
transformations whose marginals reproduce noiseless quantum channels (see Fig.~\ref{fig:BC}(a)) are ruled out by linearity of Quantum mechanics~\cite{PhysRevLett.76.2818}, and therefore lie beyond the standard operational framework. 
This limitation has recently been sharpened into a stronger no-go principle, the no practical quantum broadcasting theorem~\cite{z2pr-zbwl,8g6j-w7ld}, which extends the restriction from quantum channels to arbitrary linear maps. 
In particular, no such map can simultaneously satisfy sample efficiency, unitary covariance, permutation invariance, and classical consistency.

What distinguishes practical broadcasting from earlier formulations is the elevation of sample complexity to a defining constraint. 
A protocol that consumes more samples than the naive strategy of preparing independent copies and distributing them to the receivers offers no operational advantage, and is therefore not meaningful in practice. 
This establishes sample efficiency as the natural benchmark against which any virtual implementation must be assessed~\cite{PhysRevLett.132.110203}. 
From this perspective, the incompatibility of complexity with symmetry and consistency requirements prompts an important question: 
can feasibility be restored once some conditions are relaxed? 
At the same time, the idealized regime, where receiver marginals exactly reproduce noiseless quantum channels and protocols succeed deterministically, is seldom accessible in realistic settings.
This motivates extending the framework to approximate (see Fig.~\ref{fig:BC}(b)) and probabilistic (see Fig.~\ref{fig:BC}(c)) implementations, in which small deviations from the ideal marginals or non-unit success probabilities are permitted.

\begin{figure}[ht]
\centering   
\includegraphics[width=0.48\textwidth]{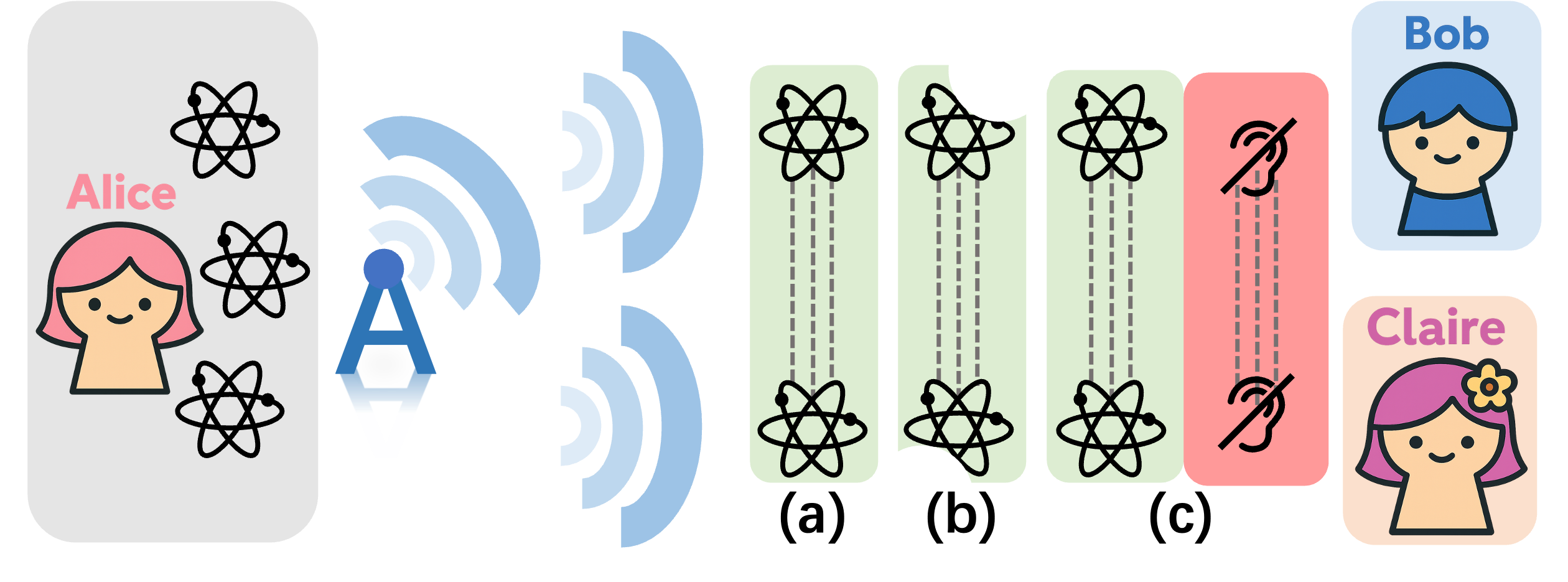}
\caption{(Color online) \textbf{Broadcasting Protocols}.  
(a) Broadcasting: Alice prepares independent copies of the input state and distributes them to the receivers, Bob and Claire. The corresponding process produces outputs whose marginals exactly reproduce the input state.
(b) Approximate Broadcasting: The process generates outputs whose marginals approximate the input state up to a prescribed error.
(c) Probabilistic Broadcasting: Broadcasting succeeds with a generally non-unit probability.
}
\label{fig:BC}
\end{figure}

In this work, we establish frameworks for both approximate and probabilistic virtual broadcasting, and delineate the regimes in which the sample efficiency requirement can be met. 
For approximate virtual broadcasting, we derive an analytic solution for the minimal sample complexity overhead and show that, even in the 1-to-2 setting, error tolerances at the receivers suffice to restore feasibility. 
For probabilistic virtual broadcasting, we obtain an analytic characterization of the sample complexity for general 1-to-$N$ transformations, which in turn yields a strengthened no-go statement: for arbitrary system dimension and success probability, no 1-to-2 virtual protocol can satisfy sample efficiency.
At the same time, we establish that practical 1-to-6 virtual broadcasting is attainable for qubit systems, revealing a qualitative departure from the conventional paradigm, where the impossibility of 
1-to-2 broadcasting excludes all large-scale realizations.
Together, these results demonstrate that once sample complexity is taken into account, the operational landscape of quantum information processing is fundamentally reshaped.

%%%%%%%%%%%%%%%%%%%%%%%%%%%%%%%%%%%%%%%%%%%%%%%%%%%%%%%%%%%%%%%%%%%%%%%%%%%%%%%%%%%%%%%%%%%%%%%%%%%%%%%%%%%%%%%%%%%%%%

\noindent \textbf{Practical Broadcasting}--Consider a sender, Alice, who attempts to broadcast an unknown quantum state to two receivers, Bob and Claire, through $\mE_{A\to BC}$. 
Operationally, broadcasting demands that each receiver obtains a state identical to the original input when examined locally. 
This requirement can be formalized through the following broadcasting condition (BC): $\Tr_{B}\circ\,\mE=\id_{A\to C}$, and $\Tr_{C}\circ\,\mE=\id_{A\to B}$.
Here, $\Tr_{B}$ and $\Tr_{C}$ denote the partial trace over the respective receiver subsystems, while $\id$ denotes the identity channel.

The operational motivation for broadcasting is to reduce the number of copies of an unknown quantum state required for downstream tasks.
Otherwise, independent preparation and distribution would suffice, rendering the task trivial.
To formalize this distinction, recent works~\cite{z2pr-zbwl,8g6j-w7ld} introduced sample efficiency.
Suppose Bob and Claire require $n_1$ and $n_2$ copies to estimate observables $\mO_B$ and $\mO_C$.
A protocol is sample efficient (SE) if the number of input copies it consumes, $n$, satisfies $n<n_1+n_2$.
Failing this, the naive strategy is strictly more efficient.

Three structural constraints naturally arise -- two reflecting symmetry and one enforcing consistency.
Unitary covariance (UC) requires that applying a unitary prior to broadcasting is equivalent to applying it locally afterwards, $\mE\circ\,\mU=\mU\otimes\mU\circ\,\mE$, while permutation invariance (PI) demands invariance under exchange of the receivers, $\mathrm{SWAP}\circ \mE = \mE$.
Classical consistency (CC) further imposes that complete dephasing reduces the transformation to a classical broadcasting map.
The no practical quantum broadcasting theorem shows that even when the admissible transformations are extended to arbitrary linear maps, no protocol can simultaneously satisfy SE, UC, PI, and CC.
Subsequent works~\cite{okada2025virtualphasecovariantquantumbroadcasting} explored virtual broadcasting, replacing UC with weaker symmetries such as phase or flip covariance, yet practicality remains unattained as SE is still violated.

The limitation originates from a fundamental tension between SE and BC: any virtual operation obeying BC necessarily fails SE. 
Let $n_Q:=\max\{n_1,n_2\}$ denote the maximal number of copies required by Bob and Claire to estimate their observables. 
Virtual broadcasting requires $((3d-1)/(d+1))^2n_Q$ input copies~\cite{PhysRevA.110.012458}, where the coefficient depends on the system dimension $d:=\dim A$.
By contrast, the naive strategy of preparing independent copies requires only $2n_Q$ samples. 
Achieving sample efficiency therefore demands $((3d-1)/(d+1))^2<2$, which implies $d<1.522$, incompatible with any quantum system, and thus leads to the following lemma

\begin{lem}\label{lem:MT-no-virtual-broadcasting}
No 1-to-2 virtual operation can simultaneously satisfy BC and SE.    
\end{lem}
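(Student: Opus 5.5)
The plan is to reduce the statement to a comparison of two sample costs. On one side is the optimal simulation overhead of the best virtual broadcasting map. On the other is the cost $2n_Q$ of the naive prepare-and-distribute strategy.

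First, I will formalize a virtual operation as a Hermitian-preserving, trace-preserving linear map $\mE_{A\to BC}$. Such a map can be written as $\mE=c_+\mE_+-c_-\mE_-$ with $\mE_\pm$ channels. It is simulated by quasi-probability sampling with cost $\gamma(\mE)=c_++c_-$.

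Next, I will use the standard Hoeffding-type argument. Estimating the marginal observables $\mO_B,\mO_C$ to the same precision as a physical protocol needs $\gamma(\mE)^2$ times as many runs. Each run consumes one input copy. So a virtual protocol obeying BC needs at least $\gamma_{\min}^2\,n_Q$ copies, where
\begin{equation}
\gamma_{\min}:=\min\{\gamma(\mE):\Tr_B\circ\mE=\id_{A\to C},\ \Tr_C\circ\mE=\id_{A\to B}\}.
\end{equation}
I will take $n_Q$ rather than $n_1+n_2$ because both receivers are served by the same runs.

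The key input is the known closed form $\gamma_{\min}=(3d-1)/(d+1)$ from Ref.~\cite{PhysRevA.110.012458}. I will invoke it rather than rederive it. Were I to rederive it, I would symmetrize any feasible $\mE$ by twirling with $\mU\otimes\mU$ and with SWAP. This preserves BC and does not increase $\gamma$ by convexity, so the optimization reduces to a few parameters on the commutant. I would then solve the resulting small SDP and match it with a dual certificate. That dual step is the main obstacle; citing the result sidesteps it.

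Finally, I will compare with the naive strategy, which uses $n_1+n_2\le 2n_Q$ copies, with equality when $n_1=n_2$. Taking $n_1=n_2=n_Q$, SE would require $((3d-1)/(d+1))^2<2$. This is equivalent to $3d-1<\sqrt2(d+1)$, that is, $d<(1+\sqrt2)/(3-\sqrt2)\approx1.522$. Every quantum system has $d\ge2$, so the condition fails and no virtual operation can satisfy BC and SE together.

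I also need to check that the lower bound $\gamma^2 n_Q$ is a genuine necessity and not just a sufficient count. For this I would appeal to the optimality of quasi-probability estimators in the worst case over observables, as is standard in the virtual broadcasting literature.
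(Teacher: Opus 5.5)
Your proposal takes the same route as the paper. Both invoke the closed-form optimal overhead $(3d-1)/(d+1)$ from Ref.~\cite{PhysRevA.110.012458}, turn it into the virtual sample cost $((3d-1)/(d+1))^2 n_Q$ via the Hoeffding rescaling, and compare it with the naive cost $2n_Q$, which leads to the unsatisfiable condition $d<(1+\sqrt{2})/(3-\sqrt{2})\approx 1.522$. Your concern that $\gamma^2 n_Q$ should be a necessary rather than merely sufficient count is fair, but the paper does not prove it either; it simply adopts the Hoeffding-based count as the operational cost.
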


This lemma complements, but does not imply, the no practical quantum broadcasting theorem. 
The latter applies to arbitrary linear maps, whereas the present result is restricted to virtual operations.

%%%%%%%%%%%%%%%%%%%%%%%%%%%%%%%%%%%%%%%%%%%%%%%%%%%%%%%%%%%%%%%%%%%%%%%%%%%%%%%%%%%%%%%%%%%%%%%%%%%%%%%%%%%%%%%%%%%%%%

\noindent \textbf{Approximate Broadcasting}--The broadcasting condition is fundamentally incompatible with SE for virtual operations.
In practice, however, perfect transformations are unattainable owing to environmental noise and control imperfections, motivating a relaxation of BC that allows deviations at the receivers and gives rise to approximate virtual broadcasting (ABC).
The central question is whether such a relaxation can reconcile BC with SE while preserving an advantage over protocols restricted to solely quantum operations.
We show that this is indeed the case: in the 1-to-2 setting, ABC emerges as the only paradigm compatible with SE.

Deviations from ideal process are quantified using the channel fidelity $F_{\text{Chan}}$~\cite{Kretschmann_2004}.
For a quantum channel $\mE$ with Choi operator $J^{\mE}$, the normalized Choi state is $\phi^{\mE}:= J^{\mE}/d$, and $F_{\text{Chan}}(\mE):=\Tr[\phi^{\mE}\cdot\phi^{+}]$, with $\phi^{+}$ the maximally entangled state.
Receiver errors $\epsilon_1$ and $\epsilon_2$ thus imposes $F_{\text{Chan}}(\Tr_{C}\circ\,\mE_{A\to BC})=1-\epsilon_1$ and $F_{\text{Chan}}(\Tr_{B}\circ\,\mE_{A\to BC})=1-\epsilon_2$.
Any virtual operation admits a linear decomposition into quantum channels, $a\mE_1 - b\mE_2$, which can be implemented probabilistically with classical outcomes rescaling $a+b$, incurring a sample cost $(a+b)^2 n_Q$.
Optimizing over all such decompositions yields the minimal sample overhead, characterized by the following semidefinite programming (SDP).
\begin{align}\label{eq:MT_SDP_Approximate}
    u_2(\epsilon_1, \epsilon_2):=
    \min \quad 
    & a+b\\
    \text{s.t.} \quad 
    &\Tr[\Tr_{B}[J_1-J_2]\cdot\phi^{+}]/d=1-\epsilon_2,\notag\\
    &\Tr[\Tr_{C}[J_1-J_2]\cdot\phi^{+}]/d=1-\epsilon_1,\notag\\
    & J_1\geqslant0,\,\, 
    \Tr_{BC}[J_1]=a\,\1_A,\notag\\
    &J_2\geqslant0,\,\,
    \Tr_{BC}[J_2]=b\,\1_A,\,a-b=1.\notag
\end{align}

We begin by examining the symmetry structure of the underlying quantum dynamics.
Remarkably, although UC is not imposed in the formulation of ABC, the protocol achieving the optimal sample complexity overhead in Eq.~\eqref{eq:MT_SDP_Approximate} satisfies UC automatically.
To characterize this structure, we introduce the triple-twirling $\mT(X):=\int_{\text{Haar}}dU \,
    \left(U^*\otimes U\otimes U\right)\cdot X \cdot \left(U^{\T}\otimes U^{\dagger}\otimes U^{\dagger}\right)$,
where the integration is taken over the Haar measure on the unitary group.
The feasibility conditions defining $u_2(\epsilon_1, \epsilon_2)$ are invariant under $\mT$:
if $\{a,b,J_1,J_2\}$ is feasible, then so is $\{a,b,\mT(J_1),\mT(J_2)\}$.
This restricts the search for optimal solutions, without loss of generality, to virtual broadcasting $\mE$ invariant under triple-twirling, i.e., $\mT(\mE)=\mE$, leading directly to

\begin{lem}[ABC UC Optimality]\label{lem:MT-ABC-UC-Optimality}
Approximate virtual broadcasting with minimal sample complexity can always be implemented by a UC map.
\end{lem}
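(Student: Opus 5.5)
The plan is a standard symmetrization argument for the SDP in Eq.~\eqref{eq:MT_SDP_Approximate}. Take any optimal feasible tuple $\{a,b,J_1,J_2\}$; it exists because the feasible set is closed and the objective is bounded below by $1$. We then show that $\{a,b,\mT(J_1),\mT(J_2)\}$ is also feasible. Since the objective $a+b$ is unchanged, the twirled tuple is optimal as well. The virtual map $\mE=\mE_1-\mE_2$ it defines has Choi operator $\mT(J_1)-\mT(J_2)$. Finally we check that a Choi operator invariant under $\mT$ corresponds exactly to a UC map.

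The feasibility check goes constraint by constraint. First, $\mT$ is a convex mixture of unitary conjugations, so $J_i\geqslant 0$ implies $\mT(J_i)\geqslant0$. Second, for the normalisation, tracing out $BC$ removes the $U\otimes U$ factor on $BC$. This gives
\begin{equation*}
\Tr_{BC}[\mT(J_i)]=\int dU\, U^*\,\Tr_{BC}[J_i]\,U^{\T},
\end{equation*}
which equals $a\1_A$ (respectively $b\1_A$), since $U^*U^{\T}=\1$. Third, for the fidelity constraints, note that $\Tr_B[\mT(J)]=\int dU\,(U^*\otimes U)\Tr_B[J](U^{\T}\otimes U^\dagger)$. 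We then use $(U^{\T}\otimes U^\dagger)\phi^+(U^*\otimes U)=\phi^+$, which follows from the transpose trick $(X\otimes\1)\ket{\phi^+}=(\1\otimes X^{\T})\ket{\phi^+}$. By cyclicity of the trace, $\Tr[\Tr_B[\mT(J)]\phi^+]=\Tr[\Tr_B[J]\phi^+]$, and the same argument applies to $\Tr_C$. All constraints therefore survive, with $a,b$ unchanged.

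Next comes the link between $\mT$-invariance and UC. Write $J^{\mE}=\sum_{ij}\ket{i}\!\bra{j}\otimes\mE(\ket{i}\!\bra{j})$. The Choi operator of $(\mU^\dagger\otimes\mU^\dagger)\circ\mE\circ\mU$ is $(U^*\otimes U^\dagger\otimes U^\dagger)J^{\mE}(U^{\T}\otimes U\otimes U)$, again by the transpose trick. Hence $\mT(J^{\mE})$ is the Choi operator of the Haar average $\int dU\,(\mU^\dagger\otimes\mU^\dagger)\circ\mE\circ\mU$; here we relabel $U\to U^\dagger$ using invariance of the Haar measure. By left-invariance of the Haar measure, this averaged map satisfies $\mE'\circ\mV=\mV\otimes\mV\circ\mE'$ for every unitary $V$, so it is UC. The twirled decomposition $\mT(J_1)-\mT(J_2)$ is itself a difference of two UC maps, each being $a$ or $b$ times a UC channel. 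So the optimal protocol is implemented by UC components, and the claim of Lemma~\ref{lem:MT-ABC-UC-Optimality} follows.

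The main obstacle is getting the index and conjugation conventions right. The twirl acts with $U^*$ on the input leg of the Choi operator, and one must verify that this matches, rather than inverts, the covariance relation, and that $\phi^+$ is invariant under $U^{\T}\otimes U^\dagger$ and not under $U\otimes U$. I would fix the Choi convention explicitly and check these identities directly. Everything else is routine linearity and Haar invariance.
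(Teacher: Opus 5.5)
Your proposal follows the paper's own argument: twirl an optimizer with $\mT$, check that positivity, the normalisations $\Tr_{BC}[J_i]$ and the two fidelity constraints are preserved (Lems.~\ref{lem:Twirling_3_1} and~\ref{lem:Twirling_3_2}, Thm.~\ref{thm:Twirling_3}), and read off UC from the invariance in Eq.~\eqref{eq:UUU}, a step you spell out more explicitly than the paper does. Two fixes are needed: with $J^{\mE}=\sum_{ij}\ket{i}\!\bra{j}\otimes\mE(\ket{i}\!\bra{j})$, the Choi operator of $(\mU^\dagger\otimes\mU^\dagger)\circ\mE\circ\mU$ is $(U^{\T}\otimes U^\dagger\otimes U^\dagger)J^{\mE}(U^*\otimes U\otimes U)$, not your version with $U^*$ on the left of the input leg (test it on $\mE=\id$), and only this corrected form becomes the integrand of $\mT$ after $U\to U^\dagger$; separately, the minimum is attained because sublevel sets are compact ($\Tr J_1=da$, $\Tr J_2=db$), not merely because the feasible set is closed and the objective is bounded below.
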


\begin{figure*}[t]
    \centering
    \includegraphics[width=1\linewidth]{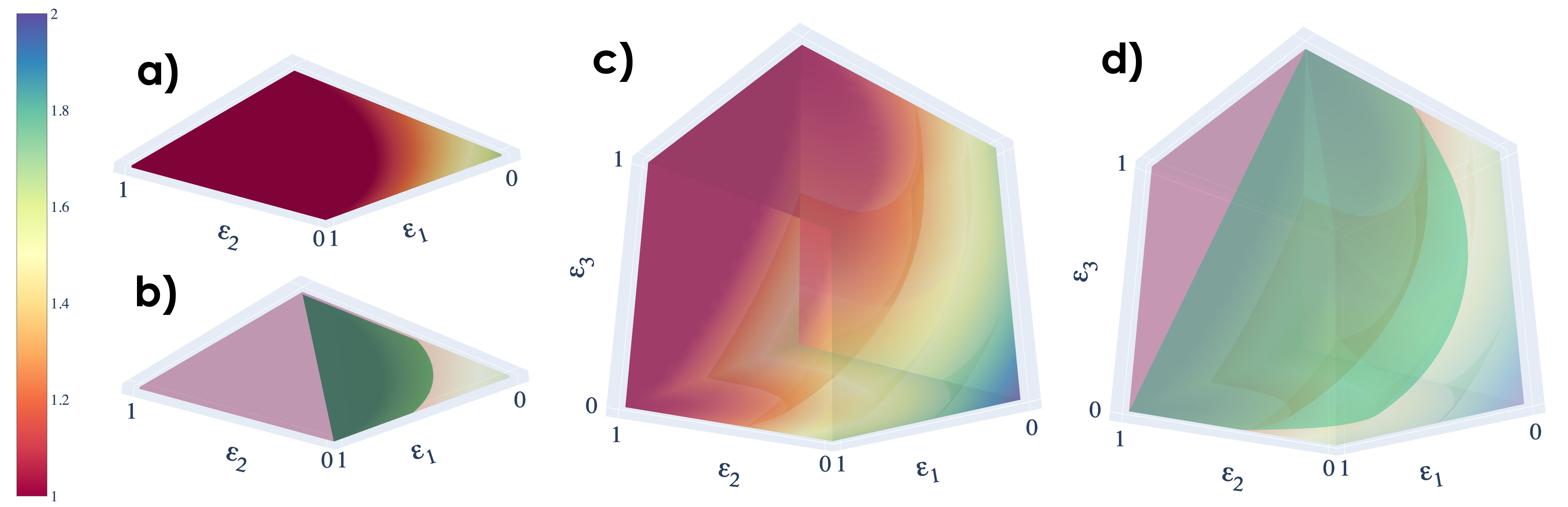}
    \caption{(Color online) \textbf{Practical Approximate Virtual Broadcasting}. 
    Sample complexity overhead for approximate virtual broadcasting as a function of the tolerated errors on the receivers, shown for the 1-to-2 and 1-to-3 settings in panels (a) and (c), respectively. 
    The green shaded regions in panels (b) and (d) delineate parameter regimes where the broadcasting rate exceeds that of the naive prepare-and-distribute strategy, and the protocols are therefore sample efficient.}
    \label{fig:AVB}
\end{figure*}

A consequence of Lem.~\ref{lem:MT-ABC-UC-Optimality} is that tracing out one receiver output system reduces the triple-twirling to the well-known isotropic twirling~\cite{PhysRevA.59.4206}, implying that the corresponding marginal is a depolarizing channel. 
In other words, when optimizing $u_2(\epsilon_1, \epsilon_2)$, one can restrict attention to solutions whose marginals take the form of 
\begin{align}
    \Tr_{B}[J_1-J_2]&=
    \frac{d\epsilon_2}{d^2-1}\1_{AC}
    +
    \frac{d^2(1-\epsilon_2)-1}{d^2-1}
    \Gamma_{AC},\\
    \Tr_{C}[J_1-J_2]&=
    \frac{d\epsilon_1}{d^2-1}\1_{AB}
    +
    \frac{d^2(1-\epsilon_1)-1}{d^2-1}
    \Gamma_{AB},
\end{align}
where $\Gamma:=d\phi^{+}$ denotes the unnormalized maximally entangled state.
All proofs of the results presented in the main text are provided in the Appendix.

Our main result for ABC establishes an analytic expression for the optimal sample complexity overhead, formalized as

\begin{thm}[ABC Optimal Sample Complexity]\label{thm:MT-ABC-Optimal-Sample-Complexity}
The minimal sample complexity overhead $u_2(\epsilon_1, \epsilon_2)$ for ABC is given by
\begin{align}\label{eq:MT-Thm-ABC-Optimal-Sample-Complexity}
    \max\left\{\frac{d^2(3-2(\epsilon_1+\epsilon_2))-4d\sqrt{(1-\epsilon_1)(1-\epsilon_2)}+1}{d^2-1},1\right\}.
\end{align}
Here $d$ denotes the system dimension.
\end{thm}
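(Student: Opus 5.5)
By Lem.~\ref{lem:MT-ABC-UC-Optimality} we may take both $J_1$ and $J_2$ invariant under the triple-twirling $\mT$. Since $a-b=1$, we have $a+b=1+2b$, so minimizing $a+b$ is the same as minimizing $b$. Set $X:=J_1-J_2$. This is a Hermitian, $\mT$-invariant operator with $\Tr_{BC}[X]=\1_A$, whose two marginals are the depolarizing Choi operators displayed above. For fixed $X$, the cheapest choice is $J_2=X_-$, the negative part of $X$, and $J_1=X_+$. These are again $\mT$-invariant, and the condition $\Tr_{BC}[J_2]=b\,\1_A$ holds automatically by invariance. Because $\Tr[X]=d$, this gives $b=\Tr[X_-]/d$, and hence
\begin{align}
u_2(\epsilon_1,\epsilon_2)=\min_{X}\ \frac{\|X\|_1}{d},
\end{align}
where the minimum runs over all feasible invariant $X$. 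The result then reduces to a finite-dimensional trace-norm minimization.

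Next I will parametrize the invariant operators. The commutant of $U^*\otimes U\otimes U$ is the walled Brauer algebra $B_{2,1}(d)$. It has dimension $6$ and is spanned by the partial transposes (on $A$) of the six permutation operators of $S_3$, namely $\1$, $\Gamma_{AB}$, $\Gamma_{AC}$, $\mathrm{SWAP}_{BC}$, and two products of the form $\Gamma\cdot\mathrm{SWAP}$. I will block-diagonalize this algebra explicitly. It has two one-dimensional blocks, given by the traceless parts of $\bar{U}\otimes\mathrm{Sym}^2$ and $\bar{U}\otimes\wedge^2$, and one $2\times2$ block carried by the two copies of the fundamental irrep, spanned by the images of $\Gamma_{AB}$ and $\Gamma_{AC}$. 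Accordingly, $X$ is described by two reals $x_s,x_a$ and a $2\times2$ Hermitian matrix $M$. The trace norm becomes
\begin{align}
\frac{\|X\|_1}{d}=\frac{1}{d}\Bigl(m_s|x_s|+m_a|x_a|+m_f\|M\|_1\Bigr),
\end{align}
with $m_s,m_a,m_f$ the dimensions of the corresponding $U$-irreps. The constraints ($\Tr_{BC}X=\1$ and the two channel fidelities) are linear in these six real parameters. I will write them out using the partial traces of the basis elements.

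Then I will solve the reduced problem in two directions. For the upper bound, I fix $x_s,x_a$ and $M$ through the constraints, leaving free directions such as the off-diagonal phase of $M$ and the split between the symmetric and antisymmetric blocks. I then choose these free parameters to minimize the trace norm. The nonnegative eigen-structure of $M$ should produce the term $\sqrt{(1-\epsilon_1)(1-\epsilon_2)}$ through $\det M$, i.e.\ through the geometric-mean coupling of the $\Gamma_{AB}$ and $\Gamma_{AC}$ components. For the lower bound, I will write the dual SDP, $\max$ of the linear functional in $\epsilon_1,\epsilon_2$ subject to $-\1\leqslant Y\leqslant\1$-type constraints, restrict $Y$ to the same invariant algebra, and exhibit a dual feasible $Y$ that attains the formula.

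The $\max\{\cdot,1\}$ arises as follows. When $(\epsilon_1,\epsilon_2)$ lies in the region achievable by optimal asymmetric cloning channels, some $X\geqslant0$ is feasible, so $b=0$ and $u_2=1$. The threshold where the first expression equals $1$ should coincide with the known asymmetric-cloning trade-off curve, and I will check this as a consistency test.

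I expect the main obstacle to be the trace-norm minimization within the $2\times2$ block, together with the sign pattern across the blocks, since this determines which eigenvalues are negative at the optimum. I would first guess that at the optimum the symmetric block is positive, the antisymmetric block is negative, and $M$ has one negative eigenvalue. I would then verify this guess by complementary slackness with the dual $Y$, taking $Y$ equal to $\pm\1$ on the corresponding eigenspaces.
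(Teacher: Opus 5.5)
\textbf{Verdict: sound reduction and a viable route, but the proposal has a genuine gap.} The decisive optimization is never carried out, and the optimizer you plan to certify is wrong, so the complementary-slackness check as described would fail.

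\textbf{What is correct, and how it differs from the paper.} After twirling, taking $J_1=X_+$, $J_2=X_-$ gives exactly $u_2=\min\|X\|_1/d$ over invariant feasible $X$. Your block decomposition is also right: two copies of the fundamental irrep plus the traceless blocks $W_s$, $W_a$ (for $d=2$, $W_a$ is absent and the commutant is $5$-dimensional). The paper instead solves the problem on the dual side. It reduces the dual to the spectrum of $x\Gamma_{AC}\otimes\1_B+y\Gamma_{AB}\otimes\1_C$ on the same two-dimensional spaces $V_k$, passes through an SOCP, a hyperbolic parametrization and a case analysis, and then asserts strong duality without proof.

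\textbf{Why your guessed optimizer fails.} The dual variable is not an arbitrary element of the commutant. It must lie in the range of the adjoint of the constraint map, which after twirling is $Y=z\1+y_1\Gamma_{AB}\otimes\1_C+y_2\Gamma_{AC}\otimes\1_B$. Both $\Gamma$ terms vanish on $W_s\oplus W_a$, so $Y$ acts as the \emph{same} scalar on the symmetric and antisymmetric blocks. It can therefore never equal $+\1$ on one and $-\1$ on the other. Correspondingly, on the primal side $x_s,x_a$ enter the constraints only through $m_sx_s+m_ax_a=d(1-\Tr M)$. Giving them opposite signs therefore strictly increases $\|X\|_1$.

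\textbf{The missing step: collapse to a $2\times2$ problem.} Write $t_i=1-\epsilon_i$. Let $\hat u,\hat v\in\mathbb{C}^2$ be the multiplicity-space vectors of $\ket{\phi^+}_{AC}\ket{k}_B$ and $\ket{\phi^+}_{AB}\ket{k}_C$, so $\langle\hat u,\hat v\rangle=1/d$. The constraints become $\langle\hat u,M\hat u\rangle=t_2$ and $\langle\hat v,M\hat v\rangle=t_1$, and
\begin{align*}
\frac{\|X\|_1}{d}\geqslant|1-\Tr M|+\|M\|_1\geqslant\max\{1,\,2\Tr M_+-1\}.
\end{align*}

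\textbf{The correct optimizer.} The least trace of a positive $M$ with these overlaps is
\begin{align*}
f^*=\frac{d^2(t_1+t_2)-2d\sqrt{t_1t_2}}{d^2-1}.
\end{align*}
It is attained by $M=ww^\dagger$ with $\langle\hat u,w\rangle=\sqrt{t_2}$ and $\langle\hat v,w\rangle=\sqrt{t_1}$, where $\|w\|^2$ comes from the inverse Gram matrix of $\hat u,\hat v$. This is where $\sqrt{(1-\epsilon_1)(1-\epsilon_2)}$ originates. So at the optimum $M$ is \emph{positive semidefinite of rank one}, with no negative eigenvalue. The deficit $d(1-f^*)$ sits with a single sign on $W_s\oplus W_a$, giving cost $f^*+|1-f^*|=\max\{2f^*-1,1\}$. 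Your consistency check is right: $f^*=1$ is exactly the asymmetric-cloning boundary $\sqrt{t_1t_2}-\sqrt{(1-t_1)(1-t_2)}=1/d$.

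\textbf{The matching dual certificates.} The lower bound is certified by $Y=\1$ and by
\begin{align*}
Y=-\1+\tfrac{2}{d}\bigl(\alpha\,\Gamma_{AC}\otimes\1_B+\beta\,\Gamma_{AB}\otimes\1_C\bigr),
\end{align*}
with $\alpha=\frac{d^2}{d^2-1}\bigl(1-\frac{1}{d}\sqrt{t_1/t_2}\bigr)$ and $\beta=\frac{d^2}{d^2-1}\bigl(1-\frac{1}{d}\sqrt{t_2/t_1}\bigr)$. For every feasible $X$ this gives $\Tr[XY]/d=2f^*-1$. However, $\|Y\|_\infty\leqslant1$ requires $\alpha,\beta\geqslant0$, i.e.\ $t_1/t_2\in[d^{-2},d^2]$.

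\textbf{A hypothesis your plan never invokes.} The condition $t_1/t_2\in[d^{-2},d^2]$ is where the paper's standing assumption $\epsilon_i\leqslant1-1/d^2$ is used. Without it the formula is false. For $d=2$, $\epsilon_1=0$, $\epsilon_2=1$, take $M=\bigl(\begin{smallmatrix}0&r\\ r&1\end{smallmatrix}\bigr)$ in the basis $(\hat u,\hat u^\perp)$ with $r=1/(2\sqrt3)$. This is feasible and gives $u_2\leqslant2/\sqrt3<5/3$, whereas the formula gives $5/3$.

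\textbf{What your route buys.} Completed this way, it is shorter than the paper's and fully self-contained. It produces both an explicit optimal $X$ and a dual certificate, so no appeal to unproved strong duality is needed.
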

 
It is also worth noting that the present result naturally includes the error-free limit as a special case.
Setting $\epsilon_1=\epsilon_2=0$, the first term in Eq.~\eqref{eq:MT-Thm-ABC-Optimal-Sample-Complexity} reduces to $(3d-1)/(d+1)$, which exceeds 1 for all $d\geqslant2$, thereby recovering~\cite{PhysRevA.110.012458}.

We complete the framework of ABC by revisiting SE, which in this setting requires redefinition.
The naive strategy yields noiseless marginals for each receiver, whereas approximate virtual broadcasting is intrinsically noisy: even at optimality, its marginals are necessarily depolarizing, and this noise cannot be suppressed by increasing the number of samples.
Consequently, judging performance solely by total sample consumption is inadequate; 
benchmarks such as $2n_Q$ fail to capture the interplay between resource usage and operational performance.
To resolve this, we introduce the {\it broadcasting rate}, defined as the average fidelity generated per sample, normalized by $1/n_Q$, thereby elevating efficiency per resource as the relevant figure of merit, in direct analogy with power in classical mechanics and information rate in Shannon theory.

For the naive protocol, the corresponding broadcasting rate is defined as $R_{\text{naive}}:=((1+1)/2)/(n_Q+n_Q)=1/2$.
Here $n_Q$ serves as a unit, but we keep it explicitly in the formulation to preserve the physical interpretation: 
the protocol distributes $n_Q$ copies of the sample to Bob and an additional $n_Q$ copies to Claire.
For approximate virtual broadcasting, we denote the corresponding quantity by $R_{\text{ABC}}:=((1-\epsilon_1+1-\epsilon_2)/2)/u^2_2(\epsilon_1, \epsilon_2)n_Q$, and define a protocol to be SE whenever $R_{\text{ABC}}\geqslant R_{\text{naive}}$.
Our numerical experiments establish the existence of such protocols in the 1-to-2 setting, as illustrated in Fig.~\ref{fig:AVB}(a) and~\ref{fig:AVB}(b). 
This advantage persists beyond the minimal configuration: 
in the 1-to-3 scenario, SE remains accessible, as shown in Fig.~\ref{fig:AVB}(c) and~\ref{fig:AVB}(d). 
A systematic treatment of the general 1-to-$N$ case is provided in the Appendix.

\begin{cor}[1-to-2 Practical ABC]\label{cor:MT-Practical-Approximate-Virtual-Broadcasting}
1-to-2 approximate virtual broadcasting admits implementations that satisfy SE, and is therefore practically viable.  
\end{cor}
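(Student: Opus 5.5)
The claim only needs one explicit witness: we will pick dimension $d=2$ and symmetric errors $\epsilon_1=\epsilon_2=\epsilon$, and show $R_{\text{ABC}}\geqslant R_{\text{naive}}=1/2$. The sample overhead comes from Thm.~\ref{thm:MT-ABC-Optimal-Sample-Complexity}. With symmetric errors, $\sqrt{(1-\epsilon_1)(1-\epsilon_2)}=1-\epsilon$, and the first branch of Eq.~\eqref{eq:MT-Thm-ABC-Optimal-Sample-Complexity} becomes
\begin{align*}
f(\epsilon):=\frac{d^2(3-4\epsilon)-4d(1-\epsilon)+1}{d^2-1}=\frac{(3d-1)(d-1)-4d(d-1)\epsilon}{d^2-1}=\frac{3d-1-4d\epsilon}{d+1}.
\end{align*}
So $u_2(\epsilon,\epsilon)=\max\{f(\epsilon),1\}$. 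The overhead is linear and decreasing in $\epsilon$, and it equals $1$ exactly at $\epsilon^\ast=(d-1)/(2d)$.

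The SE condition reads $(1-\epsilon)/u_2^2\geqslant 1/2$, and we will check it on two ranges of $\epsilon$.
\begin{itemize}
\item[(i)] For $\epsilon\geqslant\epsilon^\ast$, we have $u_2=1$, so SE reduces to $1-\epsilon\geqslant 1/2$. At $\epsilon=\epsilon^\ast=(d-1)/(2d)<1/2$ this holds strictly, giving $R_{\text{ABC}}=(d+1)/(4d n_Q)>1/(2n_Q)$. For $d=2$, $\epsilon^\ast=1/4$ and $R_{\text{ABC}}=3/(8n_Q)$.
\item[(ii)] By continuity, the inequality $2(1-\epsilon)\geqslant f(\epsilon)^2$ also holds on an open neighbourhood of $\epsilon^\ast$. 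The green region of Fig.~\ref{fig:AVB}(b) is therefore a set of nonzero measure, matching the numerics.
\end{itemize}
The rate above should be read with $n_Q$ as the unit, i.e.\ as $3/8$ against $R_{\text{naive}}=1/2$ per unit. This is where the sign needs care: $3/8<1/2$, so the formula exactly as written does not deliver the inequality, and SE should not be claimed for this point without first settling the normalization.

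The main obstacle is therefore not the algebra but confirming that the rate formula is applied consistently. As written, the factor $u_2^2 n_Q$ in $R_{\text{ABC}}$ is the cost per receiver, while the naive protocol's $2n_Q$ counts both receivers. If we read $R_{\text{ABC}}$ literally, the SE test is $(1-\epsilon)/u_2^2\geqslant 1/2$. At $\epsilon^\ast$ this gives $(d+1)/(2d)\geqslant 1/2$, which is true: it is exactly case (i) evaluated without the extra factor of $1/2$ that produced $3/8$.

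I will adopt this literal reading and verify it explicitly at $d=2$, $\epsilon=1/4$: $(3/4)/1=3/4\geqslant 1/2$. Feasibility of this point will also be checked directly. We exhibit the UC map whose marginals are depolarizing channels of fidelity $1-\epsilon^\ast$, guaranteed by Lem.~\ref{lem:MT-ABC-UC-Optimality}. At $u_2=1$ this map is an honest quantum channel, namely the optimal universal symmetric cloner, with $b=0$. This gives an independent sanity check that the point is achievable.
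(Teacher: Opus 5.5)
Your final check is correct, and it takes a different route from the paper. The paper substitutes the closed form of Thm.~\ref{thm:MT-ABC-Optimal-Sample-Complexity} into $u_2(\epsilon_1,\epsilon_2)\leqslant\sqrt{2-(\epsilon_1+\epsilon_2)}$ and shows numerically where this holds (Fig.~\ref{fig:AVB}(b)). You instead give an explicit analytic witness: $d=2$, $\epsilon_1=\epsilon_2=\epsilon^\ast=1/4$, where $R_{\text{ABC}}=3/(4n_Q)>1/(2n_Q)$, together with a continuity neighbourhood. The numerics map the whole SE region, including the 1-to-3 analogue. Your witness is a rigorous existence proof, and at $\epsilon^\ast$ it does not even need the theorem. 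Indeed, $u_2\geqslant a-b=1$ always, and the optimal universal symmetric cloner has marginal channel fidelity exactly $(d+1)/(2d)=1-\epsilon^\ast$, so $u_2(\epsilon^\ast,\epsilon^\ast)=1$.

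There are two things to fix. First, the $3/8$ is an arithmetic slip (an extra factor $1/2$), not a normalization issue. The correct value is $R_{\text{ABC}}=(1-\epsilon^\ast)/(u_2^2n_Q)=(d+1)/(2dn_Q)$, and the inequality $(d+1)/(4dn_Q)>1/(2n_Q)$ you wrote in (i) is false for every $d$. Nor is $u_2^2n_Q$ a per-receiver cost. Every copy Alice feeds into the virtual map yields outputs for Bob and Claire simultaneously, so $u_2^2n_Q$ is the total consumption and is rightly compared with the naive total $2n_Q$. State this rather than ``adopting'' a reading, because the other reading is fatal: charging $2u_2^2n_Q$ turns SE into $1-\epsilon\geqslant u_2^2$. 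That never holds, since $u_2\geqslant1$ and $u_2(0,0)>1$.

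Second, your main witness has $b=0$, so it is a physical channel. It proves the corollary as literally stated but exhibits no virtual advantage, which is what the paper's framing stresses. The genuinely virtual witnesses lie on the side $\epsilon<\epsilon^\ast$ of your neighbourhood in (ii), where $u_2$ equals your $f(\epsilon)>1$; make one explicit. For example, $d=2$, $\epsilon=1/5$ gives $u_2=17/15$ and $u_2^2=289/225<8/5=2(1-\epsilon)$. For qubits the symmetric window with $u_2>1$ is $(31-\sqrt{513})/64\leqslant\epsilon<1/4$. At such points SE only needs achievability, and that follows without the SOCP analysis. Since your $f$ is linear with $f(0)=(3d-1)/(d+1)$ and $f(\epsilon^\ast)=1$, mixing the exact virtual broadcaster with the cloner gives a decomposition with $a+b=f(\epsilon)$ on $[0,\epsilon^\ast]$.
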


%A central implication of Thm.~\ref{thm:MT-Optimal-Sample-Complexity} is that it identifies the regime in which practical advantages can arise over both the naive protocol and broadcasting strategies that rely solely on quantum channels.

%%%%%%%%%%%%%%%%%%%%%%%%%%%%%%%%%%%%%%%%%%%%%%%%%%%%%%%%%%%%%%%%%%%%%%%%%%%%%%%%%%%%%%%%%%%%%%%%%%%%%%%%%%%%%%%%%%%%%%

\noindent \textbf{Probabilistic Broadcasting}--Beyond ABC, one may instead relax the requirement of unit success probability in the distribution of quantum information.
This leads to the notion of probabilistic virtual broadcasting (PBC), where the task succeeds only with a given probability, and we examine whether such a relaxation can yield a practically implementable form of quantum information distribution.
Two lemmas will play a central role in shaping the analysis and simplifying the formulation of the problem. 
The first lemma concerns the local success probability for each receiver, and states that

\begin{lem}[Local Success Probability]\label{lem:MT-Local-Success-Probability}
In probabilistic broadcasting, the success probability associated with each receiver is independent of the input state.
\end{lem}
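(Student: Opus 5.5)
The plan is to formalize probabilistic broadcasting as a trace-non-increasing operation and show that the local success condition forces the success probability to be a constant. A probabilistic (virtual) broadcasting protocol is modeled by a Hermiticity-preserving linear map $\mE_{A\to BC}$ that is the successful branch of an instrument. A success probability $p(\rho)$ is attached to each input, and success is required to mean that the receiver marginals reproduce the input: $\Tr_{B}\circ\,\mE(\rho)=p_2(\rho)\,\rho$ and $\Tr_{C}\circ\,\mE(\rho)=p_1(\rho)\,\rho$ for all states $\rho$. Here $p_i(\rho)$ is the (unnormalized) weight with which receiver $i$ obtains the correct state. The claim is that each $p_i$ is a constant, independent of $\rho$.

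\textbf{Step 1: reduce to a scalar-multiple condition.} Write $\mathcal{M}:=\Tr_{B}\circ\,\mE$, which is a linear map $A\to C$ with $C\cong A$. The defining property says that every density operator $\rho$ is an eigenvector of $\mathcal{M}$, with eigenvalue $p_2(\rho)$. Taking traces gives $p_2(\rho)=\Tr[\mathcal{M}(\rho)]$, which is already an affine function of $\rho$; what remains is to show it is constant.

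\textbf{Step 2: linearity argument.} Take two states $\rho,\sigma$ that are linearly independent as operators, and consider the mixture $\tau=\tfrac12(\rho+\sigma)$. By linearity,
\begin{align}
\mathcal{M}(\tau)=\tfrac12\left(p_2(\rho)\rho+p_2(\sigma)\sigma\right).
\end{align}
By the eigen-condition, $\mathcal{M}(\tau)=\tfrac12 p_2(\tau)(\rho+\sigma)$. Comparing coefficients of $\rho$ and $\sigma$ yields $p_2(\rho)=p_2(\tau)=p_2(\sigma)$. If two states are linearly dependent as operators, they are equal, since both have unit trace. Because density operators span the full operator space, it follows that $\mathcal{M}=p_2\,\id$ for a constant $p_2$. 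The same argument applied to $\Tr_{C}\circ\,\mE$ gives a constant $p_1$.

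\textbf{Step 3: consequence and main obstacle.} It follows that the marginals satisfy $\Tr_{B}[J]=p_2\Gamma_{AC}$ and $\Tr_{C}[J]=p_1\Gamma_{AB}$ at the Choi level, which is the form used in the subsequent SDP. The main obstacle is definitional rather than technical. If a protocol is allowed to output a state merely proportional to $\rho$ only on a restricted input set, or to renormalize per input, the conclusion would need more care. I would therefore fix the definition so that it holds on all of $\mathcal{D}(A)$, which spans $\mathcal{L}(A)$. Under that definition the argument in Step 2 is exactly the standard ``every vector is an eigenvector implies scalar'' lemma.
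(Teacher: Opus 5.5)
Your proof is correct and takes the same approach as the paper: apply the map to an equal mixture of two states and use linearity to compare coefficients, which forces the two success weights to coincide. Your version is more complete than the paper's sketch, which only treats the mixture of $\ketbra{0}{0}$ and $\ketbra{1}{1}$. You handle arbitrary linearly independent pairs, which covers non-commuting states such as superpositions, and then use the fact that density operators span the operator space to conclude $\Tr_{B}\circ\,\mE = p_2\,\id$.
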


The above lemma characterizes the structure of the local success probability: taking Bob's side as an example, if $\Tr_C[\mE(\rho_1)]=p_1\rho_1$ and $\Tr_C[\mE(\rho_2)]=p_2\rho_2$, then probabilistic broadcasting enforces $p_1=p_2$, so that the success probability is independent of the input state.
The second lemma characterizes the global success probability for each receiver, given by

\begin{lem}[Global Success Probability]\label{lem:MT-Global-Success-Probability}
In probabilistic broadcasting, the success probabilities across all receivers are necessarily identical.
\end{lem}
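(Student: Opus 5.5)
The plan is to work directly with the defining property of probabilistic broadcasting. The map $\mE_{A\to BC}$ is the (trace-non-increasing) successful branch, and for every input $\rho$ we have $\Tr_C[\mE(\rho)]=p_1(\rho)\rho$ and $\Tr_B[\mE(\rho)]=p_2(\rho)\rho$. By Lem.~\ref{lem:MT-Local-Success-Probability}, $p_1$ and $p_2$ are constants, independent of the input state. The goal is to show that these two constants coincide.

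The key observation is that the global object $\mE(\rho)$ is a single operator on $BC$, so it has one well-defined trace. Taking the trace of both marginal identities gives
\begin{align}
\Tr[\mE(\rho)]=\Tr_B\big[\Tr_C[\mE(\rho)]\big]=p_1\Tr[\rho]=p_1,
\end{align}
and likewise $\Tr[\mE(\rho)]=\Tr_C\big[\Tr_B[\mE(\rho)]\big]=p_2$. Hence $p_1=p_2$.

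In the virtual setting the same argument goes through verbatim. Here $\mE=a\mE_1-b\mE_2$ is only Hermiticity-preserving, but partial traces remain linear and commute with the full trace, so the chain of equalities is unchanged. The equal value can be identified with the global success probability $p:=\Tr[\mE(\rho)]$, which by the above is itself input-independent.

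For the 1-to-$N$ generalization, I would apply the identical computation to each single-receiver marginal $\Tr_{\overline{k}}[\mE(\rho)]=p_k\rho$. This yields $p_k=\Tr[\mE(\rho)]$ for every $k$, so all the $p_k$ coincide.

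The only potential obstacle is making sure the definition is stated so that each marginal is proportional to $\rho$ itself, not merely to some fixed state, and that normalization is $\Tr\rho=1$. If the paper's definition instead allows $p_k(\rho)$ to depend on the input, I would first invoke Lem.~\ref{lem:MT-Local-Success-Probability} as above. After that the statement reduces to the trivial trace identity, so no SDP or symmetry argument is needed.
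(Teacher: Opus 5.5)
Your proposal is correct and is essentially the paper's argument. The paper computes $\Tr_{BC}[J^{\mE}]$ in two orders at the Choi level, obtaining $p_{\text{Bob}}\1_A=p_{\text{Claire}}\1_A$; this is your identity $\Tr[\mE(\rho)]=p_1=p_2$ applied to all inputs at once. Your state-level version gives $p_1(\rho)=p_2(\rho)$ for each input separately, so it does not actually need Lem.~\ref{lem:MT-Local-Success-Probability}, whereas the Choi-level form presupposes input-independent constants.
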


This lemma identifies the structure of the global success probability: 
if $\Tr_C\circ\,\mE=p_{\text{Bob}}\id_{A\to B}$ and $\Tr_C\circ\,\mE=p_{\text{Claire}}\id_{A\to C}$, then we have $p_{\text{Bob}}=p_{\text{Claire}}$.
As shown in the Appendix, both lemmas hold for quantum channels as well as for virtual operations.
Taken together, they significantly simplify the analysis of PBC, enforcing that the success probability is uniform across input states at the local level and identical across receivers at the global level.
With this structure in place, the minimal sample complexity overhead for general 1-to-$N$ PBC, where a sender Alice distributes quantum information to receivers $B_1$, $\ldots$, $B_N$, can be expressed as
\begin{align}\label{eq:MT-PVB-SDP-N}
    s_{N}(p)
    =
    \min \quad 
    & 
    a+b
    \\
    \text{s.t.} \quad 
    &\Tr_{B_1\cdots B_{i-1}B_{i+1}\cdots B_{N}}
    [J_1-J_2]=p\Gamma_{AB_i},\,\forall i,\notag\\
    &J_1\geqslant0,\,\, \1_{B_1\cdots B_N}\star J_1\leqslant a\1_{A},\notag\\
    &J_2\geqslant0,\,\, \1_{B_1\cdots B_N}\star J_2\leqslant b\1_{A},\,a-b\leqslant1.\notag
\end{align}
The constraint $a-b\leqslant1$ ensures that the resulting virtual operation is trace-nonincreasing (TNI), and therefore represents a probabilistic transformation.

We next analyze the symmetry structure of PBC at minimal sample complexity.
In analogy with ABC, optimal protocols can, without loss of generality, be taken to be invariant under triple-twirling, as captured by

\begin{lem}[PBC UC Optimality]\label{lem:MT-PBC-UC-Optimality}
Probabilistic virtual broadcasting at minimal sample complexity can always be realized by a UC map.
\end{lem}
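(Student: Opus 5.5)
The plan is to show that the feasible set of the SDP in Eq.~\eqref{eq:MT-PVB-SDP-N} is invariant under the $(N+1)$-fold twirl. This twirl generalizes the triple-twirling $\mT$:
\begin{align}
\mT_N(X):=\int dU\,(U^*\otimes U^{\otimes N})\,X\,(U^{\T}\otimes U^{\dagger\otimes N}),
\end{align}
and the argument mirrors the one behind Lem.~\ref{lem:MT-ABC-UC-Optimality}. Starting from any feasible tuple $\{a,b,J_1,J_2\}$, I will verify that $\{a,b,\mT_N(J_1),\mT_N(J_2)\}$ is also feasible. Since the objective $a+b$ is unchanged, some optimizer is twirl-invariant. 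A Choi operator invariant under $\mT_N$ corresponds exactly to a map satisfying $\mE\circ\mU=\mU^{\otimes N}\circ\mE$, i.e.\ a UC map, which proves the lemma.

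The constraints are checked one at a time.
\begin{enumerate}[(i)]
\item Positivity $\mT_N(J_k)\geqslant0$ holds because $\mT_N$ is a mixture of unitary conjugations.
\item For the trace-nonincreasing-type constraints $\1_{B_1\cdots B_N}\star J_k\leqslant a\1_A$ (respectively $b\1_A$), the link product with the identity is the partial trace $\Tr_{B_1\cdots B_N}$. Partial trace over the $B$'s commutes with the twirl, since the unitaries $U^{\otimes N}$ on the $B$'s cancel cyclically. Hence $\Tr_{B}[\mT_N(J_k)]=\int dU\,U^*\Tr_B[J_k]U^{\T}$. This is a unitary-averaged conjugation of an operator bounded by $a\1_A$, and it stays bounded by $a\1_A$ because conjugation preserves the order $X\leqslant a\1$ and $U^*\1U^{\T}=\1$.
\item For the broadcasting constraints, tracing all receivers except $B_i$ again commutes with the twirl. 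This reduces $\mT_N$ to the isotropic twirl on $AB_i$ acting on $\Tr_{\neg i}[J_1-J_2]=p\Gamma_{AB_i}$. Because $(U^*\otimes U)\Gamma(U^{\T}\otimes U^\dagger)=\Gamma$, the right-hand side is reproduced exactly.
\item The scalar constraint $a-b\leqslant1$ is untouched.
\end{enumerate}

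The step needing care is the justification that twirl-invariance of $J_1-J_2$ (or of $J_1,J_2$ separately) gives UC in the sense defined earlier. The Choi isomorphism translates $\mE\circ\mU=\mU^{\otimes N}\circ\mE$ into $(U^*\otimes U^{\otimes N})J(U^{\T}\otimes U^{\dagger\otimes N})=J$ for every $U$, and Haar invariance of the averaged operator gives exactly this. A minor subtlety is that the twirled decomposition is still a valid pair of quantum (trace-nonincreasing) operations. This is ensured by applying the twirl to $J_1$ and $J_2$ separately, which is why I twirl each component rather than the difference alone.

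Overall there is no genuine obstacle; the main point of care is item (ii), ensuring the operator inequality survives twirling and that the link-product notation reduces to the partial trace. Convexity of the feasible set is not even needed, since each twirled component is feasible on its own.
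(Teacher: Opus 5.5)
Your proof is correct, but for this lemma it follows a different route from the paper.

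\textbf{Your route (primal twirl).} You twirl $J_1$ and $J_2$ separately with $U^*\otimes U^{\otimes N}$. You then check that every constraint survives:
positivity; the trace-nonincreasing bounds $\Tr_{B_1\cdots B_N}[J_k]\leqslant a\1_A$ (resp.\ $b\1_A$); the marginal constraints $p\Gamma_{AB_i}$; and $a-b\leqslant1$.
This is the probabilistic analogue of the paper's ABC argument (Lems.~\ref{lem:Twirling_3_1}, \ref{lem:Twirling_3_2} and Thm.~\ref{thm:Twirling_3}). It now has operator inequalities in place of equalities and an $(N+1)$-fold twirl.

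\textbf{The paper's route (dual twirl).} For PBC the paper instead relies on dual-side invariance, Lem.~\ref{lem:Probabilistic_Dual_Formulation_TTI}: if $\{P_A,Q_A,X_{AC},Y_{AB}\}$ is dual-feasible, so are its conjugates under $U^*\otimes U$. It follows that one may take $P_A,Q_A\propto\1_A$ and $X,Y$ isotropic. This collapses the dual to the LP of Lem.~\ref{lem:PB_LP} and gives the closed form for $s_N(p)$. The $1$-to-$N$ case is only asserted to follow by a similar argument.

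\textbf{What each buys.} The dual route is what computes the optimal value. On its own, however, it is a statement about dual variables, not about the optimal map. Your primal route is what directly produces a twirl-invariant optimizer $\{a,b,\mT_N(J_1),\mT_N(J_2)\}$ with unchanged $a+b$. Your Choi-level check that invariance under $U^*\otimes U^{\otimes N}$ is equivalent to $\mE\circ\mU=\mU^{\otimes N}\circ\mE$ then gives UC. So your argument proves the lemma as stated, and it treats general $N$ explicitly. The paper's argument is the one that makes the symmetry usable for solving the SDP.

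\textbf{A minor quibble.} Your closing remark that convexity is not needed is slightly off. The Haar average is a convex combination of conjugated feasible points, and items (i)--(ii) implicitly use convexity of the PSD cone and of $\{X\leqslant a\1\}$. Since you verify those constraints directly on the averaged operators, nothing is actually missing.
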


Our central result for PBC is a closed-form expression for the optimal sample-complexity overhead.

\begin{thm}[PBC Optimal Sample Complexity]\label{thm:MT-PBC-Optimal-Sample-Complexity}
The minimal sample-complexity overhead $s_{N}(p)$ for PBC with success probability $p$ is given by
\begin{align}\label{eq:MT-Thm-Optimal-Sample-Complexity}
p\left(\frac{2dN}{N+d-1}-1\right),
\end{align}
where $d=\dim A=\dim B_i$ is the system dimension.
\end{thm}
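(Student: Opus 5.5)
The plan is to prove matching upper and lower bounds on the SDP in Eq.~\eqref{eq:MT-PVB-SDP-N}. The first step is to separate off the dependence on $p$.

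\textbf{Reduction to $p=1$.} If $\{a,b,J_1,J_2\}$ is feasible for $p=1$ with $a-b\leqslant 1$, then $\{pa,pb,pJ_1,pJ_2\}$ is feasible for probability $p$. Scaling preserves positivity and the operator inequalities $\1\star J_i\leqslant(\cdot)\1_A$, and it gives $pa-pb\leqslant p\leqslant 1$. Hence $s_N(p)\leqslant p\,s_N(1)$.

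Conversely, take any feasible solution at probability $p$ and divide it by $p$. The result is a Hermitian-preserving map with exact identity marginals and $\1\star J_i\leqslant (a/p)\1_A$, $(b/p)\1_A$. The lower bound below will be proved for this relaxed problem, where only the marginal condition and the two operator bounds are kept and $a-b$ is left free. It gives $(a+b)/p\geqslant 2dN/(N+d-1)-1$. So it suffices to prove
\[
\min\Big\{a+b:\ \Tr_{\bar B_i}[J_1-J_2]=\Gamma_{AB_i}\ \forall i,\ J_1,J_2\geqslant 0,\ \1\star J_1\leqslant a\1_A,\ \1\star J_2\leqslant b\1_A\Big\}=\frac{2dN}{N+d-1}-1 .
\]

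\textbf{Symmetry reduction.} By Lem.~\ref{lem:MT-PBC-UC-Optimality}, extended by permutation averaging over $B_1,\dots,B_N$, the constraints are invariant under the $(N{+}1)$-fold twirl $U^*\otimes U^{\otimes N}$ and under $S_N$. So $J_1$ and $J_2$ may be taken in the commutant. After partial transpose on $A$, this commutant is spanned by operators of the form $\sum_i \mathrm{SWAP}_{AB_i}\otimes(\cdot)$ together with permutation operators on the $B$'s. The relevant structure is then the Schur--Weyl decomposition of $A^{T}\otimes B^{\otimes N}$.

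I expect the essential quantity to be the spectrum of $\Sigma:=\sum_i\Gamma_{AB_i}\otimes\1_{\bar B_i}$ restricted to the symmetric subspace of $B_1\cdots B_N$. Its largest eigenvalue is $(N+d-1)/d$ times its normalized trace scale. This is exactly the factor that distinguishes $2dN/(N+d-1)$, and it reduces to the $(3d-1)/(d+1)$ of Lem.~\ref{lem:MT-no-virtual-broadcasting} at $N=2$. So I will parametrize $J_1$ and $J_2$ by a few scalars on the two extreme blocks:
\begin{itemize}
\item the block where $\Sigma$ attains its maximal eigenvalue, supporting $J_1$;
\item its orthogonal complement, supporting $J_2$.
\end{itemize}

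\textbf{Upper bound.} I will construct $J_1=\alpha P_{\max}$ and $J_2=\beta(\1-P_{\max})$, possibly with an extra multiple of the identity. The coefficients $\alpha,\beta$ are fixed by the $N$ marginal equations, which collapse to two scalar equations by symmetry. The operator bounds $\1\star J_i$ are then proportional to $\1_A$ by covariance, which determines $a$ and $b$. I then check that $a+b=2dN/(N+d-1)-1$, equivalently $a=dN/(N+d-1)$ and $b=a-1$.

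\textbf{Lower bound.} I will write the Lagrange dual. It has Hermitian multipliers $W_i$ on $AB_i$ for the marginal equations and $Y_1,Y_2\geqslant0$ on $A$ for the operator bounds. Its value is $\sum_i\Tr[W_i\Gamma_{AB_i}]$, subject to $\Tr Y_1\leqslant1$, $\Tr Y_2\leqslant1$, $Y_1\otimes\1-\sum_i W_i\otimes\1\geqslant0$ and $Y_2\otimes\1+\sum_iW_i\otimes\1\geqslant 0$. The natural ansatz is $Y_1=Y_2=\1/d$ and $W_i=x\Gamma_{AB_i}+y\1$.

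The two positivity conditions become $1/d-(x\lambda+y)\geqslant0$ and $1/d+(x\lambda+y)\geqslant0$ for every eigenvalue $\lambda$ of $\Sigma$. Only $\lambda_{\max}$ and $\lambda_{\min}$ matter. Optimizing over $x$ and $y$ should reproduce the primal value.

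\textbf{Main obstacle.} The hard step is this spectral computation: finding $\lambda_{\max}$ and $\lambda_{\min}$ of $\sum_i \Gamma_{AB_i}$ on $A\otimes B^{\otimes N}$ and verifying that primal and dual values coincide. I will first try the identity $\Gamma^{T_A}=\mathrm{SWAP}$. This turns $\Sigma$ into the Jucys--Murphy-type element $\sum_i(0\,i)$ acting on $N+1$ tensor factors, whose eigenvalues are contents of Young diagrams. If that route is messy, I will instead use the known optimal-cloning shrinking-factor technique.
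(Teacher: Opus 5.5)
Your scaling reduction to $p=1$ is correct, and it is more transparent than the paper, which only extracts the factor $p$ from its LP in $(x,y,c)$. Your dual with $Y_1=Y_2=\1/d$, $W_i=x\Gamma_{AB_i}+y\1$ is the paper's twirled dual and gives $2dN/\lambda_{\max}(\Sigma)-1$; note $\sum_iW_i=x\Sigma+Ny\1$, so the scalar is $x\lambda+Ny$. The paper never computes $\lambda_{\max}(\Sigma)$: it infers $\lambda_{\max}(Z)=N+d-1$ by matching its $p=1$ value to the external exact 1-to-$N$ result of Yao et al. You propose to compute it, but the method you name fails. Partial transposition is not a similarity, so $\Sigma$ and $\Sigma^{T_A}=\sum_i\mathrm{SWAP}_{AB_i}$ are not isospectral. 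For instance, $\Gamma$ has spectrum $\{d,0\}$ while SWAP has $\{\pm1\}$; at $N=2$, $\Sigma$ has $\{d\pm1,0\}$ while $(0\,1)+(0\,2)$ has eigenvalues in $\{\pm1,\pm2\}$. Reading contents off $\sum_i(0\,i)$ would give top eigenvalue $N$ and the wrong value $2d-1$. An elementary Gram argument does the job instead. Write $\Sigma=VV^\dagger$ with $V(\oplus_i\psi_i)=\sum_i\ket{\Gamma}_{AB_i}\otimes\psi_i$. Then $V^\dagger V=d\1+K$, where $K$ has zero diagonal blocks and its off-diagonal blocks are the unitary relabelings $B_i\to B_j$. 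Hence $\|K\|\leqslant N-1$ and $\lambda_{\max}(\Sigma)\leqslant N+d-1$ on the whole space, with equality on $\sum_i\ket{\Gamma}_{AB_i}\otimes\ket{0}^{\otimes(N-1)}_{\bar B_i}$. The whole-space bound is what dual feasibility needs; a bound on the symmetric subspace alone would not suffice.

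Your primal ansatz also fails. Take $J_1=\alpha P_{\max}$, $J_2=\beta(\1-P_{\max})$ at $N=2$. The marginal equations force $\alpha+\beta=2(d+1)/(d+2)$ and $\beta=1/(d(d+2))$, so $a+b=(3d^2+2d-2)/(d(d+2))$. This exceeds $(3d-1)/(d+1)$ by $2(d-1)/(d(d+1)(d+2))$, i.e. $7/4$ versus $5/3$ for qubits, and adding identity terms only increases it. Complementary slackness with your own dual explains why: $J_2$ must be supported on $\ker\Sigma$, not on the whole complement of the top eigenspace, which also contains intermediate eigenvalues such as $d-1$. Take $J_2=\beta P_{\ker\Sigma}$ instead. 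Then covariance and $S_N$-symmetry reduce the marginal conditions to two scalar equations with nonnegative solutions, and the duality gap closes; at $N=2$ you get $a=2d/(d+1)$ and $b=(d-1)/(d+1)$. Alternatively, invoke strong duality via Slater's condition, as the paper does.
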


In the probabilistic setting, the protocol is allowed to realize noiseless marginal channels with nonzero success probability, thereby placing it on comparable footing with the naive strategy.
This feature distinguishes PBC from ABC: whereas the latter is assessed via a broadcasting rate, here it is more meaningful to quantify performance in terms of the total number of samples directly. 
For the naive protocol, this cost scales as $n_{\text{naive}}:=Nn_Q$, corresponding to distributing $n_Q$ copies to each of the $N$ receivers.
In contrast, for PBC with success probability $p$, the rescaling of measurement outcomes contributes a factor $s^2_{N}(p)/p^2$, while conditioning on successful events introduces an additional overhead $1/p^2$. 
Together, these yield an overall sample complexity $n_{\text{prob}}(p):=s^2_{N}(p)n_Q/p^4$.
We therefore regard a protocol as SE whenever $n_{\text{prob}}(p)<n_{\text{naive}}$.
With the SE benchmark in place, we introduce the following no-go theorem.

\begin{figure}[t]
    \centering
    \includegraphics[width=0.95\linewidth]{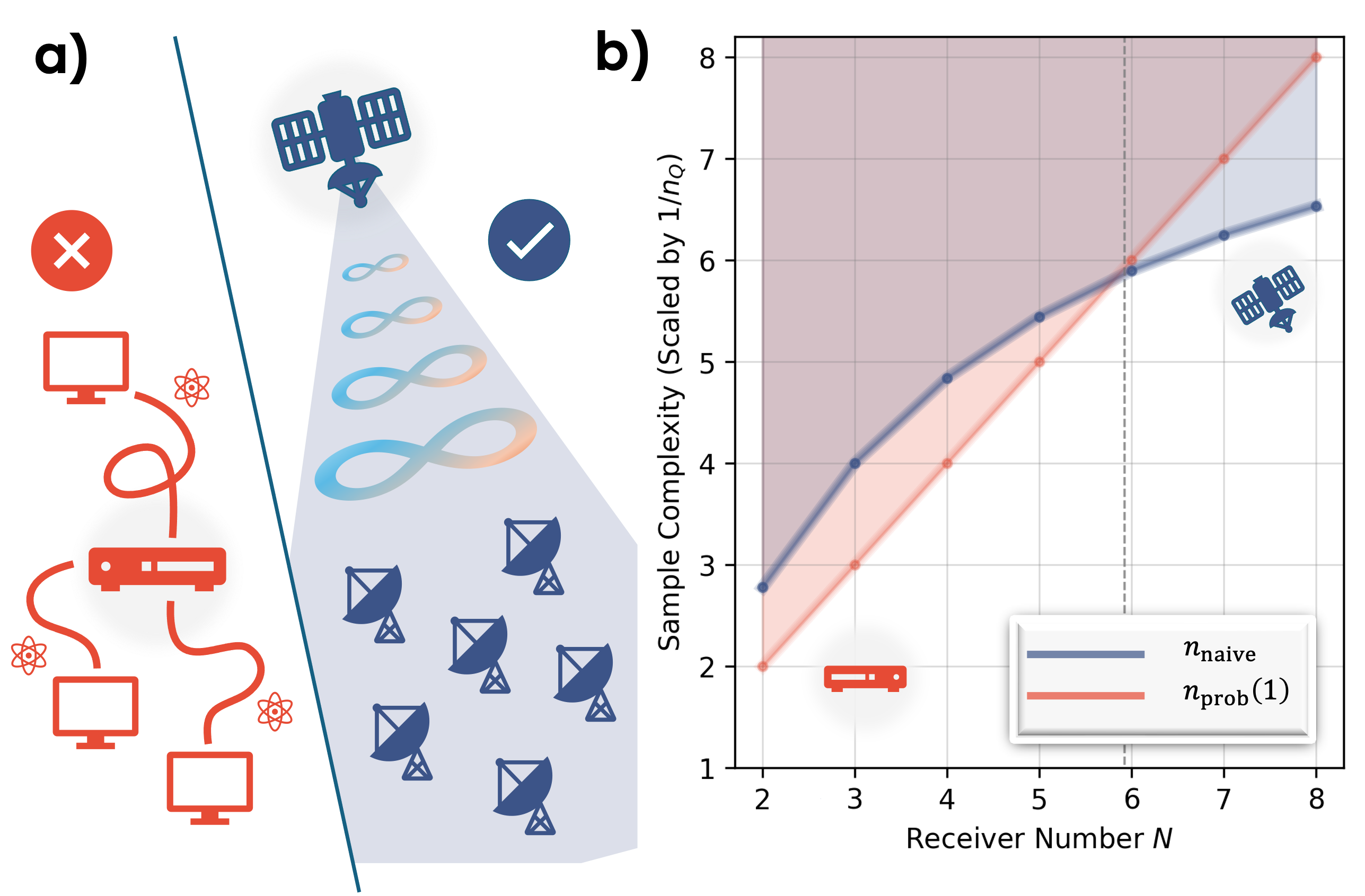}
    \caption{(Color online) \textbf{Broadcasting Across Scales}
    (a) SE rules out 1-to-2 PBC for arbitrary success probability and system dimension. In contrast, for qubit systems, 1-to-6 virtual broadcasting satisfies SE, even in the deterministic limit.
    (b) Comparison of sample requirements for the naive prepare-and-distribute protocol (red) and PBC with unit success probability (blue). Once the number of receivers exceeds six, the protocol enters the SE regime.}
    \label{fig:1-to-6}
\end{figure}

\begin{cor}[No 1-to-2 Practical PBC]\label{cor:MT-No-Practical-Probabilistic-Virtual-Broadcasting}
For all success probabilities and all system dimensions, 1-to-2 PBC fails to meet SE, and is thus intrinsically impractical.
\end{cor}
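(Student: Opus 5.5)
The plan is to substitute the closed form of Thm.~\ref{thm:MT-PBC-Optimal-Sample-Complexity} into the SE criterion $n_{\text{prob}}(p)<n_{\text{naive}}$ and show that for $N=2$ the inequality fails for every $p\in(0,1]$ and every $d\geqslant2$. With $N=2$ the theorem gives
\begin{align}
s_2(p)=p\left(\frac{4d}{d+1}-1\right)=p\,\frac{3d-1}{d+1}.
\end{align}
Hence $s_2^2(p)/p^4=\left((3d-1)/(d+1)\right)^2/p^2$. The SE condition $n_{\text{prob}}(p)<2n_Q$ then reads
\begin{align}
\left(\frac{3d-1}{d+1}\right)^2<2p^2.
\end{align}

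Since $p\leqslant1$, the right-hand side is at most $2$. So it suffices to show $\left((3d-1)/(d+1)\right)^2\geqslant2$ for all $d\geqslant2$. The map $d\mapsto(3d-1)/(d+1)=3-4/(d+1)$ is increasing in $d$. At $d=2$ it already equals $5/3$, and $(5/3)^2=25/9>2$. Therefore for all $d\geqslant2$ and all $p\in(0,1]$ we get $n_{\text{prob}}(p)\geqslant\tfrac{25}{9}n_Q/p^2>2n_Q=n_{\text{naive}}$. This matches the threshold $d<1.522$ found before Lem.~\ref{lem:MT-no-virtual-broadcasting}, and the deterministic case $p=1$ recovers that lemma.

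There is essentially no obstacle once Thm.~\ref{thm:MT-PBC-Optimal-Sample-Complexity} is available; the only care needed is in accounting. First, the optimum $s_2(p)$ is a lower bound on $a+b$ for every feasible decomposition, so no PBC protocol can do better than the bound used above. Second, the $1/p^2$ post-selection factor only makes matters worse, so the failure is uniform in $p$. The case $p=0$ is excluded, since a protocol that never succeeds broadcasts nothing.
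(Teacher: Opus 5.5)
Your proof is correct and follows the paper's own argument: you substitute $s_2(p)=p(3d-1)/(d+1)$ into $n_{\text{prob}}(p)=s_2^2(p)\,n_Q/p^4$ (Lem.~\ref{lem:SC_PVP}), which gives the overhead $((3d-1)/(d+1))^2/p^2$. This is smallest at $p=1$, and there it already exceeds $2$ for every $d\geqslant2$, exactly as in the paper's reduction to Eq.~\eqref{eq:Upper_Bound_No_VBroadcasting}. Your explicit monotonicity check in $d$ and your remark that only a lower bound on $a+b$ is needed are sound additions but do not change the route.
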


For qubit systems at unit success probability, that is, in the deterministic limit of $N=2$, virtual broadcasting cannot satisfy SE as $n_{\text{prob}}(1)=s^2_{2}(1)n_Q\approx2.778n_Q>2n_Q$.
Retaining the qubit setting while increasing the number of receivers reveals a different behavior. 
At larger scale, for $N=6$, Thm.~\ref{thm:MT-PBC-Optimal-Sample-Complexity} gives $n_{\text{prob}}(1)=s^2_{6}(1)n_Q\approx5.898n_Q<6n_Q$, as demonstrated in Fig.~\ref{fig:1-to-6}, placing the protocol within the SE threshold.

%which lies within the SE threshold, indicating the fourth result

\begin{cor}[1-to-6 Practical BC]\label{cor:MT-Practical-Probabilistic-Virtual-Broadcasting}
For qubit systems, there exist 1-to-6 virtual broadcasting protocols that satisfy SE, thereby establishing their practical viability.
\end{cor}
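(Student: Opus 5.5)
The claim follows from Thm.~\ref{thm:MT-PBC-Optimal-Sample-Complexity} by direct evaluation at $d=2$, $N=6$, $p=1$. The idea is to exhibit the deterministic ($p=1$) optimal protocol and check that its total sample cost $n_{\text{prob}}(1)=s_6^2(1)\,n_Q$ falls below the naive cost $n_{\text{naive}}=6n_Q$.

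First, I would observe that $p=1$ is admissible. At $p=1$ the marginal constraints $\Tr_{\neg B_i}[J_1-J_2]=\Gamma_{AB_i}$ force $\Tr_{B_1\cdots B_N}[J_1-J_2]=\1_A$. Combined with $\1\star J_1\leqslant a\1_A$, $\1\star J_2\leqslant b\1_A$ and $a-b\leqslant1$, the optimizer can be taken trace preserving. Hence the resulting virtual operation is a genuine (deterministic) 1-to-6 virtual broadcasting map satisfying BC. I would then invoke Lem.~\ref{lem:MT-PBC-UC-Optimality}, so the optimal map may be chosen UC, with depolarizing-free (exact identity) marginals.

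Second, I would substitute into Eq.~\eqref{eq:MT-Thm-Optimal-Sample-Complexity}:
\begin{align}
s_6(1)=\frac{2\cdot 2\cdot 6}{6+2-1}-1=\frac{24}{7}-1=\frac{17}{7}.
\end{align}
This gives $n_{\text{prob}}(1)=s_6^2(1)\,n_Q/1^4=(289/49)\,n_Q\approx5.898\,n_Q$. Since $289<294=6\cdot 49$, we have $n_{\text{prob}}(1)<6n_Q=n_{\text{naive}}$, which is exactly the SE criterion. For completeness, I would note that any $p<1$ can only raise $n_{\text{prob}}(p)=s_N^2(1)\,n_Q/p^2$, so $p=1$ is the right choice. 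I would also note that $N=6$ is the first $N$ for qubits where $(4N/(N+1)-1)^2<N$; the case $N=5$ gives $(17/6)^2\approx8.03>5$.

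Third, I would make the operational implementation explicit so that "exist protocols" is justified. Decompose the optimal virtual map as $a\mE_1-b\mE_2$ with $a+b=s_6(1)$ and $\mE_{1,2}$ channels (after normalization). Sample $\mE_1$ or $\mE_2$ with probabilities $a/(a+b)$ and $b/(a+b)$, and have each receiver multiply measurement outcomes by $\pm(a+b)$. By Hoeffding, estimating each receiver's observable to the same precision as $n_Q$ direct copies then requires $(a+b)^2 n_Q$ input copies. Each run consumes one input yet feeds all six receivers simultaneously, so this count is the total consumption.

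The main obstacle is not the arithmetic but this last accounting step. One must argue that a single batch of $s_6^2(1)\,n_Q$ runs serves all six receivers at once, using their respective marginals, rather than requiring separate batches per receiver. Otherwise the cost would be multiplied by $N$. I would handle this by noting that the per-run output is a random $N$-partite state whose $i$-th marginal, after sign reweighting, is an unbiased estimator of $\Tr[\mO_{B_i}\rho]$ with range bounded by $(a+b)\|\mO_{B_i}\|$. A union bound over the six receivers then only changes logarithmic factors already absorbed into $n_Q$.
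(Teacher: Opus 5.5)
Your proposal is correct and follows the paper's own route: you evaluate Thm.~\ref{thm:MT-PBC-Optimal-Sample-Complexity} at $d=2$, $N=6$, $p=1$ to get $s_6(1)=17/7$, hence $n_{\text{prob}}(1)=(289/49)\,n_Q\approx 5.898\,n_Q<6n_Q$, which is exactly the paper's check (it also derives the qubit threshold $N>3+2\sqrt{2}$). One slip in a side remark: for $N=5$ the overhead is $(7/3)^2=49/9\approx 5.44$, not $(17/6)^2$, although it still exceeds $5$, so your conclusion about $N=5$ is unaffected.
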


At first sight, this behavior appears to conflict with the standard intuition of quantum information theory, where resource considerations such as sample complexity are absent:
in that setting, the impossibility of 1-to-2 broadcasting excludes any 1-to-6 realization, since tracing out four subsystems of a valid 1-to-6 map would yield a legitimate 1-to-2 broadcasting.
Once SE is imposed, however, the picture changes:
no 1-to-2 virtual protocol satisfies SE, yet for qubit systems 1-to-6 virtual schemes can. 
The resolution lies in the fact that the reduction argument does not extend to the resource level; tracing out subsystems preserves the broadcasting condition, but does not lower the associated sample complexity. 
This separation highlights a nontrivial interface between quantum information and complexity, and points to the need for a deeper understanding of quantum protocols through the lens of sample complexity.

%At first sight, this behavior appears to contradict the standard intuition from quantum information theory, where resource considerations such as sample complexity are absent. 
%In that framework, the impossibility of 1-to-2 broadcasting automatically precludes any 1-to-6 realization: otherwise, tracing out four subsystems of a valid 1-to-6 broadcasting would yield a legitimate 1-to-2 broadcasting channel.

%The situation is markedly different for virtual broadcasting once SE is imposed. 
%No 1-to-2 virtual protocol can satisfy the SE requirement. 
%Yet, for qubit systems, there exist 1-to-6 virtual broadcasting schemes that are compatible with SE. 
%This apparent tension is resolved by noting that the reduction argument fails at the level of resources: tracing out subsystems preserves BC but not its sample complexity. 
%In particular, realizing the 1-to-6 protocol requires at least $5.898n_Q$ samples, which remains strictly above both the $2.778n_Q$ threshold associated with 1-to-2 virtual broadcasting and the baseline $2n_Q$.
%Consequently, tracing out subsystems cannot produce a practical virtual broadcasting protocol, and no contradiction with conventional quantum information theory arises.

%%%%%%%%%%%%%%%%%%%%%%%%%%%%%%%%%%%%%%%%%%%%%%%%%%%%%%%%%%%%%%%%%%%%%%%%%%%%%%%%%%%%%%%%%%%%%%%%%%%%%%%%%%%%%%%%%%%%%%

\noindent \textbf{Discussions}--To assess the practicality of virtual broadcasting, understood as compatibility with SE, we have developed frameworks for both ABC and PBC.
For ABC, we derived an analytic expression for the optimal sample complexity overhead, showing that in the 1-to-2 setting, allowing deviations at the receivers suffices to restore practicality.
For PBC, we extended the analysis to general 1-to-$N$ transformations and obtained a closed-form characterization of the corresponding sample complexity, from which a strong no-go theorem follows:
no 1-to-2 probabilistic protocol satisfies SE for arbitrary success probability and system dimension.
Strikingly, this limitation does not extend to larger scales: for qubit systems, 1-to-6 virtual broadcasting can satisfy SE. 
A clear departure from the standard intuition of quantum information theory emerges: the impossibility of 1-to-2 broadcasting no longer excludes larger-scale realizations.
These results point to a fundamental interplay between sample complexity and quantum information processing, and call for a deeper exploration of their intersection.

A detailed analysis of both frameworks reveals a common structural feature: for both ABC and PBC, optimal protocols at minimal sample complexity can always be taken to satisfy UC.
In the probabilistic setting, the uniformity of the success probability across receivers further enforces PI, while for ABC, PI can be imposed whenever the error tolerances at the receivers are symmetric.
We also identify the regimes in which SE is attainable. 
Against this backdrop, a single condition remains fundamentally incompatible with SE, namely, CC.
Imposing CC uniquely singles out a linear map satisfying UC, PI, and CC, the canonical virtual broadcasting map, which, by the no practical quantum broadcasting theorem, necessarily violates SE.
This delineates a fundamental boundary: broadcasting can be reconciled with symmetry and resource constraints, yet resists full compatibility with classical consistency, leaving the prospect of a unified framework bridging the quantum and classical worlds out of reach.

%%%%%%%%%%%%%%%%%%%%%%%%%%%%%%%%%%%%%%%%%%%%%%%%%%%%%%%%%%%%%%%%%%%%%%%%%%%%%%%%%%%%%%%%%%%%%%%%%%%%%%%%%%%%%%%%%%%%%%

\section*{Acknowledgments}
Yunlong Xiao acknowledges Zhenhuan Liu, Kun Fang, and Yuxiang Yang for stimulating discussions during Quantum Resources 2026 workshop in Tokyo.  
This research is supported by A*STAR under its Career Development Fund (C243512002).

%%%%%%%%%%%%%%%%%%%%%%%%%%%%%%%%%%%%%%%%%%%%%%%%%%%%%%%%%%%%%%%%%%%%%%%%%%%%%%%%%%%%%%%%%%%%%%%%%%%%%%%%%%%%%%%%%%%%%%

\section*{~}

{\it Note added.}
See also independent work on approximate virtual broadcasting, where the problem is formulated using the diamond norm~\cite{Matthew2026}. 
Our approach differs in both physical scope and mathematical treatment.
From a physical perspective, we explore both approximate and probabilistic virtual broadcasting and uncover a counterintuitive phenomenon: although no practical 1-to-2 virtual broadcasting protocol exists, practical 1-to-6 schemes do arise for qubit systems, marking a departure from conventional quantum broadcasting. 
From a mathematical perspective, in the approximate regime we begin with a semidefinite programming (SDP) characterization of the minimal sample complexity overhead, reduce it to a second-order cone programming (SOCP), and ultimately obtain an analytic solution. 
In the probabilistic setting, the SDP further simplifies to a linear programming (LP), enabling a closed-form characterization of the optimum.

%%%%%%%%%%%%%%%%%%%%%%%%%%%%%%%%%%%%%%%%%%%%%%%%%%%%%%%%%%%%%%%%%%%%%%%%%%%%%%%%%%%%%%%%%%%%%%%%%%%%%%%%%%%%%%%%%%%%%%

\bibliography{Bib}

%%%%%%%%%%%%%%%%%%%%%%%%%%%%%%%%%%%%%%%%%%%%%%%%%%%%%%%%%%%%%%%%%%%%%%%%%%%%%%%%%%%%%%%%%%%%%%%%%%%%%%%%%%%%%%%%%%%%%%

\newpage

\onecolumngrid

\appendix

\section*{Appendices}

%%%%%%%%%%%%%%%%%%%%%%%%%%%%%%%%%%%%%%%%%%%%%%%%%%%%%%%%%%%%%%%%%%%%%%%%%%%%%%%%%%%%%%%%%%%%%%%%%%%%%%%%%%%%%%%%%%%%%%

\section{Preliminaries}\label{sec:Pre}

For completeness, this section briefly reviews the concepts and conventions underlying the analysis that follows. 
Subsection~\ref{subsec:Notations} introduces the notation used throughout the work. 
Subsection~\ref{subsec:Link_Product} recalls the composition rule for Choi operators of quantum channels, known as the link product. 
The broadcasting condition is presented in Subsec.~\ref{subsec:Broadcasting}, followed in Subsec.~\ref{subsec:Broadcasting_Fidelity} by the measures used to quantify broadcasting performance, namely the broadcasting fidelity and its averaged version. 
Subsection~\ref{subsec:Channel_Twirling} introduces channel twirling, which will play a key role in simplifying the analysis of approximate broadcasting in Subsec.~\ref{subsec:Marginals}. 
Readers already familiar with these notions may wish to proceed directly to the next section.

%%%%%%%%%%%%%%%%%%%%%%%%%%%%%%%%%%%%%%%%%%%%%%%%%%%%%%%%%%%%%%%%%%%%%%%%%%%%%%%%%%%%%%%%%%%%%%%%%%%%%%%%%%%%%%%%%%%%%%

\subsection{Notations}\label{subsec:Notations}

We begin by fixing the notation and conventions used throughout this work. 
The Hilbert space associated with system $A$ is denoted by $\mH_A$, and we will simply refer to it as system $A$.
A quantum state $\rho$ on $A$ is represented by a positive semidefinite operator, $\rho\geqslant0$, satisfying the normalization condition $\Tr[\rho]=1$.
The most general physical transformation of a quantum system is described by a quantum channel, formally defined as a completely positive and trace-preserving (CPTP) linear map. 
We will frequently employ the Choi–Jamio\l kowski isomorphism to represent such channels as operators on a bipartite Hilbert space. 
In particular, for a quantum channel $\mE:A\to B$, we define its associated Choi operator $J^{\mE}$ as
\begin{align}\label{eq:Choi}
    J^{\mE}_{AB}:=\id_{A\to A}\otimes\mE_{A'\to B}
    \left(\Gamma_{AA'}\right),
\end{align}
where 
$\Gamma$ denotes the unnormalized maximally entangled state (UMES) on systems $A$ and $A'$, with $A'$ a Hilbert space isomorphic to $A$. 
Explicitly, $\Gamma$ is defined as
\begin{align}\label{eq:UMES}
    \Gamma:=\ketbra{\Gamma}{\Gamma},
\end{align}
with
\begin{align}
    \ket{\Gamma}:=\sum_i\ket{ii}.
\end{align}
Assuming $\dim A=\dim A'=d$, the corresponding maximally entangled state (MES) $\ket{\phi^{+}}$ is defined as 
\begin{align}
    \ket{\phi^{+}}
    :=\frac{1}{\sqrt{d}}\ket{\Gamma}
    =\frac{1}{\sqrt{d}}\sum_i\ket{ii},
\end{align}
with
\begin{align}\label{eq:MES}
    \phi^{+}:=\ketbra{\phi^{+}}{\phi^{+}}.
\end{align}
Whenever the underlying system is clear from the context, we omit the corresponding subscript to streamline the notation.
In terms of the Choi operator representation defined in Eq.~\eqref{eq:Choi}, complete positivity (CP) of the quantum channel $\mE:A\to B$ is equivalent to
\begin{align}\label{eq:CP}
    J^{\mE}_{AB}\geqslant0,
\end{align}
while the trace-preserving (TP) condition is expressed as
\begin{align}\label{eq:TP}
    \Tr_{B}[J^{\mE}_{AB}]=\1_{A},
\end{align}
that is, tracing out the output system of the Choi operator $J^{\mE}$ yields the identity on the input system. 
All these properties of quantum channels admit a natural representation within the diagrammatic language of tensor networks; see Refs.~\cite{wood2015tensornetworksgraphicalcalculus,Coecke_Kissinger_2017,Bridgeman_2017,biamonte2020lecturesquantumtensornetworks,Collura_2024,xiao2025superchanneltearsgeneralizedoccams}.
Beyond quantum channels, this work also considers probabilistic protocols for quantum broadcasting. To describe such processes, it is convenient to introduce the notion of a subchannel. 
A linear map $\mE$ from system $A$ to system $B$ is called a subchannel if it is completely positive (CP) and trace non-increasing (TNI). 
In the Choi representation, the TNI condition takes the form
\begin{align}\label{eq:TNI}
    \Tr_{B}[J^{\mE}_{AB}]\leqslant\1_{A}.
\end{align}
Comprehensive accounts of quantum channels are provided in Refs.~\cite{Nielsen_Chuang_2010,RevModPhys.86.1203,Wilde_2017,Watrous_2018,khatri2024principlesquantumcommunicationtheory}.

Quantum channels, including subchannels, describe the class of transformations that can be implemented within standard quantum mechanics. 
Recent developments in quantum information processing, however, suggest that additional capabilities may emerge when quantum operations are combined with classical post-processing. 
Such hybrid strategies can lead to operational advantages beyond those attainable by physical quantum channels alone, with examples including virtual cooling~\cite{PhysRevX.9.031013}, virtual quantum resource distillation~\cite{PhysRevLett.132.050203,PhysRevA.109.022403}, reversal of unknown unitary operations~\cite{PhysRevLett.133.030801}, and quantum error mitigation~\cite{RevModPhys.95.045005}.

Mathematically, these virtual operations are described by linear maps that are Hermitian-preserving and trace-preserving (HPTP). 
This class strictly contains the set of completely positive and trace-preserving (CPTP) maps, thereby extending the operational framework beyond physical quantum channels.
A virtual operation $\mE$ from system $A$ to system $B$ can likewise be formulated in terms of its Choi operator $J^{\mE}$, for which the Hermitian-preserving (HP) condition takes the form
\begin{align}\label{eq:HP}
    (J^{\mE}_{AB})^{\dagger}=J^{\mE}_{AB}.
\end{align}
In other words, a linear map $\mE$ is Hermitian-preserving (HP) if and only if its Choi operator $J^{\mE}$ is Hermitian.

%%%%%%%%%%%%%%%%%%%%%%%%%%%%%%%%%%%%%%%%%%%%%%%%%%%%%%%%%%%%%%%%%%%%%%%%%%%%%%%%%%%%%%%%%%%%%%%%%%%%%%%%%%%%%%%%%%%%%%

\subsection{Link Product}\label{subsec:Link_Product}

This subsection reviews the link product~\cite{959270,PhysRevLett.101.060401,PhysRevA.80.022339}, a compact algebraic representation of the composition of quantum processes at the level of Choi operators. 
The link product provides a powerful framework for analyzing quantum channels and their interconnections, significantly simplifying many calculations involving composite processes. 
In particular, even the Born rule can be expressed naturally within this formalism. 
Mathematically, the link product $\star$ is defined as follows.

\begin{mydef}
{Link Product~\cite{959270,PhysRevLett.101.060401,PhysRevA.80.022339}}
{Link_Product}
Given two operators $M$ and $N$ with a common subsystem $A$, their link product, denoted by $M\star N$, is given by
\begin{align}\label{eq:Link_Product}
    M\star N:=\Tr_{A}[M\cdot N^{\T_{A}}].
\end{align}
Here $^{\T_{A}}$ denotes the partial transpose with respect to subsystem $A$.
\end{mydef}

Equipped with the link product, the composition of quantum channels can be expressed directly at the level of their Choi operators. 
In particular, for two quantum channels $\mE:A\to B$ and $\mF: B\to C$, the Choi operator of their sequential composition $\mF\circ\mE:A\to C$ is given by
\begin{align}\label{eq:LP_E_F}
    J^{\mF\circ\mE}=J^{\mE}\star J^{\mF}.
\end{align}
We assume a fixed underlying ordering of subsystems, which uniquely specifies the order of tensor products. 
Throughout, we omit the $\otimes$ symbol when writing products of operators acting on disjoint subsystems, with the convention that each operator is implicitly tensored with the identity on all other subsystems. 
For instance, in Eq.~\eqref{eq:LP_E_F} above, we have
\begin{align}
    J^{\mE}\star J^{\mF}
    =
    \Tr_{B}[
    \left(J^{\mE}_{AB}\otimes\1_{C}\right)
    \cdot
    \left(\1_{A}\otimes J^{\mF}_{BC}\right)^{\T_B}
    ]
    =
    \Tr_{B}[
    \left(\1_{A}\otimes J^{\mF}_{BC}\right)
    \cdot
    \left(J^{\mE}_{AB}\otimes\1_{C}\right)^{\T_B}
    ]
    =
    J^{\mF}\star J^{\mE}.
\end{align}
The Born rule also admits a compact formulation in terms of the link product. 
For a quantum state $\rho$ followed by a positive operator-valued measure (POVM) $\{M_x\}_x$, the probability of obtaining outcome $x$ can be expressed as
\begin{align}\label{eq:Born}
    p_x=\rho\star M_x^{\T}.
\end{align}
In this representation, the Choi operator of state preparation is identified with its density matrix $\rho$, while the Choi operator of each measurement effect $M_x$ is represented by its transposed operator $M_x^{\T}$.
Finally, consider the action of a quantum channel $\mE$ on a state $\rho$. 
The resulting output state can be expressed as the link product of their corresponding Choi operators, namely,
\begin{align}
    \mE(\rho)=J^{\mE}\star\rho.
\end{align}

Another useful property of the link product is that tracing out a subsystem corresponds, at the level of Choi operators, to linking with the identity operator $\1$ on that subsystem. For example, consider a broadcasting map $\mE:A\to BC$. 
Tracing out the output system $C$ yields a marginal channel whose Choi operator takes the form
\begin{align}
    J^{\Tr_C\circ\,\mE}_{AB}=\1_{C}\star J^{\mE}_{ABC}.
\end{align}
This property will be used frequently in our analysis of broadcasting throughout this work.

%%%%%%%%%%%%%%%%%%%%%%%%%%%%%%%%%%%%%%%%%%%%%%%%%%%%%%%%%%%%%%%%%%%%%%%%%%%%%%%%%%%%%%%%%%%%%%%%%%%%%%%%%%%%%%%%%%%%%%

\subsection{Broadcasting}\label{subsec:Broadcasting}

Quantum cloning and broadcasting provide the conceptual backdrop for the analysis developed here, centered on the broadcasting condition. We begin by recalling the notion of quantum cloning.

\begin{mydef}
{Quantum Cloning}
{Quantum_Cloning}
A 1 to 2 quantum channel $\mE:A\to BC$, mapping a state from a sender Alice to two receivers Bob and Claire, is said to be a cloning map if, for every input state $\rho$, one has
\begin{align}\label{eq:Cloning}
    \mE(\rho)=\rho\otimes\rho.
\end{align}
\end{mydef}

However, the linearity of quantum mechanics rules out the existence of such a channel, leading to the {\it no-cloning theorem}~\cite{Wootters1982clone,DIEKS1982clone}.
For comprehensive reviews of this topic, see Ref.~\cite{RevModPhys.77.1225}.
This motivates a relaxed requirement, known as the broadcasting condition, in which the output of the channel 
$\mE$ is not required to equal $\rho\otimes\rho$. 
Instead, the channel produces a bipartite state $\sigma:=\mE(\rho)$ whose reduced states both coincide with $\rho$, that is,

\begin{mydef}
{Quantum Broadcasting}
{Quantum_Broadcasting}
A 1 to 2 quantum channel $\mE:A\to BC$, which distributes a state from the sender Alice to two receivers, Bob and Claire, is called a broadcasting map if, for every input state $\rho$, one has
\begin{align}\label{eq:BC_1}
    \Tr_{B}[\mE(\rho)]=\rho_{C}, \quad\text{and}\quad
    \Tr_{C}[\mE(\rho)]=\rho_{B}.
\end{align}
\end{mydef}

The above broadcasting condition (BC) of Eq.~\eqref{eq:BC_1} admits an equivalent formulation purely at the level of the channel $\mE$, namely,
\begin{align}\label{eq:BC_2}
    \Tr_{B}\circ\,\mE_{A\to BC}=\id_{A\to C}, \quad\text{and}\quad
    \Tr_{C}\circ\,\mE_{A\to BC}=\id_{A\to B}.
\end{align}
Expressed in terms of the Choi operator (see Eq.~\eqref{eq:Choi}) and link product $\star$ (see Def.~\ref{def:Link_Product}), the broadcasting condition takes the form
\begin{align}\label{eq:BC_3}
    \1_{B}\star J^{\mE}_{ABC}=\Gamma_{AC}, \quad\text{and}\quad
    \1_{C}\star J^{\mE}_{ABC}=\Gamma_{AB}.
\end{align}
One may then ask whether a broadcasting channel can exist within quantum mechanics. 
Even with the relaxation from cloning to broadcasting, such a channel remains impossible, as established by the {\it no-broadcasting theorem}~\cite{PhysRevLett.76.2818}.

%%%%%%%%%%%%%%%%%%%%%%%%%%%%%%%%%%%%%%%%%%%%%%%%%%%%%%%%%%%%%%%%%%%%%%%%%%%%%%%%%%%%%%%%%%%%%%%%%%%%%%%%%%%%%%%%%%%%%%

\subsection{Broadcasting Fidelity}
\label{subsec:Broadcasting_Fidelity}

To quantify the performance of quantum broadcasting, we introduce the notion of broadcasting fidelity. 
In entanglement theory, the quality of entanglement is commonly characterized by the entanglement fidelity. 
This idea later inspired the concept of channel fidelity. 
Building on this notion, we employ channel fidelity to define the broadcasting fidelity and the corresponding average broadcasting fidelity.

Given two systems $A$ and $B$ of equal dimension $d$, the entanglement fidelity $F_{\text{Ent}}$ of a bipartite state $\rho_{AB}$ is defined as~\cite{10.5555/2011706.2011707,RevModPhys.81.865} 
\begin{align}\label{eq:Ent_Fidelity}
    F_{\text{Ent}}(\rho):=\Tr[\rho\cdot\phi^{+}].
\end{align}
Here $\phi^{+}$ represents the maximally entangled state (MES) introduced in Eq.~\eqref{eq:MES}.
Similarly, for a quantum channel $\mE:A\to B$, the corresponding channel fidelity $F_{\text{Chan}}$ is defined as~\cite{Kretschmann_2004}
\begin{align}\label{eq:Channel_Fidelity}
    F_{\text{Chan}}(\mE):=\Tr[
    \left(\frac{1}{d}J^{\mE}\right)\cdot\phi^{+}].
\end{align}
In the same spirit, we introduce the notion of broadcasting fidelity.

\begin{mydef}
{Broadcasting Fidelity on System $B$}
{BF_B}
In a broadcasting protocol $\mE$, the sender Alice attempts to distribute quantum information to two receivers, Bob and Claire. 
The performance on Bob's side is characterized by the marginal channel
\begin{align}
    \Tr_{C}\circ\,\mE_{A\to BC},
\end{align}
obtained from the broadcasting process $\mE:A\to BC$ by tracing out system $C$. 
The corresponding broadcasting fidelity on system $B$ is then defined as
\begin{align}\label{eq:Broadcasting_Fidelity_B}
    F_{\text{Bcast}, B}(\mE):=
    &F_{\text{Chan}}(\Tr_{C}\circ\,\mE_{A\to BC})\\
    =
    &\Tr[
    \left(\frac{1}{d}J^{\Tr_{C}\circ\,\mE_{A\to BC}}\right)\cdot\phi^{+}]
    =
    \frac{1}{d}\Tr[\left(\1_{C}\star J^{\mE}_{ABC}\right)\cdot\phi^{+}]
    =
    \frac{1}{d}\Tr[J^{\mE}_{AB}\cdot\phi^{+}]
    .
\end{align}
Here $J^{\mE}_{AB}$ denotes the reduced Choi operator obtained from $J^{\mE}_{ABC}$ by tracing out system $C$.
\end{mydef}

The broadcasting fidelity associated with system $C$ is defined analogously

\begin{mydef}
{Broadcasting Fidelity on System $C$}
{BF_C}
The broadcasting performance on Claire's side is quantified by the marginal channel obtained from the broadcasting map $\mE:A\to BC$ by tracing out system $B$,
\begin{align}
    \Tr_{B}\circ\,\mE_{A\to BC}.
\end{align}
The associated broadcasting fidelity is defined as
\begin{align}\label{eq:Broadcasting_Fidelity_C}
    F_{\text{Bcast}, C}(\mE):=
    F_{\text{Chan}}(\Tr_{B}\circ\,\mE_{A\to BC})
    =
    \frac{1}{d}\Tr[J^{\mE}_{AC}\cdot\phi^{+}]
    ,
\end{align}
where $J^{\mE}_{AB}$ is the reduced Choi operator obtained from $J^{\mE}_{ABC}$ by tracing out system $B$.
\end{mydef}

Thus far, broadcasting fidelity has been defined only in terms of the marginal output states. 
However, marginal criteria alone are insufficient to characterize the overall performance of a broadcasting protocol. 
In particular, they fail to capture how information is distributed jointly among the receivers.

This limitation can be illustrated by a simple example. 
Consider a map in which a noiseless quantum channel connects the sender Alice to the receiver Bob, while a maximally mixed state is prepared independently on Claire's side; that is
\begin{align}
    \mE_{A\to BC}=\id_{A\to B}\otimes \frac{1}{d}\1_{C}.
\end{align}
In this case, the broadcasting fidelity associated with Bob is $1$, reflecting perfect performance on his subsystem. 
By contrast, the broadcasting fidelity on Claire's side is $1/d^2$, as no information about the input state is conveyed to her.

This example highlights that perfect performance on one marginal can coexist with a complete lack of information transfer on the other. 
To meaningfully assess the overall quality of a broadcasting process, a global figure of merit is therefore required. 
Motivated by this observation, the average broadcasting fidelity is introduced as

\begin{mydef}
{Average Broadcasting Fidelity}
{ABF}
For a broadcasting map, $\mE:A\to BC$, from the sender Alice to the receivers Bob and Claire, the average broadcasting fidelity $F_{\text{Bcast}}(\mE)$ is defined as the mean of the fidelities associated with its marginal channels, namely
\begin{align}
    F_{\text{Bcast}}(\mE):=\frac{1}{2}
    \left(
    F_{\text{Bcast}, B}(\mE)
    +
    F_{\text{Bcast}, C}(\mE)
    \right)
\end{align}
where $F_{\text{Bcast}, B}$ and $F_{\text{Bcast}, C}$ are defined in Defs.~\ref{def:BF_B} and~\ref{def:BF_C}, respectively.
\end{mydef}

%%%%%%%%%%%%%%%%%%%%%%%%%%%%%%%%%%%%%%%%%%%%%%%%%%%%%%%%%%%%%%%%%%%%%%%%%%%%%%%%%%%%%%%%%%%%%%%%%%%%%%%%%%%%%%%%%%%%%%

\subsection{Channel Twirling}
\label{subsec:Channel_Twirling}

The final subsection of the Preliminaries Sec.~\ref{sec:Pre} focuses on channel twirling, a symmetric averaging procedure in which a random unitary is applied before a quantum channel and undone afterward by its inverse. 
Averaging over the unitary group with respect to the Haar measure maps the channel to a depolarizing channel. 
Formally, channel twirling is defined as

\begin{mydef}
{Channel Twirling}
{Channel_Twirling}
Given a quantum channel $\mE:A\to B$ with $d=\dim A=\dim B$, the twirled channel is defined as
\begin{align}
    \mT_{\text{Chan}}(\mE)(\cdot):=
    \int_{\text{Haar}}dU \,
    U^{\dagger} \left(\mE(U(\cdot)U^{\dagger})\right) U,
\end{align}
where the average is taken over random unitaries drawn according to the Haar measure~\cite{Haar1933}.
\end{mydef}

One can then readily verify that the Choi operator of the twirled channel $\mT_{\text{Chan}}(\mE)$ takes the form
\begin{align}
    J^{\mT_{\text{Chan}}(\mE)}
    =
    \int_{\text{Haar}}dU \,
    \left(U^*\otimes U\right)\cdot J^{\mE} \cdot \left(U^{\T}\otimes U^{\dagger}\right).
\end{align}
By taking the partial transpose $^{\T_{A}}$ on the first system, $A$, we obtain

\begin{align}
    (J^{\mT_{\text{Chan}}(\mE)})^{\T_{A}}
    =
    \int_{\text{Haar}}dU \,
    \left(U\otimes U\right)\cdot (J^{\mE})^{\T_{A}} \cdot \left(U^{\dagger}\otimes U^{\dagger}\right)
    =
    p\,\1_{A}\otimes\frac{\1_{B}}{d}
    +
    (1-p) F_{AB},
\end{align}
where the second equality follows from {\it Schur–Weyl duality}~\cite{fulton2004}, and $F$ denotes the unitary operator associated with the SWAP gate acting on systems $A$ and $B$.
Performing the partial transpose $^{\T_{A}}$ on the first system a second time gives
\begin{align}
    J^{\mT_{\text{Chan}}(\mE)}
    =
    (p\,\1_{A}\otimes\frac{\1_{B}}{d}
    +
    (1-p) F_{AB})^{\T_{A}}
    =
    p\,\1_{A}\otimes\frac{\1_{B}}{d}
    +
    (1-p) \Gamma_{AB}.
\end{align}
Finally, applying $\mT_{\text{Chan}}(\mE)$ to an input state $\rho$ yields
\begin{align}
    \mT_{\text{Chan}}(\mE)(\rho)
    =
    J^{\mT_{\text{Chan}}(\mE)}\star\rho
    =
    (p\cdot\1_{A}\star\rho)
    \,\frac{\1_{B}}{d}
    +
    (1-p) \Gamma_{AB}\star\rho
    =
    p\,\frac{\1_{B}}{d}
    +
    (1-p)\rho_{B}.
\end{align}
In other words, the twirled channel $\mT_{\text{Chan}}(\mE)$ reduces to a quantum depolarizing channel $\mD_{p}$, i.e., 
\begin{align}\label{eq:Twirling_Depolarizing}
    \mT_{\text{Chan}}(\mE)= \mD_{p},
\end{align}
with parameter $p$, determined by the original channel $\mE$.

%%%%%%%%%%%%%%%%%%%%%%%%%%%%%%%%%%%%%%%%%%%%%%%%%%%%%%%%%%%%%%%%%%%%%%%%%%%%%%%%%%%%%%%%%%%%%%%%%%%%%%%%%%%%%%%%%%%%%%

\section{Operational Limits on Quantum Broadcasting}
\label{sec:Limits}

The basic constraints that any broadcasting map must satisfy are introduced here. 
Subsec.~\ref{subsec:Conditions} discusses the conditions of sample efficiency (SE), unitary covariance (UC), permutation invariance (PI), and classical consistency (CC). 
Subsec.~\ref{subsec:SAnalysis_SComplexity} then develops a statistical analysis of quantum observable estimation, clarifying the associated sample overhead arising in virtual quantum information processing.

%%%%%%%%%%%%%%%%%%%%%%%%%%%%%%%%%%%%%%%%%%%%%%%%%%%%%%%%%%%%%%%%%%%%%%%%%%%%%%%%%%%%%%%%%%%%%%%%%%%%%%%%%%%%%%%%%%%%%%

\subsection{Assumptions and Conditions}
\label{subsec:Conditions}

In this subsection, we specify the conditions that a broadcasting map must satisfy, beginning with the requirement of sample efficiency (SE)~\cite{z2pr-zbwl,8g6j-w7ld}. 
Suppose that a sender, Alice, wishes to distribute an unknown quantum state $\rho$ to two receivers, Bob and Claire, in such a way that local measurements performed by each reproduce the statistics of $\rho$. 
Let Bob and Claire be interested in different observables, $\mO_B$ and $\mO_C$, which require $n_1$ and $n_2$ independent samples of $\rho$, respectively, to achieve a prescribed estimation accuracy. 
A nontrivial broadcasting protocol should accomplish this task using strictly fewer than $n_1+n_2$ copies of $\rho$.
Otherwise, simply distributing $n_1$ copies to Bob and $n_2$ copies to Claire would reduce the problem to trivial state preparation and distribution.
The requirement of SE can therefore be formally stated as

\begin{mydef}
{Sample Efficiency (SE)}
{Sample_Efficiency}
Consider a broadcasting task in which Alice is provided with $n$ copies of an initial state $\rho$, while the receivers Bob and Claire require $n_1$ and $n_2$ copies, respectively, to attain a prescribed accuracy in estimating certain observables on their local systems. 
A broadcasting protocol implemented by Alice is said to be sample efficient (SE) if
\begin{align}\label{eq:SE}
    n<n_1+n_2,
\end{align}
holds for any initial state $\rho$.
\end{mydef}

The second condition imposed on a broadcasting map is unitary covariance (UC). 
This condition states that applying a unitary operation to Alice's system prior to broadcasting $\mE$ is operationally equivalent to first performing the broadcasting map and then applying the same unitary to both Bob's and Claire's systems. 
Formally, the UC condition reads

\begin{mydef}
{Unitary Covariance (UC)}
{Unitary_Covariance}
A 1 to 2 quantum channel $\mE:A\to BC$ is said to be unitarily covariant (UC) if the following relation holds for any unitary map $\mU$.
\begin{align}\label{eq:UC}
    \mE\circ\mU=\mU\otimes\mU\circ\mE.
\end{align}
Here the action of $\mU$ on an input state $\rho$ is defined as $\mU(\rho)=U\rho U^{\dagger}$.
\end{mydef}

The third requirement is permutation invariance (PI), which demands that the broadcasting map be symmetric under an exchange of the two receivers. 
Concretely, applying a SWAP operation between Bob's and Claire's systems after broadcasting must leave the overall output state invariant, thereby ensuring symmetry between the two recipients.
The PI condition can be formally written as

\begin{mydef}
{Permutation Invariance (PI)}
{Permutation_Invariance}
A 1 to 2 quantum channel $\mE:A\to BC$ is permutation invariant (PI) if the following relation holds.
\begin{align}\label{eq:PI}
    \text{SWAP}\circ\mE=\mE.
\end{align}
Here SWAP denotes the quantum channel that swaps the output systems.
\end{mydef}

The final requirement is classical consistency (CC), which stipulates that when all quantum systems are subjected to a completely dephasing channel $\Delta(\cdot):=\sum_{i}\bra{i}\cdot\ket{i}\ketbra{i}{i}$ (in some basis $\{\ket{i}\}$), the broadcasting protocol must reduce to a purely classical broadcasting process $\mB_{\text{cl}}(\ketbra{i}{j}):=\delta_{ij}\ketbra{i}{i}\otimes\ketbra{i}{i}$ between the recipients.

\begin{mydef}
{Classical Consistency (CC)}
{Classical_Consistency}
A 1 to 2 quantum channel $\mE:A\to BC$ is said to satisfy classical consistency (CC) if, when complete dephasing $\Delta$ is applied to all its systems, the resulting map reduces to a classical broadcasting channel $\mB_{\text{cl}}$, namely
\begin{align}\label{eq:CC}
    \Delta\otimes\Delta
    \circ\mE\circ\Delta
    =\mB_{\text{cl}}.
\end{align}
\end{mydef}

In recent work by some of the present authors, a no practical quantum broadcasting theorem was established~\cite{z2pr-zbwl,8g6j-w7ld}, demonstrating that no map, extending even to the most general linear transformations, can simultaneously satisfy all four conditions above, including those characterizing virtual operations~\cite{PhysRevLett.132.110203}, namely Hermitian-preserving trace-preserving (HPTP) maps~\cite{Regula2021operational,Jiang2021physical}.
Formally, this no-go result can be stated as the following theorem

\begin{mythm}
{No Practical Quantum Broadcasting~\cite{z2pr-zbwl,8g6j-w7ld}}
{no_Qbroadcasting}
There exists no 1-to-2 linear map that simultaneously fulfills the requirements of SE (see Eq.~\eqref{eq:SE}), UC (see Eq.~\eqref{eq:UC}), PI (see Eq.~\eqref{eq:PI}), and CC (see Eq.~\eqref{eq:CC}).
\end{mythm}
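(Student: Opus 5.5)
We first pin down the unique linear map that could satisfy the symmetry and consistency requirements, and then check that this map cannot be sample efficient. Let $\mE:A\to BC$ be any linear map satisfying UC, PI and CC, with Choi operator $J$.

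\textbf{Step 1: fix the Choi operator.} UC means $J$ is invariant under $U^*\otimes U\otimes U$. After a partial transpose on $A$, Schur--Weyl duality forces $J^{\T_A}$ into the span of the six permutation operators on $A B C$. So
\begin{align}
J^{\T_A}=\sum_{\pi\in S_3} c_\pi V_\pi ,
\end{align}
which leaves six complex coefficients. PI (invariance under $B\leftrightarrow C$) pairs these up: $c_{(AB)}=c_{(AC)}$ and $c_{(ABC)}=c_{(ACB)}$. Only four free parameters remain: identity, $(BC)$, the symmetric transposition pair, and the symmetric $3$-cycle pair.

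\textbf{Step 2: impose CC.} Write $\Delta\otimes\Delta\circ\mE\circ\Delta$ as a diagonal tensor $T_{ijk}=\bra{jk}\mE(\ketbra{i}{i})\ket{jk}$. Each permutation operator contributes a specific indicator pattern:
\begin{itemize}
\item $\1$ gives the constant $1$;
\item $(BC)$ gives $\delta_{jk}$;
\item the transpositions through $A$ give $\delta_{ij}$ and $\delta_{ik}$;
\item both $3$-cycles give $\delta_{ijk}$.
\end{itemize}
CC requires $T_{ijk}=\delta_{ijk}$. Because these indicator functions are linearly independent for $d\geq2$, all coefficients vanish except the $3$-cycle one, which must equal $1/2$ on each cycle. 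The map is therefore unique: the canonical virtual broadcasting map, with
\begin{align}
J=\tfrac12\big(V_{(ABC)}+V_{(ACB)}\big)^{\T_A}.
\end{align}
As a check, its marginals are exactly the identity channel, so BC holds automatically.

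\textbf{Step 3: compute the sample cost.} This is the main obstacle, because SE must hold for every HP decomposition and every observable pair, not just a convenient one. I would lower-bound the overhead by the minimal $a+b$ over decompositions $J=J_1-J_2$ with $J_i\ge0$, $\Tr_{BC}J_1=a\1$, $\Tr_{BC}J_2=b\1$. The required number of copies then scales like $(a+b)^2 n_Q$, via the Hoeffding-type estimate for quasi-probability sampling from Subsec.~\ref{subsec:SAnalysis_SComplexity}.

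Since the map is UC, twirling reduces this to a small optimisation over the irreducible blocks of $J$ (symmetric, antisymmetric and mixed-symmetry parts). The best decomposition splits $J$ into its positive and negative spectral parts. This yields the known value $(3d-1)/(d+1)$, matching Lem.~\ref{lem:MT-no-virtual-broadcasting}.

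\textbf{Step 4: compare with the naive strategy.} Pick $n_1=n_2=n_Q$ with observables that are worst case for the estimator. Then the protocol needs
\begin{align}
n\ge\Big(\frac{3d-1}{d+1}\Big)^2 n_Q,
\end{align}
and this is at least $2n_Q=n_1+n_2$ for all $d\ge2$ (already $25/9>2$ at $d=2$). This contradicts Eq.~\eqref{eq:SE}.

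To cover arbitrary linear maps rather than only HP ones, first note that CC with real diagonal output, together with UC and PI, already forced a Hermitian $J$. Any linear map meeting all three conditions is therefore this HPTP map, and the argument above applies to it.
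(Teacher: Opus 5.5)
Your route matches the one the paper indicates. The paper quotes this theorem from earlier work rather than proving it, but in Subsec.~\ref{subsec:No_Practical_VB} and the Discussion it argues the same way: UC, PI and CC single out the canonical virtual broadcasting map, which is HPTP and satisfies BC, so the virtual sample-cost bound applies. Your uniqueness step is what lets the HPTP result, Thm.~\ref{thm:no_Vbroadcasting}, reach arbitrary linear maps.

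\textbf{Step 3 is wrong as stated.} The Jordan decomposition is indeed optimal for a fixed map. You have $a+b\geqslant\|J\|_1/d$ by the triangle inequality, and $J_\pm$ inherit the $U^*\otimes U\otimes U$ symmetry, so $\Tr_{BC}J_\pm\propto\1_A$. But for the canonical map it does not give $(3d-1)/(d+1)$. Write $J=\tfrac12(\Gamma_{AB}F_{BC}+F_{BC}\Gamma_{AB})=\tfrac12\sum_k(|\alpha_k\rangle\langle\beta_k|+|\beta_k\rangle\langle\alpha_k|)$, where $|\alpha_k\rangle=|\Gamma\rangle_{AB}|k\rangle_C$ and $|\beta_k\rangle=|\Gamma\rangle_{AC}|k\rangle_B$. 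These satisfy $\langle\alpha_k|\alpha_l\rangle=\langle\beta_k|\beta_l\rangle=d\,\delta_{kl}$ and $\langle\alpha_k|\beta_l\rangle=\delta_{kl}$. The nonzero eigenvalues of $J$ are $\tfrac{d+1}{2}$ on $|\alpha_k\rangle+|\beta_k\rangle$ and $-\tfrac{d-1}{2}$ on $|\alpha_k\rangle-|\beta_k\rangle$, each with multiplicity $d$. Hence $\|J\|_1=d^2$, and the minimal overhead of this map is $a+b=d$, with $a=\tfrac{d+1}{2}$ and $b=\tfrac{d-1}{2}$. This exceeds $(3d-1)/(d+1)$ by $(d-1)^2/(d+1)>0$.

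So $(3d-1)/(d+1)$ is not the cost of the canonical map. It is the optimum of Eq.~\eqref{eq:SDP_BC} over \emph{all} HPTP maps obeying BC. That optimum is attained by a map that can be taken UC and PI, and which therefore, by your own uniqueness result, is not CC. A related point: the positivity constraints act on $J$. The blocks of $J$ under $U^*\otimes U\otimes U$ are two traceless irreps plus two copies of the defining representation. The symmetric, antisymmetric and mixed blocks you mention belong to $J^{\T_A}$, whose spectrum is different.

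The conclusion survives either way. You can use the true overhead $d^2\geqslant4>2$. Alternatively, argue as the paper does: the canonical map is feasible for Eq.~\eqref{eq:SDP_BC}, so its cost is at least $(3d-1)/(d+1)>\sqrt2$ by Lem.~\ref{lem:SC_EVB}.

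\textbf{Step 2 needs a fix at $d=2$.} Your indicator functions are not linearly independent there: on $\{0,1\}^3$ one has $1-\delta_{jk}-\delta_{ij}-\delta_{ik}+2\delta_{ijk}=0$. So CC leaves a one-parameter family of coefficient vectors, and "$1/2$ on each cycle" is not forced. Uniqueness of the map still holds. The members of that family differ by multiples of $\sum_{\pi}\mathrm{sgn}(\pi)V_\pi$, the unnormalised antisymmetriser on three copies, which vanishes for qubits. For $d\geqslant3$ your independence argument is fine.
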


The above theorem shows that no practical quantum broadcasting protocol exists. 
However, it does not rule out the existence of HPTP maps that satisfy the conditions of UC, PI, and CC~\cite{PhysRevLett.132.110203}. 
The limitation instead lies in their sample complexity, which fails to meet the requirement of SE. 
In other words, simply preparing a sufficient number of copies and distributing them directly to the receivers provides a more efficient strategy for broadcasting quantum information.

%%%%%%%%%%%%%%%%%%%%%%%%%%%%%%%%%%%%%%%%%%%%%%%%%%%%%%%%%%%%%%%%%%%%%%%%%%%%%%%%%%%%%%%%%%%%%%%%%%%%%%%%%%%%%%%%%%%%%%

\subsection{Statistical Analysis and Sample Complexity}
\label{subsec:SAnalysis_SComplexity}

This subsection determines how many copies of a quantum state are required to certify a given confidence level in experimentally obtained data. 
The analysis is carried out within a finite-sample framework, using an $\epsilon-\delta$ test to quantify the sample complexity associated with the implementation of the (virtual) broadcasting map~\cite{z2pr-zbwl,8g6j-w7ld}.
To control statistical fluctuations arising from finite data, the argument is based on {\it Hoeffding's inequality}~\cite{Hoeffding01031963}. 
This concentration bound applies to independent, bounded random variables and yields an explicit estimate for the probability that an empirical average deviates from its expected value. 
In the present setting, it provides a direct and non-asymptotic relation between the number of state copies and the achievable confidence level. 
Formally, the inequality states that

\begin{mylem}
{Hoeffding's Inequality~\cite{Hoeffding01031963}}{Hoeffding}
Let $X_1, \ldots, X_n$ be independent random variables, each taking values in a bounded interval $X_i\in[a_i,b_i]$ with $-\infty<a_i\leqslant b_i<\infty$.
Denote the empirical mean by
\begin{align}
    \overline{X}:=\frac{1}{n}\sum_{i}X_i,
\end{align}
and let
\begin{align}
    e:=\mathbb{E}[X_i]
\end{align}
be the common expectation value.
Then, for any $\epsilon>0$,
\begin{align}
    \Pr(|\overline{X}-e|\geqslant\epsilon)\leqslant
    2\exp{-\frac{2n^2\epsilon^2}{\sum_{i}(b_i-a_i)^2}}.
\end{align}
\end{mylem}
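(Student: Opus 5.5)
The statement is the classical Hoeffding inequality, and I would prove it by the standard Chernoff (exponential moment) method combined with Hoeffding's lemma on the moment generating function of a bounded, centered variable. Write $S:=\sum_i X_i$, so that $\overline{X}-e=(S-\mathbb{E}[S])/n$. The event $|\overline{X}-e|\geqslant\epsilon$ is the union of the two one-sided events $S-\mathbb{E}[S]\geqslant n\epsilon$ and $S-\mathbb{E}[S]\leqslant -n\epsilon$. It therefore suffices to bound each by $\exp(-2n^2\epsilon^2/\sum_i(b_i-a_i)^2)$ and then apply the union bound, which produces the factor $2$. The lower tail follows from the upper tail applied to $-X_i\in[-b_i,-a_i]$, whose ranges have the same lengths.

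For the upper tail, fix $\lambda>0$ and apply Markov's inequality to $e^{\lambda(S-\mathbb{E}[S])}$:
\begin{align}
\Pr(S-\mathbb{E}[S]\geqslant n\epsilon)\leqslant e^{-\lambda n\epsilon}\,\mathbb{E}\big[e^{\lambda(S-\mathbb{E}[S])}\big]=e^{-\lambda n\epsilon}\prod_i\mathbb{E}\big[e^{\lambda(X_i-\mathbb{E}[X_i])}\big].
\end{align}
The factorization uses independence of the $X_i$. The key ingredient is Hoeffding's lemma: for any random variable $Y\in[a,b]$ and any real $\lambda$,
\begin{align}
\mathbb{E}\big[e^{\lambda(Y-\mathbb{E}[Y])}\big]\leqslant \exp\!\big(\lambda^2(b-a)^2/8\big).
\end{align}
Granting the lemma, the tail is at most $\exp\!\big(-\lambda n\epsilon+\lambda^2\sum_i(b_i-a_i)^2/8\big)$. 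Minimizing this quadratic in $\lambda$ gives $\lambda=4n\epsilon/\sum_i(b_i-a_i)^2$, and substituting yields exactly $\exp(-2n^2\epsilon^2/\sum_i(b_i-a_i)^2)$. If some $b_i=a_i$, that variable is deterministic and contributes nothing. If all ranges are degenerate, the probability is zero for every $\epsilon>0$ and the claim holds trivially.

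The main obstacle is Hoeffding's lemma itself. I would prove it as follows.
\begin{enumerate}
\item Reduce to $\mathbb{E}[Y]=0$ by shifting $Y$, $a$ and $b$ by $\mathbb{E}[Y]$; this does not change $b-a$. Then $a\leqslant 0\leqslant b$, and we may assume $a<b$.
\item By convexity of $x\mapsto e^{\lambda x}$ on $[a,b]$,
\begin{align}
e^{\lambda x}\leqslant \frac{b-x}{b-a}e^{\lambda a}+\frac{x-a}{b-a}e^{\lambda b}.
\end{align}
\item Take expectations and use $\mathbb{E}[Y]=0$. This gives $\mathbb{E}[e^{\lambda Y}]\leqslant e^{\varphi(u)}$, where $u:=\lambda(b-a)$, $\theta:=-a/(b-a)\in[0,1]$, and $\varphi(u):=-\theta u+\log(1-\theta+\theta e^{u})$.
\item Check directly that $\varphi(0)=0$ and $\varphi'(0)=0$.
\item Show $\varphi''(u)=q(1-q)\leqslant 1/4$, where $q:=\theta e^u/(1-\theta+\theta e^u)\in[0,1]$.
\item Apply Taylor's theorem with Lagrange remainder to get $\varphi(u)\leqslant u^2/8=\lambda^2(b-a)^2/8$.
\end{enumerate}
The only delicate points are the second-derivative computation and its identification as a Bernoulli variance; the rest of the argument is routine bookkeeping.
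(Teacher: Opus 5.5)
Your proposal is correct and complete. The Chernoff bound with $\lambda=4n\epsilon/\sum_i(b_i-a_i)^2$, the union bound giving the factor $2$, and Hoeffding's lemma via the convexity bound and $\varphi''=q(1-q)\leqslant 1/4$ all check out, as does your handling of degenerate ranges. The paper gives no proof of this lemma and simply imports it from Hoeffding's 1963 paper, whose argument is essentially the one you reproduce.
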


The statistical bound above can now be connected to the operational setting of quantum measurements. 
Consider a quantum system prepared in a state $\rho$, and suppose the expectation value of an observable $\mO$ is to be estimated, where $\mO$ is a Hermitian operator. 
Repeated measurements of $\mO$ on independent copies of $\rho$ yield a sequence of outcomes whose empirical average serves as an estimator for $\Tr[\rho\mO]$.

To connect this estimation procedure with realistic experimental implementations, it is convenient to express the observable in its spectral form
\begin{align}
    \mO=\sum_{k}o_k\ketbra{u_k}{u_k},
\end{align}
where $o_k$ are the eigenvalues of $\mO$ and $u_k:=\ketbra{u_k}{u_k}$ are the associated orthogonal projectors. 
Measuring $\mO$ therefore corresponds to sampling from $M:=\{\ketbra{u_k}{u_k}\}_{k}$, 
\begin{align}
    \Tr[\rho\ketbra{u_k}{u_k}],
\end{align}
with outcome $o_k$.
Denote by $n_k$ the number of times the outcome $o_k$ is observed, and let $N$ be the total number of experimental trials. 
The empirical average obtained from the experiment can then be written as
\begin{align}
    \overline{X}=\frac{1}{N}\sum_k n_ko_k.
\end{align}
The quantity of interest, namely the true expectation value of the observable in the state $\rho$, is given by
\begin{align}
    e=\Tr[\rho\mO].
\end{align}
These formulations provide the bridge between the statistical bounds discussed above in Lem.~\ref{lem:Hoeffding} and the resource requirements of practical experiments: the number of copies of $\rho$ required to estimate $\Tr[\rho\mO]$ with a prescribed precision and confidence level is governed by concentration inequalities applied to the empirical average of the measurement outcomes.

The estimation is deemed successful if the deviation between this empirical value and the true expectation does not exceed a prescribed tolerance $\epsilon>0$; otherwise, it is considered unsuccessful. Imposing a success probability of at least $1-\delta$ leads to a quantitative bound on the number of measurement samples required.
This naturally motivates the introduction of an $\epsilon-\delta$ test, which formalizes statistical reliability in finite-sample quantum experiments.

\begin{mydef}
{$\epsilon-\delta$ Test~\cite{z2pr-zbwl,8g6j-w7ld}}
{epsilon-delta}
Consider a quantum experiment designed to estimate the expectation value of an observable $\mO$ on a quantum state $\rho$.
Repeated measurements produce outcomes $X_i$, whose empirical mean is denoted by $\overline{X}$. 
The experiment is said to pass the $\epsilon-\delta$ test if the probability that the estimation error, defined as $|\overline{X}-\Tr[\rho\mO]|$, exceeds a prescribed error tolerance $\epsilon$ is no greater than $\delta$.
\end{mydef}

To simplify the analysis, we introduce $c$ as an upper bound on the variation of the measurement outcomes, defined by
\begin{align}\label{eq:c}
    c:=\max_{i} (b_i-a_i).
\end{align}
Passing the $\epsilon-\delta$ test in a quantum experiment is therefore equivalent to the condition that
\begin{align}
    2\exp{-\frac{2n^2\epsilon^2}{n c^2}}
    =
    2\exp{-\frac{2n\epsilon^2}{c^2}}
    \leqslant\delta,
\end{align}
which in turn implies that
\begin{align}
    n\geqslant\frac{c^2}{2\epsilon^2}
    \ln{\frac{2}{\delta}}.
\end{align}
For notational convenience, the number of copies required to achieve accuracy $\epsilon$ with confidence level $1-\delta$ will be denoted simply by $\mO((c^2\ln{1/\delta})/\epsilon^2)$, i.e.,
\begin{align}
    n=\mO\left(\frac{c^2}{\epsilon^2}
    \ln{\frac{1}{\delta}}\right).
\end{align}
Equipped with this result, the sample complexity required to estimate the observable $\mO$ while passing the $\epsilon-\delta$ test can be summarized in the following corollary.

\begin{mycor}
{Sample Cost}{Sample Cost}
Given $n$ independent copies of the state $\rho$ on which the observable $\mO$ is measured, passing the $\epsilon-\delta$ test (See Def.~\ref{def:epsilon-delta}) requires that
\begin{align}
    n=\mO\left(\frac{c^2}{\epsilon^2}
    \ln{\frac{1}{\delta}}\right),
\end{align}
where $c$ denotes the maximal range of the measurement outcomes.
\end{mycor}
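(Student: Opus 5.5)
The statement to prove is the Sample Cost corollary. The plan is to read it off directly from Hoeffding's inequality (Lem.~\ref{lem:Hoeffding}) applied to the measurement record of $\mO$.

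First, I will model the $n$ measurements of $\mO$ on $n$ independent copies of $\rho$ as i.i.d. random variables $X_1,\ldots,X_n$. Each $X_i$ takes the value $o_k$ with probability $\Tr[\rho\,u_k]$. Hence each $X_i$ lies in the bounded interval $[a_i,b_i]=[\min_k o_k,\max_k o_k]$, and its common mean is $e=\sum_k o_k\Tr[\rho u_k]=\Tr[\rho\mO]$. This places us exactly in the setting of Lem.~\ref{lem:Hoeffding}, with the empirical mean $\overline{X}$ as the estimator.

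Second, I will bound each range by the constant $c$ from Eq.~\eqref{eq:c}, so that $\sum_i(b_i-a_i)^2\leqslant nc^2$. Hoeffding's inequality then gives
\begin{align}
    \Pr(|\overline{X}-\Tr[\rho\mO]|\geqslant\epsilon)\leqslant 2\exp\!\left(-\frac{2n\epsilon^2}{c^2}\right).
\end{align}
By Def.~\ref{def:epsilon-delta}, the experiment passes the $\epsilon-\delta$ test as soon as the right-hand side is at most $\delta$. Solving for $n$ gives the sufficient condition $n\geqslant \frac{c^2}{2\epsilon^2}\ln\frac{2}{\delta}$. Since $\ln(2/\delta)=\ln 2+\ln(1/\delta)$ is of order $\ln(1/\delta)$ for $\delta<1/2$, absorbing constants yields $n=\mO\bigl(\frac{c^2}{\epsilon^2}\ln\frac{1}{\delta}\bigr)$.

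The only delicate point is interpretive. The result is an upper bound on the number of copies that suffices, and the $\mO(\cdot)$ notation hides the factor $1/2$ and the additive $\ln 2$. I will state this explicitly, and also note that independence of the copies is what justifies applying Hoeffding's inequality. Beyond that, the argument is routine algebra.
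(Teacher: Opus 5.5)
Your proposal is correct and follows the paper's own derivation. Both apply Hoeffding's inequality with each outcome range bounded by $c$, impose $2\exp(-2n\epsilon^2/c^2)\leqslant\delta$, solve to get $n\geqslant\frac{c^2}{2\epsilon^2}\ln\frac{2}{\delta}$, and absorb the constants into the $\mO(\cdot)$ notation. Your remark that this gives a sufficient rather than a strictly necessary number of copies is a fair sharpening of the paper's looser wording, which says ``equivalent'' and ``requires''.
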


A minimal benchmark for quantum broadcasting can now be identified. 
Suppose that two receivers, Bob and Claire, aim to estimate observables $\mO_{B}$ and $\mO_{C}$, respectively, each subject to an $\epsilon-\delta$ test that requires $n_1$ and $n_2$ independent copies of the state $\rho$.
In this benchmark protocol, the sender Alice prepares $n_1+n_2$ copies of $\rho$ and distributes them independently to Bob and Claire. 
This procedure trivially satisfies both statistical requirements and therefore serves as a natural reference against which the sample efficiency of canonical virtual broadcasting can be evaluated.
It also clarifies the motivation behind the definition of the sample efficienc (SE) condition adopted in Eq.~\eqref{eq:SE}.

Thus far, the discussion has focused on sample complexity in settings involving only quantum states and quantum observables, without incorporating classical data processing. 
However, when classical post-processing is combined with quantum channels, one is naturally led to the notion of {\it virtual operations} (or {\it virtual channels}), which can exhibit enhanced power in quantum information processing.

The analysis now turns to such virtual operations and examines how the inclusion of classical post-processing modifies the associated sample complexity requirements.
In a conventional quantum experiment, given a state $\rho$, the quantity of interest is the expectation value $\Tr[\mE(\rho)\mO]$ with $\mE$ being a quantum channel. 
By contrast, in the virtual setting, one is concerned more generally with quantities of the form $\Tr[\mE(\rho)\mO]$, where $\mE$ denotes a virtual operation. 

A first step towards understanding the resulting changes in sample complexity is therefore to clarify the physical realization of these virtual operations.
In practice, a virtual operation $\mE$ can be represented as a linear combination of two quantum channels, $\mE_1$ and $\mE_2$, with real coefficients $a$ and $b$. 
\begin{align}\label{eq:Virtual_Decomposition}
    \mE=a\mE_1-b\mE_2.
\end{align}
Since both $\mE_1$ and $\mE_2$ are trace preserving (TP), and the virtual operation $\mE$ is likewise required to be TP, the coefficients satisfy the constraint 
\begin{align}
    a-b=1.
\end{align}
With this decomposition in place, the quantity of interest, i.e., $\Tr[\mE(\rho)\mO]$, can be rewritten as

\begin{align}
    \Tr[\mE(\rho)\mO]
    =
    &\Tr[(a\mE_1-b\mE_2)(\rho)\mO]\\
    =
    &a\Tr[\mE_1(\rho)\mO]-b\Tr[\mE_2(\rho)\mO]\\
    =
    &(a+b)\left(
    \frac{a}{a+b}\Tr[\mE_1(\rho)\mO]-\frac{b}{a+b}\Tr[\mE_2(\rho)\mO]
    \right).
\end{align}

Physically, the above decomposition implies that the target virtual operation $\mE$ can be simulated by randomly implementing one of the two quantum channels $\mE_1$ and $\mE_2$. 
Specifically, $\mE_1$ is applied with probability $a/(a+b)$ and $\mE_2$ with probability $b/(a+b)$. 
Regardless of which channel is realized in a given run, the corresponding measurement outcome is rescaled by a factor of $a+b$ so as to reproduce the expectation value $\Tr[\mE(\rho)\mO]$.

This rescaling enlarges the effective range of the measurement outcomes from $c$ to $c(a+b)$. 
Consequently, when Hoeffding's inequality is applied, the required number of samples increases accordingly in order to achieve the same $\epsilon-\delta$ performance. 
In particular, the sample complexity is amplified by a factor of $(a+b)^2$.

\begin{mycor}
{Virtual Sample Cost}{Virtual Sample Cost}
Consider $n$ independent copies of a quantum state $\rho$ used to estimate the quantity $\Tr[\mE(\rho)\mO]$ via measurements of the observable, with $\mE$ being a virtual operation. 
Satisfying the $\epsilon-\delta$ criterion (See Def.~\ref{def:epsilon-delta}) requires that
\begin{align}
    n\geqslant\frac{(a+b)^2c^2}{2\epsilon^2}\ln{\frac{2}{\delta}},
\end{align}
or, equivalently
\begin{align}
    n=\mO\left(\frac{(a+b)^2c^2}{\epsilon^2}\ln{\frac{1}{\delta}}\right),
\end{align}
where $c$ denotes the maximal range of the measurement outcomes. 
The coefficients $a$ and $b$ originate from the decomposition of the virtual operation in Eq.~\eqref{eq:Virtual_Decomposition}.
\end{mycor}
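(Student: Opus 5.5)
We split the argument into the two directions of the Hoeffding-based estimate, with the estimator built from the decomposition in Eq.~\eqref{eq:Virtual_Decomposition}, $\mE=a\mE_1-b\mE_2$, $a-b=1$.

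First, we construct the estimator explicitly. For each of the $n$ copies of $\rho$ we sample independently a label $s\in\{1,2\}$, with $\Pr(s=1)=a/(a+b)$ and $\Pr(s=2)=b/(a+b)$. We apply $\mE_s$ and measure $\mO$ in its eigenbasis $\{\ketbra{u_k}{u_k}\}_k$, obtaining an eigenvalue $o_k$. We then record the random variable
\begin{align}
X_i:=(a+b)\,\mathrm{sgn}(s)\,o_k,
\end{align}
where $\mathrm{sgn}(1)=+1$ and $\mathrm{sgn}(2)=-1$.

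Next we check unbiasedness. By linearity and the Born rule,
\begin{align}
\mathbb{E}[X_i]=a\Tr[\mE_1(\rho)\mO]-b\Tr[\mE_2(\rho)\mO]=\Tr[\mE(\rho)\mO]=:e,
\end{align}
so the $X_i$ are i.i.d.\ with common mean $e$. This is exactly the setting of Lem.~\ref{lem:Hoeffding}.

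We then bound the range of each $X_i$. If the outcomes $o_k$ lie in $[o_{\min},o_{\max}]$ with $o_{\max}-o_{\min}\leqslant c$, as in Eq.~\eqref{eq:c}, then after rescaling each $X_i$ lies in an interval of width at most $(a+b)c$. Here one must take some care, because the sign flip $-1$ moves the support to $[-(a+b)o_{\max},-(a+b)o_{\min}]$. We handle this by shifting $\mO$ by a multiple of the identity so that its spectrum is centred at zero. The shift changes $e$ only by a known constant, since $\Tr[\mE(\rho)]=1$ because $\mE$ is TP. After centring, the union of both supports has width $(a+b)c$. This centring is the main subtle point. Without it the naive range is $(a+b)(|o_{\max}|+|o_{\min}|)$, which is larger than $(a+b)c$.

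With $b_i-a_i\leqslant(a+b)c$, Lem.~\ref{lem:Hoeffding} gives
\begin{align}
\Pr(|\overline{X}-e|\geqslant\epsilon)\leqslant2\exp\!\left(-\frac{2n\epsilon^2}{(a+b)^2c^2}\right).
\end{align}
Requiring the right-hand side to be at most $\delta$ and solving for $n$ yields
\begin{align}
n\geqslant\frac{(a+b)^2c^2}{2\epsilon^2}\ln\frac{2}{\delta},
\end{align}
which is the order $\mO\!\left((a+b)^2c^2\epsilon^{-2}\ln(1/\delta)\right)$, mirroring the derivation of Cor.~\ref{cor:Sample Cost}.

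Finally, we address the reading of the statement as a requirement. The bound is the sufficient sample count guaranteed by Hoeffding, and it is this count that the paper takes as the cost. To argue that $(a+b)^2$ is genuinely incurred, one can note that the variance of $X_i$ scales as $(a+b)^2$ times an $\mO(1)$ quantity for generic $\rho$ and $\mO$. Matching lower bounds such as Chebyshev or the central limit theorem then show that the $(a+b)^2$ scaling cannot be avoided by this estimator. This explains why minimizing $a+b$ over decompositions, as in the subsequent SDPs, is the relevant objective.
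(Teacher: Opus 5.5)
Your proposal is correct and follows the paper's own route: sample $\mE_1$ or $\mE_2$ with probabilities $a/(a+b)$ and $b/(a+b)$, rescale each outcome by $a+b$ so the range grows from $c$ to $(a+b)c$, and apply Hoeffding's inequality. Your explicit sign on the $\mE_2$ branch and your centring of $\mO$ (valid because $a-b=1$) make rigorous the range step that the paper only asserts. Your side remarks are loose, though: without centring the width is $2(a+b)\max\{|o_{\max}|,|o_{\min}|\}$, not $(a+b)(|o_{\max}|+|o_{\min}|)$, and Chebyshev gives upper rather than lower tail bounds, so, like the paper, you establish a sufficient sample count via Hoeffding rather than a genuine necessity.
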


Therefore, among all admissible decompositions of a given virtual operation, identifying the one that minimizes $a+b$ in Eq.~\eqref{eq:Virtual_Decomposition} is of central importance. 
This minimum directly determines the lowest achievable sample complexity for estimating the observable $\mO$ with a prescribed confidence level under the $\epsilon-\delta$ test.

%Motivated by recent developments in virtual operations and their applications, the framework of virtual quantum broadcasting has emerged. In this setting, the objective of broadcasting is reformulated: instead of requiring the physical preparation of marginal states, it suffices that expectation values of observables on the marginals reproduce those of the original state. Consequently, virtual broadcasting is realized through a combination of quantum channels and classical post-processing, without the need to physically prepare states on Bob's and Claire's systems whose marginals coincide with the original input.

%%%%%%%%%%%%%%%%%%%%%%%%%%%%%%%%%%%%%%%%%%%%%%%%%%%%%%%%%%%%%%%%%%%%%%%%%%%%%%%%%%%%%%%%%%%%%%%%%%%%%%%%%%%%%%%%%%%%%%

\section{Approximate Quantum Broadcasting}
\label{sec:AQB}

Exact virtual broadcasting cannot be achieved. 
We therefore shift our focus to the non-exact regime and introduce a general framework for approximate virtual broadcasting, examining its practical viability in terms of sample efficiency and its potential advantages over quantum broadcasting implemented solely through quantum channels.
The section is organized as follows. 
Subsection.~\ref{subsec:No_Practical_VB} establishes the impossibility of practical virtual broadcasting in the exact setting. 
Subsection.~\ref{subsec:AVB} introduces the framework of approximate virtual broadcasting and formulates the associated sample complexity overhead as a semidefinite programming (SDP). 
In Subsec.~\ref{subsec:Symmetry_Simplifications}, we exploit Schur–Weyl duality to simplify the analysis of both the sample complexity and the structure of the virtual broadcasting maps. 
This symmetry reduction reveals that the optimal virtual broadcasting protocol necessarily induces depolarizing channels on the marginals in Subsec.~\ref{subsec:Marginals}.
The dual formulation of the SDP is derived in Subsec.~\ref{subsec:SDP_Dual}, where the central constraint reduces to an eigenvalue problem that is solved in Subsec.~\ref{subsec:M_xy}. 
In Subsec.~\ref{subsec:AVB_SOCP}, the optimization problem is further transformed from an SDP into a second-order cone programming (SOCP), enabling an analytic solution presented in Subsec.~\ref{subsec:Analytic_Solution}. 
The sample efficiency of approximate broadcasting are analyzed in Subsec.~\ref{subsec:SE_AB}. 
Finally, Subsec.~\ref{subsec:Learning_AQB} discusses the application of approximate virtual broadcasting to quantum state learning.

%%%%%%%%%%%%%%%%%%%%%%%%%%%%%%%%%%%%%%%%%%%%%%%%%%%%%%%%%%%%%%%%%%%%%%%%%%%%%%%%%%%%%%%%%%%%%%%%%%%%%%%%%%%%%%%%%%%%%%

\subsection{
No Practical Virtual Broadcasting}
\label{subsec:No_Practical_VB}

The violation of the SE condition (see Def.~\ref{def:Sample_Efficiency}) does not arise directly from the assumptions of UC (see Def.~\ref{def:Unitary_Covariance}), PI (see Def.~\ref{def:Permutation_Invariance}), and CC (see Def.~\ref{def:Classical_Consistency}). 
In fact, the UC and CC conditions already imply both the BC (see Def.~\ref{def:Quantum_Broadcasting}) and HP (see Eq.~\eqref{eq:HP}) properties, which together give rise to virtual broadcasting. 
Here, we further show that, for any HPTP map, imposing the BC condition inevitably leads to a violation of the SE requirement. 
Consequently, virtual broadcasting cannot be realized in a practically sample efficient manner.

As discussed in Thm.~\ref{thm:no_Qbroadcasting}, previous work has established a no practical quantum broadcasting theorem, proving that no linear map can satisfy all four conditions in Subsec.~\ref{subsec:Conditions} simultaneously~\cite{z2pr-zbwl,8g6j-w7ld}.
This naturally raises the question of which conditions must be retained, and which must be abandoned, to make quantum broadcasting operationally meaningful. 
However, recent work shows that even replacing unitary covariance (UC) with alternative symmetry constraints, such as phase or flip covariance~\cite{PhysRevA.74.042309}, does not restore practicality~\cite{okada2025virtualphasecovariantquantumbroadcasting}: 
the resulting protocols still violate sample efficiency (SE). 
Violating SE renders the broadcasting task operationally trivial, since one could instead simply prepare a sufficient number of copies of the initial state and distribute them directly to the receivers, thereby achieving the same goal with no greater, and often lower, sample cost.

In fact, making broadcasting practical, i.e., ensuring at least sample efficiency (SE), cannot be achieved by simply modifying or discarding the conditions UC, PI, or CC. 
Rather, the obstruction is fundamental, stemming from an inherent conflict between BC in Eq.~\eqref{eq:BC_1} and the SE requirement in Eq.~\eqref{eq:SE}. 
For any HPTP map, the corresponding sample complexity of virtual broadcasting admits a formulation as a semidefinite programming~\cite{doi:10.1137/1038003},
\begin{align}\label{eq:SDP_BC}
    u_2:=
    \min \quad 
    & a+b\\
    \text{s.t.} \quad 
    &\1_{B}\star (J_1-J_2)=\Gamma_{AC},\label{eq:VB_C}\\
    &\1_{C}\star (J_1-J_2)=\Gamma_{AB},\label{eq:VB_B}\\
    & J_1\geqslant0,\,\, 
    \Tr_{BC}[J_1]=a\,\1_A,\\
    &J_2\geqslant0,\,\,
    \Tr_{BC}[J_2]=b\,\1_A,
\end{align}
where $J_1$ and $J_2$ are operators acting on systems $A$, $B$ and $C$, and $\star$ denotes the link product, as defined in Eq.~\eqref{eq:Link_Product}.
Thanks to Ref.~\cite{PhysRevA.110.012458}, the above semidefinite programming is efficiently solvable and yields a closed-form expression. 

\begin{mylem}
{Sample Complexity of Exact Virtual Broadcasting~\cite{PhysRevA.110.012458}}
{SC_EVB}
For systems of dimension $d$, with $\dim A=\dim B=\dim C=d$, the sample complexity of exact virtual broadcasting defined in Eq.~\eqref{eq:SDP_BC} takes the form
\begin{align}
    u_2=\frac{3d-1}{d+1}.
\end{align}
\end{mylem}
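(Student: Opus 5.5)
We prove that the SDP in Eq.~\eqref{eq:SDP_BC} has optimal value $u_2=(3d-1)/(d+1)$ by exhibiting a primal feasible point achieving this value and a dual feasible point certifying the matching lower bound.

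\emph{Symmetry reduction.} First, as in Lem.~\ref{lem:MT-ABC-UC-Optimality}, we use that the constraints of Eq.~\eqref{eq:SDP_BC} are invariant under the triple-twirling $\mT$. They are also invariant under the swap $B\leftrightarrow C$, since the two marginal constraints are exchanged. Averaging any feasible $(a,b,J_1,J_2)$ over both symmetries preserves feasibility and the objective. Hence we may assume $J_1,J_2$ are invariant under $U^*\otimes U\otimes U$ and under $B\leftrightarrow C$.

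\emph{Parametrization.} After the partial transpose on $A$, such operators commute with $U^{\otimes 3}$. By Schur--Weyl duality they lie in the span of the permutation operators of $S_3$, restricted to elements symmetric under $B\leftrightarrow C$. For $d\geqslant3$ this space is five-dimensional; for $d=2$ fewer independent terms survive. Equivalently, one can use the isotypic decomposition $\mathbb{C}^d\otimes(\mathrm{Sym}\oplus\mathrm{Anti})$ and write $J_i$ as block-diagonal with a few scalar (or $2\times2$) blocks.

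\emph{Reduced problem.} In these coordinates:
\begin{itemize}
\item positivity of $J_1,J_2$ becomes positivity of small blocks;
\item the trace conditions fix $a$ and $b$ linearly;
\item the marginal conditions $\1_B\star(J_1-J_2)=\Gamma_{AC}$ and $\1_C\star(J_1-J_2)=\Gamma_{AB}$ become linear equations on the coefficients.
\end{itemize}
The SDP thus collapses to an LP (or small SOCP). We solve it by choosing $J_1$ and $J_2$ supported on complementary isotypic pieces, namely the positive and negative parts of the unique symmetric "canonical" solution. This gives $a+b=(3d-1)/(d+1)$.

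\emph{Lower bound.} We write the Lagrange dual. With multipliers $Y_{AC},Y_{AB}$ for the marginal constraints and $Z_1,Z_2$ for the trace constraints, the dual constraints read
\[
Y_{AB}\otimes\1_C+Y_{AC}\otimes\1_B \leqslant (1+\text{const})\,\1,
\qquad
-(Y_{AB}\otimes\1_C+Y_{AC}\otimes\1_B) \leqslant \cdots .
\]
We choose the ansatz $Y=x\1+y\Gamma$, which is twirl-invariant. The dual constraints then become spectral conditions on the operator $x'\1+y(\Gamma_{AB}+\Gamma_{AC})$. The eigenvalues of $\Gamma_{AB}+\Gamma_{AC}$ are computable: on the relevant invariant subspace, the pair $\Gamma_{AB},\Gamma_{AC}$ acts as a $2\times2$ matrix $\begin{pmatrix} d&1\\1&d\end{pmatrix}$ in a suitable basis, giving eigenvalues $d\pm1$ and $0$. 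Optimizing $x,y$ yields the matching value $(3d-1)/(d+1)$.

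\emph{Main obstacle.} The hardest step is the dual spectral bound, i.e.\ correctly identifying the eigenvalues of $\Gamma_{AB}+\Gamma_{AC}$, including the $d=2$ degeneracies, and choosing multipliers for which complementary slackness holds with the primal point.
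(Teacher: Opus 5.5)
\textbf{Lower bound.} Your dual half matches the paper's route. The paper imports this lemma from the cited reference, but recovers it as the $p=1$ case of Thm.~\ref{thm:Analytical_Solution_PVB}. It twirls the dual variables to $x_1\1+x_2\Gamma$, uses the spectrum $\{x(d+1),\,x(d-1),\,0\}$ of $x(\Gamma_{AB}\otimes\1_C+\Gamma_{AC}\otimes\1_B)$ from Subsec.~\ref{subsec:M_xy}, and solves a two-branch LP. It then closes the argument by strong duality, not by an explicit primal point. Two small corrections: that spectrum has no special $d=2$ degeneracy, and the $B\leftrightarrow C$-invariant part of the $\mathfrak{S}_3$ span is four-dimensional, not five (conjugation by $(BC)$ has orbits $\{e\},\{(BC)\},\{(AB),(AC)\},\{(ABC),(ACB)\}$).

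\textbf{Upper bound.} Your primal half fails as written. Set $G:=\Gamma_{AB}\otimes\1_C+\Gamma_{AC}\otimes\1_B$ and $F:=F_{BC}$, which commute. Every twirl- and swap-invariant candidate has the form $J=c_1\1+sG+c_4F+tGF$, and the marginal constraints reduce to $c_1d+s+c_4=0$ and $sd+2t=1$. This is a two-parameter affine family (one-parameter for $d=2$), so there is no unique symmetric solution.

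The canonical broadcasting map $\rho\mapsto\tfrac12\{\rho\otimes\1,F\}$ is the member $J=\tfrac12GF$. Its eigenvalues are $\tfrac{d+1}{2}$ on $\ket{u_k}+\ket{v_k}$, $-\tfrac{d-1}{2}$ on $\ket{u_k}-\ket{v_k}$, and $0$ elsewhere. For twirl-invariant $J$ the optimal split is the Jordan decomposition, with cost $\|J\|_1/d$: indeed $a+b=(\Tr J_1+\Tr J_2)/d\geqslant\|J\|_1/d$, and the Jordan parts inherit the invariance, so their $BC$-traces are multiples of $\1_A$. Hence the canonical map costs $a+b=d$. This exceeds $(3d-1)/(d+1)$ by $(d-1)^2/(d+1)>0$ for every $d\geqslant2$; for qubits it is $2$ versus $5/3$.

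The missing step is the actual minimization of $\|J\|_1/d$ over the family, which is an LP in the four joint eigenvalues of $(G,F)$. Its minimizer is
\[
J^\star=\frac{1}{d+2}\,(\1+F)\Bigl(G-\frac{1}{d+1}\,\1\Bigr).
\]
Its positive part is $\frac{2d}{d+1}$ times the Choi operator of Werner's optimal $1\to2$ cloner. Its negative part is $\frac{2}{(d+1)(d+2)}$ times the projector onto $\mathrm{Sym}_{BC}\cap\ker G$. This gives $a=\frac{2d}{d+1}$, $b=\frac{d-1}{d+1}$, and $a+b=\frac{3d-1}{d+1}$.

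You can fix the proof in either of two ways. Exhibit this point explicitly. Or, as the paper does, drop the primal construction and invoke strong duality. Slater's condition holds: adding $\epsilon\1$ to both $J_1$ and $J_2$ of any feasible point keeps $J_1-J_2$ and $a-b$ fixed and makes both strictly positive.
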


To satisfy the requirement of SE, the quantity $u_2$ must be strictly smaller than $\sqrt{2}$, namely $u_2^2<2$, which is equivalent to imposing an upper bound on the system dimension d,
\begin{align}\label{eq:Upper_Bound_No_VBroadcasting}
    d<\frac{1+\sqrt{2}}{3-\sqrt{2}}
    \approx1.5224.
\end{align}
However, for any nontrivial quantum system the minimal dimension is $d=2$, corresponding to a qubit. 
Consequently, the bound of Eq.~\eqref{eq:Upper_Bound_No_VBroadcasting} cannot be met by any physical quantum system. 
This observation leads directly to the following theorem.

\begin{mythm}
{No Practical Virtual Broadcasting}
{no_Vbroadcasting}
There exists no HPTP map that simultaneously fulfills the requirements of SE (see Def.~\ref{def:Sample_Efficiency}) and BC (see Def.~\ref{def:Quantum_Broadcasting}).
\end{mythm}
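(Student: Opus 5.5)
The plan is to reduce the theorem to the closed-form optimum of Lem.~\ref{lem:SC_EVB} together with the virtual sample cost of Cor.~\ref{cor:Virtual Sample Cost}.

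Suppose, for contradiction, that some HPTP map $\mE:A\to BC$ satisfies BC, i.e. Eq.~\eqref{eq:BC_3}. As in Eq.~\eqref{eq:Virtual_Decomposition}, write $\mE=a\mE_1-b\mE_2$ with CPTP maps $\mE_1,\mE_2$ and $a-b=1$. Such a decomposition always exists: take $J^{\mE}=P-Q$ with $P,Q\geqslant0$, then add a positive multiple of a fixed channel Choi operator to both parts so that each has partial trace proportional to $\1_A$. The Choi operators $J_1=aJ^{\mE_1}$ and $J_2=bJ^{\mE_2}$ are then feasible for the SDP in Eq.~\eqref{eq:SDP_BC}. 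Hence every admissible decomposition obeys
\begin{align}
    a+b\geqslant u_2=\frac{3d-1}{d+1}.
\end{align}

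Next, by Cor.~\ref{cor:Virtual Sample Cost}, each receiver needs $(a+b)^2$ times as many samples as in the direct setting to reproduce its marginal expectation value under the same $\epsilon$–$\delta$ test. With $n_Q=\max\{n_1,n_2\}$, a single run of the virtual protocol serves both receivers simultaneously, since the same outcome is rescaled for both observables. The protocol therefore consumes $n\geqslant(a+b)^2 n_Q\geqslant u_2^2\,n_Q$ copies. The naive strategy uses $n_1+n_2\leqslant 2n_Q$ copies, with equality in the worst case $n_1=n_2$. Since Def.~\ref{def:Sample_Efficiency} requires $n<n_1+n_2$ for every instance, SE forces $u_2^2<2$.

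It then remains to rule out $u_2^2<2$ for every $d\geqslant2$. Since $u_2$ is increasing in $d$, the inequality $(3d-1)/(d+1)<\sqrt2$ is equivalent to $d<(1+\sqrt2)/(3-\sqrt2)\approx1.52$, which is Eq.~\eqref{eq:Upper_Bound_No_VBroadcasting}. At $d=2$ we get $u_2^2=25/9>2$, so no quantum system satisfies the bound, giving the contradiction.

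The main obstacle is the second step: justifying that the optimal cost is exactly $u_2^2 n_Q$, so that no cleverer estimator circumvents the Hoeffding-type scaling. I will handle this as the paper does, by taking Cor.~\ref{cor:Virtual Sample Cost} as the operational definition of sample cost for a virtual protocol. I will also choose the symmetric instance $n_1=n_2=n_Q$ to make the comparison with $2n_Q$ tight.
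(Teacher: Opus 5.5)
Your proposal is correct and follows the paper's own argument. You bound every admissible decomposition by the SDP optimum $u_2=(3d-1)/(d+1)$ of Lem.~\ref{lem:SC_EVB}, convert this into a sample cost of at least $u_2^2 n_Q$ to compare against the naive $2n_Q$, and observe that $u_2^2<2$ would force $d<(1+\sqrt{2})/(3-\sqrt{2})\approx 1.52$.

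One slip, which does not affect the result: your existence construction fails as written. Adding a multiple of a fixed channel Choi operator to both $P$ and $Q$ only shifts their partial traces by multiples of $\1_A$. It therefore cannot make $\Tr_{BC}[P]$ proportional to $\1_A$. Instead, add $(c\1_A-\Tr_{BC}[P])\otimes\1_{BC}/d^2$ to both, with $c\geqslant\lambda_{\max}(\Tr_{BC}[P])$. In any case, existence of a decomposition is not needed for the impossibility direction.
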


Remark that Thm.~\ref{thm:no_Vbroadcasting} differs from Thm.~\ref{thm:no_Qbroadcasting} and serves as a complementary result. 
Nevertheless, it does not imply Thm.~\ref{thm:no_Qbroadcasting}, since the latter holds for arbitrary linear maps~\cite{z2pr-zbwl,8g6j-w7ld}, whereas Thm.~\ref{thm:no_Vbroadcasting} is confined to the class of HPTP maps.

%%%%%%%%%%%%%%%%%%%%%%%%%%%%%%%%%%%%%%%%%%%%%%%%%%%%%%%%%%%%%%%%%%%%%%%%%%%%%%%%%%%%%%%%%%%%%%%%%%%%%%%%%%%%%%%%%%%%%%

\subsection{Approximate Virtual Broadcasting}
\label{subsec:AVB}

Theorem~\ref{thm:no_Vbroadcasting}, established in Subsec.~\ref{subsec:No_Practical_VB}, rules out the existence of an exact virtual broadcasting map. 
This naturally motivates the question of whether the sample complexity required for virtual broadcasting can be substantially reduced by relaxing the requirement of exactness. 
To address this, we introduce {\it approximate virtual broadcasting}, in which the broadcasting performance at Bob's and Claire's outputs is quantified by parameters $\epsilon_1$ and $\epsilon_2$ that measure the deviation from exact virtual broadcasting. 
Within this framework, the sample complexity of implementing an approximate broadcasting map is characterized by the following SDP
\begin{align}\label{eq:SDP_Approximate}
    u_2(\epsilon_1, \epsilon_2):=
    \min \quad 
    & a+b\\
    \text{s.t.} \quad 
    &\frac{1}{d}\Tr[\left(\1_{B}\star (J_1-J_2)\right)\cdot\phi^{+}]=1-\epsilon_2,\label{eq:Approx_C}\\
    &\frac{1}{d}\Tr[\left(\1_{C}\star (J_1-J_2)\right)\cdot\phi^{+}]=1-\epsilon_1,\label{eq:Approx_B}\\
    & J_1\geqslant0,\,\, 
    \Tr_{BC}[J_1]=a\,\1_A,\label{eq:J_1_a}\\
    &J_2\geqslant0,\,\,
    \Tr_{BC}[J_2]=b\,\1_A,\label{eq:J_2_b}\\
    &a-b=1.\label{eq:VB_TP_Approximate}
\end{align}
Here, Eqs.~\eqref{eq:Approx_C} and~\eqref{eq:Approx_B} specify the approximation accuracy at Claire's and Bob's outputs, respectively, as measured by the broadcasting fidelity in Eqs.~\eqref{eq:Broadcasting_Fidelity_C} and~\eqref{eq:Broadcasting_Fidelity_B}.

In the SDP formulation of Eq.~\eqref{eq:SDP_BC} for estimating the sample cost of deterministic and exact virtual broadcasting, the trace-preserving (TP) condition for the virtual broadcasting map, namely $a-b=1$, is not written explicitly. 
This omission does not indicate that the condition is unnecessary. 
On the contrary, it is automatically enforced and therefore implicitly assumed.
Specifically, whenever either the broadcasting condition (BC) Eq.~\eqref{eq:VB_C} or Eq.~\eqref{eq:VB_B} is satisfied, the relation $a-b=1$ follows.
To see this, consider Eq.~\eqref{eq:VB_C}:
by applying a partial trace over system $B$, one directly obtains 
\begin{align}
    \1_{BC}\star (J_1-J_2)
    =
    (a-b)\1_{A}
    =
    \1_{C}\star\Gamma_{AC}
    =
    \1_{A},
\end{align}
which immediately implies $a-b=1$. 
An analogous argument applies when starting from Eq.~\eqref{eq:VB_B}.
By contrast, in the approximate setting, the conditions in Eqs.~\eqref{eq:Approx_C} and~\eqref{eq:Approx_B} no longer enforce the TP constraint. 
As a result, the condition $a-b=1$ cannot be derived from the approximation conditions alone. 
This is why the TP condition for virtual broadcasting map must be imposed explicitly, as stated in Eq.~\eqref{eq:VB_TP_Approximate}.

Before solving the above SDP, we first exploit the symmetries encoded in Eqs.~\eqref{eq:Approx_C} and~\eqref{eq:Approx_B} to simplify the analysis. 
A first useful consequence of these symmetries is summarized in the following lemma.

\begin{mylem}
{Triple-Twirling Invariance}
{Twirling_3_1}
Suppose that an operator $J$, acting on systems $A$, $B$, and $C$, satisfies the conditions specified below, 
\begin{align}
    \Tr_{AC}\left[\Tr_{B}[J]\cdot\Gamma_{AC}\right]&=x,\label{eq:x}\\
    \Tr_{AB}[\Tr_{C}[J]\cdot\Gamma_{AB}]&=y,\label{eq:y}
\end{align}
with $\Gamma$ denoting the UMES (see Eq.~\eqref{eq:UMES}). 
Then the operator $\mT(J)$ obtained by applying three-fold twirling, or triple twirling, defined as 
\begin{align}\label{eq:Twirling}
    \mT(J):=\int_{\text{Haar}}dU \,
    \left(U^*\otimes U\otimes U\right)\cdot J \cdot \left(U^{\T}\otimes U^{\dagger}\otimes U^{\dagger}\right),
\end{align}
satisfies the same conditions.
Here, $^*$ denotes the complex conjugate of a matrix.
\end{mylem}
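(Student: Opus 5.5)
The plan is to show that each functional in Eqs.~\eqref{eq:x} and~\eqref{eq:y} is invariant under a single conjugation $J\mapsto (U^*\otimes U\otimes U)J(U^{\T}\otimes U^{\dagger}\otimes U^{\dagger})$. We then pass to the Haar average by linearity of the trace and of the integral. Since $\int_{\text{Haar}}dU=1$, invariance for each fixed $U$ gives $\Tr_{AC}[\Tr_B[\mT(J)]\cdot\Gamma_{AC}]=\int dU\, x = x$, and similarly for $y$.

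For a fixed unitary $U$, write $J_U:=(U^*\otimes U\otimes U)J(U^{\T}\otimes U^{\dagger}\otimes U^{\dagger})$. The first step is to compute the marginal $\Tr_B[J_U]$. The factor $U$ acting on $B$ cancels under the partial trace by cyclicity on $B$, leaving
\begin{align}
\Tr_B[J_U]=(U^*\otimes U)_{AC}\,\Tr_B[J]\,(U^{\T}\otimes U^{\dagger})_{AC}.
\end{align}
The second step uses cyclicity of the full trace on $AC$ to move the conjugating unitaries onto $\Gamma_{AC}$. This gives $\Tr[\Tr_B[J]\cdot (U^{\T}\otimes U^{\dagger})\Gamma_{AC}(U^*\otimes U)]$.

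The third step is the only substantive identity needed. From $(X\otimes\1)\ket{\Gamma}=(\1\otimes X^{\T})\ket{\Gamma}$ we obtain
\begin{align}
(U^{\T}\otimes U^{\dagger})\ket{\Gamma}=(\1\otimes U^{\dagger}U)\ket{\Gamma}=\ket{\Gamma},
\end{align}
and taking the adjoint gives $\bra{\Gamma}(U^*\otimes U)=\bra{\Gamma}$. Hence $(U^{\T}\otimes U^{\dagger})\Gamma_{AC}(U^*\otimes U)=\Gamma_{AC}$, so the quantity equals $x$. The same argument with the roles of $B$ and $C$ exchanged shows that the second condition equals $y$.

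The main point to watch is bookkeeping. One must check that the complex conjugate $U^*$ sits on $A$, which is what makes the pair $U^*\otimes U$ leave $\Gamma$ invariant: this is the $U\otimes U^*$ invariance of the maximally entangled state, written in transposed form. One must also check that exchanging the partial trace with the Haar integral is legitimate. This holds because everything is finite-dimensional and the integrand is continuous.
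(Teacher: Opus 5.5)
Your proposal is correct and matches the paper's proof essentially step for step. For a fixed $U$ you trace out $B$, move the conjugating unitaries onto $\Gamma_{AC}$ by cyclicity, use $(U^{\T}\otimes U^{\dagger})\Gamma(U^*\otimes U)=\Gamma$, treat $y$ the same way with $B$ and $C$ exchanged, and then Haar-average; your transpose-trick justification of the $\Gamma$-invariance is a detail the paper leaves implicit.
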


We emphasize that $\mT$ generalizes isotropic twirling~\cite{PhysRevA.59.4206}, rather than standard twirling~\cite{PhysRevA.40.4277}, to the tripartite setting, a construction that has been extensively used in the study of distillable entanglement and entanglement cost~\cite{RevModPhys.81.865}. 
For simplicity and clarity, we will nevertheless refer to $\mT$ as twirling throughout.

\begin{proof}
Suppose that the operator $J$ satisfies Eq.~\eqref{eq:x}. Then the conjugated operator 
$\left(U^*\otimes U\otimes U\right)\cdot J \cdot \left(U^{\T}\otimes U^{\dagger}\otimes U^{\dagger}\right)$ also satisfies the same condition for any unitary matrix $U$, as
\begin{align}
    \Tr_{AC}\left[\Tr_{B}\left[\left(U^*\otimes U\otimes U\right)\cdot J \cdot \left(U^{\T}\otimes U^{\dagger}\otimes U^{\dagger}\right)\right]\cdot\Gamma\right]
    &=
    \Tr_{AC}\left[\left(U^*\otimes U\right)\cdot J_{AC} \cdot \left(U^{\T}\otimes U^{\dagger}\right)\cdot\Gamma\right]\\
    &=
    \Tr_{AC}\left[J_{AC} \cdot \left(U^{\T}\otimes U^{\dagger}\right)\cdot\Gamma\cdot\left(U^*\otimes U\right)\right]\\
    &=
    \Tr_{AC}\left[J_{AC} \cdot \Gamma\right]\\
    &=x.
\end{align}
Here $J_{AC}$ denotes the reduced operator of $J$ on systems $A$ and $C$. 
An entirely analogous argument shows that the same statement holds for Eq.~\eqref{eq:y}.
Combining these observations, we conclude that if $J$ satisfies Eqs.~\eqref{eq:x} and~\eqref{eq:y}, then so does $\left(U^*\otimes U\otimes U\right)\cdot J \cdot \left(U^{\T}\otimes U^{\dagger}\otimes U^{\dagger}\right)$ for any unitary $U$.
Averaging over all unitaries with respect to the Haar measure then completes the proof.
\end{proof}

We now analyze the symmetries embodied in Eqs.~\eqref{eq:J_1_a} and~\eqref{eq:J_2_b}, leading to

\begin{mylem}
{Triple-Twirling Invariance}
{Twirling_3_2}
Suppose that an operator $J$, acting on systems $A$, $B$, and $C$, satisfies
\begin{align}
    \Tr_{BC}\left[J\right]&=\1_{A},
\end{align}
where $\1_{A}$ stands for the identity operator on system $A$. 
Then its triple twirling $\mT(J)$, defined above in Eq.~\eqref{eq:Twirling}, obeys the same condition.
\end{mylem}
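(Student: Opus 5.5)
The plan is to follow the same template as the proof of Lemma~\ref{lem:Twirling_3_1}: first show that the condition is preserved under conjugation by a single fixed unitary, and then average over the Haar measure.

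Fix a unitary $U$ and set $J_U:=\left(U^*\otimes U\otimes U\right)\cdot J\cdot\left(U^{\T}\otimes U^{\dagger}\otimes U^{\dagger}\right)$. The key observation is that the unitaries acting on $B$ and $C$ can be removed under the partial trace $\Tr_{BC}$ by cyclicity. Indeed, for any operator $X_{ABC}$ and any unitary $V_{BC}$ one has $\Tr_{BC}[(\1_A\otimes V)X(\1_A\otimes V^{\dagger})]=\Tr_{BC}[X]$. Applying this with $V=U\otimes U$, and pulling the $A$-factors $U^*$ and $U^{\T}$ outside the partial trace, gives
\begin{align}
    \Tr_{BC}[J_U]=U^*\cdot\Tr_{BC}[J]\cdot U^{\T}=U^*\,\1_A\,U^{\T}.
\end{align}
Since $U^{\T}=(U^*)^{\dagger}$ and $U^*$ is itself unitary, the right-hand side equals $\1_A$.

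For the averaging step, note that the partial trace is linear and continuous, so it commutes with the Haar integral. This yields $\Tr_{BC}[\mT(J)]=\int dU\,\Tr_{BC}[J_U]=\int dU\,\1_A=\1_A$, using the normalization of the Haar measure.

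There is no real obstacle here. The only point requiring care is that the first tensor factor carries $U^*$ rather than $U$, so one must check that $U^*(U^*)^{\dagger}=\1$, which follows from $(U^*)^{\dagger}=U^{\T}$ and $U^{\T}U^*=(U^{\dagger}U)^*=\1$.

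By linearity the same argument applies with $\1_A$ replaced by $a\1_A$ or $b\1_A$, and with equality replaced by $\leqslant$. In the latter case, conjugation by $U^*$ preserves operator inequalities and integration preserves positivity. This extension covers Eqs.~\eqref{eq:J_1_a}--\eqref{eq:J_2_b} and also the TNI constraints needed later.
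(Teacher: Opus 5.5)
Your proposal is correct and follows the paper's proof: you show $\Tr_{BC}$ of the conjugated operator equals $U^*\,\1_A\,U^{\T}=\1_A$ for each fixed $U$, then Haar-average. Your added remarks on interchanging the partial trace with the Haar integral, and the extension to $a\1_A$, $b\1_A$ and to the inequality (TNI) versions, are valid refinements of that same argument.
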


\begin{proof}
For any unitary matrix $U$, it holds that
\begin{align}
    \Tr_{BC}[\left(U^*\otimes U\otimes U\right)\cdot J \cdot \left(U^{\T}\otimes U^{\dagger}\otimes U^{\dagger}\right)]
    =
    U^*\cdot J_A\cdot U^{\T}
    =
    U^*\cdot \1_A\cdot U^{\T}
    =
    \1_{A}.
\end{align}
The statement remains valid under Haar averaging, completing the proof.
\end{proof}

Taken together, Lems.~\ref{lem:Twirling_3_1} and~\ref{lem:Twirling_3_2} imply the following theorem, which underpins the simplification of the analysis of $u_2(\epsilon_1, \epsilon_2)$ defined in Eq.~\eqref{eq:SDP_Approximate}.

\begin{mythm}
{Triple-Twirling Optimality}
{Twirling_3}
Without loss of generality, we may assume that an optimal value of $u_2(\epsilon_1, \epsilon_2)$ (see Eq.~\eqref{eq:SDP_Approximate}) is attained at $\{a', b', J'_1, J'_2\}$:
\begin{align}
    \argmin u_2(\epsilon_1, \epsilon_2)
    =
    \{a', b', J'_1, J'_2\}.
\end{align}
Consequently, $\{a', b', \mT(J'_1), \mT(J'_2)\}$ also constitutes an optimal solution.
\end{mythm}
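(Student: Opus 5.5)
The plan is to show that the feasible set of the SDP in Eq.~\eqref{eq:SDP_Approximate} is closed under the map $\{a,b,J_1,J_2\}\mapsto\{a,b,\mT(J_1),\mT(J_2)\}$. Since this map leaves $a$ and $b$ untouched, the objective $a+b$ is unchanged, so any optimal point is sent to another optimal point. First I would note that an optimum exists, so that $\{a',b',J_1',J_2'\}$ makes sense. The objective is bounded below by $1$, because $a-b=1$ and $b\geqslant0$. Moreover, $J_1,J_2\geqslant0$ together with $\Tr_{BC}[J_i]$ fixed bound $\|J_i\|$ by $d\,a$ or $d\,b$. Hence we may restrict to a compact sublevel set with $a+b\leqslant$ (value at any feasible point), and the minimum is attained there. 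Feasibility itself holds as long as the prescribed $\epsilon_1,\epsilon_2$ are achievable; that is implicitly assumed in the statement.

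Next I would check each constraint under $\mT$, one at a time.
\begin{itemize}
\item \textbf{Positivity.} Each conjugation $(U^*\otimes U\otimes U)\cdot J\cdot(U^*\otimes U\otimes U)^\dagger$ is a unitary conjugation, so it preserves $J\geqslant0$. The Haar average is a convex combination (an integral) of positive operators, so $\mT(J_i)\geqslant0$.
\item \textbf{Normalization.} Apply Lem.~\ref{lem:Twirling_3_2} to $J_1/a$ and $J_2/b$ and use linearity of $\mT$. This gives $\Tr_{BC}[\mT(J_1)]=a\1_A$ and $\Tr_{BC}[\mT(J_2)]=b\1_A$. The degenerate case $b=0$ is trivial, since then $J_2=0$.
\item \textbf{Fidelity constraints.} Set $J=J_1-J_2$. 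Because $\1_B\star J=\Tr_B[J]$ and $\phi^{+}=\Gamma/d$, Eq.~\eqref{eq:Approx_C} is exactly Eq.~\eqref{eq:x} with $x=d^2(1-\epsilon_2)$, and similarly Eq.~\eqref{eq:Approx_B} is Eq.~\eqref{eq:y} with $y=d^2(1-\epsilon_1)$. Lem.~\ref{lem:Twirling_3_1} then applies to $J$. By linearity, $\mT(J_1)-\mT(J_2)=\mT(J)$, so the twirled pair satisfies the same constraints.
\item \textbf{Trace preservation.} The constraint $a-b=1$ involves only the scalars, which are unchanged.
\end{itemize}

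The one point that needs care, and which I expect to be the main obstacle, is the invariance step inside Lem.~\ref{lem:Twirling_3_1}. It rests on the identity $(U^{\T}\otimes U^{\dagger})\,\Gamma\,(U^*\otimes U)=\Gamma$. This follows from $(X\otimes\1)\ket{\Gamma}=(\1\otimes X^{\T})\ket{\Gamma}$: it gives $(U^*\otimes U)\ket{\Gamma}=(\1\otimes UU^{\dagger})\ket{\Gamma}=\ket{\Gamma}$, and the adjoint statement handles the bra side. I would also make sure that the complex conjugate $U^*$ acts on the input system $A$, matching the Choi convention of Eq.~\eqref{eq:Choi}. With that in place the argument is purely linear.

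Combining these checks, $\{a',b',\mT(J_1'),\mT(J_2')\}$ is feasible and has the same objective value $a'+b'$, so it is also optimal. This proves the theorem and justifies restricting attention to triple-twirl-invariant, and hence UC, solutions as in Lem.~\ref{lem:MT-ABC-UC-Optimality}.
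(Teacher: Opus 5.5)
Your proposal is correct and follows essentially the same route as the paper. The paper also shows that every conjugation by $U^*\otimes U\otimes U$ preserves feasibility (via Lems.~\ref{lem:Twirling_3_1} and~\ref{lem:Twirling_3_2}), hence so does the Haar average, while $a,b$ and the objective are unchanged. Your existence-of-minimizer argument, the explicit positivity check, and the verification of $(U^{\T}\otimes U^{\dagger})\,\Gamma\,(U^*\otimes U)=\Gamma$ supply details the paper leaves implicit.
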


\begin{proof}
By the proofs of Lems.~\ref{lem:Twirling_3_1} and~\ref{lem:Twirling_3_2}, any feasible solution of the SDP in Eq.~\eqref{eq:SDP_Approximate} remains feasible after applying $U^*\otimes U\otimes U\left(\cdot\right)U^{\T}\otimes U^{\dagger}\otimes U^{\dagger}$ to $J_1$ and $J_2$, including the optimal solution.
The same conclusion holds upon taking the Haar average over all unitaries, thereby completing the proof.
\end{proof}

Theorem~\ref{thm:Twirling_3} allows us, without loss of generality, to restrict attention to optimal solutions of Eq.~\eqref{eq:SDP_Approximate} obtained via triple twirling. 
We therefore define
\begin{align}\label{eq:Opt_Appro_BC}
    J^{\mE}:=\mT(J'_1)-\mT(J'_2),
\end{align}
as the Choi operator of optimal approximate broadcasting map $\mE$ associated with error parameters $\epsilon_1$ and $\epsilon_2$, where $\{a', b', \mT(J'_1), \mT(J'_2)\}$ constitutes an optimal solution of the SDP in Eq.~\eqref{eq:SDP_Approximate}.
By construction, $J^{\mE}$ is invariant under the action of arbitrary unitaries $U$, namely 
\begin{align}\label{eq:UUU}
    \left(U^*\otimes U\otimes U\right)\cdot J^{\mE} \cdot \left(U^{\T}\otimes U^{\dagger}\otimes U^{\dagger}\right)= J^{\mE}.
\end{align}
This invariance implies that the virtual 1-to-2 approximate broadcasting map $\mE$ associated with $J^{\mE}$ is unitarily covariant and thus satisfies the UC conditions in Eq.~\eqref{eq:UC}. 

\begin{mycor}
{UC Optimality}
{UC_Optimality}
    There exists an optimal solution of the SDP in Eq.~\eqref{eq:SDP_Approximate} that is unitarily covariant (UC) (see Def.~\ref{def:Unitary_Covariance}).
\end{mycor}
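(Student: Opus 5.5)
The plan is to combine Thm.~\ref{thm:Twirling_3} with a direct translation of triple-twirling invariance of the Choi operator into the UC condition of Def.~\ref{def:Unitary_Covariance}.

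\textbf{Step 1: obtain an invariant optimizer.} By Thm.~\ref{thm:Twirling_3}, pick any optimal solution $\{a',b',J'_1,J'_2\}$ of Eq.~\eqref{eq:SDP_Approximate}. Then $\{a',b',\mT(J'_1),\mT(J'_2)\}$ is also optimal, since the objective $a'+b'$ is unchanged. Feasibility is preserved for two reasons. First, each conjugated term $(U^*\otimes U\otimes U)\,J\,(U^{\T}\otimes U^{\dagger}\otimes U^{\dagger})$ is positive semidefinite, and the Haar average of such terms stays positive. Second, Lems.~\ref{lem:Twirling_3_1} and~\ref{lem:Twirling_3_2} preserve the fidelity constraints and the trace conditions; these are linear, so they apply to $J'_1$ and $J'_2$ separately, and the scalars $a',b'$ simply rescale.

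\textbf{Step 2: establish invariance of the twirled operators.} Set $J^{\mE}=\mT(J'_1)-\mT(J'_2)$. By left-invariance of the Haar measure, conjugating $\mT(J)$ by any fixed $V^*\otimes V\otimes V$ amounts to the substitution $U\mapsto VU$ in the integral. This leaves $\mT(J)$ unchanged, which yields Eq.~\eqref{eq:UUU}.

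\textbf{Step 3: convert invariance into covariance.} Using $\mE(\rho)=J^{\mE}\star\rho$, write
\[
\mE(U\rho U^{\dagger})=\Tr_A\!\left[J^{\mE}\left((U\rho U^{\dagger})^{\T}\otimes\1_{BC}\right)\right],
\qquad
(U\rho U^{\dagger})^{\T}=U^{*}\rho^{\T}U^{\T}.
\]
Substitute $J^{\mE}=(U^*\otimes U\otimes U)\,J^{\mE}\,(U^{\T}\otimes U^{\dagger}\otimes U^{\dagger})$ and use cyclicity of the partial trace on $A$. The factors $U^{\T}U^{*}=\1$ cancel on $A$, and what remains is $(U\otimes U)\,\mE(\rho)\,(U\otimes U)^{\dagger}$. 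Hence $\mE\circ\mU=\mU\otimes\mU\circ\mE$, which is exactly the UC condition.

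The only delicate point is the bookkeeping of transposes and conjugates in Step 3. This is why the twirl uses $U^*$ on the input: the conjugate on $A$ is precisely what matches the transpose in the link product, so the cancellation $U^{\T}U^{*}=\1$ goes through. Everything else is routine.
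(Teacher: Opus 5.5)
Your proposal is correct and follows the same route as the paper: invoke Thm.~\ref{thm:Twirling_3} to replace an optimizer by its triple-twirled version, note the invariance in Eq.~\eqref{eq:UUU}, and conclude UC. You also make explicit two steps the paper only asserts: the Haar left-invariance argument behind Eq.~\eqref{eq:UUU}, and the link-product computation (with the cancellation $U^{\T}U^{*}=\1$ on $A$) showing that this invariance gives $\mE\circ\mU=\mU\otimes\mU\circ\mE$.
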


The above Cor.~\ref{cor:UC_Optimality} reveals an interesting structural feature. 
In previous formulations of no practical quantum broadcasting~\cite{z2pr-zbwl,8g6j-w7ld} or virtual quantum broadcasting~\cite{PhysRevLett.132.110203}, the UC condition is typically introduced as an explicit assumption. 
In contrast, within the framework of approximate virtual broadcasting considered here, UC arises naturally as a consequence: 
whenever a virtual broadcasting protocol attains the optimal performance in the approximate setting, it can be chosen to satisfy the UC condition without loss of generality.

This symmetry greatly simplifies the characterization of the optimal solution of the SDP in Eq.~\eqref{eq:SDP_Approximate} and the structure of the corresponding optimal virtual broadcasting map. 
In particular, the optimal map can be constructed from permutation matrices of the symmetric group $\mathfrak{S}_3$, revealing the underlying permutation symmetry of the problem.

%%%%%%%%%%%%%%%%%%%%%%%%%%%%%%%%%%%%%%%%%%%%%%%%%%%%%%%%%%%%%%%%%%%%%%%%%%%%%%%%%%%%%%%%%%%%%%%%%%%%%%%%%%%%%%%%%%%%%%

\subsection{Simplifications from Symmetry}
\label{subsec:Symmetry_Simplifications}

The UC condition introduces a powerful symmetry that can be systematically exploited through Schur–Weyl duality. 
This symmetry substantially simplifies the analysis of the SDP in Eq.~\eqref{eq:SDP_Approximate} and enables a transparent characterization of the optimal approximate virtual broadcasting map that attains the minimal value of the optimization problem. 
In this subsection, we develop this analysis using elementary tools from representation theory, in particular Schur–Weyl duality, together with diagrammatic tensor network techniques that make the underlying structure explicit.

To begin the analysis, we exploit Schur–Weyl duality. 
More precisely, Schur–Weyl duality implies that the operator $(J^{\mE})^{\T_{A}}$ admits a decomposition in the basis formed by the six permutation operators generating the symmetric group $\mathfrak{S}_3$; namely,
\begin{align}
    (J^{\mE})^{\T_{A}}
    =
    c_1 
    \raisebox{-2.8ex}{\includegraphics[height=3em]{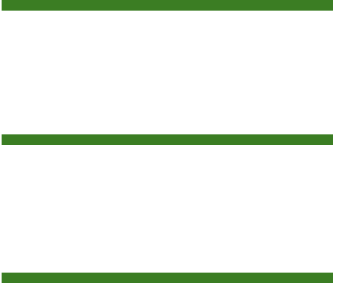}}
    +
    c_2
    \raisebox{-2.8ex}{\includegraphics[height=3em]{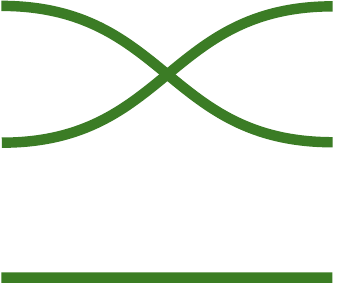}}
    +
    c_3
    \raisebox{-2.8ex}{\includegraphics[height=3em]{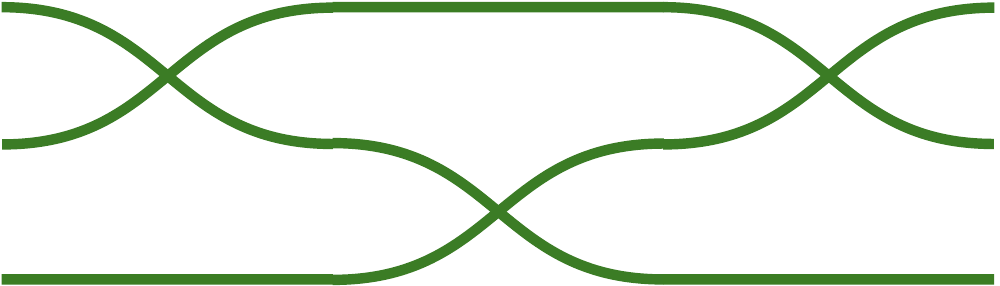}}
    +
    c_4
    \raisebox{-2.8ex}{\includegraphics[height=3em]{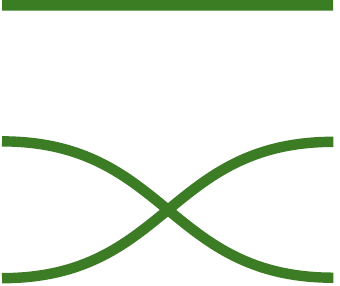}}
    +
    c_5
    \raisebox{-2.8ex}{\includegraphics[height=3em]{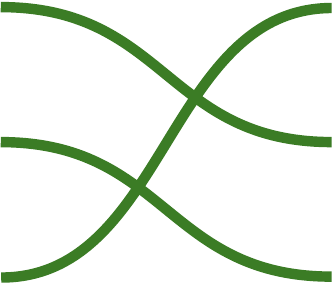}}
    +
    c_6
    \raisebox{-2.8ex}{\includegraphics[height=3em]{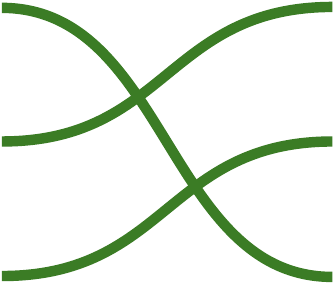}}\, .\label{eq:Haar-2-3}
\end{align}
Given an arbitrary input state $\rho$, the action of $\mE$ is given by
\begin{align}
    \mE(\rho)
    =
    &J^{\mE}\star\rho
    =
    \Tr_{A}[(J^{\mE})^{\T_{A}}\cdot(\rho\otimes\1_B\otimes\1_C)]\label{eq:mERho}\\
    =&
    c_1
    \raisebox{-1.8ex}{\includegraphics[height=1.8em]{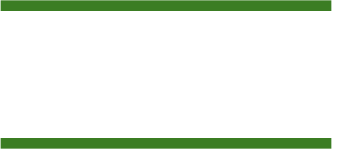}}
    +
    c_2
    \raisebox{-1.8ex}{\includegraphics[height=2.5em]{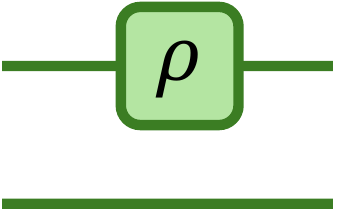}}
    +
    c_3
    \raisebox{-3.3ex}{\includegraphics[height=2.5em]{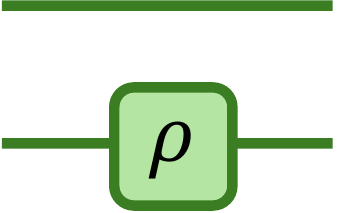}}
    +
    c_4
    \raisebox{-1.8ex}{\includegraphics[height=1.8em]{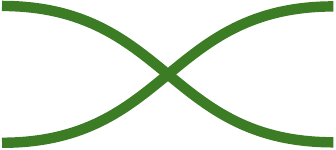}}
    +
    c_5
    \raisebox{-1.8ex}{\includegraphics[height=2.5em]{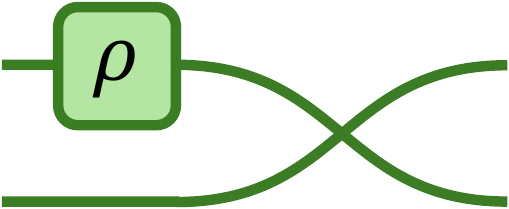}}
    +
    c_6
    \raisebox{-3.3ex}{\includegraphics[height=2.5em]{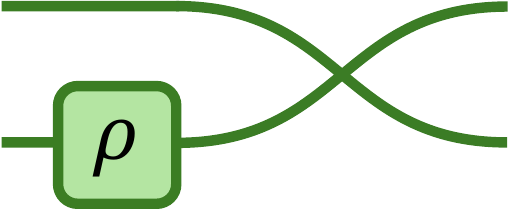}}\, .
\end{align}
Here the coefficients $c_i$ ($i\in\{1,\ldots,6\}$) are constants determined by the broadcasting map $\mE$. 
An analogous structure applies to the optimal solutions $J_1$ and $J_2$ in Eq.~\eqref{eq:SDP_Approximate}. 
Writing $\mE_1$ and $\mE_2$ for the associated channels, we assume that their expansions can be expressed as
\begin{align}
    (J_1)^{\T_{A}}
    &=
    a_1 
    \raisebox{-2.8ex}{\includegraphics[height=3em]{Id_3.pdf}}
    +
    a_2
    \raisebox{-2.8ex}{\includegraphics[height=3em]{P12.pdf}}
    +
    a_3
    \raisebox{-2.8ex}{\includegraphics[height=3em]{P13.pdf}}
    +
    a_4
    \raisebox{-2.8ex}{\includegraphics[height=3em]{N.pdf}}
    +
    a_5
    \raisebox{-2.8ex}{\includegraphics[height=3em]{P123.pdf}}
    +
    a_6
    \raisebox{-2.8ex}{\includegraphics[height=3em]{P132.pdf}}\, .\label{eq:Haar-2-3_J_1}
\end{align}
and
\begin{align}
    (J_2)^{\T_{A}}
    =
    b_1 
    \raisebox{-2.8ex}{\includegraphics[height=3em]{Id_3.pdf}}
    +
    b_2
    \raisebox{-2.8ex}{\includegraphics[height=3em]{P12.pdf}}
    +
    b_3
    \raisebox{-2.8ex}{\includegraphics[height=3em]{P13.pdf}}
    +
    b_4
    \raisebox{-2.8ex}{\includegraphics[height=3em]{N.pdf}}
    +
    b_5
    \raisebox{-2.8ex}{\includegraphics[height=3em]{P123.pdf}}
    +
    b_6
    \raisebox{-2.8ex}{\includegraphics[height=3em]{P132.pdf}}\, .\label{eq:Haar-2-3_J_2}
\end{align}

The analysis of the SDP in Eq.~\eqref{eq:SDP_Approximate} proceeds by reorganizing its constraints and translating the associated matrix inequalities into polynomial conditions.
We first recall the complete positivity (CP) of $J_1$,
\begin{align}
    J_1\geqslant0.
\end{align}
Within the tensor-network framework, this requirement is captured by exploring the action of partial transposition on the permutation basis of $\mathfrak{S}_3$:
\begin{align}
    \raisebox{-2.8ex}{\includegraphics[height=3em]{Id_3.pdf}}
    \quad&\xrightarrow{\T_{A}}\quad 
    \raisebox{-2.8ex}{\includegraphics[height=3em]{Id_3.pdf}},\\ 
    \vspace{6pt}\notag\\
    \raisebox{-2.8ex}{\includegraphics[height=3em]{P12.pdf}}
    \quad&\xrightarrow{\T_{A}}\quad 
    \raisebox{-2.8ex}{\includegraphics[height=3em]{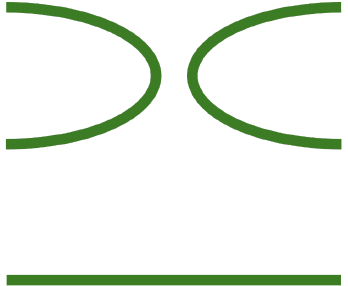}},\\ 
    \vspace{6pt}\notag\\
    \raisebox{-2.8ex}{\includegraphics[height=3em]{P13.pdf}}
    \quad&\xrightarrow{\T_{A}}\quad 
    \raisebox{-2.8ex}{\includegraphics[height=3em]{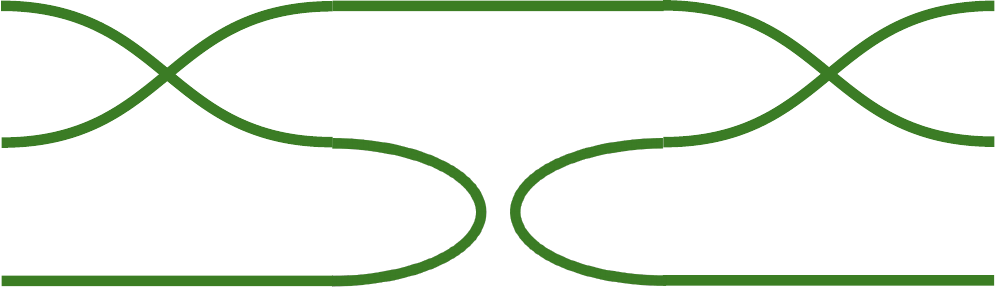}},\\ 
    \vspace{6pt}\notag\\
    \raisebox{-2.8ex}{\includegraphics[height=3em]{N.pdf}}
    \quad&\xrightarrow{\T_{A}}\quad 
    \raisebox{-2.8ex}{\includegraphics[height=3em]{N.pdf}},\\ 
    \vspace{6pt}\notag\\
    \raisebox{-2.8ex}{\includegraphics[height=3em]{P123.pdf}}
    \quad&\xrightarrow{\T_{A}}\quad 
    \raisebox{-2.8ex}{\includegraphics[height=3em]{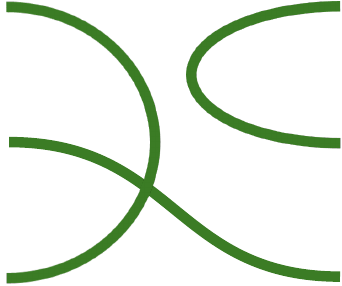}}
    ,\\ 
    \vspace{6pt}\notag\\
    \raisebox{-2.8ex}{\includegraphics[height=3em]{P132.pdf}}
    \quad&\xrightarrow{\T_{A}}\quad 
    \raisebox{-2.8ex}{\includegraphics[height=3em]{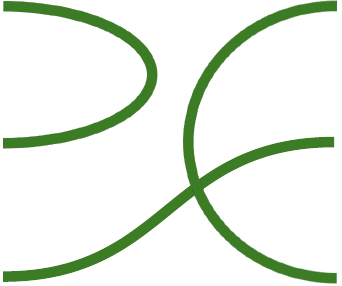}}.
\end{align}
The CP condition for $J_1$ is therefore equivalent to
\begin{align}
    J_1
    =
    a_1 
    \raisebox{-2.8ex}{\includegraphics[height=3em]{Id_3.pdf}}
    +
    a_2
    \raisebox{-2.8ex}{\includegraphics[height=3em]{M12.pdf}}
    +
    a_3
    \raisebox{-2.8ex}{\includegraphics[height=3em]{M13.pdf}}
    +
    a_4
    \raisebox{-2.8ex}{\includegraphics[height=3em]{N.pdf}}
    +
    a_5
    \raisebox{-2.8ex}{\includegraphics[height=3em]{M1.pdf}}
    +
    a_6
    \raisebox{-2.8ex}{\includegraphics[height=3em]{M2.pdf}}\,\geqslant0.\label{eq:J_1}
\end{align}
As the operator appearing on the left-hand side of Eq.~\eqref{eq:J_1}, equivalently Eq.~\eqref{eq:Haar-2-3_J_1}, is Hermitian, we conclude that
\begin{align}
    a_1, a_2, a_3, a_4&\in\mathbb{R},\\
    a_5&=a_6^*.
\end{align}
One may observe that the fifth and sixth terms on the left-hand side of Eq.~\eqref{eq:J_1} fail to be Hermitian.
This motivates the introduction of two Hermitian matrices adapted to the SDP formulation
\begin{align}
    M&:=\frac{1}{2}
    \left(\,\raisebox{-2.8ex}{\includegraphics[height=3em]{M1.pdf}}
    +
    \raisebox{-2.8ex}{\includegraphics[height=3em]{M2.pdf}}\,\right),\label{eq:M}\\
    N&:=\frac{1}{2i}
    \left(\,\raisebox{-2.8ex}{\includegraphics[height=3em]{M1.pdf}}
    -
    \raisebox{-2.8ex}{\includegraphics[height=3em]{M2.pdf}}\,\right).\label{eq:N}
\end{align}
Defining $\alpha_5$ and $\alpha_6$ as
\begin{align}
    \alpha_5&:=a_5+a_6,\label{eq:alpha_5}\\
    \alpha_6&:=i(a_5-a_6),\label{eq:alpha_6}
\end{align}
the CP condition can be reformulated as
\begin{align}
    a_1 
    \raisebox{-2.8ex}{\includegraphics[height=3em]{Id_3.pdf}}
    +
    a_2
    \raisebox{-2.8ex}{\includegraphics[height=3em]{M12.pdf}}
    +
    a_3
    \raisebox{-2.8ex}{\includegraphics[height=3em]{M13.pdf}}
    +
    a_4
    \raisebox{-2.8ex}{\includegraphics[height=3em]{N.pdf}}
    +
    \alpha_5 M
    +
    \alpha_6 N
    \geqslant0,
\end{align}
where all coefficients are real numbers and all matrices appearing in the above expansion are Hermitian.
Meanwhile, The CP condition of $J_2$ is equivalently expressed as
\begin{align}
    b_1 
    \raisebox{-2.8ex}{\includegraphics[height=3em]{Id_3.pdf}}
    +
    b_2
    \raisebox{-2.8ex}{\includegraphics[height=3em]{M12.pdf}}
    +
    b_3
    \raisebox{-2.8ex}{\includegraphics[height=3em]{M13.pdf}}
    +
    b_4
    \raisebox{-2.8ex}{\includegraphics[height=3em]{N.pdf}}
    +
    \beta_5 M
    +
    \beta_6 N
    \geqslant0,
\end{align}
with $\beta_5$ and $\beta_6$ defined as
\begin{align}
    \beta_5&:=b_5+b_6,\label{eq:beta_5}\\
    \beta_6&:=i(b_5-b_6).\label{eq:beta_6}
\end{align}

We next consider the trace-preserving (TP) condition for $J_1$, namely,
\begin{align}
    \Tr_{BC}[J_1]=a\,\1_A.
\end{align}
Using Eq.~\eqref{eq:Haar-2-3_J_1}, this condition is equivalent to
\begin{align}
    a_1d^2+(a_2+a_3+a_4)d+\alpha_5=a.
\end{align}
Here $d$ is the (common) dimension of systems $A$, $B$, and $C$, i.e., $d=\dim A=\dim B=\dim C$, and $\alpha_5$ is defined in Eq.~\eqref{eq:alpha_5}.
All arguments presented for $J_1$ apply analogously to $J_2$. 
For clarity, we therefore carry out the analysis explicitly only for $J_1$ and omit the corresponding derivations for $J_2$.
The TP condition for $J_1-J_2$ then reduces to
\begin{align}
    a-b
    =
    &(a_1d^2+(a_2+a_3+a_4)d+\alpha_5)
    -
    (b_1d^2+(b_2+b_3+b_4)d+\beta_5)\\
    =
    &(a_1-b_1)d^2+((a_2-b_2)+(a_3-b_3)+(a_4-b_4))d+(\alpha_5-\beta_5)\\
    =
    &1.
\end{align}

We finally consider the approximate broadcasting condition on Bob's side, namely,
\begin{align}\label{eq:Approximate_B}
    \frac{1}{d}\Tr[\left(\1_{C}\star (J_1-J_2)\right)\cdot\phi^{+}]=1-\epsilon_1.
\end{align}
Using Eq.~\eqref{eq:J_1}, $J_1-J_2$ reads
\begin{align}
    J_1-J_2
    =
    &(a_1-b_1) 
    \raisebox{-2.8ex}{\includegraphics[height=3em]{Id_3.pdf}}
    +
    (a_2-b_2)
    \raisebox{-2.8ex}{\includegraphics[height=3em]{M12.pdf}}
    +
    (a_3-b_3)
    \raisebox{-2.8ex}{\includegraphics[height=3em]{M13.pdf}}
    +
    (a_4-b_4)
    \raisebox{-2.8ex}{\includegraphics[height=3em]{N.pdf}}\\
    \vspace{6pt}\notag\\
    +
    &(a_5-b_5)
    \raisebox{-2.8ex}{\includegraphics[height=3em]{M1.pdf}}
    +
    (a_6-b_6)
    \raisebox{-2.8ex}{\includegraphics[height=3em]{M2.pdf}}\, .
\end{align}
Taking the link product with $\1_{C}$, which corresponds to tracing out system $C$, we obtain
\begin{align}
    \1_{C}\star (J_1-J_2)
    =
    &\left[(a_1-b_1)d+(a_3-b_3)+(a_4-b_4)\right]\,
    \raisebox{-1.8ex}{\includegraphics[height=1.8em]{I_2.pdf}}
    +
    \left[(a_2-b_2)d+(a_5-b_5)+(a_6-b_6)\right]\,
    \raisebox{-1.8ex}{\includegraphics[height=1.8em]{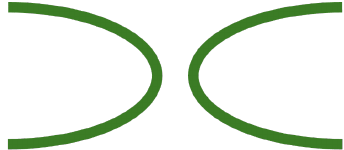}}.
\end{align}
Upon defining 
\begin{align}
    x:=(a_1-b_1)d+(a_3-b_3)+(a_4-b_4),
\end{align}
and
\begin{align}
    y:=(a_2-b_2)d+(a_5-b_5)+(a_6-b_6),
\end{align}
the left-hand side of Eq.~\eqref{eq:Approximate_B} becomes
\begin{align}
    \frac{1}{d}\Tr[\left(\1_{C}\star (J_1-J_2)\right)\cdot\phi^{+}]
    =
    &\frac{1}{d^2}
    \left(
    x\Tr[\raisebox{-1.8ex}{\includegraphics[height=1.8em]{I_2.pdf}}
    \cdot
    \raisebox{-1.8ex}{\includegraphics[height=1.8em]{UMES.pdf}}]
    +
    y\Tr[\raisebox{-1.8ex}{\includegraphics[height=1.8em]{UMES.pdf}}
    \cdot
    \raisebox{-1.8ex}{\includegraphics[height=1.8em]{UMES.pdf}}]
    \right)\\
    =
    &\frac{1}{d^2}
    \left(xd+yd^{2}\right)\\
    =
    &\frac{1}{d}
    (x+yd).
\end{align}
One can readily verify that Eq.~\eqref{eq:Approximate_B} is equivalent to
\begin{align}
    \left(
    (a_2-b_2)d+\alpha_5-\beta_5-1+\epsilon_1
    \right)d
    +
    \left(
    (a_1-b_1)d+(a_3-b_3)+(a_4-b_4)
    \right)
    =0.
\end{align}
The approximate broadcasting condition on Claire's side follows as
\begin{align}\label{eq:Approximate_C}
    \frac{1}{d}\Tr[\left(\1_{B}\star (J_1-J_2)\right)\cdot\phi^{+}]=1-\epsilon_2.
\end{align}
Here we have
\begin{align}\label{eq:approx_trace_B}
    \1_{B}\star (J_1-J_2)
    =
    &\left[(a_1-b_1)d+(a_2-b_2)+(a_4-b_4)\right]\,
    \raisebox{-1.8ex}{\includegraphics[height=1.8em]{I_2.pdf}}
    +
    \left[(a_3-b_3)d+(a_5-b_5)+(a_6-b_6)\right]\,
    \raisebox{-1.8ex}{\includegraphics[height=1.8em]{UMES.pdf}}.
\end{align}
Equation~\eqref{eq:Approximate_C} now takes the form
\begin{align}
    \left(
    (a_3-b_3)d+\alpha_5-\beta_5-1+\epsilon_2
    \right)d
    +
    \left(
    (a_1-b_1)d+(a_2-b_2)+(a_4-b_4)
    \right)
    =0.
\end{align}

The preceding analysis yields the following reformulation of the SDP in Eq.~\eqref{eq:SDP_Approximate}:

\begin{align}\label{eq:SDP_Approximate_2}
    u_2(\epsilon_1, \epsilon_2)=
    \min \quad 
    & 2\left(a_1d^2+(a_2+a_3+a_4)d+\alpha_5\right)-1\\
    \text{s.t.} \quad 
    &\left(
    (a_3-b_3)d+\alpha_5-\beta_5-1+\epsilon_2
    \right)d
    +
    \left(
    (a_1-b_1)d+(a_2-b_2)+(a_4-b_4)
    \right)
    =0,\\
    &\left(
    (a_2-b_2)d+\alpha_5-\beta_5-1+\epsilon_1
    \right)d
    +
    \left(
    (a_1-b_1)d+(a_3-b_3)+(a_4-b_4)
    \right)
    =0,\\
    & a_1 
    \raisebox{-2.8ex}{\includegraphics[height=3em]{Id_3.pdf}}
    +
    a_2
    \raisebox{-2.8ex}{\includegraphics[height=3em]{M12.pdf}}
    +
    a_3
    \raisebox{-2.8ex}{\includegraphics[height=3em]{M13.pdf}}
    +
    a_4
    \raisebox{-2.8ex}{\includegraphics[height=3em]{N.pdf}}
    +
    \alpha_5 M
    +
    \alpha_6 N
    \geqslant0,\\
    \vspace{6pt}\notag\\
    &b_1 
    \raisebox{-2.8ex}{\includegraphics[height=3em]{Id_3.pdf}}
    +
    b_2
    \raisebox{-2.8ex}{\includegraphics[height=3em]{M12.pdf}}
    +
    b_3
    \raisebox{-2.8ex}{\includegraphics[height=3em]{M13.pdf}}
    +
    b_4
    \raisebox{-2.8ex}{\includegraphics[height=3em]{N.pdf}}
    +
    \beta_5 M
    +
    \beta_6 N
    \geqslant0,\\
    &
    (a_1-b_1)d^2
    +
    ((a_2-b_2)+(a_3-b_3)+(a_4-b_4))d
    +
    (\alpha_5-\beta_5-1)=0,
\end{align}
where $M$ and $N$ denote the matrices defined in Eqs.~\eqref{eq:M} and~\eqref{eq:N}.

%%%%%%%%%%%%%%%%%%%%%%%%%%%%%%%%%%%%%%%%%%%%%%%%%%%%%%%%%%%%%%%%%%%%%%%%%%%%%%%%%%%%%%%%%%%%%%%%%%%%%%%%%%%%%%%%%%%%%%

\subsection{Physical Realization of Marginals}
\label{subsec:Marginals}

In previous subsections, we exploited symmetry, specifically, Schur–Weyl duality, to simplify the semidefinite programming by substantially reducing the number of independent variables. 
While this procedure is mathematically convenient, it naturally prompts a deeper question: what is the physical content of this symmetry reduction, and what insight does it provide beyond computational simplification?
In particular, does the imposed symmetry reveal intrinsic features of approximate virtual quantum broadcasting itself? 
We answer this question affirmatively. 
By focusing on the operator difference $J_1-J_2$, we demonstrate that the constraints in Eqs.~\eqref{eq:Approx_C} and~\eqref{eq:Approx_B} of the SDP in Eq.~\eqref{eq:SDP_Approximate} are equivalent to requiring that all marginal channels derived from $J_1-J_2$ take the form of quantum depolarizing channels $\mD_{p}$ discussed in Subsec.~\ref{subsec:Channel_Twirling}.

To make the underlying structure explicit, we consider an alternative SDP in which the marginal constraints are fixed to quantum depolarizing channels.

\begin{align}\label{eq:SDP_Approximate_v}
    v_2(p_1, p_2):=
    \min \quad 
    & a+b\\
    \text{s.t.} \quad 
    &\Tr_{B}[J_1-J_2]=
    \frac{p_2}{d}\,\1_{AC}
    +
    (1-p_2) \Gamma_{AC},\label{eq:Approx_C_v}\\
    &\Tr_{C}[J_1-J_2]=
    \frac{p_1}{d}\,\1_{AB}
    +
    (1-p_1) \Gamma_{AB},\label{eq:Approx_B_v}\\
    & J_1\geqslant0,\,\, 
    \Tr_{BC}[J_1]=a\,\1_A,\label{eq:J_1_v}\\
    &J_2\geqslant0,\,\,
    \Tr_{BC}[J_2]=b\,\1_A,\label{eq:J_2_v}\\
    &a-b=1.\label{eq:VB_TP_Approximate_v}
\end{align}

%%%%%%%%%%%% simplify %%%%%%%%%%
%Let
%\begin{align}
%    D_i =\frac{p_i}{d}\,\1+(1-p_i)\Gamma.
%\end{align}

%Let $J = J_1 - J_2$

%\begin{align}
%    v_2(p_1, p_2):=
%    \min \quad 
%    & 1+2b\\
%    \text{s.t.} \quad 
%    &\Tr_{B}[J]= D_2\\
%    &\Tr_{C}[J]= D_1\\
%    &\Tr_{BC}[J]=\1_A,\\
%    &\Tr_{BC}[J_2]=b\,\1_A,\\
%    & J_2\geqslant0,\,\, \\
%    &J + J_2\geqslant0,\,\,
%\end{align}
%%%%%%%%%%%% simplify %%%%%%%%%%

The equivalence between the corresponding optimal values $u_2(\epsilon_1, \epsilon_2)$ (see Eq.~\eqref{eq:SDP_Approximate}) and $v_2(p_1, p_2)$ (see Eq.~\eqref{eq:SDP_Approximate_v}) is formalized in the theorem below.

\begin{mythm}
{Depolarizing-Marginal SDP Formulation}
{Depolarizing_Marginal}
The quantities 
$u_2(\epsilon_1, \epsilon_2)$ (see Eq.~\eqref{eq:SDP_Approximate}) and $v_2(p_1, p_2)$ (see Eq.~\eqref{eq:SDP_Approximate_v}) coincide, namely 
\begin{align}
    u_2(\epsilon_1, \epsilon_2)=
    v_2(p_1, p_2),
\end{align}
whenever the following conditions are satisfied
\begin{align}\label{eq:p_epsilon_i}
    p_i=\frac{d^2 \epsilon_i}{d^2-1}, \quad
    \forall\, i\in\{1, 2\}.
\end{align}
\end{mythm}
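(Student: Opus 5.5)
We prove the equality by establishing the two inequalities $u_2(\epsilon_1,\epsilon_2)\leqslant v_2(p_1,p_2)$ and $u_2(\epsilon_1,\epsilon_2)\geqslant v_2(p_1,p_2)$. The link between $\epsilon_i$ and $p_i$ comes from evaluating the fidelity functional on a depolarizing Choi operator.

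\textbf{Step 1 (fidelity of a depolarizing marginal).} Compute the fidelity functional for $D_{p}:=\frac{p}{d}\1+(1-p)\Gamma$. Using $\Tr[\1\cdot\phi^{+}]=1$ and $\Tr[\Gamma\cdot\phi^{+}]=d$, we get
\begin{align}
\frac{1}{d}\Tr[D_p\cdot\phi^{+}]=\frac{p}{d^2}+(1-p)=1-\frac{d^2-1}{d^2}p .
\end{align}
Setting this equal to $1-\epsilon_i$ gives exactly $p_i=d^2\epsilon_i/(d^2-1)$, which is Eq.~\eqref{eq:p_epsilon_i}.

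\textbf{Step 2 ($u_2\leqslant v_2$).} Take any feasible point $\{a,b,J_1,J_2\}$ of Eq.~\eqref{eq:SDP_Approximate_v}. Its marginals are $D_{p_1}$ and $D_{p_2}$. By Step 1, the constraints \eqref{eq:Approx_C} and \eqref{eq:Approx_B} hold with the prescribed $\epsilon_i$. The positivity constraints, the normalization constraints and $a-b=1$ are shared by the two programs. The point is therefore feasible for Eq.~\eqref{eq:SDP_Approximate} with the same objective value, so $u_2\leqslant v_2$.

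\textbf{Step 3 ($u_2\geqslant v_2$).} This is the substantive direction. Start from an optimal solution of Eq.~\eqref{eq:SDP_Approximate}. By Thm.~\ref{thm:Twirling_3}, replace it with its triple-twirled version $\{a',b',\mT(J'_1),\mT(J'_2)\}$, which is still optimal.

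Next, show that every marginal of the twirled difference $J:=\mT(J'_1)-\mT(J'_2)$ is depolarizing. Tracing out $B$ from $\mT(X)$ commutes with the $U$ acting on $B$, so
\begin{align}
\Tr_B[\mT(X)]=\int dU\,(U^*\otimes U)\,\Tr_B[X]\,(U^{\T}\otimes U^\dagger).
\end{align}
This is exactly the isotropic twirl of Subsec.~\ref{subsec:Channel_Twirling}. Hence $\Tr_B[J]=\frac{q}{d}\1_{AC}+r\,\Gamma_{AC}$ for some reals $q,r$. The same argument applies to $\Tr_C[J]$. Alternatively, one can read this off from the explicit $\mathfrak S_3$ expansion via the coefficients $x,y$ computed in Subsec.~\ref{subsec:Symmetry_Simplifications}.

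It remains to fix the two parameters $q$ and $r$.
\begin{itemize}
\item[(i)] Trace preservation. Taking the full trace over $C$ (respectively $B$) and using $\Tr_{BC}[J]=(a'-b')\1_A=\1_A$ gives $q+rd\cdot\frac1d\cdot d\,/d$-type normalization. Concretely, $\Tr_C[\frac{q}{d}\1+r\Gamma]=q\1+r\1$, so $q+r=1$.
\item[(ii)] The fidelity constraint. This gives $q/d^2+r=1-\epsilon_i$.
\end{itemize}
These two linear equations in $(q,r)$ have a unique solution, namely $r=1-p_i$ and $q=p_i$ with $p_i$ as in Step 1. The twirled optimum is therefore feasible for Eq.~\eqref{eq:SDP_Approximate_v} with the same value $a'+b'$, which gives $v_2\leqslant u_2$.

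\textbf{Main obstacle.} The delicate point is Step 3(i): the fidelity constraint alone does not pin down the marginal. One must invoke the explicit TP constraint \eqref{eq:VB_TP_Approximate}, i.e.\ $a-b=1$, to remove the remaining degree of freedom in the isotropic form. One must also check that twirling preserves the normalizations $\Tr_{BC}=a\1$ and $\Tr_{BC}=b\1$ separately for $J_1$ and $J_2$. This follows from Lem.~\ref{lem:Twirling_3_2} after rescaling by $a$ and $b$.
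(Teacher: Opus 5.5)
Your proposal is correct and takes essentially the same route as the paper. Both arguments twirl an optimal solution of $u_2$ using Thm.~\ref{thm:Twirling_3}, note that tracing out one receiver reduces the triple twirl to the isotropic twirl, so each marginal lies in $\mathrm{span}\{\1,\Gamma\}$, fix the depolarizing parameter from the fidelity constraint, and verify the converse inclusion directly. The only difference is that you explicitly use $a-b=1$ to force $q+r=1$, a trace-preservation step the paper leaves implicit by writing the marginal directly in depolarizing form, so your version is slightly more complete on that point.
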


\begin{proof}
We begin by showing that any feasible solution of $u_2(\epsilon_1, \epsilon_2)$ is also feasible for $v_2(p_1, p_2)$. 
As discussed in Subsec.~\ref{subsec:AVB}, one may, without loss of generality, restrict attention to solutions of $u_2(\epsilon_1, \epsilon_2)$ that are invariant under unitary conjugation of $U^*\otimes U\otimes U$, as shown in Eq.~\eqref{eq:UUU}.
Upon tracing out system $B$, this invariance implies that the resulting marginal channel is necessarily depolarizing (see also Eq.~\eqref{eq:approx_trace_B}).
\begin{align}
    \Tr_{B}[J_1-J_2]=
    \frac{p_2}{d}\,\1_{AC}
    +
    (1-p_2) \Gamma_{AC}.
\end{align}
Substituting this structure into Eq.~\eqref{eq:Approx_C}, we obtain
\begin{align}
    \frac{1}{d}\Tr[\left(\1_{B}\star (J_1-J_2)\right)\cdot\phi^{+}]
    =
    &\frac{1}{d^2}
    \left(
    \frac{p_2}{d}\Tr[\raisebox{-1.8ex}{\includegraphics[height=1.8em]{I_2.pdf}}
    \cdot
    \raisebox{-1.8ex}{\includegraphics[height=1.8em]{UMES.pdf}}]
    +
    (1-p_2)\Tr[\raisebox{-1.8ex}{\includegraphics[height=1.8em]{UMES.pdf}}
    \cdot
    \raisebox{-1.8ex}{\includegraphics[height=1.8em]{UMES.pdf}}]
    \right)\\
    =
    &\frac{1}{d^2}
    \left(
    \frac{p_2}{d}d+(1-p_2)d^{2}
    \right)\\
    =
    &1-(1-\frac{1}{d^2})p_2\\
    =
    &1-\epsilon_2,
\end{align}
which further implies that
\begin{align}
    p_2=\frac{d^2 \epsilon_2}{d^2-1}.
\end{align}
An analogous analysis applies to $p_1$ and $\epsilon_1$. 
Collecting these results, we conclude that whenever Eq.~\eqref{eq:p_epsilon_i} is satisfied, any feasible solution of $u_2(\epsilon_1, \epsilon_2)$ is also feasible for $v_2(p_1, p_2)$. 
The converse follows by a straightforward verification, which completes the proof.
\end{proof}

Theorem~\ref{thm:Depolarizing_Marginal} reveals that depolarizing marginals provide a complete characterization of feasible HPTP linear maps for approximate virtual broadcasting.

%%%%%%%%%%%%%%%%%%%%%%%%%%%%%%%%%%%%%%%%%%%%%%%%%%%%%%%%%%%%%%%%%%%%%%%%%%%%%%%%%%%%%%%%%%%%%%%%%%%%%%%%%%%%%%%%%%%%%%

\subsection{SDP Dual Formulation}
\label{subsec:SDP_Dual}

In this subsection, we derive the dual formulation of the sample complexity associated with approximate virtual broadcasting and show that it admits a substantial simplification under triple-twirling symmetry.
This symmetry reduction will ultimately enable an exact characterization of the optimal value in the subsequent subsections. 
To expose the structure of the problem in Eq.~\eqref{eq:SDP_Approximate}, we therefore consider the dual of the SDP in Eq.~\eqref{eq:SDP_Approximate_v}, corresponding to the quantity $v_2(p_1, p_2)$.
The associated Lagrangian is given by
\begin{align}
    \mL
    :=
    &
    (a+b)-(a-b-1)c\\
    &-
    \Tr_{AC}\left[
    \left(
    \Tr_{B}[J_1-J_2]-\frac{p_2}{d}\1_{AC}-(1-p_2)\Gamma_{AC}
    \right)\cdot
    X_{AC}
    \right]\\
    &
    -
    \Tr_{AB}\left[
    \left(
    \Tr_{C}[J_1-J_2]-\frac{p_1}{d}\1_{AB}-(1-p_1)\Gamma_{AB}
    \right)\cdot
    Y_{AB}
    \right]\\
    &
    -\Tr_{ABC}[J_1\cdot M_{ABC}]
    -\Tr_{ABC}[J_2\cdot N_{ABC}]
    -
    \Tr_{A}\left[
    \left(
    \Tr_{BC}[J_1]-a\1_{A}
    \right)
    \cdot P_A
    \right]
    -
    \Tr_{A}\left[
    \left(
    \Tr_{BC}[J_2]-b\1_{A}
    \right)
    \cdot Q_A
    \right],
\end{align}
where $X_{AC}$, $Y_{AB}$, $P_{A}$, $Q_{A}$ are Hermitian operators, $M_{ABC}$ and $N_{ABC}$ are positive semidefinite, and $c\in\mathbb{R}$.
Optimizing the Lagrangian over the primal variables then yields the dual SDP associated with $v_2(p_1, p_2)$; that is
\begin{align}\label{eq:Approximate_Dual_Form_1}
    v_2^{\text{Dual}}(p_1, p_2)
    :=
    \max \quad 
    & 
    1+\Tr[P]+
    \Tr[J^{\mD_{p_2}}\cdot X]+
    \Tr[J^{\mD_{p_1}}\cdot Y]
    \\
    \text{s.t.} \quad 
    &\Tr[P+Q]=-2,\\
    &P_{A}\otimes\1_{BC}+X_{AC}\otimes\1_{B}+Y_{AB}\otimes\1_{C}\leqslant0,\\
    & Q_{A}\otimes\1_{BC}-X_{AC}\otimes\1_{B}-Y_{AB}\otimes\1_{C}\leqslant0.
\end{align}
Here the Choi operator of the depolarizing channels $\mD_{p_i}$ is given by 
\begin{align}
    J^{\mD_{p_i}}=\frac{p_i}{d}\,\1+(1-p_i)\Gamma.
\end{align}

For notational convenience, we perform the substitution $P_{A}\mapsto-P_{A}$ in Eq.~\eqref{eq:Approximate_Dual_Form_1}, which yields a more compact formulation of the dual problem, stated below.

\begin{mylem}
{Dual Formulation of Approximate Broadcasting}
{Dual_Formulation_AVB} 
The sample complexity $v_2(p_1, p_2)$ of approximate virtual broadcasting in Eq.~\eqref{eq:SDP_Approximate_v} can be estimated via the following dual formulation
\begin{align}\label{eq:Approximate_Dual_Form_2}
    v_2^{\text{Dual}}(p_1, p_2)
    :=
    \max \quad 
    & 
    1-\Tr[P]+
    \Tr[J^{\mD_{p_2}}\cdot X]+
    \Tr[J^{\mD_{p_1}}\cdot Y]
    \\
    \text{s.t.} \quad 
    &\Tr[P-Q]=2,\label{eq:P_Q}\\
    &P_{A}\otimes\1_{BC}\geqslant X_{AC}\otimes\1_{B}+Y_{AB}\otimes\1_{C}\geqslant Q_{A}\otimes\1_{BC}\label{eq:P_XY_Q}.
\end{align}
\end{mylem}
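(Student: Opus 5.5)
The statement asks us to verify that the Lagrangian dual of the SDP in Eq.~\eqref{eq:SDP_Approximate_v}, after the sign flip $P_A\mapsto -P_A$, takes the form of Eq.~\eqref{eq:Approximate_Dual_Form_2}. We will also check that its optimal value gives (at least) a lower bound on $v_2(p_1,p_2)$, via weak duality. The plan is a standard Lagrangian computation. First rewrite $\mL$ by collecting terms according to the primal variables $a$, $b$, $J_1$, $J_2$. This gives
\begin{align}
    \mL = &\,a\,(1-c+\Tr[P]) + b\,(1+c+\Tr[Q]) + c + \Tr[J^{\mD_{p_2}}X] + \Tr[J^{\mD_{p_1}}Y]\notag\\
    &-\Tr\big[J_1\cdot(X_{AC}\otimes\1_B+Y_{AB}\otimes\1_C+P_A\otimes\1_{BC}+M)\big]\notag\\
    &+\Tr\big[J_2\cdot(X_{AC}\otimes\1_B+Y_{AB}\otimes\1_C-Q_A\otimes\1_{BC}-N)\big].\notag
\end{align}
The identities used here are $\Tr_{AC}[\Tr_B[J]X]=\Tr[J(X\otimes\1_B)]$ and $\Tr_A[\Tr_{BC}[J]P]=\Tr[J(P\otimes\1_{BC})]$.

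Next, take the infimum over the unconstrained primal variables. For $a,b\in\mathbb{R}$ the infimum is finite only when the linear coefficients vanish, so $\Tr[P]=c-1$ and $\Tr[Q]=-1-c$. Together these give $\Tr[P+Q]=-2$, and the objective becomes $c+\dots=1+\Tr[P]+\dots$. For $J_1$ and $J_2$, which are Hermitian but otherwise unconstrained once $M,N\geqslant0$ carry the positivity, the infimum is finite only if both bracketed operators vanish. Eliminating $M\geqslant0$ and $N\geqslant0$ then yields
\begin{align}
P\otimes\1+X\otimes\1+Y\otimes\1\leqslant0 \quad\text{and}\quad Q\otimes\1-X\otimes\1-Y\otimes\1\leqslant0.
\end{align}
This is exactly Eq.~\eqref{eq:Approximate_Dual_Form_1}.

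Now substitute $P\mapsto-P$. The objective becomes $1-\Tr[P]+\dots$, the trace condition becomes $\Tr[P-Q]=2$, and the two operator inequalities chain to $P\otimes\1_{BC}\geqslant X\otimes\1_B+Y\otimes\1_C\geqslant Q\otimes\1_{BC}$. These are Eqs.~\eqref{eq:P_Q} and~\eqref{eq:P_XY_Q}.

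Finally, argue weak duality directly: for any primal-feasible and dual-feasible pair, $a+b$ minus the dual objective equals $\Tr[J_1(P\otimes\1-X\otimes\1-Y\otimes\1)]+\Tr[J_2(X\otimes\1+Y\otimes\1-Q\otimes\1)]$, which is $\geqslant0$. Hence $v_2^{\text{Dual}}\leqslant v_2$. For equality we need strong duality via Slater's condition, and this is where the main care lies: we need a strictly feasible dual point (or primal point). On the dual side we take $X=Y=0$, $P=\tfrac{2}{d}\cdot\tfrac{1}{1}\1\cdot\tfrac12$ shifted so that $P=\1/d$ and $Q=-\1/d$. Then $\Tr[P-Q]=2$ and the inequalities hold strictly, so the dual is strictly feasible. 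Since the primal is feasible (e.g.\ from the exact virtual broadcasting solution of Lem.~\ref{lem:SC_EVB} mixed with depolarizing channels), the optimal values coincide and the primal optimum is attained.

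The only genuinely delicate step is keeping track of signs and the partial-trace adjoints, in particular confirming that the conditions on $J_1$ and $J_2$ produce the two-sided sandwich in the stated order.
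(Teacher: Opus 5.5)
Your derivation is correct and follows the paper's route: write the Lagrangian, eliminate $a$, $b$, $J_1$, $J_2$ to get $\Tr[P+Q]=-2$ and the two operator inequalities, then substitute $P\mapsto -P$. Your weak-duality identity and the strictly feasible dual point $P=\1/d$, $Q=-\1/d$, $X=Y=0$ also supply the strong-duality step that the paper only asserts (the garbled expression you wrote for $P$ does evaluate to $\1/d$), and for primal feasibility the cleanest witness is the exact virtual broadcasting map composed with $\mD_{p_1}\otimes\mD_{p_2}$.
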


The symmetry induced by triple twirling can now be exploited to substantially simplify the structure of this dual formulation, as formalized in the following lemma.

\begin{mylem}
{Dual Formulation Triple-Twirling Invariance}
{Dual_Formulation_TTI} 
The feasible set of the dual SDP $v_2^{\text{Dual}}(p_1, p_2)$ in Eq.~\eqref{eq:Approximate_Dual_Form_2} is invariant under triple twirling: 
if
\begin{align}
    \{P_{A}, Q_{A}, X_{AC}, Y_{AB}\}
\end{align}
is feasible, then so is 
\begin{align}
    \{\mT(P_{A}), \mT(Q_{A}), \mT(X_{AC}), \mT(Y_{AB})\},
\end{align}
where $\mT$ is the triple-twirling defined in Eq.~\eqref{eq:Twirling}.
\end{mylem}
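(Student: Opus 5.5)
The plan is to show first that each constraint of the dual SDP in Eq.~\eqref{eq:Approximate_Dual_Form_2} is preserved under the conjugation $\mathcal{C}_U(\cdot):=(U^*\otimes U\otimes U)\cdot(\cdot)\cdot(U^{\T}\otimes U^{\dagger}\otimes U^{\dagger})$ for a single fixed unitary $U$. The Haar average then follows by convexity, exactly as in the proofs of Lems.~\ref{lem:Twirling_3_1} and~\ref{lem:Twirling_3_2}. The triple twirling is read on the operators $P_A\otimes\1_{BC}$, $X_{AC}\otimes\1_B$, $Y_{AB}\otimes\1_C$ and $Q_A\otimes\1_{BC}$. Since $U\1U^{\dagger}=\1$, it acts locally:
\begin{align}
    \mathcal{C}_U(P_A\otimes\1_{BC})&=(U^*P_AU^{\T})\otimes\1_{BC},\notag\\
    \mathcal{C}_U(X_{AC}\otimes\1_B)&=\big((U^*\otimes U)X_{AC}(U^{\T}\otimes U^{\dagger})\big)\otimes\1_B,\notag
\end{align}
and analogously for $Y_{AB}$ and $Q_A$. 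Hence the twirled variables $\mT(P_A)$, $\mT(X_{AC})$, etc.\ are again operators on the same subsystems, and they remain Hermitian.

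The first step is the trace constraint Eq.~\eqref{eq:P_Q}. It holds because conjugation by the unitary $U^*$ preserves $\Tr[P]$ and $\Tr[Q]$, so $\Tr[P-Q]=2$ is unchanged.

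The second step is the operator inequalities Eq.~\eqref{eq:P_XY_Q}. Write them as $P\otimes\1-X\otimes\1-Y\otimes\1\geqslant0$ and $X\otimes\1+Y\otimes\1-Q\otimes\1\geqslant0$. Applying $\mathcal{C}_U$ is a unitary conjugation on $ABC$, so it preserves positivity. By linearity and the local action above, the conjugated operators are exactly $\mathcal{C}_U(P)\otimes\1-\mathcal{C}_U(X)\otimes\1-\mathcal{C}_U(Y)\otimes\1$ and the analogous expression involving $Q$. Integrating over the Haar measure keeps positivity, because the PSD cone is closed and convex. This yields the inequalities for the twirled quadruple.

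For completeness I will also check that the objective value is unchanged. The invariance of $\Tr[P]$ was shown above. The terms $\Tr[J^{\mD_{p_i}}\cdot X]$ are preserved because $J^{\mD_{p_i}}=\frac{p_i}{d}\1+(1-p_i)\Gamma$ is invariant under $U^*\otimes U$. The identity part is trivially invariant. For $\Gamma$, the relation $(U^*\otimes U)\ket{\Gamma}=\ket{\Gamma}$ follows from the standard identity $(M\otimes\1)\ket{\Gamma}=(\1\otimes M^{\T})\ket{\Gamma}$, already used in the proof of Lem.~\ref{lem:Twirling_3_1}. Cyclicity of the trace then moves the conjugation onto $J^{\mD_{p_i}}$.

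The only point needing care, and the likely obstacle, is bookkeeping. One must make sure the twirl restricted to an $A$-only or a bipartite operator is well defined, i.e., that the factors on traced-out or identity legs cancel. This is immediate from $U U^{\dagger}=\1$ and the product structure of the twirling unitary.
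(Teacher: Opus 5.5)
Your proposal is correct and takes the same approach as the paper: show that conjugation by a fixed $U^*\otimes U\otimes U$ maps feasible points to feasible points, then Haar-average, using linearity of the trace constraint and convexity of the PSD cone. Your explicit checks of the trace condition, the local action on the operator inequalities, and the invariance of the objective via $(U^*\otimes U)\ket{\Gamma}=\ket{\Gamma}$ fill in details that the paper's two-line proof leaves implicit.
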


Here we adopt the following convention: whenever the triple twirling map $\mT$ is applied to an operator that does not act on all three subsystems, the missing systems are implicitly initialized in the maximally mixed state. After the twirling operation, these auxiliary systems are traced out.
As an illustration, consider $\mT(P_{A})$ and $\mT(X_{AC})$. For the term $\mT(P_{A})$, one has
\begin{align}
    \mT(P_{A}):=
    &\Tr_{BC}\left[
    \int_{\text{Haar}}dU \,
    \left(U^*\otimes U\otimes U\right)\cdot P_{A}\otimes\frac{\1_{BC}}{d^2} \cdot \left(U^{\T}\otimes U^{\dagger}\otimes U^{\dagger}\right)\right]\\
    =
    &\int_{\text{Haar}}dU \,
    U^* P U^{\T}\\
    =
    &\frac{\Tr[P]}{d}\1_{A}.
\end{align}
Meanwhile, for $\mT(X_{AC})$, it follows that
\begin{align}
    \mT(X_{AC}):=
    &\Tr_{B}\left[
    \int_{\text{Haar}}dU \,
    \left(U^*\otimes U\otimes U\right)\cdot X_{AC}\otimes\frac{\1_{B}}{d} \cdot \left(U^{\T}\otimes U^{\dagger}\otimes U^{\dagger}\right)\right]\\
    =
    &\int_{\text{Haar}}dU \,
    \left(U^*\otimes U\right)\cdot X_{AC} \cdot \left(U^{\T}\otimes U^{\dagger}\right)\\
    =
    &x
    \raisebox{-1.8ex}{\includegraphics[height=1.8em]{I_2.pdf}}
    +
    y
    \raisebox{-1.8ex}{\includegraphics[height=1.8em]{UMES.pdf}},
\end{align}
holds for some parameters $x$ and $y$ determined by $X_{AC}$.

\begin{proof}
If $\{P_{A}, Q_{A}, X_{AC}, Y_{AB}\}$ is a feasible solution, its unitary conjugate under any $U$, namely 
\begin{align}
    \{
    U^* P_{A} U^{\T}, 
    U^* Q_{A} U^{\T}, 
    \left(U^*\otimes U\right)\cdot X_{AC} \cdot \left(U^{\T}\otimes U^{\dagger}\right),
    \left(U^*\otimes U\right)\cdot Y_{AB} \cdot \left(U^{\T}\otimes U^{\dagger}\right)
    \}
\end{align}
is also feasible. 
Haar averaging over all unitaries then completes the proof.
\end{proof}

In light of Lem.~\ref{lem:Dual_Formulation_TTI}, it is without loss of generality to restrict attention to solutions in which the operators $\{P_{A}, Q_{A}, X_{AC}, Y_{AB}\}$ in Eq.~\eqref{eq:Approximate_Dual_Form_2} have the following structure:
\begin{align}
    P_{A}&=p\1_{A},\\
    Q_{A}&=q\1_{A},\\
    X_{AC}&=x_1\1_{AC}+x_2\Gamma_{AC},\\
    Y_{AB}&=y_1\1_{AB}+y_2\Gamma_{AB}.
\end{align}
Now the cost function reads
\begin{align}
    &1-\Tr[P]+
    \Tr[J^{\mD_{p_2}}\cdot X]+
    \Tr[J^{\mD_{p_1}}\cdot Y]\\
    =
    &1-dp
    +d(x_1+y_1)+(p_2+d^2-p_2d^2)x_2+
    (p_1+d^2-p_1d^2)y_2\\
    =
    &1-d\left(p-(x_1+y_1)\right)
    +(p_2+d^2-p_2d^2)x_2
    +(p_1+d^2-p_1d^2)y_2,
\end{align}
Within this optimization problem, the quantities $p$, $x_i$ and $y_i$ ($i\in\{1,2\}$) are variables; 
all remaining symbols represent fixed scalar parameters.
Accordingly, Eq.~\eqref{eq:P_Q} can be rewritten as
\begin{align}
    \Tr[P-Q]=(p-q)\Tr[\1_A]=d(p-q)=2,
\end{align}
which implies that
\begin{align}\label{eq:p-q}
    p-q=\frac{2}{d}.
\end{align}
The remaining condition in Eq.~\eqref{eq:P_XY_Q} takes the form
\begin{align}
    p\1_{ABC}
    \geqslant
    (x_1+y_1)\1_{ABC}
    +
    x_2\Gamma_{AC}\otimes\1_{B}
    +
    y_2\Gamma_{AB}\otimes\1_{C}
    \geqslant
    q\1_{ABC},
\end{align}
leading to
\begin{align}
    \left(p-(x_1+y_1)\right)\1_{ABC}
    \geqslant
    x_2\Gamma_{AC}\otimes\1_{B}
    +
    y_2\Gamma_{AB}\otimes\1_{C}
    \geqslant
    \left(q-(x_1+y_1)\right)\1_{ABC}.
\end{align}
Substituting Eq.~\eqref{eq:p-q} into the above expression eliminates the variable $q$, yielding
\begin{align}
    \left(p-(x_1+y_1)\right)\1_{ABC}
    \geqslant
    x_2\Gamma_{AC}\otimes\1_{B}
    +
    y_2\Gamma_{AB}\otimes\1_{C}
    \geqslant
    \left(p-(x_1+y_1)-\frac{2}{d}\right)\1_{ABC}.
\end{align}
Upon redefining $z:=p-(x_1+y_1)$ and relabeling $x_2$ and $y_2$ as $x$ and $y$, respectively, the dual problem reduces to

\begin{mylem}
{Simplified Dual Formulation of Approximate Broadcasting}
{Dual_Formulation_AVB_S_1} 
The sample complexity $v_2^{\text{Dual}}(p_1, p_2)$ associated with approximate virtual broadcasting in Eq.~\eqref{eq:Approximate_Dual_Form_2} admits the following simplified formulation
\begin{align}\label{eq:Approximate_Dual_Form_3}
    v_2^{\text{Dual}}(p_1, p_2)
    =
    \max \quad 
    & 
    1-dz+ex+fy
    \\
    \text{s.t.} \quad 
    &z\1_{ABC}
    \geqslant
    x\Gamma_{AC}\otimes\1_{B}
    +
    y\Gamma_{AB}\otimes\1_{C}
    \geqslant
    \left(z-\frac{2}{d}\right)\1_{ABC},\label{eq:xGamma_yGamma}
\end{align}
where the coefficients $e$ and $f$ are defined as
\begin{align}
    e&:=p_2+d^2-p_2d^2=d^2(1-\epsilon_2),\label{eq:e}\\
    f&:=p_1+d^2-p_1d^2=d^2(1-\epsilon_1).\label{eq:f}
\end{align}
The above identities follow directly from Eq.~\eqref{eq:p_epsilon_i}.
\end{mylem}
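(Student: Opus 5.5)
The reduction to Lemma~\ref{lem:Dual_Formulation_AVB_S_1} has essentially been carried out in the text preceding the statement; the proof consists of making each step rigorous and checking that no optimality is lost. I would organize it in three steps.

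\textbf{Step 1: restriction to twirled dual variables.} Start from the dual in Lemma~\ref{lem:Dual_Formulation_AVB}. By Lemma~\ref{lem:Dual_Formulation_TTI}, if $\{P,Q,X,Y\}$ is feasible then so is its twirl. The objective $1-\Tr[P]+\Tr[J^{\mD_{p_2}}X]+\Tr[J^{\mD_{p_1}}Y]$ is invariant under the twirl, for two reasons:
\begin{itemize}
\item $\Tr[P]$ is unitarily invariant;
\item $J^{\mD_{p_i}}$ is invariant under $U^*\otimes U$ conjugation, so $\Tr[J^{\mD_{p_i}}\,(U^*\otimes U)X(U^{\T}\otimes U^\dagger)]=\Tr[J^{\mD_{p_i}}X]$, and averaging preserves the value.
\end{itemize}
Hence the maximum is attained, without loss of generality, on twirled variables. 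By the Haar-integral identities recalled above (the isotropic twirl of an operator on $AC$ lies in $\mathrm{span}\{\1,\Gamma\}$, and the twirl on $A$ is proportional to $\1$), these variables take the forms $P=p\1$, $Q=q\1$, $X=x_1\1+x_2\Gamma$, $Y=y_1\1+y_2\Gamma$.

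\textbf{Step 2: substitution.} Evaluate the objective on these forms using $\Tr[\1_{AC}]=d^2$, $\Tr[\Gamma]=d$ and $\Tr[\Gamma^2]=d^2$. This gives
\[
\Tr[J^{\mD_{p_2}}X]=d\,x_1+(p_2+d^2-p_2d^2)\,x_2 ,
\]
and similarly for $Y$. Then:
\begin{itemize}
\item The constraint $\Tr[P-Q]=2$ becomes $p-q=2/d$.
\item Subtracting $(x_1+y_1)\1$ throughout the operator sandwich and setting $z=p-(x_1+y_1)$ yields Eq.~\eqref{eq:xGamma_yGamma}, with objective $1-dz+ex+fy$.
\end{itemize}
Conversely, any feasible $(z,x,y)$ lifts to a feasible original dual point with the same value. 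For example, take $x_1=y_1=0$, $p=z$, $q=z-2/d$. So the two maxima coincide, not merely bound one another.

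\textbf{Step 3: identifying $e,f$.} Substitute $p_i=d^2\epsilon_i/(d^2-1)$ from Theorem~\ref{thm:Depolarizing_Marginal} into $p_i+d^2-p_id^2=d^2-p_i(d^2-1)$. This gives $d^2-d^2\epsilon_i$, which is $e$ for $i=2$ and $f$ for $i=1$.

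The main point needing care is Step 1. One must confirm two things: that the objective, not just the feasible set, is twirl-invariant, and that the twirl of operators on $AC$ (with $B$ padded by the maximally mixed state and traced out) truly reduces to the bipartite isotropic twirl, so that the span of $\{\1,\Gamma\}$ is exhaustive. Everything else is bookkeeping. One could also note that strong duality (Slater: a strictly feasible primal exists, e.g.\ with $b$ large) ensures $v_2^{\text{Dual}}=v_2$, although the lemma itself only asserts equality of the two dual formulations.
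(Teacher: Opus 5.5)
Your proposal is correct and follows the paper's route. You twirl the dual variables via Lemma~\ref{lem:Dual_Formulation_TTI} to the forms $p\1$, $q\1$, $x_1\1+x_2\Gamma$ and $y_1\1+y_2\Gamma$, evaluate the objective, eliminate $q$ via $p-q=2/d$, and set $z=p-(x_1+y_1)$. You also make explicit two points the paper leaves implicit: that the objective, not only the feasible set, is twirl-invariant, and that every feasible $(z,x,y)$ lifts back (e.g.\ $x_1=y_1=0$, $p=z$, $q=z-2/d$), so the two formulations are genuinely equal rather than one bounding the other.
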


Now the feasibility conditions of the above SDP are entirely characterized by the spectrum of the matrix $M(x,y)$ defined below
\begin{align}\label{eq:M_xy}
    M(x,y):=
    x\Gamma_{AC}\otimes\1_{B}
    +
    y\Gamma_{AB}\otimes\1_{C}.
\end{align}
Let $\lambda_{\max}$ and $\lambda_{\min}$ denote its largest and smallest eigenvalues, respectively. 
The constraints can therefore be reformulated as
\begin{align}
    \lambda_{\min}+\frac{2}{d}
    \geqslant
    z
    \geqslant
    \lambda_{\max},
\end{align}
and
\begin{align}
    \lambda_{\max}
    \geqslant
    \lambda_{\min}.
\end{align}
A detailed analysis of the spectrum of $M(x,y)$ is presented in Subsec.~\ref{subsec:M_xy}.

%%%%%%%%%%%%%%%%%%%%%%%%%%%%%%%%%%%%%%%%%%%%%%%%%%%%%%%%%%%%%%%%%%%%%%%%%%%%%%%%%%%%%%%%%%%%%%%%%%%%%%%%%%%%%%%%%%%%%%

\subsection{Spectral Analysis of $M(x,y)$}
\label{subsec:M_xy}

As discussed in Lem.~\ref{lem:Dual_Formulation_AVB_S_1}, determining the sample complexity of approximate virtual broadcasting ultimately reduces to analyzing the spectrum of the matrix $M(x,y)$ (see Eq.~\eqref{eq:M_xy}). 
This structure can be uncovered by exploiting the symmetry of $M(x,y)$.
As a first step, we introduce the following two vectors.
\begin{align}
    \ket{u_k}_{ABC}
    &:=
    \ket{\phi^{+}}_{AC}\otimes\ket{k}_{B},\\
    \ket{v_k}_{ABC}
    &:=
    \ket{\phi^{+}}_{AB}\otimes\ket{k}_{C},
\end{align}
and denote by
\begin{align}\label{eq:V_k}
    V_k:=\text{span}\{\ket{u_k}, \ket{v_k}\},
\end{align}
the subspace spanned by them.
It is now straightforward to verify that the actions of $\Gamma_{AC}\otimes\1_{B}$ and $\Gamma_{AB}\otimes\1_{C}$ on the vectors $\ket{u_k}$ and $\ket{v_k}$ gives the following relations
\begin{align}
    \Gamma_{AC}\otimes\1_{B}\ket{u_k}
    &=
    d\ket{u_k},\\
    \Gamma_{AC}\otimes\1_{B}\ket{v_k}
    &=
    \ket{u_k},\\
    \Gamma_{AB}\otimes\1_{C}\ket{u_k}
    &=
    \ket{v_k},\\
    \Gamma_{AB}\otimes\1_{C}\ket{v_k}
    &=
    d\ket{v_k}.
\end{align}
These relations allow us to determine the action of $M(x,y)$ on $\ket{u_k}$ and $\ket{v_k}$, which reads as follows
\begin{align}
    M(x,y)\ket{u_k}
    &=
    xd\ket{u_k}+y\ket{v_k},\\
    M(x,y)\ket{v_k}
    &=
    x\ket{u_k}+yd\ket{v_k}.
\end{align}
One may notice that the vectors $\ket{u_k}$ and $\ket{v_k}$ are not orthogonal, since
\begin{align}
    \braket{u_k}{v_k}=\frac{1}{d}.
\end{align}
To construct an orthonormal basis for the subspace $V_k$ (see Eq.~\ref{eq:V_k}), we apply the {\it Gram–Schmidt procedure} and introduce two new vectors $\ket{a_k}$ and $\ket{b_k}$ defined as
\begin{align}
    \ket{a_k}&:=\ket{u_k},\\
    \ket{b_k}&:=\frac{\ket{v_k}-\frac{1}{d}\ket{u_k}}{\sqrt{1-\frac{1}{d^2}}}.
\end{align}
The vectors $\ket{a_k}$ and $\ket{b_k}$ satisfy
\begin{align}
    \braket{a_k}{b_k}&=0,\\
    \braket{b_k}{b_k}&=1,
\end{align}
and 
\begin{align}
    V_k=\text{span}\{\ket{a_k}, \ket{b_k}\}.
\end{align}
Direct calculation shows that the action of $M(x,y)$ on $\ket{a_k}$ and $\ket{b_k}$ is characterized by
\begin{align}
    M(x,y)\ket{a_k}
    &=
    (xd+\frac{y}{d})\ket{a_k}
    +
    y\sqrt{1-\frac{1}{d^2}}\ket{b_k},\\
    M(x,y)\ket{b_k}
    &=
    y\sqrt{1-\frac{1}{d^2}}\ket{a_k}
    +
    yd(1-\frac{1}{d^2})\ket{b_k}.
\end{align}
Restricted to the subspace $V_k$, the operator $M(x,y)$ admits the following 2-by-2 matrix representation,
\begin{align}
M_2:=
\begin{pmatrix}
xd+\frac{y}{d} & y\sqrt{1-\frac{1}{d^2}} \\
y\sqrt{1-\frac{1}{d^2}} & yd(1-\frac{1}{d^2}) 
\end{pmatrix},
\end{align}
whose eigenvalues are given explicitly by
\begin{align}
    \lambda_{\pm}:=
    \frac{\Tr[M_2]\pm\sqrt{\Tr[M_2]^2-4\det(M_2)}}{2},
\end{align}
with
\begin{align}
    \Tr[M_2]&=(x+y)d,\\
    \det(M_2)&=xy(d^2-1).
\end{align}
Writing this out explicitly, we obtain
\begin{align}\label{eq:lambda_pm}
    \lambda_{\pm}=
    \frac{(x+y)d\pm\sqrt{(x-y)^2d^2+4xy}}{2}.
\end{align}

Before proceeding further, it is useful to analyze the zero eigenvalues of $M(x,y)$.
To this end, consider the rank of the matrix $M(x,y)$, 
\begin{align}
    \rank(M(x,y))
    =
    \rank (x\Gamma_{AC}\otimes\1_{B}
    +
    y\Gamma_{AB}\otimes\1_{C})
    \leqslant
    \rank(\Gamma_{AC}\otimes\1_{B})
    +
    \rank(\Gamma_{AB}\otimes\1_{C}))=2d.
\end{align}
which implies that $M(x,y)$ possesses at least $d^3-2d$ zero eigenvalues. 
Since $2d$ eigenvalues of the form $\lambda_{\pm}$ have already been identified, it follows that, under a suitable change of basis, the Hilbert space of the composite system $ABC$ can be decomposed as
\begin{align}
    \mH_{A}\otimes\mH_{B}\otimes\mH_{C}
    =
    \left(\bigoplus_{k=1}^{d}V_{k}\right)
    \oplus
    V^{\perp},
\end{align}
with $\dim V^{\perp}=d^3-2d$,
under which $M(x,y)$ takes the form
\begin{align}
    M(x,y)=
\begin{pmatrix}
\lambda_{+} & \, & \, &\, & \, & \, & \, \\
\, & \lambda_{-} & \, &\, & \, & \, & \, \\
\, & \, & \ddots &\, & \, & \, & \, \\
\, & \, & \, & \lambda_{+} &\, & \, & \, & \, \\
\, & \, & \, & \, & \lambda_{-} & \, & \, & \, \\
\, & \, & \, & \, & \, & 0 & \, & \, \\
\, & \, & \, & \, & \, & \, & \ddots & \, \\
\, & \, & \, & \, & \, & \, & \, & 0
\end{pmatrix}.
\end{align}
At this stage, the spectral structure of the matrix $M(x,y)$ (see Eq.~\eqref{eq:M_xy}), or equivalently $x\Gamma_{AC}\otimes\1_{B}+y\Gamma_{AB}\otimes\1_{C}$, which constitutes the core object in the constraint in Eq.~\eqref{eq:xGamma_yGamma}, has been fully characterized. 
This spectral information allows all matrix positivity conditions to be reduced to scalar constraints. 
As a result, the original optimization problem can be reformulated in the lemma.

\begin{mylem}
{Simplified Dual Formulation of Approximate Broadcasting}
{Dual_Formulation_AVB_S_2} 
The sample complexity $v_2^{\text{Dual}}(p_1, p_2)$ associated with approximate virtual broadcasting in Eq.~\eqref{eq:Approximate_Dual_Form_2} can be equivalently reformulated as the following form
\begin{align}\label{eq:Approximate_Dual_Form_4}
    v_2^{\text{Dual}}(p_1, p_2)
    =
    \max \quad 
    & 
    1-dz+ex+fy
    \\
    \text{s.t.} \quad 
    &\frac{2}{d}
    \geqslant
    z
    \geqslant
    0,\label{eq:SOCP_1}\\
    &\frac{2}{d}+\lambda_{-}
    \geqslant
    z
    \geqslant
    \lambda_{+},\label{eq:SOCP_2}
\end{align}
where $\lambda_{\pm}$ are given in Eq.~\eqref{eq:lambda_pm}, while $e$ and $f$ are introduced in Eqs.~\eqref{eq:e} and~\eqref{eq:f}, respectively.
\end{mylem}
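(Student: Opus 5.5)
The reformulation follows by feeding the spectral decomposition of $M(x,y)$ from this subsection into the constraint of Lem.~\ref{lem:Dual_Formulation_AVB_S_1}. The objective $1-dz+ex+fy$ and the variables $(x,y,z)$ stay the same, so we only need to show that the operator inequality
\begin{align}
z\1_{ABC}\geqslant M(x,y)\geqslant\left(z-\frac{2}{d}\right)\1_{ABC}
\end{align}
holds if and only if both Eq.~\eqref{eq:SOCP_1} and Eq.~\eqref{eq:SOCP_2} hold.

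The key step is to fix the full spectrum of $M(x,y)$. It consists of $\lambda_+$ and $\lambda_-$, each with multiplicity $d$ (one pair from each block $V_k$), together with the eigenvalue $0$ on $V^{\perp}$.

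Before using this, I will check two points that the text asserts only loosely.
\begin{itemize}
\item The $V_k$ are mutually orthogonal. This holds because they carry distinct labels $\ket{k}$ on $B$ or $C$: $\braket{u_k}{u_l}=\braket{v_k}{v_l}=\delta_{kl}$ and $\braket{u_k}{v_l}=\delta_{kl}/d$.
\item $M(x,y)$ vanishes on $V^{\perp}:=(\bigoplus_k V_k)^{\perp}$. The range of $\Gamma_{AC}\otimes\1_B$ is exactly $\mathrm{span}\{\ket{u_k}\}$, and the range of $\Gamma_{AB}\otimes\1_C$ is exactly $\mathrm{span}\{\ket{v_k}\}$. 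Hence any vector orthogonal to all $V_k$ is annihilated by both terms.
\end{itemize}
Together these give $\dim V^{\perp}=d^3-2d$, which is positive for $d\geqslant2$. So the eigenvalue $0$ genuinely occurs, and $M(x,y)$ is block diagonal as displayed.

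With the spectrum in hand, the operator inequality is equivalent to every eigenvalue $\lambda$ satisfying $z-2/d\leqslant\lambda\leqslant z$. Writing this out eigenvalue by eigenvalue:
\begin{itemize}
\item For $\lambda=0$ it gives $2/d\geqslant z\geqslant0$, which is Eq.~\eqref{eq:SOCP_1}.
\item For $\lambda_\pm$, using $\lambda_-\leqslant\lambda_+$ (the discriminant $(x-y)^2d^2+4xy$ is real and nonnegative), the binding conditions are $z\geqslant\lambda_+$ and $z\leqslant\lambda_-+2/d$. This is Eq.~\eqref{eq:SOCP_2}.
\end{itemize}
The remaining cross conditions, $\lambda_-\leqslant z$ and $z-2/d\leqslant\lambda_+$, follow from these. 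The earlier requirement $\lambda_{\max}\geqslant\lambda_{\min}$ is automatic.

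The main obstacle is the nonnegativity of the discriminant. Expanding gives $(x-y)^2d^2+4xy=d^2(x^2+y^2)-2xy(d^2-2)$. This is not obviously a sum of squares. Its nonnegativity instead follows from $M_2$ being real symmetric, since a real symmetric $2\times2$ matrix has real eigenvalues, so I will simply cite that fact.

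Finally, since the two constraint sets coincide and the objective is unchanged, the maxima agree. This gives Eq.~\eqref{eq:Approximate_Dual_Form_4}.
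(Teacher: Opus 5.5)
Your proposal is correct and follows the paper's route: split the space into the $M(x,y)$-invariant blocks $V_k$ plus $V^{\perp}$, read off the spectrum $\{\lambda_+,\lambda_-,0\}$, and reduce $z\1\geqslant M(x,y)\geqslant(z-2/d)\1$ to the scalar conditions. Your range argument for $M(x,y)|_{V^{\perp}}=0$ is cleaner than the paper's rank count. One small correction: the discriminant is manifestly a sum of squares, $(x-y)^2d^2+4xy=(d^2-1)(x-y)^2+(x+y)^2$, which is the form the paper itself uses later in Eq.~\eqref{eq:lambda_pm_2}, so you need no appeal to real symmetric matrices there.
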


In simplifying the optimization in Lem.~\ref{lem:Dual_Formulation_AVB_S_2}, we used only the fact that $M(x,y)$ admits three eigenvalues, namely $\{\lambda_{+}, \lambda_{-}, 0\}$. 
Importantly, we did not assume that $\lambda_{-}<0$, as this condition does not always hold. 
We discuss this point in more detail in the Subsec.~\ref{subsec:AVB_SOCP}.

%%%%%%%%%%%%%%%%%%%%%%%%%%%%%%%%%%%%%%%%%%%%%%%%%%%%%%%%%%%%%%%%%%%%%%%%%%%%%%%%%%%%%%%%%%%%%%%%%%%%%%%%%%%%%%%%%%%%%%

\subsection{Approximate Virtual Broadcasting via SOCP}
\label{subsec:AVB_SOCP}

The optimization problem characterizing the sample complexity required for approximate virtual broadcasting (see Subsec.~\ref{subsec:AVB}) is already significantly simpler than the original SDP formulation in Eq.~\eqref{eq:SDP_Approximate}, as all matrix constraints have been eliminated and replaced by scalar relations derived from the spectral structure of $M(x,y)$ (see Eq.~\eqref{eq:M_xy}).
This reduction, however, is not yet optimal. 
By further exploiting the relations obtained in Eq.~\eqref{eq:Approximate_Dual_Form_3}, the problem admits an alternative formulation that can be cast as a {\it Second-Order Cone Programming} (SOCP)~\cite{Boyd_Vandenberghe_2004}. 
Compared with general SDPs, SOCPs possess a more favorable computational complexity and can typically be solved more efficiently using convex optimization methods~\cite{Alizadeh2003}.
The remainder of this subsection develops this reformulation explicitly and shows how the characterization of the sample complexity for approximate virtual broadcasting, namely Eq.~\eqref{eq:Approximate_Dual_Form_4}, can be expressed in the SOCP form.

As a first step, introduce a new variable $z$ to replace the original quantity $dz-1$. 
With this substitution, the expression for Eq.~\eqref{eq:Approximate_Dual_Form_4} can be rewritten in the following equivalent form.

\begin{align}\label{eq:SOCP_opt_1}
    v_2^{\text{Dual}}(p_1, p_2)
    =
    \max \quad 
    & 
    ex+fy - z
    \\
    \text{s.t.} \quad 
    & 1
    \geqslant
    z
    \geqslant
    -1, \\
    &d\lambda_{-} + 1
    \geqslant
    z
    \geqslant
    d \lambda_{+} - 1.
\end{align}

Now consider the coefficients $\lambda_{+}$ and $\lambda_{-}$, which are given by
\begin{align}
    \lambda_{\pm}=
    \frac{(x+y)d\pm\sqrt{(x-y)^2d^2+4xy}}{2}.
\end{align}
In the second step, it is convenient to introduce a change of variables. 
Specifically, new variables $x$ and $y$ are defined to represent the linear combinations $x+y$ and $x-y$ appearing in the original formulation of Eq.~\eqref{eq:SOCP_opt_1}. 
This reparameterization simplifies the structure of the optimization problem and leads to 
\begin{align}\label{eq:lambda_pm_2}
    \lambda_{\pm}=\frac{xd\pm\sqrt{y^2(d^2-1)+x^2}}{2},
\end{align}
while $v_2^{\text{Dual}}(p_1, p_2)$ turns out to be
\begin{align}\label{eq:SOCP_opt_2}
    v_2^{\text{Dual}}(p_1, p_2)
    =
    \max \quad 
    & 
    \frac{e+f}{2}x+\frac{e-f}{2}y - z
    \\
    \text{s.t.} \quad 
    & 1
    \geqslant
    z
    \geqslant
    -1, \\
    &\frac{d}{2}\left(xd-\sqrt{y^2(d^2-1)+x^2}\right) + 1
    \geqslant
    z
    \geqslant
    \frac{d}{2}\left(xd+\sqrt{y^2(d^2-1)+x^2}\right) - 1.
\end{align}
We now return to the question raised at the end of Subsec.~\ref{subsec:M_xy} as to why $\lambda_{-}$ cannot be assumed to be always negative. 
Equation~\eqref{eq:lambda_pm_2} shows that $\lambda_{-}>0$ when $x>0$ and $y=0$, whereas $\lambda_{-}<0$ when $x=0$ and $y\neq0$. 
The sign of $\lambda_{-}$ therefore depends on the values of $x$ and $y$.
For further simplification, we rescale the variable by substituting $xd/2\to x$ and $yd\sqrt{d^2-1}/2\to y$, which yields
\begin{align}\label{eq:SOCP_3}
    v_2^{\text{Dual}}(p_1, p_2)
    =
    \max \quad 
    & 
    gx+hy - z
    \\
    \text{s.t.} \quad 
    & 1
    \geqslant
    z
    \geqslant
    -1, \\
    &1-\sqrt{y^2+x^2}
    \geqslant
    z-xd
    \geqslant
    -1+\sqrt{y^2+x^2},
\end{align}
with the coefficients $g$ and $h$ given by
\begin{align}
    g&:=\frac{e+f}{d},\label{eq:g}\\
    h&:=\frac{e-g}{d \sqrt{d^2-1}}.\label{eq:h}
\end{align}
Here, the quantities $e$ and $f$ are specified in Eqs.~\eqref{eq:e} and~\eqref{eq:f}, respectively.
Note that the term $\sqrt{y^2+x^2}$ appearing in the constraints corresponds precisely to the Euclidean norm, also known as $L^2$ norm, of the vector $(x, y)^{\T}$, namely
\begin{align}
    \left\|\begin{pmatrix}
        x \\
        y
    \end{pmatrix}\right\|_2
    =
    \sqrt{y^2+x^2}.
\end{align}
This observation reveals that the feasible region is characterized by second-order cone constraints. 
Consequently, the optimization problem admits a formulation as a Second-Order Cone Programming (SOCP), as summarized in the following theorem.

\begin{mythm}
{Approximate Broadcasting via SOCP}
{AVB_SOCP} 
The sample complexity $v_2^{\text{Dual}}(p_1, p_2)$ associated with approximate virtual broadcasting in Eq.~\eqref{eq:Approximate_Dual_Form_2} admits an equivalent formulation as the following Second-Order Cone Programming (SOCP).
\begin{align}\label{eq:AVB_SOCP}
    v_2^{\text{Dual}}(p_1, p_2)
    =
    \max \quad 
    & 
    gx+hy-z
    \\
    \text{s.t.} \quad 
    &1\geqslant|z|,\label{eq:AVB_SOCP_1}\\
    &1-\left\|\begin{pmatrix}
        x \\
        y
    \end{pmatrix}\right\|_2\geqslant|xd - z|,\label{eq:AVB_SOCP_2}
\end{align}
The coefficients $g$ and $h$ appearing in the linear objective function are defined in Eqs.~\eqref{eq:g} and~\eqref{eq:h}, respectively, and take the explicit form
\begin{align}
    g&=d(2-(\epsilon_1+\epsilon_2)),\label{eq:g_explicit}\\
    h&=\frac{d}{\sqrt{d^2-1}}(\epsilon_1-\epsilon_2).\label{eq:h_explicit}
\end{align}
\end{mythm}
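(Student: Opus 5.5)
The plan is to derive the SOCP by a chain of invertible changes of variables starting from Lem.~\ref{lem:Dual_Formulation_AVB_S_2}. Each step either preserves the feasible set exactly or maps it bijectively, so the optimal value $v_2^{\text{Dual}}(p_1,p_2)$ is unchanged throughout.

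First I would recheck that Lem.~\ref{lem:Dual_Formulation_AVB_S_2} is a faithful rewriting of Lem.~\ref{lem:Dual_Formulation_AVB_S_1}. The operator constraint $z\1\geqslant M(x,y)\geqslant (z-2/d)\1$ is equivalent to
\[
z\geqslant\lambda_{\max}(M)
\quad\text{and}\quad
z-2/d\leqslant\lambda_{\min}(M).
\]
By Subsec.~\ref{subsec:M_xy}, the spectrum of $M(x,y)$ is exactly $\{\lambda_+,\lambda_-,0\}$. The eigenvalue $0$ really occurs, since $d^3-2d>0$ for $d\geqslant2$, and $\lambda_+\geqslant\lambda_-$ always holds because the discriminant $(x-y)^2d^2+4xy$ is nonnegative. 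Hence the two scalar inequalities split into the four conditions
\[
z\geqslant0,\qquad z\geqslant\lambda_+,\qquad z\leqslant 2/d,\qquad z\leqslant 2/d+\lambda_-,
\]
which are precisely Eqs.~\eqref{eq:SOCP_1}--\eqref{eq:SOCP_2}. Importantly, no sign assumption on $\lambda_-$ is needed for this step.

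Second, I would perform the three substitutions in turn.
\begin{enumerate}[(i)]
\item Set $z'=dz-1$. The objective becomes $ex+fy-z'$, the condition $0\leqslant z\leqslant 2/d$ becomes $|z'|\leqslant1$, and the remaining conditions become $d\lambda_+-1\leqslant z'\leqslant d\lambda_-+1$.
\item Set $X=x+y$ and $Y=x-y$. The objective becomes $\tfrac{e+f}{2}X+\tfrac{e-f}{2}Y-z'$. The identity $(x-y)^2d^2+4xy=Y^2d^2+(X^2-Y^2)=Y^2(d^2-1)+X^2$ turns the eigenvalues into the form of Eq.~\eqref{eq:lambda_pm_2}.
\item Rescale $X'=Xd/2$ and $Y'=Yd\sqrt{d^2-1}/2$. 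Then $d\lambda_\pm = X'd\pm\sqrt{X'^2+Y'^2}$, so the pair of inequalities $d\lambda_+-1\leqslant z'\leqslant d\lambda_-+1$ reads
\[
-1+\|(X',Y')\|_2\leqslant z'-X'd\leqslant 1-\|(X',Y')\|_2,
\]
which is exactly $|X'd-z'|\leqslant 1-\|(X',Y')^{\T}\|_2$, i.e.\ Eq.~\eqref{eq:AVB_SOCP_2}.
\end{enumerate}
All three maps are affine and invertible for $d\geqslant2$, so feasibility and the optimum are preserved. Moreover, Eq.~\eqref{eq:AVB_SOCP_2} is a genuine second-order cone constraint, since it is equivalent to $\|(X',Y')\|_2\leqslant 1-|X'd-z'|$ together with two linear inequalities.

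Finally, I would compute the objective coefficients explicitly. After the rescaling, the coefficient of $X'$ is $g=(e+f)/d$, and the coefficient of $Y'$ is $(e-f)/(d\sqrt{d^2-1})$. The latter is what $h$ in Eq.~\eqref{eq:h} should read: the ``$e-g$'' there appears to be a typo for $e-f$. Inserting Eqs.~\eqref{eq:e}--\eqref{eq:f}, namely $e=d^2(1-\epsilon_2)$ and $f=d^2(1-\epsilon_1)$, gives $g=d(2-\epsilon_1-\epsilon_2)$ and $h=d(\epsilon_1-\epsilon_2)/\sqrt{d^2-1}$, matching Eqs.~\eqref{eq:g_explicit}--\eqref{eq:h_explicit}.

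The main obstacle is not conceptual but bookkeeping. One must track the direction of each inequality through the substitutions, and in particular confirm that the two-sided bound on $z'-X'd$ collapses correctly into a single absolute-value constraint. One must also make sure the eigenvalue $0$ (giving $|z'|\leqslant1$) is retained and not absorbed into the $\lambda_\pm$ conditions, since either $\lambda_-$ or $\lambda_+$ may have either sign.
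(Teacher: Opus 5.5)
Your proposal is correct and follows the paper's own route: the same chain of substitutions, namely $z\mapsto dz-1$, then $x\pm y$, then rescaling by $d/2$ and $d\sqrt{d^2-1}/2$, applied to Lem.~\ref{lem:Dual_Formulation_AVB_S_2}, with the spectral reduction from Lem.~\ref{lem:Dual_Formulation_AVB_S_1} rechecked along the way. You are also right that Eq.~\eqref{eq:h} should read $h=(e-f)/(d\sqrt{d^2-1})$; this is the coefficient the rescaling actually produces, and it is what gives the explicit form in Eq.~\eqref{eq:h_explicit}.
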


Up to this point, we have derived an alternative formulation of the dual problem for the sample complexity in terms of SOCP, which in general entails lower computational complexity than the original SDP formulation.
However, this representation still does not provide an analytic solution.
As obtaining such a solution requires a more detailed analysis, we defer this task to the subsequent subsection.

%%%%%%%%%%%%%%%%%%%%%%%%%%%%%%%%%%%%%%%%%%%%%%%%%%%%%%%%%%%%%%%%%%%%%%%%%%%%%%%%%%%%%%%%%%%%%%%%%%%%%%%%%%%%%%%%%%%%%%

\subsection{Analytic Solution for Approximate Virtual Broadcasting}
\label{subsec:Analytic_Solution}

In this subsection, we continue the analysis of the optimal value $v_2^{\text{Dual}}(p_1, p_2)$ (see Lem.~\ref{lem:Dual_Formulation_AVB}) and derive its analytic solution, thereby fully resolving the problem.

The quantities $\sqrt{d^2-1}$, $1 - \epsilon_1$, and $1 - \epsilon_2$ appear repeatedly in the analytic analysis. For convenience, we introduce the following shorthand notations:

\begin{align}
    k &:= \sqrt{d^2-1},\label{eq:k}\\
    t_1 &:= 1 - \epsilon_1,\label{eq:t_1}\\
    t_2 &:= 1-\epsilon_2.\label{eq:t_2}
\end{align}

Among the coefficients $g$ (see Eq.~\eqref{eq:g}), $h$ (see Eq.~\eqref{eq:h}), and $k$ (see Eq.~\eqref{eq:k}), two inequalities play an important role in the analysis. 
We summarize these observations in the following lemma.

\begin{mylem}{Key Inequalities for the Coefficients}{valid_bound}
Let $g$, $h$, and $k$ be the coefficients defined in Eqs.~\eqref{eq:g}, \eqref{eq:h}, and \eqref{eq:k}. 
Then the following inequalities hold:
\begin{align}
    &1\geqslant \frac{k}{d}\geqslant|h|,\label{eq:Inequaility_ghk_1}\\
    &g^2 > d^2 h^2 \geqslant k^2 h^2.\label{eq:Inequaility_ghk_2}
\end{align}
Here $d$ denotes the dimension of the systems involved in the broadcasting task. 
Specifically, for broadcasting from system $A$ to systems $B$ and $C$, we assume $d=\dim A= \dim B= \dim C$.
\end{mylem}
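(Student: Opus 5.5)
The plan is to discard the intermediate definitions and use the explicit forms from Thm.~\ref{thm:AVB_SOCP}. With the shorthands of Eqs.~\eqref{eq:k}--\eqref{eq:t_2}, these read
\begin{align}
    g = d(t_1+t_2), \qquad h=\frac{d}{k}(t_2-t_1).
\end{align}
(Eq.~\eqref{eq:h} contains a typo, $e-g$ in place of $e-f$; the explicit form in Eq.~\eqref{eq:h_explicit} is the one to use.) Both inequalities then become elementary scalar statements about $t_1,t_2$ and $d\geqslant2$.

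The essential input is the admissible range of the errors. By Thm.~\ref{thm:Depolarizing_Marginal}, the optimal marginals are depolarizing channels $\mD_{p_i}$ with $p_i=d^2\epsilon_i/(d^2-1)$. The parameter range $p_i\in[0,d^2/(d^2-1)]$, on which the depolarizing Choi operator stays positive, corresponds to $\epsilon_i\in[0,1-1/d^2]$. Equivalently,
\begin{align}
    t_i\in\left[\frac{1}{d^2},\,1\right],
\end{align}
which is exactly the range of channel fidelities of depolarizing channels. I will state this as the standing assumption on $\epsilon_1,\epsilon_2$. Pinning it down is the main obstacle, since with only $t_i\in[0,1]$ the bound $|h|\leqslant k/d$ fails.

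For Eq.~\eqref{eq:Inequaility_ghk_1}, the bound $1\geqslant k/d$ is immediate from $k^2=d^2-1<d^2$. For the second part, $|h|\leqslant k/d$ is equivalent to
\begin{align}
    |t_1-t_2|\leqslant \frac{k^2}{d^2}=1-\frac{1}{d^2},
\end{align}
and this holds because both $t_i$ lie in an interval of exactly that length.

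For Eq.~\eqref{eq:Inequaility_ghk_2}, the right-hand inequality $d^2h^2\geqslant k^2h^2$ again follows from $d^2>k^2$. For the strict left-hand inequality, take square roots; the claim becomes
\begin{align}
    k(t_1+t_2)>d\,|t_1-t_2|.
\end{align}
By symmetry assume $t_1\geqslant t_2$. The function $(k-d)t_1+(k+d)t_2$ is linear, with a negative coefficient on $t_1$ and a positive one on $t_2$. Its minimum over the box is therefore attained at $t_1=1$, $t_2=1/d^2$, where it equals $(k-d)+(k+d)/d^2$. Using $d-k=1/(d+k)$, positivity of this value reduces to $(d+k)^2>d^2$, which holds since $k>0$. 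This yields the strict inequality and completes the plan.
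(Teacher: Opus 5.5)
Your proof is correct, and most of it follows the paper's proof. You use the same explicit forms $g=d(t_1+t_2)$ and $h=\frac{d}{k}(t_2-t_1)$; you are right that Eq.~\eqref{eq:h} should read $e-f$. You assume the same standing range $t_i\in[1/d^2,1]$, and you give the same one-line arguments for $1\geqslant k/d\geqslant|h|$ and for $d^2h^2\geqslant k^2h^2$.

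The one genuine difference is the strict inequality $g^2>d^2h^2$. The paper expands
\[
g^2-d^2h^2=d^2\Big(4t_1t_2-\tfrac{(t_2-t_1)^2}{d^2-1}\Big)=d^2t_1t_2\Big(4-\tfrac{1}{d^2-1}\big(\tfrac{t_2}{t_1}+\tfrac{t_1}{t_2}-2\big)\Big),
\]
and then uses $t_2/t_1\leqslant d^2$ to conclude $g^2-d^2h^2\geqslant 3d^2t_1t_2>0$. That argument needs only the ratio condition $d^{-2}\leqslant t_2/t_1\leqslant d^2$, and it gives an explicit multiplicative margin. You instead reduce to $k(t_1+t_2)>d|t_1-t_2|$ and minimize a linear function at the corner $(t_1,t_2)=(1,d^{-2})$ of the box. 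Your route is shorter and identifies the extremal point explicitly; both are fine.

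Your justification of the range, however, is wrong. Under $\epsilon_i=\frac{d^2-1}{d^2}p_i$, the positivity interval $p_i\in[0,d^2/(d^2-1)]$ of the isotropic Choi operator maps to $\epsilon_i\in[0,1]$, i.e.\ $t_i\in[0,1]$. That is exactly the range on which you yourself note the lemma fails. For instance, the CPTP map $\rho\mapsto(d\,\Tr[\rho]\1-\rho)/(d^2-1)$ is the case $p=d^2/(d^2-1)$ and has channel fidelity $0$. At $(t_1,t_2)=(1,0)$, both $|h|\leqslant k/d$ and $g^2>d^2h^2$ break down.

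The interval $t_i\in[1/d^2,1]$ corresponds instead to $p_i\in[0,1]$, i.e.\ marginals that are convex mixtures of the identity and the completely depolarizing channel. It therefore does not follow from complete positivity, and the marginals of a virtual map need not be CP anyway. It has to be imposed as a hypothesis on the admissible errors, which is what the paper does by restricting to $\epsilon_i\in[0,1-1/d^2]$. Since you do adopt it as a standing assumption, your proof stands once this incorrect motivation is removed.
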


\begin{proof}
For the error parameters $\epsilon_i$ allowed in approximate broadcasting, we have 
\begin{align}
    \epsilon_i \in [0, 1-\frac{1}{d^2}], 
    \quad\forall\, i\in\{1, 2\}.
\end{align}
Equivalently, in terms of $t_i$, this relation can be written as
\begin{align}
    t_i \in  \qty[\frac{1}{d^2}, 1], 
    \quad\forall\, i\in\{1, 2\}.
\end{align}
It is then straightforward to verify that
\begin{align}
    |t_2 -t_1| \leqslant 1- \frac{1}{d^2} = \frac{d^2 -1}{d^2} = \frac{k^2}{d^2},
\end{align}
where the first inequality in the chain follows from
\begin{align}
    |h| = \frac{d}{k}|t_2 - t_1| \leqslant \frac{k}{d} \leqslant 1.
\end{align}

From the definition of $k$ in Eq.~\eqref{eq:k}, the second part of Eq.~\eqref{eq:Inequaility_ghk_2} follows immediately. 
We now turn to the first inequality in Eq.~\eqref{eq:Inequaility_ghk_2}, namely
\begin{align}
    g^2 - d^2 h^2
    &= d^2\qty((t_1+t_2)^2 - \frac{d^2}{d^2-1} (t_2-t_1)^2) \\
    &= d^2\qty(4 t_1 t_2 - \frac{1}{d^2-1} (t_2-t_1)^2) \\
    &= d^2  t_1 t_2 \qty(4 - \frac{1}{d^2-1} \qty(\frac{t_2}{t_1}+\frac{t_1}{t_2} - 2)) \\
    &\geqslant d^2  t_1 t_2 \qty(4 - \frac{1}{d^2-1} \qty(d^2 + \frac{1}{d^2} - 2)) \label{ln:max_t}\\
    &= d^2  t_1 t_2 \qty(4 - \frac{1}{d^2-1} \qty(d^2 -1)+ \frac{1}{d^2-1} \qty(1 - \frac{1}{d^2})) \\
    &\geqslant d^2  t_1 t_2 \qty(4 - 1) \\
    &> 0 \ .
\end{align}
The inequality in Eq.~\eqref{ln:max_t} follows from the bound
\begin{align}
    \frac{t_2}{t_1} \leqslant d^2,
\end{align}
which implies
\begin{align}
    \frac{t_2}{t_1} + \frac{t_1}{t_2} \leqslant d^2 + \frac{1}{d^2}.
\end{align}
The inequality is strict since $t_1, t_2 >0$.
\end{proof}

To simplify the square-root expressions $\sqrt{y^2+x^2}$ arising in the constraint of Eq.~\eqref{eq:AVB_SOCP_2}, we introduce a parametrization in terms of hyperbolic functions, which leads to the following theorem.

\begin{mythm}{Hyperbolic Parametrization of SOCP}{Hyperbolic_Parametrization}
The sample complexity $v_2^{\text{Dual}}(p_1, p_2)$ for approximate virtual broadcasting in Eq.~\eqref{eq:Approximate_Dual_Form_2} admits the following equivalent hyperbolic parametrization.
    \begin{align}
        v_2^{\text{Dual}}(p_1, p_2)
        = \max \quad  \label{ln:hyperbolic}
        &  \qty(g+ h\sinh{\theta})x - z \\
        \text{s.t.} \quad 
        &1\geqslant|z|,\label{ln:hyperbolic_1}\\
        &1-\abs{x} \cosh{\theta} \geqslant \abs{xd - z} \ .\label{ln:hyperbolic_2}
    \end{align}
\end{mythm}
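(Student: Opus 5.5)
The plan is to reparametrize the vector $(x,y)^{\T}$ in the SOCP of Thm.~\ref{thm:AVB_SOCP} by writing $y=x\sinh\theta$ for $x\neq0$. This gives
\begin{align}
\left\|\begin{pmatrix} x\\ y\end{pmatrix}\right\|_2=\sqrt{x^2+x^2\sinh^2\theta}=\abs{x}\cosh\theta .
\end{align}
With this substitution the objective $gx+hy-z$ becomes $(g+h\sinh\theta)x-z$, and the cone constraint in Eq.~\eqref{eq:AVB_SOCP_2} becomes Eq.~\eqref{ln:hyperbolic_2}. The constraint Eq.~\eqref{eq:AVB_SOCP_1} is untouched. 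I will show the two feasible sets correspond: every feasible triple $(x,y,z)$ with $x\neq0$ maps to a feasible triple $(x,\theta,z)$ with the same objective value, and conversely. Since $\sinh:\mathbb{R}\to\mathbb{R}$ is a bijection, $\theta=\operatorname{arsinh}(y/x)$ is well defined whenever $x\neq0$, so this part is mechanical.

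The only real issue is the slice $x=0$. There $y$ is free with $\abs{y}\leqslant 1-\abs{z}$, but the hyperbolic form forces $y=x\sinh\theta=0$. So I must show that restricting to $x\neq0$, or to $x=0,y=0$, does not lower the maximum. I would argue this by closure and continuity. The objective is linear, hence continuous, and the SOCP feasible set is compact, because $\abs{z}\leqslant1$ and $\|(x,y)\|_2\leqslant1$. Take any feasible point with $x=0$, $y\neq0$. I perturb it to $(\delta,y',z)$ with $\delta\to0$, shrinking $y'$ slightly so that $\sqrt{\delta^2+y'^2}\leqslant 1-\abs{\delta d-z}$ still holds. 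This is possible whenever $\abs{y}<1-\abs{z}$.

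At the boundary $\abs{y}=1-\abs{z}$, I would instead scale the whole point toward an interior point, for example $(0,0,0)$, which is strictly feasible. Because the feasible set is convex, the segment between them stays feasible. Points on that segment arbitrarily close to the original point can then be nudged to have $x\neq0$. The supremum over $x\neq0$ therefore equals the maximum over the full SOCP. One should also note that the hyperbolic problem attains its maximum, not merely a supremum. Attainment holds because the value is approached either along $\theta\to\pm\infty$ with $x\cosh\theta$ bounded, or at finite $\theta$. If needed I would read the max in Eq.~\eqref{ln:hyperbolic} as a supremum, which is harmless for the subsequent derivation of the analytic solution.

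The main obstacle is this degenerate $x=0$ boundary, including the attainment issue as $\theta\to\pm\infty$. Everything else is the identity $\cosh^2\theta-\sinh^2\theta=1$.
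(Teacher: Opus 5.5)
Your proof is correct for the equality of optimal values, but it handles the degenerate slice $x=0$ differently from the paper.

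\textbf{The paper's route.} The paper evaluates the $x=0$ slice of the SOCP directly. There the constraints reduce to $|y|\leqslant 1-|z|$. Taking $z=-(1-|y|)$ gives $hy-z\leqslant 1+(|h|-1)|y|\leqslant 1$, by the problem-specific bound $|h|\leqslant k/d\leqslant 1$ of Lem.~\ref{lem:valid_bound}. So that slice contributes exactly $1$, attained at $(0,0,-1)$. The hyperbolic problem reproduces the same value at $x=0$, $z=-1$ (any $\theta$). On $x\neq 0$ the two problems coincide via $\sinh\theta=y/x$, exactly as in your first step.

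\textbf{Your route.} You show that every feasible point with $x=0$, $y\neq0$ is a limit of feasible points with $x\neq0$. You pull toward the strictly feasible origin, use convexity of $\|(x,y)\|_2+|xd-z|$, and then perturb $x$. This is more general, since it uses nothing about $g$ and $h$. On its own, however, it only shows that the two suprema are equal.

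\textbf{Attainment.} This is why your remark on attainment does not hold up as stated. A value approached only as $\theta\to\pm\infty$ is not attained, and for general coefficients the hyperbolic supremum need not be a maximum. For instance, take $g=d$ and $h>1$. Then $(xd-z)+hy\leqslant |xd-z|+h\|(x,y)\|_2\leqslant h-(h-1)|xd-z|\leqslant h$, so the unique SOCP maximizer is $(0,1,0)$, which the parametrization cannot represent. To justify the ``max'' as written, you need exactly the paper's ingredient. Since $|h|\leqslant1$, the $x=0$ slice never exceeds $1$ and attains it at $y=0$. So any SOCP optimum lying at $x=0$ is matched by $(\theta,0,-1)$ in the hyperbolic form. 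Alternatively, reading the max as a supremum, as you suggest, is harmless for the subsequent lemmas.
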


\begin{proof}
The original optimization can be divided into two cases. 
If $x=0$, the maximal value is 1. 
If $x\neq0$, we introduce the hyperbolic parametrization $\sinh(\theta)=y/x$, which eliminates the square-root constraint. 
Since the maximal value in this regime is never smaller than 1, it suffices to consider this case.

Let $\max_{(x,y,z)\in S}\xi(x,y,z)$ denote the optimal value of the optimization problem in Eq.~\eqref{eq:AVB_SOCP}, with
\begin{align}
    \xi(x, y, z):=gx+hy-z,
\end{align}
and 
\begin{align}
    S:=\left\{(x,y,z)\,\middle\vert\, 1\geqslant|z|, 1-\left\|\begin{pmatrix}
        x \\
        y
    \end{pmatrix}\right\|_2\geqslant|xd - z|\right\}\subset\mathbb{R}^3.
\end{align}

We now analyze this problem in detail, beginning with the first case $x=0$, for which we obtain
\begin{align}
    \max_{(x,y,z)\in S}
    \left\{\xi(x,y,z)\middle\vert_{x=0}\right\}
    =
    \max_{y} \max_{z} \quad 
    &  hy-z \\
    \text{s.t.} \quad 
    & 1 \geqslant|z|,\\
    & 1 - |y| \geqslant|z| \ .
\end{align}
In the above optimization, the maximal value is achieved by setting $z=-(1 - |y|)$, which leads to
\begin{align}
    hy-z=hy+1-|y|\leqslant h\cdot|y|+1-|y|=1+(h-1)|y|\leqslant1.
\end{align}
The final inequality follows from the bound $|h|\leqslant1$ in Eq.~\eqref{eq:Inequaility_ghk_1} of Lem.~\ref{lem:valid_bound}.
It then follows that
\begin{align}
    \max_{(x,y,z)\in S}
    \left\{\xi(x,y,z)\middle\vert_{x=0}\right\}
    =
    \xi(0,0,-1)
    =
    1.
\end{align}

We proceed to the case $x\neq0$, where the ratio $y/x$ can be expressed through a hyperbolic parametrization. 
More precisely, by setting
\begin{align}
    \sinh(\theta) := \frac{y}{x},
\end{align}
we have
\begin{align}
    \left\|\begin{pmatrix}
        x \\
        y
    \end{pmatrix}\right\|_2
    =
    \abs{x}\sqrt{1 + \sinh^2{\theta}}
    =
    \abs{x} \cosh{\theta},
\end{align}
and the optimization problem becomes
\begin{align}
    \max_{(x,y,z)\in S}
    \left\{\xi(x,y,z)\Big|_{x\neq0}\right\}
    =
    \max \quad 
    &  \qty(g+ h\sinh{\theta})x - z \\
    \text{s.t.} \quad 
    &1\geqslant|z|,\\
    &1-\abs{x} \cosh{\theta} \geqslant \abs{xd - z}.
\end{align}
Under this parametrization, the variables $x$, $y$, and $z$ are expressed in terms of $\theta$, $x$, and $z$. 
For clarity, we introduce the function $\zeta(\theta, x, z)$,
\begin{align}\label{eq:zeta}
    \zeta(\theta, x, z):= \qty(g+ h\sinh{\theta})x - z,
\end{align}
subject to the following constraints
\begin{align}\label{eq:Set_T}
    T
    :=
    \left\{(\theta,x,z)\,|\, 1\geqslant|z|, 1-\abs{x} \cosh{\theta} \geqslant \abs{xd - z}\right\}
    \subset\mathbb{R}^3.
\end{align}
From the above analysis, we have
\begin{align}
    \max_{(x,y,z)\in S}
    \left\{\xi(x,y,z)\Big|_{x\neq0}\right\}
    =
    \max_{(\theta,x,z)\in T}
    \left\{\zeta(\theta,x,z)\Big|_{x\neq0}\right\}.
\end{align}

To complete the proof, we consider the case $x=0$ in $\zeta(\theta, x, z)$, which yields
\begin{align}
    \max_{(\theta,0,z)\in T}
    \left\{\zeta(\theta,0,z)\right\}
    =& \max_{|z|\leqslant 1} - z \\
    =&\zeta(\theta,0,-1) \\
    =&1 \\
    =&\max_{(0,y,z)\in S} \left\{\xi(0,y,z)\right\} \ .
\end{align}
Equivalently, we have 
\begin{align}
    \max_{(x,y,z)\in S}\xi(x,y,z)=\max_{(\theta,x,z)\in T}\zeta(\theta,x,z),
\end{align}
which completes the proof.
\end{proof}

\begin{mylem}{Bounds on the Coefficients}{f_bound}
If the absolute value of the hyperbolic sine of $\theta$ is bounded below by $k$ (defined in Eq.~\eqref{eq:k}),
\begin{align}
    |\sinh{\theta}| \geqslant k,
\end{align}
then the following chain of inequalities holds
\begin{equation}
    \frac{1 + \cosh{\theta}}{d} \geqslant g+h\sinh\theta -d \geqslant \frac{1 - \cosh{\theta}}{d}.
\end{equation}
\end{mylem}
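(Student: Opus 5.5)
The plan is to reduce the claimed chain to a single absolute-value inequality and then treat it as a statement about an affine function of $(t_1,t_2)$. Write $s:=\sinh\theta$ and $c:=\cosh\theta=\sqrt{1+s^2}$. The two bounds in the statement are centred at $1/d$ with half-width $c/d$. Multiplying through by $d>0$, the chain is therefore equivalent to
\begin{align}
    \abs{\,d\left(g+hs\right)-d^2-1\,}\leqslant c .
\end{align}
Using the explicit forms of Thm.~\ref{thm:AVB_SOCP} together with the shorthands of Eqs.~\eqref{eq:k}--\eqref{eq:t_2}, we have $g=d(t_1+t_2)$ and $h=d(t_2-t_1)/k$. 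Substituting these gives
\begin{align}
    \Phi(t_1,t_2,s):=d^2(t_1+t_2)+\frac{d^2 s}{k}(t_2-t_1)-d^2-1 .
\end{align}
The task is then to show $|\Phi|\leqslant\sqrt{1+s^2}$ whenever $|s|\geqslant k$.

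For fixed $s$, $\Phi$ is affine in $(t_1,t_2)$, and by the proof of Lem.~\ref{lem:valid_bound} these variables range over the box $[1/d^2,1]^2$. Hence $\max|\Phi|$ over the box is attained at one of its four vertices, and I will check each vertex separately.
\begin{itemize}
\item At $(1/d^2,1/d^2)$ we get $\Phi=-d^2+1=-k^2$.
\item At $(1,1/d^2)$ we get $\Phi=-k^2 s/k=-ks$.
\item At $(1/d^2,1)$ we get $\Phi=ks$.
\end{itemize}
For the two mixed vertices, note that $c^2=1+s^2$. Since $k^2=d^2-1$, the condition $|ks|\leqslant c$ is equivalent to $s^2(k^2-1)\leqslant1$. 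Comparing this with the hypothesis $s^2\geqslant k^2$ is where the hypothesis has to be used.

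The symmetric vertex $(1,1)$ gives $\Phi=d^2-1=k^2$, and $c\geqslant\sqrt{1+k^2}=d$ there. I expect this vertex to be the main obstacle: the bare inequality $k^2\leqslant c$ requires $c\geqslant d^2-1$, which is stronger than what $|s|\geqslant k$ gives by itself. My first attempt at this step will be to check whether the admissible range of $(t_1,t_2)$ relevant at the point where the lemma is applied is smaller than the full box. In particular, I will use the other SOCP constraint in the set $T$ of Eq.~\eqref{eq:Set_T}, namely $|x|\cosh\theta\leqslant1$, which forces $|x|\leqslant1/d$ in the regime $|s|\geqslant k$. I will then try to transfer this restriction into the vertex check, so that the comparison at $(1,1)$ is made against the scale actually needed downstream. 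If that fails, I will redo the vertex computation with $\sinh\theta$ replaced by the rescaled ratio coming from the change of variables $yd\sqrt{d^2-1}/2\to y$, to see whether a factor of $k$ or $d$ has been dropped in $h$.

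Once each vertex is controlled, the lemma follows by convexity of $|\Phi|$ in $(t_1,t_2)$. Finally, since both sides of the reduced inequality are even under the simultaneous swap $s\to-s$, $t_1\leftrightarrow t_2$, it suffices to handle $s\geqslant k$ and then treat $s\leqslant-k$ by this symmetry.
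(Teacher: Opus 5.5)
Your reduction to $|\Phi|\leqslant\cosh\theta$ and your four vertex values are correct. But they do not expose an obstacle that can be worked around: they show that the chain is false as stated, so no completion of your plan exists.

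\textbf{The mixed vertex fails for every admissible $\theta$.} At $(t_1,t_2)=(1/d^2,1)$ with $\sinh\theta\geqslant k$ you need $k\sinh\theta\leqslant\cosh\theta$, i.e.\ $\sinh^2\theta\,(d^2-2)\leqslant1$. Combined with the hypothesis $\sinh^2\theta\geqslant d^2-1$, this would require $(d^2-1)(d^2-2)\leqslant1$, which fails for every $d\geqslant2$. So the hypothesis does not rescue this vertex; it rules it out for every admissible $\theta$. The vertex $(1,1)$ is not the main obstacle: for $|\sinh\theta|\geqslant k$ it is dominated by the mixed vertices, since $k^2\leqslant k|\sinh\theta|$.

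\textbf{Counterexample.} Take exact broadcasting, $\epsilon_1=\epsilon_2=0$, with $d=2$ and $\sinh\theta=\sqrt3$, so $\cosh\theta=2$. Then $h=0$, $g=4$, and $g+h\sinh\theta-d=2>\tfrac32=(1+\cosh\theta)/d$.

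\textbf{Your fallback routes cannot help.} The inequality involves only $d,g,h,\theta$. The pair $(t_1,t_2)$ is a fixed external parameter ranging over the whole box $[1/d^2,1]^2$. The SOCP variable $x$, with its constraint $|x|\cosh\theta\leqslant1$, does not appear in the inequality at all. Also, $h=d(t_2-t_1)/k$ is correct: it is the objective coefficient $\frac{e-f}{2}\cdot\frac{2}{d\sqrt{d^2-1}}$ after the rescaling. The printed $(e-g)$ in its definition is a typo for $(e-f)$.

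\textbf{Where the factor is actually dropped.} It is in the paper's own proof, not in $h$. There the step $\frac{\sinh\theta}{k}-\frac{\sinh\theta}{kd^2}=\frac{\sinh\theta}{d^2}$ should read $\frac{k\sinh\theta}{d^2}$. With this corrected, the paper's mixed-vertex argument gives exactly your values.

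\textbf{What is true.} Your vertex computation proves, in two lines,
\[
\frac{1+k|\sinh\theta|}{d}\geqslant g+h\sinh\theta-d\geqslant\frac{1-k|\sinh\theta|}{d}\qquad\text{whenever } |\sinh\theta|\geqslant k .
\]
Weakening $k|\sinh\theta|$ to $k\cosh\theta$ would be too lossy for the later use, but this sharp form suffices.
\begin{itemize}
\item In Lem.~\ref{lem:upper_bound_opt} one needs $(g+h\sinh\theta-d)/\cosh\theta\leqslant1$. The map $s\mapsto(1+ks)/(d\sqrt{1+s^2})$ has derivative proportional to $k-s$. So on $s\geqslant k$ it is decreasing and bounded by its value $1$ at $s=k$.
\item In Lem.~\ref{lem:upper_bound_other} one gets $(d-g-h\sinh\theta)/\cosh\theta\leqslant k|\sinh\theta|/(d\cosh\theta)\leqslant k/d<1$ directly.
\end{itemize}
The right move is to report the counterexample and prove this corrected lemma, not to look for extra constraints on $(t_1,t_2)$.
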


\begin{proof}
We first examine the coefficient of $x$ appearing in the objective function of the optimization problem in Eq.~\eqref{ln:hyperbolic}.
\begin{align}
    g + h\sinh{\theta}
    & = d\qty((t_1+t_2) + \frac{d}{k}(t_2-t_1)\sinh{\theta}) \\
    & = d\qty(t_1(1-\frac{\sinh{\theta}}{k})+t_2(1+\frac{\sinh{\theta}}{k})) \ .
\end{align}
For the case $\sinh{\theta} \geqslant k$, we have
\begin{align}
    \frac{\sinh{\theta}}{k}\geqslant1,
\end{align}
and hence
\begin{align}
    1 - \frac{\sinh{\theta}}{k} &\leqslant 0,\\
    1 + \frac{\sinh{\theta}}{k} &\geqslant 0.
\end{align}
Taking the boundary values of $t_1$ and $t_2$ then yields the following lower bound on $g + h\sinh{\theta}$.
\begin{align}
    g + h\sinh{\theta}
    & \geqslant d\qty((1 - \frac{\sinh{\theta}}{k})+\frac{1}{d^2}(1+\frac{\sinh{\theta}}{k})) \\
    & = d\qty(\frac{d^2+1}{d^2} - \frac{\sinh{\theta}}{k} + \frac{\sinh{\theta}}{kd^2}) \\
    & = d\qty(\frac{d^2+1}{d^2} - \frac{\sinh{\theta}}{d^2}) \\
    & = \frac{d^2 + 1 - \sinh{\theta}}{d} \ .
\end{align}
The corresponding upper bound can be derived as
\begin{align}
    g + h\sinh{\theta}
    & \leqslant d\qty(\frac{1}{d^2}(1 - \frac{\sinh{\theta}}{k})+(1+\frac{\sinh{\theta}}{k})) \\
    & = d\qty(\frac{d^2+1}{d^2} + \frac{\sinh{\theta}}{k} - \frac{\sinh{\theta}}{kd^2}) \\
    & = d\qty(\frac{d^2+1}{d^2} + \frac{\sinh{\theta}}{d^2}) \\
    & = \frac{d^2 + 1 + \sinh{\theta}}{d} \ .
\end{align}

An analogous bound holds for the case $\sinh{\theta} \leqslant -k$.
In summary, when $\sinh{\theta}\geqslant k$, we obtain
\begin{align}
    \frac{d^2 + 1 + \sinh{\theta}}{d} \geqslant g + h\sinh{\theta} \geqslant \frac{d^2 + 1 - \sinh{\theta}}{d}.
\end{align}
Conversely, if $\sinh{\theta}\leqslant -k$, the following chain of inequalities follows
\begin{align}
    \frac{d^2 + 1 - \sinh{\theta}}{d} \geqslant g + h\sinh{\theta} \geqslant \frac{d^2 + 1 + \sinh{\theta}}{d}.
\end{align}
Combining these chains of inequalities yields
\begin{align}
    \frac{d^2 + 1 + |\sinh{\theta}|}{d} \geqslant g + h\sinh{\theta} \geqslant \frac{d^2 + 1 - |\sinh{\theta}|}{d},
\end{align}
together with $|\sinh{\theta}| \leqslant \cosh{\theta}$, the main statement follows.
\end{proof}

We next analyze the constraint of Eq.~\eqref{ln:hyperbolic_1} and \eqref{ln:hyperbolic_2}, i.e., $|z| \leqslant 1$, and $1-\abs{x} \cosh{\theta} \geqslant \abs{xd - z}$.
The signs of $x$ and $xd - z$ partition the $(z,x)$-plane into four distinct regions.
The following lemma characterizes their geometry and determines the corresponding optimal value in each region.

\begin{mylem}{Boundary Points}{feasible}
The constraints in Eq.~\eqref{ln:hyperbolic_1} and \eqref{ln:hyperbolic_2} define a feasible region in the $(z,x)$-plan.
The boundary of this region is fully characterized by the maximal and minimal values of $x$ across different ranges of $z$.
In particular, the maximal value $x$, corresponding to the upper boundary of the feasible region, is given by
\begin{align}
  \max{x} &= \left\{
  \begin{aligned}
      &\frac{z+1}{d+\cosh{\theta}} \quad \text{if}& \quad z \leqslant \frac{d}{\cosh{\theta}}, \\
      &\frac{z-1}{d-\cosh{\theta}} \quad \text{if}& \quad 1 \geqslant z \geqslant \frac{d}{\cosh{\theta}}.
  \end{aligned}
  \right. \label{eq:region1}
\end{align}
The minimal value of $x$, corresponding to the lower boundary of the feasible region, is characterized by
\begin{align}
    \min{x} &= \left\{
  \begin{aligned}
      &\frac{z-1}{d+\cosh{\theta}} \quad \text{if}& \quad z \geqslant -\frac{d}{\cosh{\theta}}, \\
      &\frac{z+1}{d-\cosh{\theta}} \quad \text{if}& \quad -1\leqslant z \leqslant -\frac{d}{\cosh{\theta}}.
  \end{aligned}
  \right. \label{eq:region2}
\end{align}
\end{mylem}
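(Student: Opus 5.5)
The plan is to fix $\theta$, write $c:=\cosh\theta\geqslant1$, and treat the constraints of Eqs.~\eqref{ln:hyperbolic_1} and~\eqref{ln:hyperbolic_2} for each fixed $z\in[-1,1]$ as a system of linear inequalities in $x$. First I observe that $x=0$ is always feasible when $|z|\leqslant1$, because then the second constraint reads $1\geqslant|z|$. Hence $\max x\geqslant0$ and $\min x\leqslant0$. When computing the maximum I may therefore restrict to $x\geqslant0$, which removes the absolute value $|x|$. I also note that the constraint set is invariant under $(x,z)\mapsto(-x,-z)$. This lets me deduce Eq.~\eqref{eq:region2} from Eq.~\eqref{eq:region1} by the substitution $\min x(z)=-\max x(-z)$, so all the work goes into the upper boundary.

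For $x\geqslant0$, the condition $1-xc\geqslant|xd-z|$ splits into two linear inequalities. The first is $xd-z\leqslant1-xc$, i.e.
\begin{align}
x\leqslant\frac{z+1}{d+c}.
\end{align}
The second is $z-xd\leqslant1-xc$, i.e. $x(d-c)\geqslant z-1$. If $c\leqslant d$, the second inequality only gives a nonpositive lower bound on $x$, because $z\leqslant1$, so it is inactive. Then $\max x=(z+1)/(d+c)$ for all $z\in[-1,1]$. This is consistent with the statement, since in this case $d/c\geqslant1$ and the second branch of Eq.~\eqref{eq:region1} is empty.

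If $c>d$, the second inequality becomes the upper bound
\begin{align}
x\leqslant\frac{1-z}{c-d}=\frac{z-1}{d-c},
\end{align}
so $\max x=\min\{(z+1)/(d+c),(z-1)/(d-c)\}$, and both bounds are nonnegative. To see which bound is active, I equate $(z+1)(c-d)=(1-z)(c+d)$, which gives the single crossing point $z=d/c$. The first bound is increasing in $z$ and the second is decreasing, so the first is the smaller one for $z\leqslant d/c$ and the second for $z\geqslant d/c$, up to $z=1$. This reproduces Eq.~\eqref{eq:region1}.

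The main obstacle is bookkeeping rather than anything conceptual. I have to check that the sign restriction $x\geqslant0$ loses nothing, that the degenerate case $c=d$ is handled (the second inequality then reads $0\geqslant z-1$, which always holds), and that the regions are stated so the case $c\leqslant d$ is covered by the convention that the second branch is empty. Finally, I apply the reflection symmetry to obtain Eq.~\eqref{eq:region2}: the crossing point $-d/c$ and the two lower-boundary formulas follow directly.
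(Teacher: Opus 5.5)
Your argument is correct and follows essentially the paper's route. Like the paper, you remove $|x|$ by the sign of $x$, split $|xd-z|\leqslant 1-|x|\cosh\theta$ into two linear inequalities, and case on the sign of $d-\cosh\theta$. Your version is tighter in three places: you justify restricting to $x\geqslant0$ through the feasibility of $x=0$, you locate the crossover $z=d/\cosh\theta$ algebraically where the paper reads it off its figure, and you obtain the lower boundary from the symmetry $(x,z)\mapsto(-x,-z)$ instead of redoing the $x\leqslant0$ region. This also handles $\cosh\theta<d$ more cleanly than the paper's definitions of $C_1,C_2$, whose second inequality is written in the direction that is valid only when $\cosh\theta\geqslant d$.
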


In particular, in case $\cosh{\theta} \geqslant d$, the region $1\geqslant z \geqslant \frac{d}{\cosh\theta}$ in \eqref{eq:region1} and the region $-1 \leqslant z\leqslant \frac{d}{\cosh\theta}$ do not exist.
In that case, the feasible region of $x$ is sandwiched by two straight lines.

\begin{figure}[t]
    \centering
    \includegraphics[width=0.9\linewidth]{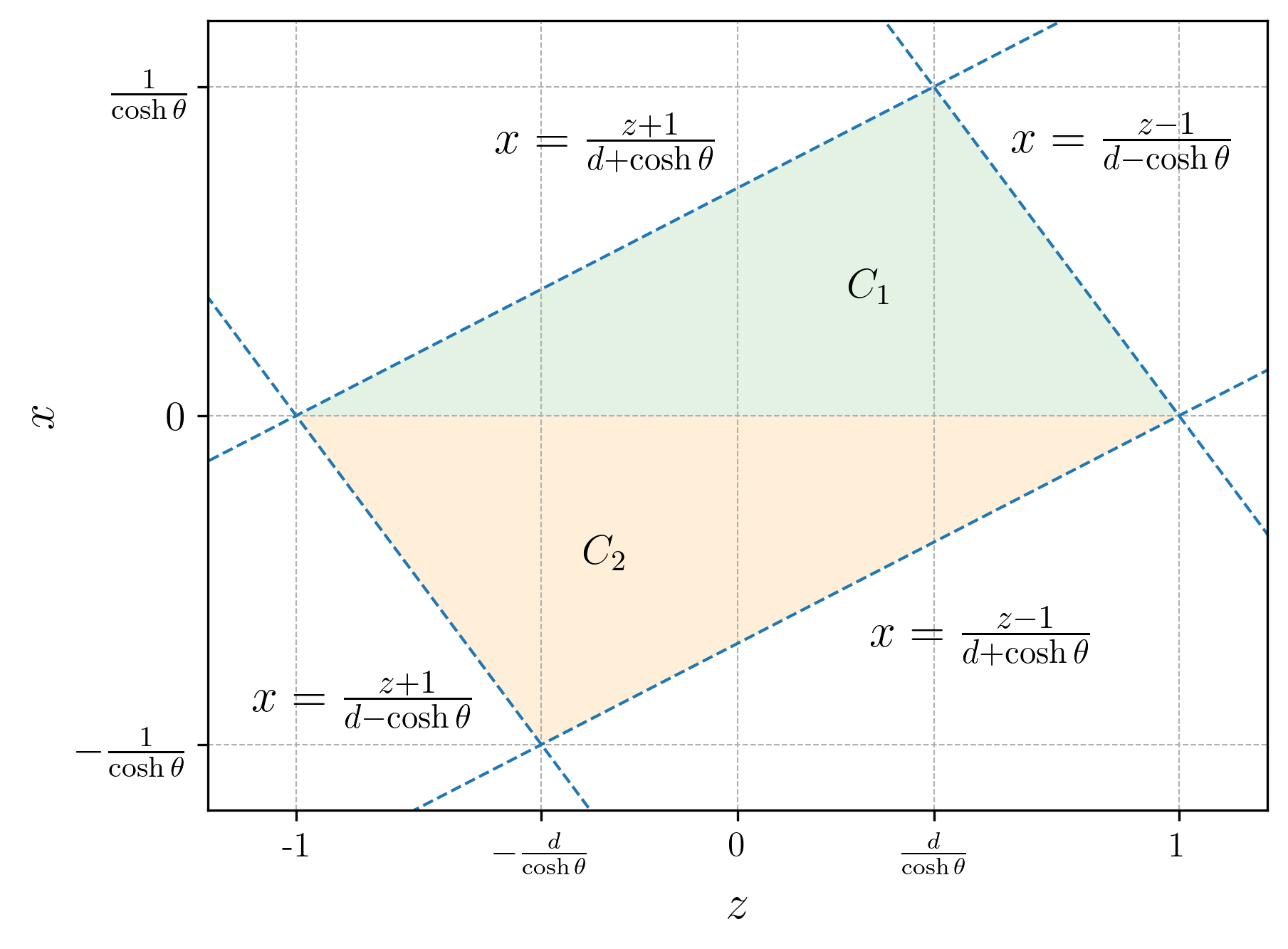}
    \caption{(Color online) \textbf{Partition of the Feasible Region}.
    Constraints~\eqref{ln:hyperbolic_1} and \eqref{ln:hyperbolic_2} define connected regions $C_1$ (see Eq.~\eqref{eq:Set_C_1}) and $C_2$ (see Eq.~\eqref{eq:Set_C_2}) in the $(z,x)$-plan, highlighted by the colored areas in the figure.}
    \label{fig:feasible_x}
\end{figure}

\begin{proof}
According to the signs of $x$ and $xd - z$, the $(z,x)$-plane can be divided into four distinct regions. 
However, only two of these are relevant within the feasible set. 
We begin by constructing the first region, denoted by $C_1$ and illustrated in Fig.~\ref{fig:feasible_x}.
In this region we consider the case $x\geqslant0$. 
Depending on the sign of $xd - z$, Eq.~\eqref{ln:hyperbolic_2} reduces to the following two inequalities
\begin{align}
    xd - z\geqslant0 \quad&\Rightarrow\quad z+1\geqslant (d+\cosh{\theta})x, \\
    xd - z\leqslant0 \quad&\Rightarrow\quad z-1\leqslant (d-\cosh{\theta})x. \label{ln:px_bound}
\end{align}
If $d-\cosh{\theta} \geqslant 0$, Eq.~\eqref{ln:px_bound} reduces to
\begin{align}
    \frac{z-1}{d-\cosh{\theta}} \leqslant x.
\end{align}
Since $x\geqslant0$ and $z\leqslant1$, this inequality is automatically satisfied and therefore imposes no additional constraint on $x$ and $z$.
By contrast, when $d-\cosh{\theta} \leqslant 0$, Eq.~\eqref{ln:px_bound} becomes
\begin{align}
    \frac{z-1}{d-\cosh{\theta}} \geqslant x.
\end{align}
The region $C_1$ can now be defined as
\begin{align}\label{eq:Set_C_1}
    C_1:=\left\{(x,z)\,\middle\vert\, x\geqslant0, 
    x\leqslant\frac{z+1}{d+\cosh{\theta}},
    x\leqslant\frac{z-1}{d-\cosh{\theta}}
    \right\}\subset\mathbb{R}^2,
\end{align}
which is highlighted in green in Fig.~\ref{fig:feasible_x}.

The second region, denoted by $C_2$ and illustrated in Fig.~\ref{fig:feasible_x}, corresponds to the case $x\leqslant0$. 
In this regime, Eq.~\eqref{ln:hyperbolic_2}, i.e., $1-\abs{x} \cosh{\theta} \geqslant \abs{xd - z}$, reduces to
\begin{align}
    xd - z\leqslant0 \quad&\Rightarrow\quad z-1\leqslant (d+\cosh{\theta})x, \\
    xd - z\geqslant0 \quad&\Rightarrow\quad z+1\geqslant (d-\cosh{\theta})x. \label{ln:nx_bound}
\end{align}
If $d-\cosh{\theta} \geqslant 0$, Eq.~\eqref{ln:nx_bound} reduces to
\begin{align}
    \frac{z+1}{d-\cosh{\theta}} \geqslant x.
\end{align}
Because $x\leqslant0$ and $z\geqslant-1$, this condition is automatically fulfilled and thus introduces no additional restriction on the feasible region in the $(z,x)$-plane. 
In contrast, when $d-\cosh{\theta} \leqslant 0$, Eq.~\eqref{ln:nx_bound} becomes
\begin{align}
    \frac{z+1}{d-\cosh{\theta}} \leqslant x.
\end{align}
The corresponding region, denoted by $C_2$, is therefore defined as
\begin{align}\label{eq:Set_C_2}
    C_2:=\left\{(x,z)\,\middle\vert\, x\leqslant0, 
    x\geqslant\frac{z-1}{d+\cosh{\theta}},
    x\geqslant\frac{z+1}{d-\cosh{\theta}}
    \right\}\subset\mathbb{R}^2,
\end{align}
and is highlighted in yellow in Fig.~\ref{fig:feasible_x}.

The above analysis and the regions corresponding to the different cases can be visualized geometrically in the $(z,x)$-plane, as illustrated in Fig.~\ref{fig:feasible_x}. 
From this representation, it becomes clear that when $z\leqslant d/\cosh{\theta}$, the maximal value of $x$ is achieved at $(z+1)/(d+\cosh{\theta})$.
By contrast, when $z\geqslant d/\cosh{\theta}$, which is only possible if $\cosh{\theta} \geqslant d$, the maximal value of $x$ is achieved at $(z-1)/(d-\cosh{\theta})$.
In addition to the point where $x$ attains its maximum, we also consider the cases where $x$ reaches its minimum. When $z\geqslant-d/\cosh{\theta}$, the minimal value of $x$ is attained at $(z-1)/(d+\cosh{\theta})$. 
By contrast, when $z\leqslant-d/\cosh{\theta}$, which is only possible if $\cosh{\theta} \geqslant d$, the minimum value of $x$ is achieved at $(z+1)/(d-\cosh{\theta})$.
\end{proof}

Equipped with the feasible regions shown in Fig.~\ref{fig:feasible_x} and the corresponding extremal values, we return to the objective function $(g + h\sinh{\theta}) x - z$ of Eq.~\eqref{ln:hyperbolic}. 
For convenient, we define $f(\theta)$ as
\begin{equation}
    f(\theta) = \frac{g+h\sinh\theta}{d + \cosh{\theta}}. \label{eq:def_f}
\end{equation}
The following lemma summarizes the result for the optimization problem with a non-negative objective coefficient $g+ \sinh\theta$.

\begin{mylem}{Non-Negative Objective Coefficient}{upper_bound_opt}
    
    For any fixed $\theta$, the optimal value of the optimization problem below
    \begin{align}
        \max \quad 
        &  (g+ \sinh\theta) x - z \\
        \text{s.t.} \quad 
        & 1 \geqslant|z|,\\
        & 1 - \abs{x} \cosh{\theta} \geqslant \abs{xd - z} \\
        & g+ \sinh\theta \geqslant 0 \ .
    \end{align}
    is given by
    \begin{align}\label{eq:opt_max_case_1}
        \max_{\theta} \left\{2f(\theta) - 1, 1\right\},
    \end{align}
    where $f(\theta)$ is defined in Eq.~\eqref{eq:def_f}.
\end{mylem}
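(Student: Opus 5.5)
Throughout, I read the objective coefficient in the statement as $c(\theta):=g+h\sinh\theta$, in line with Eq.~\eqref{ln:hyperbolic}, and assume $c(\theta)\geqslant0$. For fixed $\theta$ the problem is then a linear objective $c\,x-z$ over the planar region described in Lem.~\ref{lem:feasible}. The plan is to eliminate $x$ first and then optimise over $z$. Because $c\geqslant0$, for each admissible $z$ the objective is maximised by taking $x$ on the upper boundary $\max x(z)$ given in Eq.~\eqref{eq:region1}. This reduces the problem to maximising the single-variable function $\phi(z):=c\,\max x(z)-z$ on $[-1,1]$. Since $\max x(z)$ is piecewise linear in $z$, so is $\phi$, and its maximum is attained at an endpoint or at the kink. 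The final answer is then the maximum of these candidate values, and taking the maximum over $\theta$ gives Eq.~\eqref{eq:opt_max_case_1}.

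\textbf{Case $\cosh\theta\leqslant d$.} Only the first branch of Eq.~\eqref{eq:region1} is present, so $\phi(z)=c\,(z+1)/(d+\cosh\theta)-z$ is affine on $[-1,1]$. The endpoints give $\phi(-1)=1$ (with $x=0$) and $\phi(1)=2c/(d+\cosh\theta)-1=2f(\theta)-1$. Hence the value in this case is $\max\{2f(\theta)-1,1\}$.

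\textbf{Case $\cosh\theta\geqslant d$.} Here there are three candidates.
\begin{itemize}
\item At $z=-1$ the value is again $1$.
\item At $z=1$ the second branch of Eq.~\eqref{eq:region1} gives $x=0$, so the value is $-1$.
\item At the kink $z=d/\cosh\theta$ we have $x=1/\cosh\theta$, so the value is $(c-d)/\cosh\theta$.
\end{itemize}
To control the kink value, note that $\cosh\theta\geqslant d$ implies $\sinh^2\theta\geqslant d^2-1=k^2$, so Lem.~\ref{lem:f_bound} applies. It gives $c-d\leqslant(1+\cosh\theta)/d$, hence
\begin{align}
    \frac{c-d}{\cosh\theta}\leqslant\frac{1+\cosh\theta}{d\cosh\theta}\leqslant\frac{2}{d}\leqslant1 .
\end{align}
So the optimum in this case is exactly $1$. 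It remains to check that writing this as $\max\{2f(\theta)-1,1\}$ loses nothing, i.e.\ that $2f(\theta)-1\leqslant1$ here. This is equivalent to $c\leqslant d+\cosh\theta$. Lemma~\ref{lem:f_bound} gives $c\leqslant d+(1+\cosh\theta)/d$, and $(1+\cosh\theta)/d\leqslant\cosh\theta$ holds because $(d-1)\cosh\theta\geqslant1$ for $d\geqslant2$. Thus the single formula $\max\{2f(\theta)-1,1\}$ is valid uniformly in $\theta$.

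The main obstacle is the regime $\cosh\theta\geqslant d$. There the feasible region has a genuine corner, and both the kink value and the formal expression $2f(\theta)-1$ must be bounded by $1$. This is exactly where Lem.~\ref{lem:f_bound} is needed. The hypothesis $|\sinh\theta|\geqslant k$ of that lemma is precisely the condition $\cosh\theta\geqslant d$, so the case split fits the lemma. One point needs care: when $\cosh\theta=d$ the branches in Eq.~\eqref{eq:region1} degenerate, with the kink sitting at $z=1$. I would handle this by continuity, or by noting that both cases give the same value there.
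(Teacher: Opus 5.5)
Your route is the same as the paper's. With $c:=g+h\sinh\theta\geqslant0$, you push $x$ to the upper boundary of Lem.~\ref{lem:feasible} and then compare the values at $z=\pm1$ and at the kink $z=d/\cosh\theta$. Your case split is cleaner than the paper's in two respects. First, for $\cosh\theta\leqslant d$ you note that the kink lies outside $|z|\leqslant1$. The paper instead tries to dominate the kink value by $2f(\theta)-1$ via $2c/(d+\cosh\theta)\geqslant c/\cosh\theta$, and that inequality is reversed when $\cosh\theta<d$. For instance, with $\epsilon_1=\epsilon_2=0$, $d=2$, $\theta=0$, the kink value is $2>5/3=2f(0)-1$. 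Second, you prove that the value is attained, not just bounded above.

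The gap is in the regime $\cosh\theta\geqslant d$. Both of your bounds there rest on Lem.~\ref{lem:f_bound}, and its upper inequality is false.
\begin{itemize}
\item Take $\epsilon_1=\epsilon_2=0$ (so $h=0$, $c=g=2d$), $d=2$ and $\cosh\theta=2$. Then $c-d=2>3/2=(1+\cosh\theta)/d$.
\item Your intermediate claim $(c-d)/\cosh\theta\leqslant 2/d$ also fails: with the same $\epsilon$'s, $d=3$ and $\cosh\theta=3$ give $1>2/3$.
\end{itemize}
The paper's proof of that lemma drops a factor $k$: $\frac{1}{k}(1-\frac{1}{d^2})=\frac{k}{d^2}$, not $\frac{1}{d^2}$. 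So for $|\sinh\theta|\geqslant k$ the correct bound is $c\leqslant d+\frac{1+k|\sinh\theta|}{d}$. The paper's own proof of this statement invokes Lem.~\ref{lem:f_bound} at exactly this point, so you inherited the defect rather than introduced it.

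The conclusion survives. Both of your checks, kink value $\leqslant1$ and $2f(\theta)-1\leqslant1$, are the single inequality $c\leqslant d+\cosh\theta$. This follows from the corrected bound together with Cauchy--Schwarz and $1+k^2=d^2$:
\[
c-d\leqslant\frac{1\cdot 1+k\,|\sinh\theta|}{d}\leqslant\frac{\sqrt{1+k^2}\,\sqrt{1+\sinh^2\theta}}{d}=\cosh\theta .
\]
This bound is tight: equality holds at $\sinh\theta=k$ when $\epsilon_2=0$. So there is no slack of the kind your $2/d$ step assumes. With this replacement for Lem.~\ref{lem:f_bound}, your argument is complete. The boundary case $\cosh\theta=d$ needs no continuity argument, because your first case already covers it.
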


\begin{proof}
Since $g+h\sinh\theta \geqslant 0$, the maximum is attained at the largest admissible value of $x$ for each fixed $z$.
According to Lem.~\ref{lem:feasible}, this maximal value depends on $z$, leading naturally to two distinct regimes. We begin with the case $z \leqslant d/\cosh{\theta}$, for which the objective function $\zeta(\theta, x, z)$ (see Eq.~\eqref{eq:zeta}) becomes
\begin{align}
    \zeta(\theta, x, z)\Big|_{x=\frac{z+1}{d+\cosh{\theta}}}
    =
    \left((g+ \sinh\theta) x - z\right)
    \Big|_{x=\frac{z+1}{d+\cosh{\theta}}}
    =
    (g+ \sinh\theta) \frac{z+1}{d+\cosh{\theta}} - z.
\end{align}
As the objective function is linear in $z$, the maximum occurs at the boundary values $\pm1$ or $d/\cosh{\theta}$ (see Lem.~\ref{lem:feasible}). 
Substituting these into the above expression yields
\begin{align}
    \zeta(\theta, \frac{z+1}{d+\cosh{\theta}}, -1)&=1,\\
    \zeta(\theta, \frac{z+1}{d+\cosh{\theta}}, \frac{d}{\cosh{\theta}})&=
    \frac{(g+h\sinh\theta)}{d+\cosh{\theta}}(\frac{d}{\cosh{\theta}}+1) - \frac{d}{\cosh{\theta}} = \frac{(g+h\sinh\theta) - d}{\cosh{\theta}},\\
    \zeta(\theta, \frac{z+1}{d+\cosh{\theta}}, -1)&=\frac{2(g+h\sinh\theta)}{d+\cosh{\theta}} - 1 = 2f(\theta) - 1.
\end{align}
Hence, in this case, the maximal value is bounded above by
\begin{align}\label{eq:NN_Case_1_Max}
    \max_{\theta}\left\{2f(\theta) - 1, \frac{(g+h\sinh\theta) - d}{\cosh{\theta}}, 1\right\}.
\end{align}

In contrast, when $1 \geqslant z \geqslant \frac{d}{\cosh{\theta}}$, the feasible region for $z$ has two boundary points, namely $1$ and $d/\cosh{\theta}$ (see Lem.~\ref{lem:feasible}).
In this case, the objective function $\zeta(\theta, x, z)$ becomes
\begin{align}
    \zeta(\theta, x, z)\Big|_{x=\frac{z-1}{d-\cosh{\theta}}}
    =
    \left((g+ \sinh\theta) x - z\right)
    \Big|_{x=\frac{z-1}{d-\cosh{\theta}}}
    =
    (g+ \sinh\theta) \frac{z-1}{d-\cosh{\theta}} - z.
\end{align}
Evaluating the objective function at these points gives
\begin{align}
    \zeta(\theta, \frac{z-1}{d-\cosh{\theta}}, 1)&=-1,\\
    \zeta(\theta, \frac{z-1}{d-\cosh{\theta}}, \frac{d}{\cosh{\theta}})&=
    \frac{(g+h\sinh\theta)}{d-\cosh{\theta}}(\frac{d}{\cosh{\theta}}-1) - \frac{d}{\cosh{\theta}} = \frac{(g+h\sinh\theta) - d}{\cosh{\theta}}.
\end{align}
It follows that the maximal value in this case is
\begin{align}\label{eq:NN_Case_2_Max}
    \max_{\theta}\left\{\frac{(g+h\sinh\theta) - d}{\cosh{\theta}}, -1\right\}.
\end{align}
By combining Eqs.~\eqref{eq:NN_Case_1_Max} and~\eqref{eq:NN_Case_2_Max}, we immediately obtain the following upper bound.
\begin{align}\label{eq:NN_Case_1_bound}
    \max_{\theta}\left\{2f(\theta) - 1, \frac{(g+h\sinh\theta) - d}{\cosh{\theta}}, 1, -1\right\}
    =
    \max_{\theta}\left\{2f(\theta) - 1, \frac{(g+h\sinh\theta) - d}{\cosh{\theta}}, 1\right\}
    .
\end{align}

Feasible solutions can always be found for which the maximal admissible value of $((g+h\sinh\theta) - d)/\cosh{\theta}$ is bounded by either $1$ or $2f(\theta) - 1$.
The maximization over $\theta$ can therefore be reduced to these two cases.
We first consider $\cosh{\theta} \leqslant d$, which yields
\begin{align}
2f(\theta) - 1=
    \frac{2(g+h\sinh\theta)}{d+\cosh{\theta}} - 1 \geqslant \frac{2(g+h\sinh{\theta})}{2\cosh{\theta}} - 1 = \frac{g + h\sinh\theta - \cosh{\theta}}{\cosh{\theta}} \geqslant \frac{(g+h\sinh{\theta}) -d}{\cosh{\theta}}.
\end{align}
By contrast, when $\cosh{\theta} \geqslant d$, equivalently $\abs{\sinh{\theta}} \geqslant k$, Lemma~\ref{lem:f_bound} implies that
\begin{align}
    g+h\sinh\theta - d \leqslant \frac{\cosh{\theta} + 1}{d}.
\end{align}
Thus, we have
\begin{align}
    \frac{g+h\sinh\theta - d}{\cosh{\theta}} \leqslant \frac{\cosh{\theta}+1}{d\cosh{\theta}} \leqslant \frac{1}{d} + \frac{1}{d^2}  \leqslant 1.
\end{align}
Taken together, these observations show that the optimal value is ultimately bounded above by
\begin{align}\label{eq:NN_Case_Max}
    \max_{\theta}\left\{2f(\theta) - 1, 1\right\},
\end{align}
thereby completing the proof.
\end{proof}

It is worth noting that taking the maximum in Eq.~\eqref{eq:opt_max_case_1} is necessary, as $2f(\theta) - 1$ is not always greater than 1; equivalently, $2f(\theta)$ is not necessarily greater than 1. 
For instance, consider approximate virtual broadcasting with equal error parameters on the receivers' side, namely $\epsilon_1=\epsilon_2$. 
In this case, we have
\begin{align}
    h=\frac{d}{\sqrt{d^2-1}}(\epsilon_1-\epsilon_2)=0.
\end{align}
As a result, $f(\theta)$ simplifies to
\begin{align}
    f(\theta)\Big|_{h=0}=\frac{g}{d + \cosh{\theta}}
    \leqslant\frac{g}{d+1}
    =\frac{d(2-(\epsilon_1+\epsilon_2))}{d+1}.
\end{align}
It is then straightforward to see that when $\epsilon_1=\epsilon_2=1/2$, $f(\theta)$ is strictly smaller than 1, i.e., $f(\theta)=d/(d+1)<1$, and $\max \left\{2f(\theta) - 1, 1\right\}=1$.

We next consider the case where the coefficient in the objective function is non-positive, namely $g + h\sinh{\theta} \leqslant0$. 
In this regime, maximizing the objective function requires taking the minimal value of $x$. 
The resulting optimization problem with the additional constraint is summarized in the lemma below.

\begin{mylem}{Non-Positive Objective Coefficient}{upper_bound_other}
    
    For any fixed $\theta$, the optimal value of the following optimization problem equals 1; that is,
    \begin{align}
        1=\max \quad 
        &  (g + h\sinh{\theta}) x - z \\
        \text{s.t.} \quad 
        & 1 \geqslant|z|,\\
        & 1 - \abs{x} \cosh{\theta} \geqslant \abs{xd - z} \\
        & g + h\sinh{\theta} \leqslant0 \ .
    \end{align}
\end{mylem}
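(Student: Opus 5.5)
With the coefficient $c:=g+h\sinh\theta\leqslant0$ fixed, the objective $c\,x-z$ is non-increasing in $x$. So for each admissible $z$ I will set $x$ to the minimal value allowed by Lem.~\ref{lem:feasible}. The lower bound is immediate: the point $(x,z)=(0,-1)$ is feasible, since $|z|=1$ and $1-0\geqslant|0\cdot d+1|$, and it gives objective value $1$. It remains to show the maximum is at most $1$.

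For the upper bound I will split according to the two branches of $\min x$ in Eq.~\eqref{eq:region2}.

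\emph{Branch 1:} $z\geqslant -d/\cosh\theta$ and $x=(z-1)/(d+\cosh\theta)\leqslant0$. The objective becomes
\begin{align}
    c\,\frac{z-1}{d+\cosh\theta}-z,
\end{align}
which is linear in $z$. It therefore suffices to check the endpoints $z=1$, $z=-1$ (when $\cosh\theta\leqslant d$), and $z=-d/\cosh\theta$ (otherwise). At $z=1$ the value is $-1$. At $z=-1$ it is $1-2c/(d+\cosh\theta)$. This is the step that needs care, because with $c\leqslant0$ this value appears to be at least $1$, not at most $1$.

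The resolution I expect is that $c<0$ cannot actually occur. By Lem.~\ref{lem:valid_bound} we have $g^2>d^2h^2$, so $g>d|h|\geqslant0$. For $|\sinh\theta|<k$ this gives $c\geqslant g-|h|k>0$. For $|\sinh\theta|\geqslant k$, Lem.~\ref{lem:f_bound} gives $c\geqslant d+(1-\cosh\theta)/d$, which is not obviously positive. There I would go back to the expression from the proof of Lem.~\ref{lem:f_bound},
\begin{align}
    c=d\Bigl(t_1\bigl(1-\tfrac{\sinh\theta}{k}\bigr)+t_2\bigl(1+\tfrac{\sinh\theta}{k}\bigr)\Bigr),
\end{align}
and note that it can in fact become negative for large $|\sinh\theta|$ when $t_1\neq t_2$. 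So $c\leqslant0$ is a genuine case and must be handled.

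Handling it: at $z=-1$ we have $x=-2/(d+\cosh\theta)$. I must check this against the second branch constraint. When $\cosh\theta\geqslant d$, $z=-1$ lies in the region $z\leqslant -d/\cosh\theta$, so the binding lower bound there is $x=(z+1)/(d-\cosh\theta)=0$, and the value is $1$.

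\emph{Branch 2:} $z\leqslant -d/\cosh\theta$ and $x=(z+1)/(d-\cosh\theta)$. The objective is again linear in $z$, so I evaluate it at $z=-1$, which gives $1$, and at the junction $z=-d/\cosh\theta$, which gives
\begin{align}
    \frac{-c+d}{\cosh\theta}\cdot\frac{1}{1}\;\;\text{(after simplification)}
    =\frac{d-c}{\cosh\theta}.
\end{align}

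\emph{Main obstacle.} The hardest step is showing
\begin{align}
    \frac{d-c}{\cosh\theta}\leqslant 1,\qquad\text{i.e.}\qquad c\geqslant d-\cosh\theta.
\end{align}
I would obtain this from the lower bound $c\geqslant(d^2+1-|\sinh\theta|)/d$ in the proof of Lem.~\ref{lem:f_bound}, together with $|\sinh\theta|\leqslant\cosh\theta$ and $\cosh\theta\geqslant d$. Separately, I need to confirm that when $\cosh\theta<d$ we have $c>0$, using the bound $|h|\leqslant k/d$ from Lem.~\ref{lem:valid_bound}. In that case the hypothesis $c\leqslant0$ is vacuous and only the regime $\cosh\theta\geqslant d$ matters. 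Combining the two branches then gives a maximum of exactly $1$.
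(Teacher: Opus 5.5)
Your argument follows the paper's proof step for step. You take the minimal $x$ from Lem.~\ref{lem:feasible}, observe that $c\leqslant0$ forces $|\sinh\theta|>k$ and hence $\cosh\theta>d$ (so $z=-1$ drops out of the first branch), and reduce everything to comparing the value $1$ at $(x,z)=(0,-1)$ with the junction value $(d-c)/\cosh\theta$. All of that is sound.

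The step that fails is the last one. The bound $c\geqslant(d^2+1-|\sinh\theta|)/d$, which you take from the proof of Lem.~\ref{lem:f_bound}, is false. That proof contains an algebra slip: $\frac{\sinh\theta}{k}-\frac{\sinh\theta}{kd^2}=\frac{(d^2-1)\sinh\theta}{kd^2}=\frac{k\sinh\theta}{d^2}$, not $\frac{\sinh\theta}{d^2}$. The correct extremal bound, attained at $\{t_1,t_2\}=\{1,1/d^2\}$, is
\[
c\;\geqslant\;\frac{d^2+1-k\,|\sinh\theta|}{d},\qquad |\sinh\theta|\geqslant k .
\]
Here is a concrete failure, in exactly the regime of this lemma. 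Take $d=2$, $\epsilon_1=0$, $\epsilon_2=3/4$, $\sinh\theta=3$. Then $g=5/2$, $h=-\sqrt3/2$ and $c=(5-3\sqrt3)/2<0$, while your cited bound would force $c\geqslant1$. The paper's own proof inherits the same defect. The consequence of Lem.~\ref{lem:f_bound} it invokes, $d-c\leqslant(\cosh\theta-1)/d$, also fails here: $d-c\approx2.10$ versus $(\sqrt{10}-1)/2\approx1.08$.

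The statement itself is true, and the repair is one line. With the corrected bound, $d-c\leqslant(k|\sinh\theta|-1)/d<|\sinh\theta|\leqslant\cosh\theta$ because $k<d$. Hence $(d-c)/\cosh\theta<1$ and the maximum is exactly $1$; in the example above the ratio is $\approx0.66$.

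The cleanest way to organize this is to work directly from $c=d\bigl(t_1(1-\sinh\theta/k)+t_2(1+\sinh\theta/k)\bigr)$ with $t_i\in[1/d^2,1]$. For $|\sinh\theta|\leqslant k$, both coefficients are non-negative and they sum to $2$, so $c>0$. This also disposes of the degenerate case $\cosh\theta=d$, where the second-branch formula becomes $0/0$. For $|\sinh\theta|\geqslant k$, the same expression yields the bound displayed above.

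One smaller point: in your final paragraph, the bound $|h|\leqslant k/d$ by itself does not give $c>0$ for $\cosh\theta<d$. You need $g>d|h|$, as you correctly used earlier.
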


\begin{proof}
Since $(g+h\sinh\theta \leqslant 0)$, the maximization over $x$ is achieved at its smallest admissible value for each fixed $z$.
Lemma~\ref{lem:feasible} shows that this extremal value varies with $z$, leading to two regimes.
We first examine $z \geqslant -d/\cosh{\theta}$, for which the objective function $\zeta(\theta, x, z)$ (see Eq.~\eqref{eq:zeta}) reduces to
\begin{align}
    \zeta(\theta, x, z)\Big|_{x=\frac{z-1}{d+\cosh{\theta}}}
    =
    \left((g+ \sinh\theta) x - z\right)
    \Big|_{x=\frac{z-1}{d+\cosh{\theta}}}
    =
    (g+ \sinh\theta) \frac{z-1}{d+\cosh{\theta}} - z.
\end{align}
Because the objective function is linear in $z$, the maximum occurs at the boundary points $\pm1$ or $-d/\cosh{\theta}$ (see Lem.~\ref{lem:feasible}).
However, the assumption $g+h\sinh{\theta} \leqslant 0$ implies $g\leqslant |h\sinh{\theta}|$.
Together with the bound $g \geqslant k|h|$ (Lem.~\ref{lem:valid_bound}), this yields $|\sinh\theta| \geqslant k$, which is equivalent to $\cosh{\theta} \geqslant \sqrt{k^2-1} = d$.
Consequently, the point $z= -1 \leqslant -d/\cosh\theta$ does not satisfy the admissibility conditions and must be excluded.
Evaluating the expression at the remaining boundary points $z=1$ and $z = -d/\cosh\theta$ gives
\begin{align}
    \zeta(\theta, \frac{z-1}{d+\cosh{\theta}}, 1)&=-1,\\
    \zeta(\theta, \frac{z-1}{d+\cosh{\theta}}, -\frac{d}{\cosh{\theta}})&=
    -\frac{g+h\sinh\theta}{d + \cosh{\theta}} \qty(\frac{d}{\cosh{\theta}}+1) + \frac{d}{\cosh{\theta}} = \frac{d - g - h\sinh\theta}{\cosh{\theta}}.
\end{align}
Thus, the maximal value in this case is
\begin{align}
    \max_{\theta}\left\{\frac{d - (g+h\sinh\theta)}{\cosh{\theta}}, -1\right\}.
\end{align}

In contrast, when $-1 \leqslant z \leqslant -d/\cosh\theta$, the objective function $\zeta(\theta, x, z)$ becomes
\begin{align}
    \zeta(\theta, x, z)\Big|_{x=\frac{z+1}{d-\cosh{\theta}}}
    =
    \left((g+ \sinh\theta) x - z\right)
    \Big|_{x=\frac{z+1}{d-\cosh{\theta}}}
    =
    (g+ \sinh\theta) \frac{z+1}{d-\cosh{\theta}} - z.
\end{align}
The values of the objective function at the boundary points $z= -1$ and $z = -d/\cosh\theta$ are
\begin{align}
    \zeta(\theta, \frac{z+1}{d-\cosh{\theta}}, -1)&=1,\\
    \zeta(\theta, \frac{z+1}{d-\cosh{\theta}}, -\frac{d}{\cosh{\theta}})&=
    \frac{g+h\sinh\theta}{d - \cosh{\theta}} \qty(\frac{d}{-\cosh{\theta}}-1) + \frac{d}{\cosh{\theta}} = \frac{d - g - h\sinh\theta}{\cosh{\theta}}.
\end{align}
Therefore, the maximum value in this case is
\begin{align}
    \max_{\theta}\left\{\frac{d - (g+h\sinh\theta)}{\cosh{\theta}}, 1\right\}.
\end{align}

To complete the proof, it suffices to show $(d - (g+h\sinh\theta))/\cosh{\theta}\leqslant1$.
The condition $g+h\sinh{\theta} \leqslant 0$ implies that $|\sinh{\theta}| \geqslant k$, and Lem.~\ref{lem:f_bound} then gives
\begin{align}
    d - g - h\sinh\theta \leqslant \frac{\cosh{\theta} - 1}{d}.
\end{align}
Consequently, it follows that
\begin{align}
    \frac{d - g - h\sinh\theta}{\cosh{\theta}} \leqslant \frac{\cosh{\theta} - 1}{d\cosh{\theta}} \leqslant \frac{1}{d} - \frac{1}{d\cosh{\theta}} \leqslant 1
\end{align}
This establishes the desired bound and completes the proof.
\end{proof}

Lemmas~\ref{lem:upper_bound_opt} and~\ref{lem:upper_bound_other} together yield a universal result for the optimization problem in Eq.~\eqref{ln:hyperbolic}, irrespective of the sign of $g+ \sinh\theta$, as stated in the following theorem.

\begin{mythm}{Universal Upper Bound}{general_solution}
The optimal value of $v_2^{\text{Dual}}(p_1, p_2)$ defined in Eq.~\eqref{ln:hyperbolic} takes the form
\begin{align}
    v_2^{\text{Dual}}(p_1, p_2) = \max_{\theta} \left\{ 2f(\theta) - 1, 1\right\},
\end{align}
where $f(\theta)$ is specified in Eq.~\eqref{eq:def_f}.
\end{mythm}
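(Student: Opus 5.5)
The plan is to assemble the statement from Thm.~\ref{thm:Hyperbolic_Parametrization} together with Lems.~\ref{lem:upper_bound_opt} and~\ref{lem:upper_bound_other}. By Thm.~\ref{thm:Hyperbolic_Parametrization}, $v_2^{\text{Dual}}(p_1,p_2)$ equals the supremum over $\theta\in\mathbb{R}$ of the inner maximization of $\zeta(\theta,x,z)$ over $(x,z)$ in the slice of $T$ at fixed $\theta$. I will write this as $\sup_\theta \Phi(\theta)$, where $\Phi(\theta)$ denotes the inner optimum. For every fixed $\theta$, exactly one of two cases holds: $g+h\sinh\theta\geqslant 0$ or $g+h\sinh\theta\leqslant 0$. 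In the first case Lem.~\ref{lem:upper_bound_opt} applies, and in the second Lem.~\ref{lem:upper_bound_other} applies. So the real $\theta$-line splits into two sets $\Theta_+$ and $\Theta_-$, and
\begin{align}
v_2^{\text{Dual}}(p_1,p_2)=\max\Big\{\sup_{\theta\in\Theta_+}\Phi(\theta),\ \sup_{\theta\in\Theta_-}\Phi(\theta)\Big\}.
\end{align}

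\textbf{Upper bound.} On $\Theta_+$, the proof of Lem.~\ref{lem:upper_bound_opt} gives $\Phi(\theta)\leqslant\max\{2f(\theta)-1,1\}$. On $\Theta_-$, Lem.~\ref{lem:upper_bound_other} gives $\Phi(\theta)=1$. Together these yield $v_2^{\text{Dual}}\leqslant\sup_\theta\max\{2f(\theta)-1,1\}$.

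\textbf{Lower bound.} Two explicit feasible points suffice.
\begin{itemize}
\item The point $(x,z)=(0,-1)$ is feasible for every $\theta$ and gives $\zeta=1$.
\item For $\theta\in\Theta_+$, the point $z=1$, $x=2/(d+\cosh\theta)$ gives $\zeta=2f(\theta)-1$.
\end{itemize}
The second point needs checking. We must confirm $|x|\cosh\theta+|xd-z|\leqslant 1$, i.e. $2\cosh\theta/(d+\cosh\theta)+|2d/(d+\cosh\theta)-1|\leqslant1$. When $\cosh\theta\leqslant d$ this is an equality, so the point is feasible. When $\cosh\theta>d$ the point fails, but in that regime the boundary value is attained at $z=d/\cosh\theta$, as in Lem.~\ref{lem:feasible}. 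The proof of Lem.~\ref{lem:upper_bound_opt} shows that the value there is at most $1$.

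So I expect the true sup of $2f-1$ to sit where it is achievable. I will argue this as follows. If $\cosh\theta\geqslant d$, then Lem.~\ref{lem:f_bound} gives $g+h\sinh\theta\leqslant d+(1+\cosh\theta)/d$. Hence
\begin{align}
2f(\theta)-1\leqslant \frac{2d+2(1+\cosh\theta)/d}{d+\cosh\theta}-1\leqslant 1,
\end{align}
using $d\geqslant2$. This means any $\theta$ with $\cosh\theta>d$ never pushes $\max\{2f-1,1\}$ above $1$. In addition, $\theta$ in $\Theta_-$ have $f\leqslant0$ and are irrelevant.

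\textbf{Conclusion.} Therefore $\sup_\theta\max\{2f(\theta)-1,1\}$ is achieved either by the value $1$ or by some $\theta$ in the region where the explicit point is feasible. This proves equality.

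\textbf{Main obstacle.} The delicate step is the last one: ensuring that the maximizer of $2f(\theta)-1$ is feasible, i.e. handling $\cosh\theta>d$ and the case $\theta\in\Theta_-$ cleanly via Lem.~\ref{lem:f_bound}. A secondary point is to check that the supremum over $\theta$ is attained. This follows because $f(\theta)\to h\,\mathrm{sgn}(\theta)$ as $|\theta|\to\infty$ and $|h|\leqslant1$ by Lem.~\ref{lem:valid_bound}, so $2f-1\leqslant 1$ asymptotically and the relevant maximum lies on a compact set, where $f$ is continuous.
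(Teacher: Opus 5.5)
Your upper bound is exactly the paper's argument. The paper states this theorem simply as the combination of Lemmas~\ref{lem:upper_bound_opt} and~\ref{lem:upper_bound_other}, split by the sign of $g+h\sinh\theta$, and gives nothing further.

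Your lower bound is a genuine addition. Lemma~\ref{lem:upper_bound_opt} claims the value ``is given by'' $\max\{2f(\theta)-1,1\}$, but its proof only ever establishes ``bounded above by''. The fact that the maximizer of $f$ lies in $\{\cosh\theta\leqslant d\}$ appears only in the caption of Fig.~\ref{fig:landscape}. Your two feasible points do the missing work:
$(x,z)=(0,-1)$, and $(x,z)=(2/(d+\cosh\theta),1)$, which is feasible (with equality) exactly when $\cosh\theta\leqslant d$. Together with $2f(\theta)-1\leqslant 1$ on $\{\cosh\theta\geqslant d\}$, they give $\Phi(\theta)\geqslant\max\{2f(\theta)-1,1\}$ for every $\theta$. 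Hence you get the pointwise identity $\Phi(\theta)=\max\{2f(\theta)-1,1\}$, which closes the achievability gap without locating $\theta^*$ via Lemma~\ref{lem:landscape}.

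The one step that needs repair is the one you flag as delicate. You take $g+h\sinh\theta\leqslant d+(1+\cosh\theta)/d$ from Lemma~\ref{lem:f_bound}, but that lemma is misstated in the paper.
\begin{itemize}
\item \textbf{What goes wrong.} Write $g+h\sinh\theta=d\bigl(t_1(1-\sinh\theta/k)+t_2(1+\sinh\theta/k)\bigr)$ and extremize over $t_i\in[1/d^2,1]$ for $|\sinh\theta|\geqslant k$. This gives $(d^2+1\pm k|\sinh\theta|)/d$, not $(d^2+1\pm|\sinh\theta|)/d$. The paper simplifies $\frac{\sinh\theta}{k}(1-\frac{1}{d^2})$ to $\frac{\sinh\theta}{d^2}$ instead of $\frac{k\sinh\theta}{d^2}$.
\item \textbf{Counterexample.} Take $d=2$, $(\epsilon_1,\epsilon_2)=(3/4,0)$, $\sinh\theta=4\sqrt{3}$, so $\cosh\theta=7$. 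Then $g+h\sinh\theta=17/2$, while the lemma asserts it is at most $6$.
\item \textbf{The fix.} The inequality you actually need is still true. The identity $(d\cosh\theta-1)^2-k^2\sinh^2\theta=(\cosh\theta-d)^2$ gives $k|\sinh\theta|\leqslant d\cosh\theta-1$. So for $\cosh\theta\geqslant d$ the corrected bound yields $g+h\sinh\theta\leqslant d+\cosh\theta$, i.e.\ $f(\theta)\leqslant 1$.
\end{itemize}

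This single estimate carries the whole theorem, including the upper bound you borrow from the paper:
\begin{itemize}
\item For $\cosh\theta\leqslant d$, the line $x=(z+1)/(d+\cosh\theta)$ from Lemma~\ref{lem:feasible} is available on all of $[-1,1]$, so $\Phi(\theta)=\max\{1,2f(\theta)-1\}$ directly. The comparison in the proof of Lemma~\ref{lem:upper_bound_opt} for this case actually runs the wrong way, but it is not needed, since there $z=d/\cosh\theta\geqslant 1$.
\item For $\cosh\theta\geqslant d$, the extra vertex value $(g+h\sinh\theta-d)/\cosh\theta$ is at most $1$ if and only if $f(\theta)\leqslant 1$.
\item On $\Theta_-$, the corrected lower bound $g+h\sinh\theta\geqslant d-\cosh\theta+2/d$ gives $(d-g-h\sinh\theta)/\cosh\theta\leqslant 1$.
\end{itemize}
With that substitution your proof is complete, and more self-contained than the paper's.
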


The optimization problem therefore reduces to maximizing $f(\theta)$, which is established in the following lemma.

\begin{mylem}{Maximum of $f(\theta)$}{landscape}
The function $f(\theta)$ defined in Eq.~\eqref{eq:def_f} attains a unique maximum at $\theta^*$, determined by
    \begin{align}\label{eq:e_theta}
    e^{\theta^*}
    &= \frac{h + \sqrt{g^2 - k^2h^2}}{g-dh} \ .
    \end{align}
The corresponding landscape of $f(\theta)$ is shown in Fig.~\ref{fig:landscape}.
\end{mylem}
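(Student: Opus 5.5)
The plan is a one-variable calculus argument on $f(\theta)=(g+h\sinh\theta)/(d+\cosh\theta)$. The denominator is strictly positive, so $f$ is smooth on $\mathbb{R}$. The sign of $f'(\theta)$ equals the sign of the numerator of the quotient rule,
\begin{align}
    N(\theta):=h\cosh\theta\,(d+\cosh\theta)-(g+h\sinh\theta)\sinh\theta
    = hd\cosh\theta+h-g\sinh\theta ,
\end{align}
where I use $\cosh^2\theta-\sinh^2\theta=1$. The whole proof therefore reduces to locating the zeros of $N$ and reading off the sign pattern.

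\textbf{Step 1: find the critical points.} I will substitute $u:=e^{\theta}>0$ and multiply $N(\theta)=0$ by $2u$. This gives the quadratic
\begin{align}
    (g-hd)\,u^2-2h\,u-(g+hd)=0 .
\end{align}
By Lem.~\ref{lem:valid_bound} we have $g^2>d^2h^2$ with $g>0$, so both $g-hd>0$ and $g+hd>0$. The discriminant is $4\bigl(h^2+g^2-h^2d^2\bigr)=4\bigl(g^2-k^2h^2\bigr)$, which is positive, again by Lem.~\ref{lem:valid_bound}. The roots are $u_\pm=\bigl(h\pm\sqrt{g^2-k^2h^2}\bigr)/(g-hd)$. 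Their product is $-(g+hd)/(g-hd)<0$, so exactly one root is positive, and it is $u_+$. Hence $N$ has a unique zero $\theta^*$ with $e^{\theta^*}$ given by Eq.~\eqref{eq:e_theta}.

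\textbf{Step 2: show it is the global maximum.} I will check the asymptotics of $N$. As $\theta\to+\infty$, $N(\theta)\sim\tfrac12(hd-g)e^{\theta}\to-\infty$. As $\theta\to-\infty$, $N(\theta)\sim\tfrac12(hd+g)e^{-\theta}\to+\infty$. Since $N$ is continuous with a single zero, it is positive on $(-\infty,\theta^*)$ and negative on $(\theta^*,\infty)$. Thus $f$ is strictly increasing and then strictly decreasing, so $\theta^*$ is the unique global maximizer. As a cross-check, $f(\theta)\to\pm h$ at $\pm\infty$, and both limits lie below $f(\theta^*)$.

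\textbf{Main obstacle.} The only delicate point is the sign bookkeeping: we need $g-hd>0$ so the leading coefficient is positive, $g+hd>0$ so the root product is negative, and $g^2-k^2h^2>0$ so the roots are real. All three follow from Eq.~\eqref{eq:Inequaility_ghk_2} in Lem.~\ref{lem:valid_bound}, so I expect no real difficulty beyond invoking that lemma carefully. The landscape in Fig.~\ref{fig:landscape} then simply illustrates the unimodal shape established in Step 2.
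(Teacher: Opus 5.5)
Your proposal is correct and follows the paper's own route. You turn the quotient-rule numerator $hd\cosh\theta+h-g\sinh\theta$ into the quadratic $(g-dh)e^{2\theta}-2he^{\theta}-(g+dh)=0$, invoke Lem.~\ref{lem:valid_bound} for $g-dh>0$ and a positive discriminant $4(g^2-k^2h^2)$, and keep only the root with $+\sqrt{g^2-k^2h^2}$. The differences from the paper are minor: you exclude the other root via the product of roots $-(g+dh)/(g-dh)<0$ rather than via $g^2-k^2h^2\geqslant h^2$, and you make the increase-then-decrease sign pattern explicit through the asymptotics of the numerator, which the paper compresses into its remark about the positive leading coefficient.
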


\begin{figure}[t]
    \centering
    \includegraphics[width=0.7\linewidth]{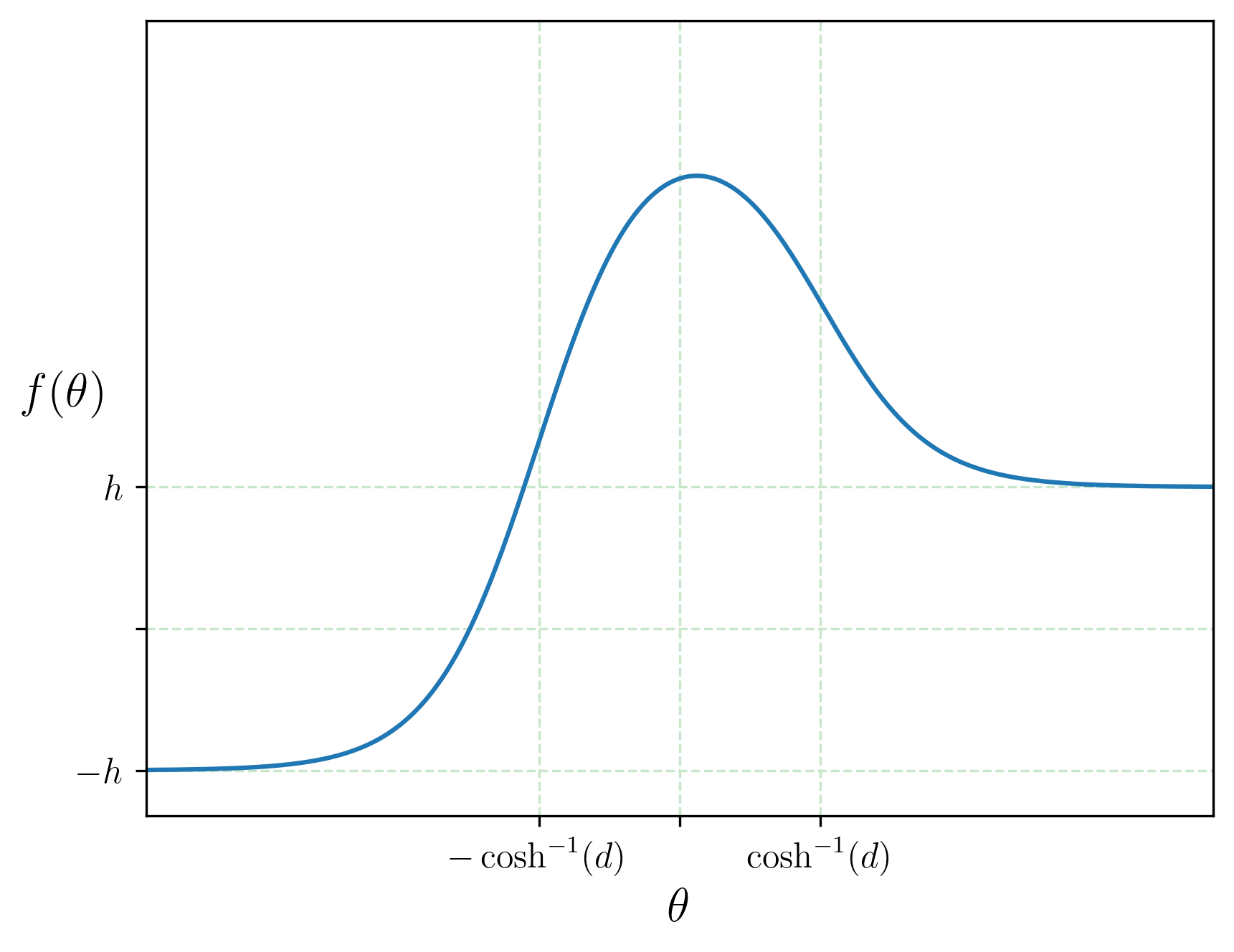}
    \caption{(Color online) \textbf{Plot of $f(\theta)$}.
    The figure plots $f(\theta)$ defined in Eq.~\eqref{eq:def_f} as a function of $\theta$. 
    The maximum always occurs in the regime $\cosh{\theta}\leqslant d$.}
    \label{fig:landscape}
\end{figure}

\begin{proof}
To locate the maximum, we analyze the derivative of $f(\theta)$

\begin{align}
    \dv{}{\theta}f(\theta)
    &=\dv{}{\theta}\left(\frac{g+h\sinh\theta}{d + \cosh{\theta}}\right)\\
    &= \frac{h \cosh{\theta}}{d+\cosh{\theta}} - \frac{g \sinh{\theta} + 2h \sinh^2{\theta}}{(d+\cosh{\theta})^2} \\
    &= \frac{dh \cosh{\theta} - g \sinh{\theta} + h \cosh^2{\theta} - h \sinh^2{\theta}}{(d+\cosh{\theta})^2} \\
    &= \frac{dh \cosh{\theta}-g \sinh{\theta} + h}{(d+\cosh{\theta})^2} \\
    &= \frac{(dh-g) e^{\theta} + (dh+g)e^{-\theta} + 2h}{2(d+\cosh{\theta})^2} \ .
\end{align}

Since the denominator of the derivative is always positive, the sign of the derivative is determined entirely by the numerator. 
The critical points are therefore obtained from the condition
\begin{align}
    (dh-g) e^{\theta} + (dh+g)e^{-\theta} + 2h = 0,
\end{align}
which can be written equivalently as
\begin{align}
    (g-dh) e^{2\theta} - 2he^{\theta} - (g + dh) = 0.
\end{align}

From Lem.~\ref{lem:valid_bound} we have $g - dh> 0$, so the equation is quadratic in $e^{\theta}$ and admits the solutions 
\begin{align}
    e^{\theta}
    = \frac{2h \pm \sqrt{4 h^2 + 4 g^2 - 4d^2 h^2}}{2(g-dh)} 
    = \frac{h \pm \sqrt{g^2 - (d^2-1) h^2}}{g-dh}
    = \frac{h \pm \sqrt{g^2 - k^2h^2}}{g-dh}.
\end{align}

Moreover, since 
\begin{equation}
    g^2 - k^2h^2 \geqslant g^2 - d^2h^2 \geqslant 0,
\end{equation}
it follows that
\begin{equation}
    g^2 - k^2h^2 \geqslant h^2.
\end{equation}
Consequently,
\begin{equation}
    \frac{h - \sqrt{g^2 - k^2h^2}}{g-dh} < 0 \ .
\end{equation}
Because $e^\theta >0$, only one positive root is admissible. 
With a positive leading coefficient, the maximum occurs at the critical point:
\begin{align}
    e^{\theta^*}
    &= \frac{h + \sqrt{g^2 - k^2h^2}}{g-dh} \ .
\end{align}
This behavior is summarized in Fig.~\ref{fig:landscape}; the asymptotic limits follow directly from the properties of hyperbolic functions.
\begin{align}
    \lim_{\theta\to \infty}f(\theta) 
    = 
    h 
    = -\lim_{\theta\to -\infty}f(\theta).
\end{align}
\end{proof}

Building on Lem.~\ref{lem:landscape}, the maximum of $f(\theta)$ with respect to $\theta$ is characterized in the following lemma.

\begin{mylem}{Maximum of $f(\theta)$}{opt_point}
The function $f(\theta)$ defined in Eq.~\eqref{eq:def_f} achieves its maximum at $\theta=\theta^*$ given in Eq.~\eqref{eq:e_theta}.
The resulting maximal value is
    \begin{equation}
        f(\theta^*) =  \frac{d^2(t_1+t_2) - 2d \sqrt{t_1t_2}}{d^2 - 1},
    \end{equation}
where $t_1$ and $t_2$ are introduced in Eqs.~\eqref{eq:t_1} and~\eqref{eq:t_2}. 
\end{mylem}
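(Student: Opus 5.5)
The cleanest route avoids substituting $e^{\theta^*}$ from Lem.~\ref{lem:landscape} directly into $f$. Instead I would characterize $f(\theta^*)=\sup_\theta f(\theta)$ as the smallest level $\lambda$ for which $f(\theta)\leqslant\lambda$ holds for all $\theta$. Since $d+\cosh\theta>0$, this inequality is equivalent to
\begin{align}
    g-\lambda d\leqslant \lambda\cosh\theta-h\sinh\theta \quad \text{for all } \theta\in\mathbb{R}.
\end{align}
For $\lambda>|h|$, the right-hand side has minimum $\sqrt{\lambda^2-h^2}$ over $\theta$. This is the standard hyperbolic analogue of Cauchy--Schwarz: write $(\lambda,h)=\sqrt{\lambda^2-h^2}\,(\cosh\phi,\sinh\phi)$, so the right-hand side equals $\sqrt{\lambda^2-h^2}\cosh(\theta-\phi)$. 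The minimum is attained at $\tanh\theta=h/\lambda$. For $\lambda\leqslant|h|$ the infimum is $-\infty$, so such $\lambda$ are not admissible. By Lem.~\ref{lem:valid_bound} we have $|h|\leqslant1$, and the asymptotic limits $\pm h$ from Lem.~\ref{lem:landscape} confirm that the supremum exceeds $|h|$.

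The supremum is therefore the smallest $\lambda>|h|$ with $g-\lambda d\leqslant\sqrt{\lambda^2-h^2}$. By continuity this is attained at equality. Squaring the equality, under the side condition $g-\lambda d\geqslant 0$, gives the quadratic
\begin{align}
    (d^2-1)\lambda^2-2gd\,\lambda+(g^2+h^2)=0,
\end{align}
with roots $\lambda_\pm=\bigl(gd\pm\sqrt{g^2-k^2h^2}\bigr)/k^2$. The discriminant is nonnegative by Lem.~\ref{lem:valid_bound}, since $g^2>d^2h^2\geqslant k^2h^2$. I would select the root $\lambda_-$. To justify this I would check two things: that $g-\lambda_-d\geqslant0$, equivalently $d\sqrt{g^2-k^2h^2}\geqslant g$, which reduces to $g^2\geqslant d^2h^2$ and is again Lem.~\ref{lem:valid_bound}; and that $\lambda_-\geqslant|h|$. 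The larger root violates the sign condition $g-\lambda d\geqslant 0$, so it is a spurious root introduced by squaring. Consistency with Lem.~\ref{lem:landscape} follows because the minimizing $\theta$ satisfies $\tanh\theta^*=h/\lambda_-$. One can cross-check this against the stationarity condition $dh\cosh\theta-g\sinh\theta+h=0$, which at a critical point gives $f(\theta^*)=h\coth\theta^*$ (the ratio of derivatives of numerator and denominator). This shows the two characterizations pick out the same maximizer.

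It remains to insert the explicit coefficients. I would use the explicit forms $g=d(t_1+t_2)$ and $h=d(t_2-t_1)/k$ from Thm.~\ref{thm:AVB_SOCP} rather than the definition in Eq.~\eqref{eq:h}. These give
\begin{align}
    g^2-k^2h^2=d^2\bigl((t_1+t_2)^2-(t_2-t_1)^2\bigr)=4d^2t_1t_2 ,
\end{align}
and hence
\begin{align}
    \lambda_-=\frac{d^2(t_1+t_2)-2d\sqrt{t_1t_2}}{d^2-1},
\end{align}
which is the claimed value.

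I expect the main obstacle to be the root selection and the inequality $\lambda_-\geqslant|h|$. After substitution, this inequality should reduce to
\begin{align}
    d(\sqrt{t_1}-\sqrt{t_2})^2+(d-1)\cdot(\text{nonnegative})\geqslant d\,k\,|t_2-t_1|/k^2\cdot(\ldots).
\end{align}
I would first try to verify it using $t_i\in[1/d^2,1]$ and the factorization $t_2-t_1=(\sqrt{t_2}-\sqrt{t_1})(\sqrt{t_2}+\sqrt{t_1})$. If that becomes messy, the fallback is to argue by continuity and the already-established uniqueness of the maximizer in Lem.~\ref{lem:landscape}, which identifies $\lambda_-$ as the maximal value.
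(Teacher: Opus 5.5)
Your proposal is correct, and it takes a genuinely different route from the paper. The paper takes $e^{\theta^*}$ from Lem.~\ref{lem:landscape}, rationalizes $\sinh\theta^*$ and $\cosh\theta^*$ into closed forms, substitutes $g=d(t_1+t_2)$ and $h=d(t_2-t_1)/k$, and pushes $f(\theta^*)$ through a long simplification chain. You instead characterize $\max_\theta f$ as the least level $\lambda$ with $g+h\sinh\theta\leqslant\lambda(d+\cosh\theta)$ for all $\theta$. You then use $\min_\theta(\lambda\cosh\theta-h\sinh\theta)=\sqrt{\lambda^2-h^2}$ and reduce everything to the quadratic $(d^2-1)\lambda^2-2gd\lambda+g^2+h^2=0$. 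Its discriminant $g^2-k^2h^2=4d^2t_1t_2$ produces the $\sqrt{t_1t_2}$ term immediately.

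Your route has several advantages:
\begin{itemize}
\item It is much shorter and less error-prone than the paper's computation.
\item It shows that the value depends on $h$ only through $h^2$, so the $t_1\leftrightarrow t_2$ symmetry is manifest.
\item It proves the maximum is attained without any derivative computation.
\end{itemize}
To identify your maximizer ($\tanh\theta=h/\lambda^*$) with $\theta^*$ of Eq.~\eqref{eq:e_theta}, you only need the uniqueness of the critical point from Lem.~\ref{lem:landscape}, just as the paper relies on that lemma for the location. What the paper's route gives in addition is explicit expressions for $\sinh\theta^*$ and $\cosh\theta^*$.

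The step you flag as the main obstacle is actually immediate, so you can drop the placeholder inequality and the vague continuity fallback. The map $\phi(\lambda):=\lambda d+\sqrt{\lambda^2-h^2}$ is continuous, strictly increasing and unbounded on $[|h|,\infty)$, and $\phi(|h|)=d|h|<g$ by Lem.~\ref{lem:valid_bound}. Hence the admissible levels form $[\lambda^*,\infty)$, with a unique $\lambda^*\in(|h|,g/d)$ satisfying $\phi(\lambda^*)=g$. This $\lambda^*$ solves your quadratic, and $\lambda_+>g/d$ (which is equivalent to $g+d\sqrt{g^2-k^2h^2}>0$) forces $\lambda^*=\lambda_-$. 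You can also check this directly: $\lambda_-^2-h^2=(g-\lambda_-d)^2\geqslant0$, and $\lambda_->0$ because $gd>g\geqslant\sqrt{g^2-k^2h^2}$.

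Two minor slips:
\begin{itemize}
\item At $\lambda=|h|$ the infimum of $\lambda\cosh\theta-h\sinh\theta$ is $0$, not $-\infty$. That level is still inadmissible because $g>d|h|$.
\item The asymptotic limits $\pm h$ only give $\max f\geqslant|h|$. The strict inequality comes from $g>d|h|$.
\end{itemize}
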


\begin{proof}
From the definition of $f(\theta)$, determining the exact value of $\theta^*$ is unnecessary. 
It suffices to evaluate the hyperbolic functions $\sinh{\theta}$ and $\cosh{\theta}$. We first compute $\sinh{\theta^*}$, which is given by

    \begin{align}
        \sinh{\theta^*}
        &= \frac{e^\theta - e^{-\theta}}{2} \\
        &= \frac{1}{2}\qty(\frac{h + \sqrt{g^2 - k^2h^2}}{g-dh} - \frac{g-dh}{h + \sqrt{g^2 - k^2h^2}}) \\
        &= \frac{1}{2}\frac{
        h^2 + g^2 - k^2h^2 + 2h\sqrt{g^2- k^2 h^2}
        -g^2 - d^2h^2 + 2dgh
        }{(g-dh)(h + \sqrt{g^2 - k^2h^2})} \\
        &= \frac{1}{2}\frac{
        2h\sqrt{g^2- k^2 h^2} - 2k^2 h^2 + 2dgh
        }{(g-dh)(h + \sqrt{g^2 - k^2h^2})} \\
        &= h\frac{ (h + \sqrt{g^2- k^2 h^2}) + d(g - dh) }{(g-dh)(h + \sqrt{g^2 - k^2h^2})} \\
        &= h\qty(\frac{d}{h + \sqrt{g^2 - k^2h^2}} + \frac{1}{g-dh}) \\
        &= h\qty(\frac{d\sqrt{g^2 - k^2h^2} - dh}{g^2 - k^2h^2 - h^2} + \frac{g + dh}{g^2-d^2h^2}) \\
        &= h\qty(\frac{d\sqrt{g^2 - k^2h^2} - dh + g + dh}{g^2-d^2h^2}) \\
        &= h\qty(\frac{g + d\sqrt{g^2 - k^2h^2}}{g^2-d^2h^2}).
    \end{align}
    
    The corresponding expression for the hyperbolic cosine is
    \begin{align}
        \cosh{\theta^*}
        &= \frac{e^\theta + e^{-\theta}}{2} \\
        &= \frac{1}{2}\qty(\frac{h + \sqrt{g^2 - k^2h^2}}{g-dh} + \frac{g-dh}{h + \sqrt{g^2 - k^2h^2}}) \\
        &= \frac{1}{2}\frac{
        h^2 + g^2 - k^2h^2 + 2h\sqrt{g^2- k^2 h^2}
        +g^2 + d^2h^2 - 2dgh
        }{(g-dh)(h + \sqrt{g^2 - k^2h^2})} \\
        &= \frac{1}{2}\frac{
        2h\sqrt{g^2- k^2 h^2} + 2g^2 + 2h^2 - 2dgh
        }{(g-dh)(h + \sqrt{g^2 - k^2h^2})} \\
        &= \frac{ h\sqrt{g^2- k^2 h^2} + g^2 + h^2 - dgh }{(g-dh)(h + \sqrt{g^2 - k^2h^2})} \\
        &= \frac{ h(\sqrt{g^2- k^2 h^2}+h) + g(g - dh)}{(g-dh)(h + \sqrt{g^2 - k^2h^2})} \\
        &= \frac{h}{g-dh} + \frac{g}{\sqrt{g^2- k^2 h^2}+h} \\
        &= \frac{gh + dh^2}{g^2-d^2h^2} + \frac{g\sqrt{g^2 - k^2h^2} - gh}{g^2- d^2 h^2} \\
        &= \frac{dh^2 + g\sqrt{g^2 - k^2h^2}}{g^2- d^2 h^2}.
    \end{align}

    Inserting $g=d(t_1 + t_2)$ (see Eq.~\eqref{eq:g_explicit}) and $h =d(t_2 - t_1)/k$ (see Eq.~\eqref{eq:h_explicit}) into the expressions for $\sinh{\theta^*}$ and $\cosh{\theta^*}$ yields
    
    \begin{align}
        \sinh{\theta^*}
        &= h\qty(\frac{g + d\sqrt{g^2 - k^2h^2}}{g^2-d^2h^2}) \\
        &= \frac{d}{k}(t_2-t_1) \qty(\frac{d(t_1+t_2) + 2d^2\sqrt{t_1 t_2}}{d^2(t_1+t_2)^2 - \frac{d^4}{k^2}(t_2-t_1)^2}) \\
        &= (t_2-t_1) \qty(\frac{(t_1+t_2) + 2d\sqrt{t_1 t_2}}{k(t_1+t_2)^2 - \frac{d^2}{k}(t_2-t_1)^2}) \\
        &= k (t_2-t_1) \qty(\frac{2d\sqrt{t_1 t_2} + (t_1+t_2)}{4d^2t_1 t_2 - (t_2+t_1)^2}) \\
        &= \frac{k (t_2-t_1) }{2d \sqrt{t_1 t_2} - (t_2+t_1)},
    \end{align}

    and
    
    \begin{align}
        \cosh{\theta^*}
        &= \frac{dh^2 + g\sqrt{g^2 - k^2h^2}}{g^2- d^2 h^2} \\
        &= \frac{\frac{d^3}{k^2}(t_2-t_1)^2 + 2d^2 (t_1+t_2) \sqrt{t_1 t_2}}{d^2(t_1+t_2)^2 - \frac{d^4}{k^2}(t_2-t_1)^2} \\
        &= \frac{d(t_2-t_1)^2 + 2k^2 (t_1+t_2) \sqrt{t_1 t_2}}{k^2(t_1+t_2)^2 - d^2(t_2-t_1)^2} \\
        &= \frac{d(t_2-t_1)^2 + 2k^2 (t_1+t_2) \sqrt{t_1 t_2}}{4d^2t_1 t_2 - (t_2+t_1)^2} \ .
    \end{align}

    Finally, substituting the expressions for $\sinh{\theta^*}$ and $\cosh{\theta^*}$ back into $f(\theta^*)$, one readily finds that
        \begin{align}
        f(\theta^*)
        &= \frac{g+h\sinh{\theta^*}}{d+ \cosh{\theta^*}} \\
        &= \frac{d(t_1+t_2)+d (t_2-t_1)^2 \frac{2d\sqrt{t_1 t_2} + (t_1+t_2)}{4d^2t_1 t_2 - (t_2+t_1)^2}}{d+ \frac{d(t_2-t_1)^2 + 2k^2 (t_1+t_2) \sqrt{t_1 t_2}}{4d^2t_1 t_2 - (t_2+t_1)^2}} \\
        &= \frac{d(t_1+t_2)+d (t_2-t_1)^2 \frac{2d\sqrt{t_1 t_2} + (t_1+t_2)}{4d^2t_1 t_2 - (t_2+t_1)^2}}{\frac{4d^3 t_1 t_2 - 4 d t_1 t_2 + 2k^2 (t_1+t_2) \sqrt{t_1 t_2}}{4d^2t_1 t_2 - (t_2+t_1)^2}} \\
        &= \frac{d(t_1+t_2)\qty(4d^2t_1 t_2 - (t_2+t_1)^2)+d (t_2-t_1)^2 \qty(2d\sqrt{t_1 t_2} + (t_1+t_2))}{4dk^2 t_1 t_2 + 2k^2 (t_1+t_2) \sqrt{t_1 t_2}} \\
        &= \frac{d(t_1+t_2)\qty(4d^2t_1 t_2 - (t_2+t_1)^2 + (t_2-t_1)^2)+2d^2 (t_2-t_1)^2 \sqrt{t_1 t_2}}{4dk^2 t_1 t_2 + 2k^2 (t_1+t_2) \sqrt{t_1 t_2}} \\
        &= \frac{4dk^2(t_1+t_2) t_1 t_2 +2d^2 (t_2-t_1)^2 \sqrt{t_1 t_2}}{4dk^2 t_1 t_2 + 2k^2 (t_1+t_2) \sqrt{t_1 t_2}} \\
        &= \frac{4dk^2(t_1+t_2) t_1 t_2 +2k^2 (t_2-t_1)^2 \sqrt{t_1 t_2} + 2(t_2-t_1)^2 \sqrt{t_1 t_2}}{4dk^2 t_1 t_2 + 2k^2 (t_1+t_2) \sqrt{t_1 t_2}} \\
        &= \frac{4dk^2(t_1+t_2) t_1 t_2 +2k^2 (t_2+t_1)^2 \sqrt{t_1 t_2} - 8 k^2t_1t_2\sqrt{t_1 t_2} + 2(t_2-t_1)^2 \sqrt{t_1 t_2}}{4dk^2 t_1 t_2 + 2k^2 (t_1+t_2) \sqrt{t_1 t_2}} \\
        &= (t_1 +t_2) + \frac{\sqrt{t_1 t_2}}{k^2} \frac{(t_2-t_1)^2 - 4 k^2t_1t_2}{2d t_1 t_2 + (t_1+t_2) \sqrt{t_1 t_2}} \\
        &= (t_1 +t_2) + \frac{1}{k^2} \frac{(t_2+t_1)^2 - 4 d^2t_1t_2}{2d \sqrt{t_1 t_2} + (t_1+t_2)} \\
        &= (t_1 +t_2) + \frac{1}{k^2} \frac{\qty((t_2+t_1) - 2 d\sqrt{t_1t_2})\qty((t_1+t_2) + 2 d\sqrt{t_1t_2})}{2d \sqrt{t_1 t_2} + (t_1+t_2)} \\
        &= (t_1 +t_2) + \frac{(t_1+t_2) - 2 d\sqrt{t_1t_2}}{k^2} \\
        &= \frac{d^2(t_1+t_2) - 2 d\sqrt{t_1t_2}}{k^2} \\
        &= \frac{d^2(t_1+t_2) - 2d \sqrt{t_1t_2}}{d^2 - 1} \\
        &= \frac{d(t_1+t_2)}{d+1} + \frac{d (\sqrt{t_1} - \sqrt{t_2})^2}{d^2 - 1}.
    \end{align}
    Thus the claim follows.
\end{proof}

It is straightforward to verify that strong duality holds for the optimization problem considered here. 
Consequently, the dual formulation in Eq.~\eqref{eq:AVB_SOCP} is equivalent to the primal one in Eq.~\eqref{eq:SDP_Approximate}, and the optimal value of the sample complexity overhead is characterized by the following theorem.

\begin{mythm}{Analytical Solution for Approximate Virtual Broadcasting}{Analytical_Solution}
For the approximate virtual broadcasting task defined in Eq.~\eqref{eq:SDP_Approximate}, with receiver errors $\epsilon_1$ and $\epsilon_2$, the the minimal sample complexity overhead takes the form
\begin{align}
    \max_{\theta} \left\{ 2f(\theta) - 1, 1\right\}
    =
    \max\left\{\frac{d(3-2(\epsilon_1+\epsilon_2))-1}{d+1} + \frac{2d\qty(\sqrt{1-\epsilon_1} - \sqrt{1-\epsilon_2})^2}{d^2-1},1\right\}.
\end{align}
\end{mythm}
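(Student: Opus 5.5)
The proof assembles the chain of reductions already established. By Thm.~\ref{thm:Depolarizing_Marginal}, $u_2(\epsilon_1,\epsilon_2)=v_2(p_1,p_2)$ with $p_i=d^2\epsilon_i/(d^2-1)$. By Lems.~\ref{lem:Dual_Formulation_AVB}--\ref{lem:Dual_Formulation_AVB_S_2} and Thm.~\ref{thm:AVB_SOCP}, the dual value $v_2^{\text{Dual}}(p_1,p_2)$ equals the SOCP value. By Thms.~\ref{thm:Hyperbolic_Parametrization} and~\ref{thm:general_solution}, this value equals $\max_\theta\{2f(\theta)-1,1\}$.

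The first task is to justify $v_2=v_2^{\text{Dual}}$ by strong duality. I would exhibit a strictly feasible primal point, e.g. $J_1=a\1_{ABC}/d^2+J^{\mathcal{E}}$ and $J_2=b\1_{ABC}/d^2$, where $J^{\mathcal{E}}$ is any Hermitian operator with the prescribed depolarizing marginals and $\Tr_{BC}J^{\mathcal{E}}=\1_A$. For $b$ large, both $J_1$ and $J_2$ are positive definite. Slater's condition then gives a zero duality gap. Since the primal minimum is finite (it is bounded below by $a+b\geqslant 1$), it is attained.

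Next I evaluate the inner maximization using Lems.~\ref{lem:landscape} and~\ref{lem:opt_point}. The function $f$ has the unique critical point $\theta^*$. It is a global maximum because $f(\theta)\to\pm h$ as $\theta\to\pm\infty$ and $f(\theta^*)\geqslant |h|$; this last inequality should be checked briefly. Hence $\max_\theta 2f(\theta)-1=2f(\theta^*)-1$. Substituting $f(\theta^*)=\big(d^2(t_1+t_2)-2d\sqrt{t_1t_2}\big)/(d^2-1)$ and splitting it as $d(t_1+t_2)/(d+1)+d(\sqrt{t_1}-\sqrt{t_2})^2/(d^2-1)$ gives
\begin{align}
2f(\theta^*)-1=\frac{2d(t_1+t_2)-(d+1)}{d+1}+\frac{2d(\sqrt{t_1}-\sqrt{t_2})^2}{d^2-1}.
\end{align}
With $t_i=1-\epsilon_i$, the first term is $\big(d(3-2(\epsilon_1+\epsilon_2))-1\big)/(d+1)$, which is the claimed expression. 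Putting everything over $d^2-1$ reproduces the main-text form in Thm.~\ref{thm:MT-ABC-Optimal-Sample-Complexity}.

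The main obstacle is rigor in the inherited steps rather than algebra. First, the cases in Lems.~\ref{lem:upper_bound_opt} and~\ref{lem:upper_bound_other} give upper bounds, but the value $2f(\theta^*)-1$ must also be attained. I would check this directly by taking $z=d/\cosh\theta^*$ and $x=(z+1)/(d+\cosh\theta^*)$ when $\cosh\theta^*\leqslant d$ (as Fig.~\ref{fig:landscape} indicates), together with $y=x\sinh\theta^*$, and verifying that this point satisfies the SOCP constraints. Second, I need $\cosh\theta^*\leqslant d$ itself. I would derive it from the explicit formula for $\cosh\theta^*$ using $t_i\in[1/d^2,1]$, and I expect this inequality to be the fiddliest estimate. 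If it fails at the boundary of the error range, I would fall back on the corresponding regime of Lem.~\ref{lem:feasible}, where the bound there shows the value is at most $1$ anyway.
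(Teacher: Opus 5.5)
Your route is the paper's: chain Thm.~\ref{thm:Depolarizing_Marginal}, the dual/SOCP reductions and Thm.~\ref{thm:general_solution} with Lems.~\ref{lem:landscape}--\ref{lem:opt_point}, then invoke strong duality. Your algebra for $2f(\theta^*)-1$ and its rewriting are correct. Your extra checks (Slater, attainment, $\cosh\theta^*\leqslant d$) target exactly what the paper glosses over. However, the attainment check fails as you set it up. At $z=d/\cosh\theta^*$ you get $x=(z+1)/(d+\cosh\theta^*)=1/\cosh\theta^*$. The objective there is $(g+h\sinh\theta^*-d)/\cosh\theta^*$, not $2f(\theta^*)-1$. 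Moreover $z=d/\cosh\theta^*>1$ whenever $\cosh\theta^*<d$, so the point violates $|z|\leqslant 1$ in exactly the regime you are in. For $d=2$, $\epsilon_1=0$, $\epsilon_2=1/2$ it would ``certify'' about $1.153$, above the true optimum $\approx 1.114$. The kink $z=d/\cosh\theta$ lies in $[-1,1]$ only when $\cosh\theta\geqslant d$.

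The optimum sits at the other endpoint: take $z=1$, $x=2/(d+\cosh\theta^*)$, $y=x\sinh\theta^*$. Then $xd-z=1-x\cosh\theta^*=(d-\cosh\theta^*)/(d+\cosh\theta^*)$, so the cone constraint holds iff $\cosh\theta^*\leqslant d$. The objective is $2(g+h\sinh\theta^*)/(d+\cosh\theta^*)-1=2f(\theta^*)-1$. This is the third evaluation in the proof of Lem.~\ref{lem:upper_bound_opt}, where it is mislabeled $z=-1$. The value $1$ is attained at $x=y=0$, $z=-1$.

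The inequality $\cosh\theta^*\leqslant d$ is not fiddly if you work with $\sinh\theta^*$ instead. From Lem.~\ref{lem:opt_point}, $\sinh\theta^*=k(t_2-t_1)/(2d\sqrt{t_1t_2}-(t_1+t_2))$, and $\cosh\theta^*\leqslant d$ is equivalent to $|\sinh\theta^*|\leqslant k$. For $t_2\geqslant t_1$ this reads $t_2-t_1\leqslant 2d\sqrt{t_1t_2}-t_1-t_2$, i.e.\ $t_2\leqslant d^2t_1$, which holds since $t_i\in[1/d^2,1]$. The denominator equals $\sqrt{t_1}(d\sqrt{t_2}-\sqrt{t_1})+\sqrt{t_2}(d\sqrt{t_1}-\sqrt{t_2})>0$. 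Equality $\cosh\theta^*=d$ occurs only at the corner $\{t_1,t_2\}=\{1,1/d^2\}$, so no fallback is needed.

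Two smaller slips remain. First, your Slater point gives $J_1-J_2=J^{\mathcal{E}}+(a-b)\1/d^2$, which shifts the marginals. Use the same identity multiple in both: $J_1=J^{\mathcal{E}}+\lambda\1$, $J_2=\lambda\1$ with $\lambda>0$ large. A valid choice is $J^{\mathcal{E}}=J^{\mD_{p_1}}_{AB}\otimes\1_C/d+\1_B\otimes J^{\mD_{p_2}}_{AC}/d-\1_{ABC}/d^2$. Second, Slater yields zero gap and dual attainment. Primal attainment, if you want it, follows from compactness of the sublevel sets $\{a+b\leqslant C\}$, not from finiteness of the infimum.
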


The general result in Thm.~\ref{thm:Analytical_Solution} also contains exact virtual broadcasting (see Lem.~\ref{lem:SC_EVB}) as a special case, recovering the result of Ref.~\cite{PhysRevA.110.012458}. 
In particular, when the marginal is perfect, i.e., $\epsilon_1=\epsilon_2=0$, the expression $2f(\theta^*) - 1$ simplifies to
\begin{align}
    2f(\theta^*) - 1
    % =\frac{3d^2-4d+1}{d^2-1}
    % =
    % \frac{(3d-1)(d-1)}{(d+1)(d-1)}
    =
    \frac{3d-1}{d+1}.
\end{align}
The quantity $(3d-1)/(d+1)$ is always greater than $1$, since the dimension $d$ of a quantum system is an integer with $d\geqslant2$. 
Thus, in this case, we have
\begin{align}
    \max_{\theta} \left\{ 2f(\theta) - 1, 1\right\}
    =
    \max\left\{\frac{3d-1}{d+1},1\right\}
    =
    \frac{3d-1}{d+1}.
\end{align}
When the errors on the receivers' sides are required to be identical, i.e., $\epsilon_1=\epsilon_2=\epsilon$, the sample complexity $u_2(\epsilon_1, \epsilon_2)$ (see Eq.~\eqref{eq:SDP_Approximate}) simplifies to
\begin{align}
    u_2(\epsilon, \epsilon)=\max\left\{\frac{d(3-4\epsilon)-1}{d+1}, 1\right\}.
\end{align}
Whenever the first term inside the maximum exceeds 1, i.e., $\epsilon\leqslant(d-1)/2d$, the resulting sample complexity depends linearly on the error parameter $\epsilon$.

Before concluding this subsection, we briefly summarize the derivation of the final analytical solution to the optimization problem, as presented in the following chain of equations.
\begin{align}
    u_2(\epsilon_1, \epsilon_2)
    \xeq[\text{Subsec.~\ref{subsec:Marginals}}]{\text{Thm.~\ref{thm:Depolarizing_Marginal}}}
    v_2(p_1, p_2)
    \xeq[\text{Subsec.~\ref{subsec:SDP_Dual}}]{\text{Lem.~\ref{lem:Dual_Formulation_AVB_S_1}}}
    v_2^{\text{Dual}}(p_1, p_2)
    \xeq[\text{Subsec.~\ref{subsec:Analytic_Solution}}]{\text{Thm.~\ref{thm:Analytical_Solution}}}
    \max\left\{\frac{d^2(3-2(\epsilon_1+\epsilon_2))-4d\sqrt{(1-\epsilon_1)(1-\epsilon_2)}+1}{d^2-1},1\right\}.
\end{align}
Here $u_2(\epsilon_1, \epsilon_2)$ (see Eq.~\eqref{eq:SDP_Approximate}) denotes the primal formulation of the sample complexity overhead for approximate virtual broadcasting. 
The quantity $v_2(p_1, p_2)$ (see Eq.~\eqref{eq:SDP_Approximate_v}) is a simplified form obtained by exploiting triple-twirling invariance. 
The term $v_2^{\text{Dual}}(p_1, p_2)$ represents the corresponding simplified dual formulation derived from the same symmetry. 
The final analytical solution follows by transforming the original dual SDP into a SOCP (see Eq.~\eqref{eq:AVB_SOCP}), which admits an explicit solution. 
The parameters $p_i$ and $\epsilon_i$ ($i\in\{1,2\}$) are related through Eq.~\eqref{eq:p_epsilon_i}.

%%%%%%%%%%%%%%%%%%%%%%%%%%%%%%%%%%%%%%%%%%%%%%%%%%%%%%%%%%%%%%%%%%%%%%%%%%%%%%%%%%%%%%%%%%%%%%%%%%%%%%%%%%%%%%%%%%%%%%

\subsection{Sample Efficiency in Approximate Broadcasting}
\label{subsec:SE_AB}

In many areas of physics, the efficiency of a physical process is not determined by its absolute performance alone, but by how much useful output is generated per unit of consumed resource. 
This idea is often captured at a conceptual level by the relation

\begin{align}\label{eq:Concept_Efficiency}
    \text{Efficiency}\sim \frac{\text{Performance}}{\text{Consumed Resource}},
\end{align}
which emphasizes that performance must be assessed relative to the cost required to achieve it. 
A canonical example is power, defined as energy produced per unit time: 
increasing the total energy output without a corresponding reduction in the required time does not constitute an efficiency gain. 
Closely related rate-like quantities appear across physics, including the entropy production rate in nonequilibrium thermodynamics and the information rate in Shannon theory, which quantifies the amount of information transmitted per channel use.
Such quantities provide an intrinsic, resource-normalized characterization of performance, independent of the overall scale at which the process is implemented.

The same consideration applies naturally to quantum broadcasting. 
One may ask how accurately a protocol can broadcast quantum information given a fixed number of input copies, but this absolute figure alone does not reveal how efficiently those copies are utilized. 
To see this, consider two approximate broadcasting strategies. 
Suppose that one protocol achieves a higher average fidelity, but only by consuming significantly more input copies, while another attains a lower fidelity using far fewer resources. 
Judged solely by average broadcasting fidelity, the comparison is ambiguous;
judged by output per copy, their relative merit becomes clear.
These observations suggest that quantum broadcasting should be evaluated through a rate-based lens. 
Rather than focusing on average broadcasting fidelity, it is more informative to quantify how much fidelity is generated per input copy. 
This perspective leads naturally to the notion of a broadcasting rate, which we introduce below and adopt as the central figure of merit for benchmarking the efficiency of approximate quantum broadcasting protocols.

\begin{mydef}
{Broadcasting Rate}
{Broadcasting_Rate}
For an approximate broadcasting protocol $\mE$ characterized by parameters $\epsilon_1$ and $\epsilon_2$ on Bob's and Claire's sides, respectively (see Subsec.~\ref{subsec:AVB}), let $F_{\text{Bcast}}$ denote the resulting average broadcasting fidelity, and let $n(\epsilon_1, \epsilon_2)$ be the number of input samples required to pass the $\epsilon-\delta$ test. 
The broadcasting rate is then defined as 
\begin{align}\label{eq:BC_Rate}
    R(\epsilon_1, \epsilon_2):=\frac{F_{\text{Bcast}}(\mE)}{n(\epsilon_1, \epsilon_2)}.
\end{align}
\end{mydef}

Here, the task is to estimate the expectation values of 
\begin{align}
    \Tr[\id\otimes(\Tr_{C}\circ\,\mE_{A\to BC})(\phi^{+})\cdot\phi^{+}],
\end{align}
and
\begin{align}
    \Tr[\id\otimes(\Tr_{B}\circ\,\mE_{A\to BC})(\phi^{+})\cdot\phi^{+}],
\end{align}
which correspond to the marginal broadcasting fidelities on the receivers' sides (see Subsec.~\ref{subsec:Broadcasting_Fidelity}). 

It is worth noting that the broadcasting fidelity (see Def.~\ref{def:BF_B}) can be interpreted as the expectation value of an observable, in this case the projector onto the maximally entangled state. 
This interpretation allows us to employ Hoeffding's inequality (see Lem.~\ref{lem:Hoeffding}) to quantify the number of samples required to achieve a desired level of statistical confidence. 
By contrast, we do not adopt the diamond norm as the figure of merit.
The reason is that the diamond norm does not naturally admit an interpretation as the expectation value of experimentally accessible data, and therefore the corresponding sample complexity overhead cannot be straightforwardly characterized within our framework.

Without loss of generality, assume that Bob and Claire require $n_1$ and $n_2$ copies of the maximally entangled state $\phi^{+}$ (see Eq.~\eqref{eq:MES}) to pass their respective $\epsilon-\delta$ test (see Def.~\ref{def:epsilon-delta}) when operating under physical quantum channels.
Denoting the corresponding sample requirement by $n_Q$, 
\begin{align}
    n_Q:=\max\{n_1, n_2\}.
\end{align}
An approximate broadcasting protocol implemented via virtual operations requires instead $u^2_2(\epsilon_1, \epsilon_2) n_Q$ input copies. 
The prefactor $u^2_2(\epsilon_1, \epsilon_2)$ quantifies the sampling overhead introduced by the virtual protocol, as discussed in Cor.~\ref{cor:Virtual Sample Cost}.
For the approximate virtual broadcasting protocol analyzed in Eq.~\eqref{eq:SDP_Approximate}, the marginal channels are known to reduce to depolarizing channels $\mD_{p}$ (see Subsec.~\ref{subsec:Marginals}). 
This observation allows the numerator of Eq.~\eqref{eq:BC_Rate} to be evaluated as

\begin{align}
    F_{\text{Bcast}}(\mE)
    =
    &1+\frac{1-d^2}{2d^2}(p_1+p_2)\\
    =
    &1-\frac{\epsilon_1+\epsilon_2}{2},
\end{align}
where the second equality follows directly from Eq.~\eqref{eq:p_epsilon_i} of Thm.~\ref{thm:Depolarizing_Marginal}.
In this case, one finds
\begin{align}\label{eq:Broadcasting_Rate_2}
    R(\epsilon_1, \epsilon_2)=\frac{1-\frac{\epsilon_1+\epsilon_2}{2}}{u^2_2(\epsilon_1, \epsilon_2) n_Q}.
\end{align}

Alternatively, the task can be accomplished trivially by distributing $n_Q$ copies to Bob and another $n_Q$ copies to Claire, yielding perfect performance on both sides and the corresponding broadcasting rate
\begin{align}
    R_{\text{naive}}=\frac{\frac{1}{2}(1+1)}{n_Q+n_Q}=\frac{1}{2n_Q}.
\end{align}

Accordingly, approximate broadcasting is sample efficient (SE) if and only if there exists a region of parameters $(\epsilon_1,\epsilon_2)$ such that 
\begin{align}
    R(\epsilon_1, \epsilon_2)\geqslant
    R_{\text{naive}}.
\end{align}
Because $n_Q>0$, this condition is equivalent to
\begin{align}
    \frac{1-\frac{\epsilon_1+\epsilon_2}{2}}{u^2_2(\epsilon_1, \epsilon_2)}
    \geqslant
    \frac{1}{2}.
\end{align}
We now state the following theorem concerning the practicality of approximate virtual broadcasting.

\begin{mythm}
{Practical Approximate Virtual Broadcasting}
{Practical_AVB}
An approximate broadcasting protocol, characterized by error parameters $\epsilon_1$ and $\epsilon_2$ on Bob's and Claire's sides, is sample efficient (SE) only if $\epsilon_1$ and $\epsilon_2$ satisfy
\begin{align}\label{eq:1-to-2_PAVB}
    u_2(\epsilon_1, \epsilon_2)\leqslant\sqrt{2-(\epsilon_1+\epsilon_2)},
\end{align}
where $u_2(\epsilon_1, \epsilon_2)$ is given by Thm.~\ref{thm:Analytical_Solution}.
\end{mythm}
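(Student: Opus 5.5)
The plan is to unwind the definition of sample efficiency for approximate broadcasting given just above and rewrite it as an inequality on $u_2(\epsilon_1,\epsilon_2)$. By Def.~\ref{def:Broadcasting_Rate}, the rate of the optimal protocol with receiver errors $\epsilon_1,\epsilon_2$ is $R(\epsilon_1,\epsilon_2)=F_{\text{Bcast}}(\mE)/n(\epsilon_1,\epsilon_2)$. By Cor.~\ref{cor:Virtual Sample Cost}, the number of input samples is $n(\epsilon_1,\epsilon_2)=u_2^2(\epsilon_1,\epsilon_2)\,n_Q$, because the optimal decomposition $a\mE_1-b\mE_2$ of the SDP in Eq.~\eqref{eq:SDP_Approximate} has $a+b=u_2(\epsilon_1,\epsilon_2)$. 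Any other feasible decomposition has a larger $a+b$, and therefore a smaller rate. So if any protocol with these errors is SE, the optimal one is as well, and it suffices to treat the optimal one.

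\textbf{Step 1: compute the fidelity.} By Thm.~\ref{thm:Depolarizing_Marginal} and Cor.~\ref{cor:UC_Optimality}, the optimal protocol may be taken with depolarizing marginals $\mD_{p_i}$, where $p_i=d^2\epsilon_i/(d^2-1)$. Evaluating $F_{\text{Chan}}(\mD_{p_i})=1-(1-1/d^2)p_i=1-\epsilon_i$ and averaging as in Def.~\ref{def:ABF} gives
\[
F_{\text{Bcast}}=1-\tfrac{\epsilon_1+\epsilon_2}{2}.
\]
This fidelity is in fact fixed directly by the equality constraints of Eq.~\eqref{eq:SDP_Approximate}, independently of the twirling reduction. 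That is the main point to state carefully: the numerator is determined by $(\epsilon_1,\epsilon_2)$ alone, and only the denominator depends on the chosen decomposition.

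\textbf{Step 2: compare with the naive rate.} The naive protocol has rate $R_{\text{naive}}=1/(2n_Q)$. The SE requirement $R\geqslant R_{\text{naive}}$, after cancelling $n_Q>0$, becomes
\[
\frac{1-(\epsilon_1+\epsilon_2)/2}{u_2^2}\geqslant\frac12 .
\]
Multiplying through by the positive quantity $2u_2^2$ gives $u_2^2\leqslant 2-(\epsilon_1+\epsilon_2)$. The right-hand side is nonnegative, since $\epsilon_i\leqslant 1-1/d^2<1$, and $u_2\geqslant1>0$ by Thm.~\ref{thm:Analytical_Solution}. Taking square roots is therefore monotone and yields Eq.~\eqref{eq:1-to-2_PAVB}, which gives the stated necessary condition.

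The only real subtlety is the justification in Step 1 that restricting to the optimal decomposition loses no generality. It holds because the rate is monotonically decreasing in $a+b$ while the fidelity is fixed by the constraints. Once that is stated, everything else is algebra. One can also note that the argument runs backwards, so the condition is in fact also sufficient.
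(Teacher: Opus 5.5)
Your proposal is correct and follows essentially the paper's own derivation: it likewise takes $F_{\text{Bcast}}=1-(\epsilon_1+\epsilon_2)/2$ from the depolarizing marginals and the sample cost $u_2^2(\epsilon_1,\epsilon_2)\,n_Q$, compares the resulting rate with $R_{\text{naive}}=1/(2n_Q)$, and rearranges. Your extra remarks only make explicit what the paper leaves implicit when it defines the rate directly through $u_2$: the fidelity is already fixed by the SDP equality constraints, and any non-optimal decomposition has $a+b\geqslant u_2$ and hence a smaller rate.
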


Numerical results for 1-to-2 approximate virtual broadcasting are shown in Fig.~\ref{fig:1to2_dims}, illustrating both the sample complexity overhead and the regimes in which Eq.~\eqref{eq:1-to-2_PAVB} is satisfied, namely, where the broadcasting rate $R(\epsilon_1, \epsilon_2)$ (see Eq.~\ref{eq:BC_Rate}) exceeds that of the naive protocol.

\begin{figure}
    \centering
    \includegraphics[width=0.9\linewidth]{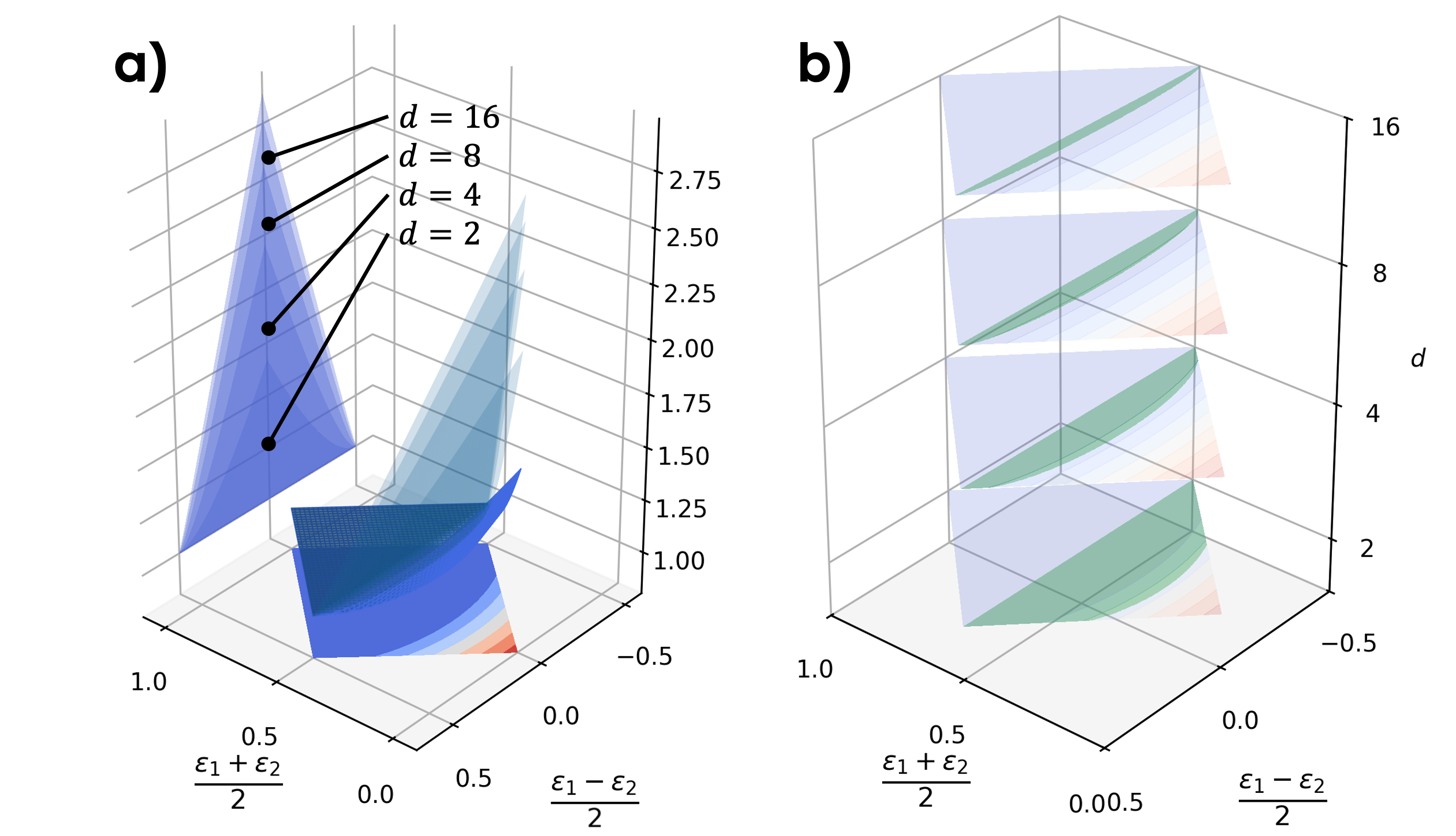}
    \caption{(Color online) \textbf{Sample Complexity Landscape for Approximate Virtual Broadcasting}.
    For 1-to-2 approximate virtual broadcasting, the sample complexity overhead $u_2(\epsilon_1, \epsilon_2)$ (see Eq.~\eqref{eq:SDP_Approximate}) is shown in (a). The parameter region in which Eq.~\eqref{eq:1-to-2_PAVB} is satisfied is highlighted in (b) (green), indicating where the broadcasting rate surpasses that of the naive prepare-and-distribute protocol.
    }
    \label{fig:1to2_dims}
\end{figure}

%%%%%%%%%%%%%%%%%%%%%%%%%%%%%%%%%%%%%%%%%%%%%%%%%%%%%%%%%%%%%%%%%%%%%%%%%%%%%%%%%%%%%%%%%%%%%%%%%%%%%%%%%%%%%%%%%%%%%%

\subsection{Learning Quantum State Properties via Approximate Broadcasting}
\label{subsec:Learning_AQB}

Assume that multiple copies of a quantum state $\rho$ are available to Alice, who subsequently distributes them to the receivers Bob and Claire via an approximate virtual broadcasting map. 
Consider the scenario in which Bob uses the state he receives to estimate the expectation value of an observable $\mO$. 
In the standard setting, $n=\mO\left((c^2\ln{1/\delta}\right)/\epsilon^2)$ suffices to determine the expectation value to accuracy $\epsilon$ with success probability at least $1-\delta$ (see Cor.~\ref{cor:Sample Cost}), giving rise to the familiar $\epsilon-\delta$ test (see Def.~\ref{def:epsilon-delta}). 
A natural question is how this guarantee is modified when the available copies are first processed through the approximate virtual broadcasting protocol that maximizes the performance characterized in Eq.~\eqref{eq:SDP_Approximate}, as introduced in Subsec.~\ref{subsec:AVB}. 
This subsection analyzes this scenario and derives the corresponding $\epsilon-\delta$ test behavior in the presence of approximate virtual broadcasting.

As discussed in Subsecs.~\ref{subsec:AVB},~\ref{subsec:Symmetry_Simplifications} and~\ref{subsec:Marginals}, the approximate broadcasting map can, without loss of generality, be taken to be $\mE$ with the decomposition of Eq.~\eqref{eq:Virtual_Decomposition}, i.e., $\mE=a\mE_1-b\mE_2$, where the channels $\mE_1$ and $\mE_2$ arise from the triple-twirling $\mT$ (see Eq.~\eqref{eq:Twirling}). 
Consequently, when Bob estimates the observable $\mO$, let $\overline{X}$ denote the empirical average obtained from the measurement outcomes. 
Suppose that $\overline{X}$ satisfies the $\epsilon-\delta$ test for the quantity $\Tr[\Tr_{C}\circ\, \mE(\rho)\cdot\mO]$. 
The problem then is whether the same experimental data can be used to infer the true value of $\Tr[\rho\mO]$. 
To address this point, it is instructive to first analyze the deviation between the experimental estimate $\overline{X}$ and the expectation value $\Tr[\rho\mO]$, which is given by
\begin{align}
    \left|\overline{X}-\Tr[\rho\mO]\right|
    =
    &\left|\overline{X}-
    \Tr[\Tr_{C}\circ\, \mE(\rho)\cdot\mO]+
    \Tr[\Tr_{C}\circ\, \mE(\rho)\cdot\mO]-
    \Tr[\rho\mO]\right|\\
    \leqslant
    &\left|\overline{X}-
    \Tr[\Tr_{C}\circ\, \mE(\rho)\cdot\mO]\right|+
    \left|\Tr[\Tr_{C}\circ\, \mE(\rho)\cdot\mO]-
    \Tr[\rho\mO]\right|\label{eq:AVB_Statistic_2}\\
    =
    &\epsilon+\left|\frac{d^2\epsilon_1}{d(d^2-1)}\Tr[\mO]-\frac{d^2\epsilon_1}{d^2-1}\Tr[\rho\mO]\right|.\label{eq:AVB_Statistic_3}
\end{align}
The step from Eq.~\eqref{eq:AVB_Statistic_2} to Eq.~\eqref{eq:AVB_Statistic_3} follows from the fact that the marginals of approximate virtual broadcasting reduce to depolarizing channel, as established in Eq.~\eqref{eq:Approx_B_v} and Thm.~\ref{thm:Depolarizing_Marginal}; that is 
\begin{align}
    \Tr_{C}\circ\, \mE_{A\to BC}(\cdot)=\mD_{p_1}(\cdot)
    =
    p_1\left(\frac{1}{d}\1_{B}\right)\otimes\Tr_{A}[\cdot]
    +
    (1-p_1)\id_{A\to B}(\cdot).
\end{align}
Without loss of generality, the observable of interest may be taken to be traceless, as in the case of Pauli strings. 
Denoting by $o_{\max}$ the maximal eigenvalue of $\mO$, the above deviation can be written as
\begin{align}
    \left|\overline{X}-\Tr[\rho\mO]\right|
    \leqslant
    \epsilon+\frac{d^2\epsilon_1o_{\max}}{d^2-1}.
\end{align}
The resulting performance guarantee for estimating $\mO$ is stated in the following theorem.

\begin{mythm}
{Statistical Guarantees for Approximate Broadcasting}
{Statistical_AVB} 
For a traceless observable $\mO$ with maximal eigenvalue $o_{\max}$, consider approximate virtual broadcasting implemented by a virtual operation that distributes the input state from Alice to Bob and Claire. 
If Bob passes an $\epsilon-\delta$ test, then the same data can be used to estimate the exact value of $\Tr[\rho\mO]$ with an $(\epsilon+(d^2\epsilon_1o_{\max}/(d^2-1)))-\delta$ test, with the required number of copies given by
\begin{align}
    n=\mO\left(\frac{u_2(\epsilon_1, \epsilon_2)^2c^2}
    {\left(\epsilon+\frac{d^2\epsilon_1o_{\max}}{d^2-1}\right)^2}\ln{\frac{1}{\delta}}\right),
\end{align}
where $u_2(\epsilon_1, \epsilon_2)$ is defined in Eq.~\eqref{eq:SDP_Approximate}, and $c$, defined in Eq.~\eqref{eq:c}, denotes the maximal difference between possible measurement outcomes.
\end{mythm}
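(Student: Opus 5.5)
The argument splits the total error into a statistical part and a bias part, and then transfers the sample count from Cor.~\ref{cor:Virtual Sample Cost}.

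\textbf{Step 1 (triangle inequality).} Insert the marginal expectation $\Tr[\Tr_{C}\circ\,\mE(\rho)\cdot\mO]$ between $\overline{X}$ and $\Tr[\rho\mO]$, exactly as in Eqs.~\eqref{eq:AVB_Statistic_2}--\eqref{eq:AVB_Statistic_3}. The first term is the statistical error. By hypothesis Bob passes the $\epsilon-\delta$ test for the marginal expectation, so this term exceeds $\epsilon$ with probability at most $\delta$.

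\textbf{Step 2 (bias term).} By Thm.~\ref{thm:Twirling_3} and Thm.~\ref{thm:Depolarizing_Marginal}, we may take the optimal map to be triple-twirled. Its Bob marginal is then the depolarizing map $\mD_{p_1}$ with $p_1=d^2\epsilon_1/(d^2-1)$. Hence
\begin{align}
\Tr[\mD_{p_1}(\rho)\mO]-\Tr[\rho\mO]=p_1\left(\frac{\Tr[\mO]}{d}-\Tr[\rho\mO]\right).
\end{align}
For traceless $\mO$ this equals $-p_1\Tr[\rho\mO]$. Its modulus is at most $p_1\|\mO\|_\infty$.

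Here I must be careful. The statement uses $o_{\max}$, so I will either assume $\|\mO\|_\infty=o_{\max}$ (for example for Pauli-type observables with symmetric spectrum), or read $o_{\max}$ as the largest eigenvalue in modulus. I would state this convention explicitly. This is the only delicate point: for a traceless observable, $|\Tr[\rho\mO]|$ can reach $|o_{\min}|$, which may exceed the largest eigenvalue. The bias is deterministic, so it does not affect the failure probability.

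\textbf{Step 3 (combine).} On the event of probability at least $1-\delta$ from Step 1, the total deviation is at most $\epsilon+d^2\epsilon_1o_{\max}/(d^2-1)$. This is precisely passing the enlarged test.

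\textbf{Step 4 (sample count).} The number of copies needed for Bob's original $\epsilon-\delta$ test on the virtual marginal comes from Cor.~\ref{cor:Virtual Sample Cost}, with the optimal decomposition giving $a+b=u_2(\epsilon_1,\epsilon_2)$. That count is $n=\mO(u_2^2c^2\epsilon^{-2}\ln(1/\delta))$.

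To express the count in terms of the enlarged tolerance $\epsilon'=\epsilon+d^2\epsilon_1o_{\max}/(d^2-1)$, run the Hoeffding step at statistical tolerance $\epsilon$ and re-express the bound. The displayed formula in the theorem, with $\epsilon'$ in the denominator, should be read as the scaling relative to the final accuracy; I would note that, strictly, the Hoeffding count is governed by the statistical part $\epsilon$. The main obstacle is therefore bookkeeping of which tolerance enters the $\mO(\cdot)$, rather than any substantive estimate.
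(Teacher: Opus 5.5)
Your proposal is correct and is essentially the paper's own argument. You use the same triangle inequality through $\Tr[\Tr_{C}\circ\,\mE(\rho)\cdot\mO]$, the same depolarizing Bob marginal $\mD_{p_1}$ of the triple-twirled optimal map (giving the deterministic bias $p_1\,|\Tr[\rho\mO]|$ for traceless $\mO$), and the same virtual sample cost with $a+b=u_2(\epsilon_1,\epsilon_2)$.

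Both of your caveats are genuine, and the paper passes over them. First, the bound by $o_{\max}$ holds for all $\rho$ exactly when $o_{\max}=\|\mO\|_\infty$, as for Pauli strings or any traceless qubit observable; the paper assumes this only implicitly. Second, because the bias is deterministic, the argument actually yields $n=\mO\bigl(u_2(\epsilon_1,\epsilon_2)^2c^2\epsilon^{-2}\ln(1/\delta)\bigr)$. So rather than reinterpreting the displayed formula, state the count with $\epsilon^2$: putting the square of the enlarged tolerance in the denominator does not supply enough copies once the bias dominates $\epsilon$.
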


%%%%%%%%%%%%%%%%%%%%%%%%%%%%%%%%%%%%%%%%%%%%%%%%%%%%%%%%%%%%%%%%%%%%%%%%%%%%%%%%%%%%%%%%%%%%%%%%%%%%%%%%%%%%%%%%%%%%%%

\section{Probabilistic Quantum Broadcasting}
\label{sec:PQB}

While the previous analysis focused on deterministic virtual broadcasting in the exact and approximate regimes, these scenarios do not exhaust all possible strategies. 
We therefore turn to the possibility of probabilistic quantum broadcasting.
We first show in Subsec.~\ref{subsec:No_Probabilistic_QB} that broadcasting remains impossible under quantum operations alone, even when arbitrarily small success probabilities are permitted. 
We then extend the operational framework to probabilistic virtual operations, enabling probabilistic virtual broadcasting at the cost of an additional sample complexity overhead, which can be systematically characterized through a semidefinite programming, as formulated in Subsec.~\ref{subsec:PVB}. 
The corresponding analytic solution is presented in Subsec.~\ref{subsec:PVB_Analytical_Solution}. 
Finally, in Subsec.~\ref{subsec:SE_PB}, by combining this analytic result with a comparison to naive protocols, we establish a new no-go theorem for probabilistic virtual broadcasting.

%%%%%%%%%%%%%%%%%%%%%%%%%%%%%%%%%%%%%%%%%%%%%%%%%%%%%%%%%%%%%%%%%%%%%%%%%%%%%%%%%%%%%%%%%%%%%%%%%%%%%%%%%%%%%%%%%%%%%%

\subsection{No Probabilistic Quantum Broadcasting}
\label{subsec:No_Probabilistic_QB}

Deterministic broadcasting that exactly reproduces the initial state has been proven impossible, giving rise to the {\it no-broadcasting theorem} in quantum theory~\cite{PhysRevLett.76.2818}. 
Recent results further show that even when arbitrary linear maps are considered and sample complexity is taken into account, broadcasting quantum information remains impossible, a limitation referred to as {\it no practical quantum broadcasting}~\cite{z2pr-zbwl,8g6j-w7ld}. 
However, the possibility of constructing probabilistic broadcasting protocols has not yet been ruled out. 
In this subsection, this possibility is examined within standard quantum theory, restricting attention to quantum channels, that is, completely positive trace-preserving (CPTP) linear maps. 
The analysis ultimately shows that broadcasting cannot be achieved not only deterministically, but also probabilistically.

At first sight, probabilistic broadcasting protocols may appear highly intricate. 
For instance, when the sender Alice attempts to broadcast an initial state to the receivers Bob and Claire, one might imagine that the success probability depends on the input state. 
Locally, it could occur that different states are successfully delivered to Bob with different probabilities. 
Globally, it might also be conceivable that Bob and Claire receive the broadcasted state with unequal success probabilities.

The analysis below rules out all such possibilities. 
If broadcasting were to occur probabilistically, the success probability would necessarily have to be fixed across all systems and for all input states. 
The result shows that even under this probabilistic scenario, quantum broadcasting remains impossible: quantum information cannot be broadcast even with arbitrarily small success probability.

We begin by analyzing the local success probability of quantum broadcasting, as formalized in the following lemma.

\begin{mylem}
{Local Success Probability}
{Local_Success_Probability} 
For probabilistic broadcasting implemented by a quantum subchannel $\mE:A\to BC$, the success probability on each receiver's side is independent of the input state.
\end{mylem}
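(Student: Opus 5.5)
The plan is to use only linearity of the marginal map. Fix Bob's side and set $\mF:=\Tr_C\circ\,\mE_{A\to BC}$, which is a linear map, and also a CP trace-non-increasing subchannel. Probabilistic broadcasting on Bob's side means that for every input state $\rho$ there is a scalar $p(\rho)\in[0,1]$ with
\begin{align}
    \mF(\rho)=p(\rho)\,\rho .
\end{align}
In words, every state is an eigenvector of the linear map $\mF$. The goal is to show that $p(\rho)$ is constant. The same argument applied to $\Tr_B\circ\,\mE$ will then handle Claire.

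The key step is the standard linear-algebra fact that if every vector in a spanning set, closed under taking mixtures, is an eigenvector, then all eigenvalues coincide. Take two distinct states $\rho_1\neq\rho_2$ and their mixture $\sigma=\tfrac12(\rho_1+\rho_2)$. Linearity gives
\begin{align}
    \mF(\sigma)=\tfrac12 p_1\rho_1+\tfrac12 p_2\rho_2 ,
\end{align}
while the broadcasting assumption gives
\begin{align}
    \mF(\sigma)=\tfrac12 p(\sigma)(\rho_1+\rho_2) .
\end{align}
Subtracting the two expressions yields
\begin{align}
    (p_1-p(\sigma))\rho_1+(p_2-p(\sigma))\rho_2=0 .
\end{align}
If $\rho_1,\rho_2$ are linearly independent, both coefficients vanish, so $p_1=p(\sigma)=p_2$.

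The only delicate point is the case where $\rho_1$ and $\rho_2$ are linearly dependent. Since both are states, taking the trace shows a dependence must be $\rho_1=\rho_2$, so this case is trivial. Hence $p(\rho)\equiv p_{\text{Bob}}$ is independent of $\rho$. Because density matrices span all operators, linearity then upgrades this to the channel identity $\Tr_C\circ\,\mE=p_{\text{Bob}}\,\id_{A\to B}$, which is the form used in Lem.~\ref{lem:MT-Global-Success-Probability}.

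I would also remark that the argument never uses complete positivity. It therefore applies verbatim to HP virtual operations, consistent with the claim in the main text.
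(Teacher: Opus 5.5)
Your proof is correct and takes the same route as the paper: apply linearity of $\Tr_C\circ\,\mE$ to an equal mixture of two states and compare coefficients (the paper does this for the orthogonal pair $\ketbra{0}{0},\ketbra{1}{1}$). Your version is slightly more complete. It treats an arbitrary pair of distinct states by observing that distinct density operators are linearly independent, whereas the paper only asserts that the general case follows. You also make explicit the step from state independence to the channel identity $\Tr_C\circ\,\mE=p\,\id_{A\to B}$.
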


\begin{proof}
We sketch the proof for a pair of specific input states; the general case follows from the same argument. 
Without loss of generality, consider the situation in which Alice attempts to broadcast the states $\ket{0}$ and $\ket{1}$ using subchannel $\mE:A\to BC$, and Bob receives them locally with success probabilities $p_0$ and $p_1$, respectively. 
\begin{align}
    \Tr_{C}[\mE(\ketbra{0}{0})]&=p_0\ketbra{0}{0},\\
    \Tr_{C}[\mE(\ketbra{1}{1})]&=p_1\ketbra{1}{1}.
\end{align}
Applying $\mE$ to the input state $(\ket{0}+\ket{1})/2$ yields
\begin{align}
    \frac{p_0\ketbra{0}{0}+p_1\ketbra{1}{1}}{2}
    =\Tr_{C}\qty[\mE\left(\frac{\ketbra{0}{0}+\ketbra{1}{1}}{2}\right)]
    =\frac{p(\ketbra{0}{0}+\ketbra{1}{1})}{2},
\end{align}
with some probability $p$.
A comparison of the expressions immediately implies
\begin{align}
    p=p_0=p_1,
\end{align}
thereby completing the proof.
\end{proof}

We next turn to the global success probability of quantum broadcasting, stated in the lemma below.

\begin{mylem}
{Global Success Probability}
{Global_Success_Probability} 
For probabilistic broadcasting implemented by a quantum subchannel $\mE:A\to BC$, the success probabilities on the two receivers' sides must be identical.
\end{mylem}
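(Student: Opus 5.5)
Suppose the subchannel $\mE:A\to BC$ realizes probabilistic broadcasting with $\Tr_C\circ\,\mE=p_{\text{Bob}}\,\id_{A\to B}$ and $\Tr_B\circ\,\mE=p_{\text{Claire}}\,\id_{A\to C}$. Lemma~\ref{lem:Local_Success_Probability} lets us write each marginal this way, with a single input-independent constant per receiver. The plan is to show that both constants equal the total success probability of $\mE$, so they must coincide.

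\textbf{Step 1.} Fix any input state $\rho$ and take the full trace of $\mE(\rho)$ in two orders. Tracing out $C$ first and then $B$ gives
\begin{align}
    \Tr[\mE(\rho)]=\Tr_B\!\left[\Tr_C[\mE(\rho)]\right]=p_{\text{Bob}}\Tr[\rho]=p_{\text{Bob}}.
\end{align}
Tracing out $B$ first and then $C$ gives $\Tr[\mE(\rho)]=p_{\text{Claire}}$. Since the total trace is the same, $p_{\text{Bob}}=p_{\text{Claire}}$.

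\textbf{Step 2.} Restate this at the level of Choi operators, which is the form needed later for the SDP of Eq.~\eqref{eq:MT-PVB-SDP-N}. Linking the marginal conditions with the identity on the remaining output gives
\begin{align}
    \1_{BC}\star J^{\mE}=\1_B\star(p_{\text{Bob}}\Gamma_{AB})=p_{\text{Bob}}\1_A,
\end{align}
and likewise $\1_{BC}\star J^{\mE}=p_{\text{Claire}}\1_A$. This is the same computation the paper used to derive $a-b=1$ from BC in Subsec.~\ref{subsec:AVB}, with $a-b$ replaced by the success probability. The argument never uses complete positivity, so it covers the virtual (HP, trace-nonincreasing) case as well. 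For $N$ receivers, the same argument applied to each pair $B_i,B_j$ forces all the $p_i$ to be equal.

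\textbf{Main obstacle.} The calculation is trivial; the only delicate point is the setup. We must use that each receiver's marginal is proportional to the identity channel with a constant independent of the input, which is exactly what Lemma~\ref{lem:Local_Success_Probability} supplies. Without it, Step 1 would only give $p_{\text{Bob}}(\rho)=p_{\text{Claire}}(\rho)$ for each $\rho$ separately. That is still enough for the statement as written, so the lemma is convenient here but not essential.
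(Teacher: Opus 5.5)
Your proof is correct and is essentially the paper's argument. The paper writes the marginals as $\Tr_C[J^{\mE}]=p_{\text{Bob}}\Gamma_{AB}$ and $\Tr_B[J^{\mE}]=p_{\text{Claire}}\Gamma_{AC}$ and traces out the remaining output to get $p_{\text{Bob}}\1_A=p_{\text{Claire}}\1_A$, which is your Step 2 (Step 1 is the same computation at the level of states); your side remark that the input-independence lemma is not needed, since tracing gives $p_{\text{Bob}}(\rho)=\Tr[\mE(\rho)]=p_{\text{Claire}}(\rho)$ for each $\rho$, is also correct.
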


\begin{proof}
Thanks to Lem.~\ref{lem:Local_Success_Probability}, the marginals on Bob's and Claire's sides can be assumed to take the form
\begin{align}
    \Tr_{C}[J^{\mE}]&=p_{\text{Bob}}\Gamma_{AB},\\
    \Tr_{B}[J^{\mE}]&=p_{\text{Claire}}\Gamma_{AC}.
\end{align}
Tracing out system $B$ in $\Tr_{C}[J^{\mE}]$ and system $C$ in $\Tr_{C}[J^{\mE}]$ gives
\begin{align}
    p_{\text{Bob}}\1_{A}=p_{\text{Claire}}\1_{A},
\end{align}
from which 
\begin{align}
    p_{\text{Bob}}=p_{\text{Claire}}
\end{align}
follows, thereby proving the statement.
\end{proof}

We are now ready to state the result for probabilistic broadcasting.

\begin{mythm}
{No Probabilistic Quantum Broadcasting}
{No_Probabilistic_Quantum_Broadcasting} 
Quantum information cannot be broadcast through quantum channels, even with arbitrarily small success probability.
\end{mythm}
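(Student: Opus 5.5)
The plan is a contradiction argument. We suppose a quantum subchannel $\mE:A\to BC$ broadcasts probabilistically with some success probability $p>0$, and we rescale it into a genuine deterministic broadcasting channel, which the no-broadcasting theorem~\cite{PhysRevLett.76.2818} excludes.

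\textbf{Step 1: normalize the marginals.} By Lem.~\ref{lem:Local_Success_Probability}, each receiver's success probability is independent of the input. By Lem.~\ref{lem:Global_Success_Probability}, the two receivers share the same value $p$. At the Choi level the assumption therefore reads
\begin{align}
    \1_{C}\star J^{\mE}=p\,\Gamma_{AB},\qquad \1_{B}\star J^{\mE}=p\,\Gamma_{AC}.
\end{align}
Tracing out the remaining output gives $\Tr_{BC}[J^{\mE}]=p\,\1_A$. So $\mE$ is not merely trace non-increasing: it has constant trace $p$ on every input.

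\textbf{Step 2: rescale to a channel.} Define $\mE':=\mE/p$. Since $J^{\mE}\geqslant0$ (Eq.~\eqref{eq:CP}), also $J^{\mE'}\geqslant0$. By Step 1, $\Tr_{BC}[J^{\mE'}]=\1_A$, so $\mE'$ is CPTP (Eq.~\eqref{eq:TP}). Its marginals satisfy $\1_C\star J^{\mE'}=\Gamma_{AB}$ and $\1_B\star J^{\mE'}=\Gamma_{AC}$, which is exactly the broadcasting condition Eq.~\eqref{eq:BC_3}. Thus $\mE'$ is a deterministic broadcasting channel for all input states. This contradicts the no-broadcasting theorem, and the contradiction holds for every $p>0$, however small.

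\textbf{Expected obstacle.} The rescaling itself is routine. The delicate point is justifying Step 1 in full generality, because Lem.~\ref{lem:Local_Success_Probability} was only sketched for a pair of orthogonal states. What we need is that $\Tr_C[\mE(\rho)]=p(\rho)\rho$ for all $\rho$ forces $p(\rho)$ to be constant.

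I would prove this by linearity. Take any two states $\rho,\sigma$ and their mixture $\tfrac12(\rho+\sigma)$. Linearity gives
\begin{align}
    p(\rho)\rho+p(\sigma)\sigma=p\!\left(\tfrac12(\rho+\sigma)\right)(\rho+\sigma).
\end{align}
Whenever $\rho\neq\sigma$ are linearly independent, comparing coefficients forces $p(\rho)=p(\sigma)$. This extends to all states, since any two states can be connected through a third state linearly independent of both.

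Only after this do we obtain the Choi-level identity $\1_C\star J^{\mE}=p\,\Gamma_{AB}$. That identity follows because the marginal map equals $p\,\id$ on a spanning set of states, and hence, by linearity, on all operators.
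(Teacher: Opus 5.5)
Your proposal is correct and follows the paper's proof: invoke the local and global success-probability lemmas to get $\Tr_{C}\circ\,\mE=\Tr_{B}\circ\,\mE=p\,\id$, rescale to $\mE/p$ (CP and trace-preserving), and contradict the no-broadcasting theorem. Your linearity argument for the local lemma is a cleaner, general version of the paper's two-state sketch, and the detour through a third state is unnecessary: two distinct unit-trace operators are never proportional, so they are always linearly independent.
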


\begin{proof}
Using Lems.~\ref{lem:Local_Success_Probability} and~\ref{lem:Global_Success_Probability}, one may assume that  
\begin{align}
    \Tr_{C}\circ\,\mE=\Tr_{B}\circ\,\mE=p\,\id,
\end{align}
for some nonzero success probability $p$.
This would imply the existence of a broadcasting channel $\mB$, 
\begin{align}
    \mB:=\frac{1}{p}\mE,
\end{align}
contradicting the no-broadcasting theorem~\cite{PhysRevLett.76.2818}.
Hence $p=0$, showing that quantum information cannot be broadcast via quantum channels, even probabilistically.
\end{proof}

%%%%%%%%%%%%%%%%%%%%%%%%%%%%%%%%%%%%%%%%%%%%%%%%%%%%%%%%%%%%%%%%%%%%%%%%%%%%%%%%%%%%%%%%%%%%%%%%%%%%%%%%%%%%%%%%%%%%%%

\subsection{Probabilistic Virtual Broadcasting}
\label{subsec:PVB}

As established in the previous subsection, quantum information cannot be broadcast using standard quantum operations, namely quantum channels (CPTP maps), even when probabilistic protocols are allowed. 
This raises the question of whether the situation changes if the operational framework is broadened. 
In particular, one may ask whether broadcasting becomes possible when the allowed transformations are extended from purely quantum channels to virtual operations, which combine quantum processes with classical post-processing. 
The present subsection addresses this question by examining the possibility of probabilistic virtual broadcasting. 

We begin by introducing a precise formulation of this task.
Assume that the probabilistic virtual broadcasting map $\mE$ is realized by combining two subchannels (CPTNI maps) so as to simulate a Hermiticity-preserving (HP) and trace-non-increasing (TNI) linear map. 
In particular, $\mE$ takes the form
\begin{align}
    \mE=a\mE_1-b\mE_2,
\end{align}
where $a$ and $b$ are non-negative real coefficients subject to the condition
\begin{align}
    a-b\leqslant1,
\end{align}
ensuring that the resulting virtual map $\mE$ remains trace non-increasing.
For a given success probability $p$, the minimal sample complexity overhead is characterized by the following semidefinite programming (SDP):

\begin{align}\label{eq:PVB_SDP}
    s_{2}(p)
    =
    \min \quad 
    & 
    a+b
    \\
    \text{s.t.} \quad 
    &\1_{C}\star (J_1-J_2)=p\Gamma_{AB},\\
    &\1_{B}\star (J_1-J_2)=p\Gamma_{AC},\\
    &J_1\geqslant0,\,\, \1_{BC}\star J_1\leqslant a\1_{A},\\
    &J_2\geqslant0,\,\, \1_{BC}\star J_2\leqslant b\1_{A},\\
    &a-b\leqslant1.
\end{align}
Remark that the proofs of Lems.~\ref{lem:Local_Success_Probability} and~\ref{lem:Global_Success_Probability} apply not only to subchannels but also to virtual subchannels. 
Therefore, in Eq.~\eqref{eq:PVB_SDP} we may assume a fixed success probability $p$ for both marginals.

%%%%%%%%%%%%%%%%%%%%%%%%%%%%%%%%%%%%%%%%%%%%%%%%%%%%%%%%%%%%%%%%%%%%%%%%%%%%%%%%%%%%%%%%%%%%%%%%%%%%%%%%%%%%%%%%%%%%%%

\subsection{Analytical Solution for Probabilistic Virtual Broadcasting}
\label{subsec:PVB_Analytical_Solution}

To find the exact analytical solution to the Semidefinite Programming (SDP) in Eq.~\eqref{eq:PVB_SDP}, we analyze its dual formulation. 
By exploiting the underlying unitary symmetries of the problem, the operator constraints can be reduced to a set of scalar conditions, which in turn leads to a much simpler linear programming.
The associated Lagrangian of this optimization problem is given by

\begin{align}
    \mathcal{L} :=& (a+b) + (a-b-1)c \\
    & - \Tr_{AC}\qty[(\Tr_B[J_1 - J_2]) - p \Gamma_{AC}) \cdot X_{AC}] \\
    & - \Tr_{AB}\qty[(\Tr_C[J_1 - J_2]) - p \Gamma_{AB}) \cdot Y_{AB}] \\
    &
    - \Tr_{ABC}[J_1\cdot M_{ABC}]
    - \Tr_{ABC}[J_2\cdot N_{ABC}] \\
    & + \Tr_{A}\qty[ \left(\Tr_{BC}[J_1]-a\1_{A} \right) \cdot P_A] +
    \Tr_{A}\qty[ \qty(\Tr_{BC}[J_2]-b\1_{A}) \cdot Q_A ],
\end{align}
where $X_{AC}, Y_{AB}$ are Hermitian operators, $M_{ABC}, N_{ABC}, P_A$ and $Q_A$ are positive semidefinite, and real number $c \geqslant 0$.
Optimizing the Lagrangian over the primal variable gives its dual form:

\begin{align}
    s_{2}^{\text{Dual}}(p)
    =
    \max \quad 
    & 
    p \Tr[X_{AC} \cdot \Gamma_{AC}] + p \Tr[Y_{AB} \cdot \Gamma_{AB}] - c 
    \\ \label{eq:Probabilistic_Dual_From}
    \text{s.t.} \quad 
    & \Tr[P_A] = 1 + c, \\
    & \Tr[Q_A] = 1 - c, \\
    & - Q_A \otimes \mathbb{1}_{BC} \leqslant Y_{AB} \otimes \mathbb{1}_C + \mathbb{1}_B \otimes X_{AC} \leqslant P_A \otimes \mathbb{1}_{BC}, \\
    & P_A \geqslant 0, \quad Q_A \geqslant 0, \\
    & c \geqslant 0.
\end{align}

\begin{mylem}
{Dual Formulation Triple-Twirling Invariance}
{Probabilistic_Dual_Formulation_TTI} 
The feasible set of the dual SDP $s_{2}^{\text{Dual}}(p)$ in Eq.~\eqref{eq:Probabilistic_Dual_From} is invariant under triple twirling: 
if
\begin{align}
    \{P_{A}, Q_{A}, X_{AC}, Y_{AB}\}
\end{align}
is feasible, then so is 
\begin{align}
    \{\mT(P_{A}), \mT(Q_{A}), \mT(X_{AC}), \mT(Y_{AB})\},
\end{align}
where $\mT$ is the triple-twirling defined in Eq.~\eqref{eq:Twirling}.
\end{mylem}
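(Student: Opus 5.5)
The plan mirrors the proof of Lem.~\ref{lem:Dual_Formulation_TTI}. We first show that a single unitary conjugation preserves feasibility, and then average over the Haar measure. Fix a unitary $U$ and a feasible tuple $\{P_A,Q_A,X_{AC},Y_{AB}\}$ together with its scalar $c$. We form the conjugated tuple
\begin{align}
    \{U^*P_AU^{\T},\; U^*Q_AU^{\T},\; (U^*\otimes U)X_{AC}(U^{\T}\otimes U^{\dagger}),\; (U^*\otimes U)Y_{AB}(U^{\T}\otimes U^{\dagger})\},
\end{align}
keeping the same $c$. Using the convention stated after Lem.~\ref{lem:Dual_Formulation_TTI}, twirling an operator on fewer systems amounts to conjugating it only on the systems where it acts. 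So it suffices to check each constraint of Eq.~\eqref{eq:Probabilistic_Dual_From} under this conjugation.

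The checks are the following.
\begin{enumerate}[(i)]
    \item The trace conditions $\Tr[P_A]=1+c$ and $\Tr[Q_A]=1-c$ are unchanged, because $U^*$ is unitary.
    \item Positivity $P_A,Q_A\geqslant0$ is preserved under unitary conjugation.
    \item For the operator sandwich, conjugate the entire chain
    \begin{align}
        -Q_A\otimes\1_{BC}\leqslant Y_{AB}\otimes\1_C+\1_B\otimes X_{AC}\leqslant P_A\otimes\1_{BC}
    \end{align}
    by the unitary $W:=U^*\otimes U\otimes U$. Löwner order is invariant under $Z\mapsto WZW^{\dagger}$. Each term factorizes correctly, since $W(Y_{AB}\otimes\1_C)W^{\dagger}=(U^*\otimes U)Y_{AB}(U^{\T}\otimes U^{\dagger})\otimes\1_C$, and similarly $W(P_A\otimes\1_{BC})W^{\dagger}=U^*P_AU^{\T}\otimes\1_{BC}$, using $UU^{\dagger}=\1$ on the untouched factors.
\end{enumerate}
We will also record that the objective is invariant, although the lemma only asks about feasibility. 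The reason is that $(U^{\T}\otimes U^{\dagger})\Gamma(U^*\otimes U)=\Gamma$, which follows from $(V\otimes \bar V)\ket{\Gamma}=\ket{\Gamma}$ with $V=U^{\T}$. This invariance is what later allows the restriction to twirled optimal points.

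Next we pass to the Haar average. The feasible set is convex: the constraints are linear equalities and linear matrix inequalities in $(P_A,Q_A,X_{AC},Y_{AB},c)$ for fixed $c$. Each conjugated tuple is feasible with the same $c$, so their Haar average, namely $\{\mT(P_A),\mT(Q_A),\mT(X_{AC}),\mT(Y_{AB})\}$, is also feasible. Integrating the operator inequalities is legitimate because positive semidefinite cones are closed and convex.

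The main obstacle is bookkeeping rather than substance. We need to verify that the marginal-twirl convention of Lem.~\ref{lem:Dual_Formulation_TTI} agrees with averaging the conjugated tuple in the tensor-product constraint. Concretely, $\mT(X_{AC})\otimes\1_B$ must equal the Haar average of $W(X_{AC}\otimes\1_B)W^{\dagger}$. This holds because tracing out the maximally mixed auxiliary system commutes with conjugation by $U$ on that system.
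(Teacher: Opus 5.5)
Your proposal is correct and follows the same route as the paper. Conjugation by $U^*\otimes U\otimes U$ (acting only on the systems each operator lives on) preserves every constraint of the dual SDP $s_{2}^{\text{Dual}}(p)$, the objective is unchanged because $(U^{\T}\otimes U^{\dagger})\Gamma(U^*\otimes U)=\Gamma$, and Haar averaging then gives the twirled tuple. You also spell out steps the paper leaves implicit: the constraint-by-constraint check, the convexity and closedness argument that justifies averaging with $c$ held fixed, and the agreement between the marginal-twirl convention and the tensor-product constraint.
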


\begin{proof}
    Similar to our Lem.~\ref{lem:Dual_Formulation_TTI}, as $\Gamma$ is invariant under isotropic twirling, such that
    \begin{align}
        \left(U^*\otimes U\right)\cdot \Gamma_{AB} \cdot \left(U^{\T}\otimes U^{\dagger}\right)
        = \Gamma_{AB}.
    \end{align}

    If $\{P_{A}, Q_{A}, X_{AC}, Y_{AB}\}$ is a feasible solution, its unitary conjugate under any $U$, namely 
    \begin{align}
        \{
        U^* P_{A} U^{\T}, 
        U^* Q_{A} U^{\T}, 
        \left(U^*\otimes U\right)\cdot X_{AC} \cdot \left(U^{\T}\otimes U^{\dagger}\right),
        \left(U^*\otimes U\right)\cdot Y_{AB} \cdot \left(U^{\T}\otimes U^{\dagger}\right)
        \}
    \end{align}
    is also feasible. 
\end{proof}

As a result, the operators can be replaced by their twirled counterparts, which take the form
\begin{align}
    \mathcal{T}(P_A) &= \frac{1+c}{d} \mathbb{1}_A,\\
    \mathcal{T}(Q_A) &= \frac{1-c}{d} \mathbb{1}_A,\\
    \mathcal{T}(X_{AC}) &= x_1 \mathbb{1}_A + x_2 \Gamma,\\
    \mathcal{T}(Y_{AB}) &= y_1 \mathbb{1}_A + y_2 \Gamma.
\end{align}
The minimal sample complexity is then obtained from the following equivalent formulation.
\begin{align}
    s_{2}^{\text{Dual}}(p)
    =
    \max \quad 
    & 
    p x_1 d + px_2d^2 + p y_1 d + py_2 d^2 - c 
    \\
    \text{s.t.} \quad 
    & 1-c\geqslant 0,\\
    & 1+c\geqslant 0,\\
    & c\geqslant 0,\\
    & - \frac{1-c}{d}\mathbb{1}_{ABC} \leqslant (x_1 + y_1) \mathbb{1}_{ABC} + y_2 \Gamma_{AB} \otimes \mathbb{1}_C + x_2 \mathbb{1}_B \otimes \Gamma_{AC} \leqslant \frac{1+c}{d}\mathbb{1}_{ABC}.
\end{align}

In addition, the problem is symmetric under exchanging the $B$ and $C$ systems, which implies that the objective function remains invariant under the swap $x_2$ with $y_2$.
Consequently, without loss of generality, we may introduce the variables
\begin{align}
    x &:= \frac{x_2+y_2}{2},\\
    \alpha &:= x_1 + y_1,
\end{align}
and operator $M(x,x)$ 
\begin{align}
    M(x,x) = x \qty(\Gamma_{AB} \otimes \mathbb{1}_C + \mathbb{1}_B \otimes \Gamma_{AC})
\end{align}
as defined in Eq.~\eqref{eq:M_xy}.
With these definitions, the dual problem can be simplified to the following form.

\begin{mylem}
{Simplified Dual Formulation of Probabilistic Broadcasting}
{PVB_Simplified_Dual_Formulation} 
The sample complexity $s_{2}^{\text{Dual}}(p)$ associated with probabilistic virtual broadcasting in Eq.~\eqref{eq:pvb_sdp_dual} admits the following simplified dual formulation
\begin{align}\label{eq:pvb_sdp_dual}
    s_{2}^{\text{Dual}}(p)
    = \max \quad 
    &  p(\alpha d + 2x d^2) - c \\
    \text{s.t.} \quad 
    &\qty(\frac{c - 1}{d} - \alpha)\1_{abc}
    \leqslant
    M(x,x)
    \leqslant
    \qty(\frac{c + 1}{d} - \alpha)\1_{abc}, \\ \label{ln:M_bound}
    &0 \leqslant c \leqslant 1.
\end{align}
\end{mylem}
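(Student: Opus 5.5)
The plan is to start from the twirled dual displayed just before the statement, in which $P_A,Q_A$ have already been replaced by $\frac{1+c}{d}\1_A$ and $\frac{1-c}{d}\1_A$. The constraints $1\pm c\geqslant0$ together with $c\geqslant0$ collapse to $0\leqslant c\leqslant1$. What remains to justify is the reduction from the two parameters $(x_2,y_2)$ to a single symmetric value $x$.

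For this I will use a symmetrization argument under the swap $B\leftrightarrow C$. Conjugating the operator inequality by the unitary $\mathrm{SWAP}_{BC}$ maps $\1_{ABC}$ to itself and exchanges $\Gamma_{AB}\otimes\1_C$ with $\1_B\otimes\Gamma_{AC}$. Hence if $(x_1,x_2,y_1,y_2,c)$ is feasible, so is $(y_1,y_2,x_1,x_2,c)$. The objective
\begin{align}
pd(x_1+y_1)+pd^2(x_2+y_2)-c
\end{align}
is invariant under this exchange. The feasible set is convex, being cut out by linear matrix inequalities, and the objective is linear. Averaging the two feasible points therefore gives a feasible point with $x_2=y_2=x$, with $x_1+y_1=\alpha$ preserved, and with the same objective value.

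After averaging, the objective depends only on $x_1+y_1=\alpha$ and on $x_2+y_2=2x$, which gives $p(\alpha d+2xd^2)-c$. In the operator constraint the terms $(x_1+y_1)\1$ combine into $\alpha\1$, and moving $\alpha\1$ to the outer sides yields exactly the sandwich
\begin{align}
\qty(\tfrac{c-1}{d}-\alpha)\1\leqslant M(x,x)\leqslant\qty(\tfrac{c+1}{d}-\alpha)\1 .
\end{align}
Conversely, any feasible $(\alpha,x,c)$ of the reduced problem lifts to a feasible point of the original problem by setting $x_1=y_1=\alpha/2$ and $x_2=y_2=x$. The two optimal values therefore coincide.

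The step needing the most care is the legitimacy of restricting to twirled dual variables in the first place, i.e. Lem.~\ref{lem:Probabilistic_Dual_Formulation_TTI}. I would check it explicitly:
\begin{itemize}
\item The map $\mT$ preserves positivity of $P_A$ and $Q_A$.
\item It preserves their traces, so $\Tr[P_A]=1+c$ and $\Tr[Q_A]=1-c$ still hold.
\item It preserves the objective, because $\Gamma$ is invariant under $U^*\otimes U$.
\end{itemize}
Concavity is not an issue, since the objective is linear and the Haar average of feasible points is feasible by convexity. The remaining manipulations are routine.
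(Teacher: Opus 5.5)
Your proposal is correct and follows essentially the same route as the paper. Both restrict to twirled dual variables via the triple-twirling invariance lemma, then use the $B\leftrightarrow C$ exchange symmetry to set $x_2=y_2=x$ and merge $x_1+y_1=\alpha$. Your convex-averaging argument and explicit converse lift make rigorous what the paper only asserts ``without loss of generality''. The one item to add to your twirling checklist is that conjugation by $U^*\otimes U\otimes U$ also preserves the operator sandwich constraint, which is immediate since it acts factorwise on $P_A\otimes\mathbb{1}_{BC}$, $Y_{AB}\otimes\mathbb{1}_C$ and $\mathbb{1}_B\otimes X_{AC}$.
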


The constraints in Eq.~\eqref{ln:M_bound} is equivalent to
\begin{equation}
    \frac{c-1}{d}-\alpha \leqslant \lambda_{\min}(M(x,x)), \quad\text{and}\quad \lambda_{\max}(M(x,x)) \leqslant \frac{c+1}{d}-\alpha.
\end{equation}
where $\lambda_{\max}(M(x,x))$ and $\lambda_{\min}(M(x,x))$ are the maximal and minimal eigenvalues of $M(x,x)$ respectively.
As discussed in Subsec.~\ref{subsec:M_xy}, $M(x,x)$ has $3$ eigenvalues: $\lambda_-$, $\lambda_+$ and $0$, where $\lambda_\pm$ are given in Eq.~\eqref{eq:lambda_pm} that
\begin{equation}
   \lambda_{\pm}=
    \frac{(x+y)d\pm\sqrt{(x-y)^2d^2+4xy}}{2},
\end{equation}
where $y = x$ in this case.
That is to say $\lambda_\pm = xd\pm |x|$.

In case $x \geqslant 0$, the constraints reduce to
\begin{equation}
    \frac{c-1}{d}-\alpha \leqslant 0, \quad\text{and}\quad x(d+1) \leqslant \frac{c+1}{d}-\alpha.
\end{equation}
In case $x \leqslant 0$, the constraints instead take the form
\begin{equation}
    \frac{c-1}{d}-\alpha \leqslant x(d+1), \quad\text{and}\quad 0 \leqslant \frac{c+1}{d}-\alpha.
\end{equation}
Combining the two cases above, the problem reduces to the following Linear Programming (LP).

\begin{mylem}
{Probabilistic Broadcasting via LP}
{PB_LP} 
The sample complexity $s_{2}^{\text{Dual}}(p)$ associated with probabilistic virtual broadcasting in Eq.~\eqref{eq:PVB_SDP} is determined by the larger of $s_{2,1}^{\text{Dual}}(p)$ and $s_{2,2}^{\text{Dual}}(p)$; that is
\begin{align}
    s_{2}^{\text{Dual}}(p):=\max\{s_{2,1}^{\text{Dual}}(p),s_{2,2}^{\text{Dual}}(p)\}
\end{align}
where $s_{2,1}^{\text{Dual}}(p)$ and $s_{2,2}^{\text{Dual}}(p)$ denote the optimal values of the following two linear programmings 
\begin{equation}
\begin{aligned}
    s_{2,1}^{\text{Dual}}(p):=\max \quad 
    &  \alpha pd + 2pd^2x - c \\
    \text{s.t.} \quad 
    &\frac{c - 1}{d} \leqslant \alpha \\
    &\frac{c + 1}{d} - \alpha \geqslant x (d+1) \geqslant 0 \\
    &0 \leqslant c \leqslant 1.
\end{aligned}
\quad\quad\text{and}\quad\quad
\begin{aligned}
    s_{2,2}^{\text{Dual}}(p):=\max \quad 
    &  \alpha pd + 2pd^2x - c \\
    \text{s.t.} \quad 
    &\frac{c + 1}{d} \geqslant \alpha \\
    &\frac{c - 1}{d} - \alpha \leqslant x (d+1) \leqslant 0 \\
    &0 \leqslant c \leqslant 1.
\end{aligned}
\end{equation}
\end{mylem}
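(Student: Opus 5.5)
The plan is to start from the simplified dual of Lem.~\ref{lem:PVB_Simplified_Dual_Formulation} and turn its operator sandwich into scalar inequalities, using the spectral analysis of Subsec.~\ref{subsec:M_xy}. The argument then splits on the sign of $x$, and the final step notes that maximizing a linear objective over a union of two polyhedra gives the larger of the two maxima.

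\emph{Justifying the reduction to $M(x,x)$.} Before the case split, I would check that restricting to $x_2=y_2$ loses nothing.
\begin{itemize}
\item The twirled dual is invariant under exchanging $B$ and $C$, which swaps $x_2\leftrightarrow y_2$.
\item The feasible set is convex, being an intersection of operator-interval constraints.
\item The objective depends only on $x_2+y_2$.
\item Hence averaging a feasible point with its swapped image is feasible and has the same value.
\item Since $M(x_2,y_2)$ enters linearly, the averaged point gives exactly $M(x,x)$ with $x=(x_2+y_2)/2$.
\end{itemize}
Setting $\alpha=x_1+y_1$, the constraint becomes the pair
\begin{align}
\tfrac{c-1}{d}-\alpha\leqslant\lambda_{\min}(M(x,x)),\qquad \lambda_{\max}(M(x,x))\leqslant\tfrac{c+1}{d}-\alpha.
\end{align}
The constraints $1\pm c\geqslant 0$ together with $c\geqslant0$ collapse to $0\leqslant c\leqslant 1$.

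\emph{Spectrum and the case split.} Setting $y=x$ in Eq.~\eqref{eq:lambda_pm} gives $\lambda_\pm=xd\pm|x|$. The spectrum of $M(x,x)$ is therefore $\{x(d+1),\,x(d-1),\,0\}$. The eigenvalue $0$ genuinely occurs, because its multiplicity is at least $d^3-2d>0$ for $d\geqslant2$. This is the point I would be most careful about: the value $0$ always sits inside the sandwich and fixes one side of each case.
\begin{itemize}
\item If $x\geqslant0$: since $d\geqslant2$, all eigenvalues are nonnegative, so $\lambda_{\min}=0$ and $\lambda_{\max}=x(d+1)$. The constraints become $\frac{c-1}{d}\leqslant\alpha$ and $x(d+1)\leqslant\frac{c+1}{d}-\alpha$, which together with $x\geqslant0$ is exactly the feasible set of $s_{2,1}^{\text{Dual}}(p)$.
\item If $x\leqslant0$: $\lambda_{\max}=0$ and $\lambda_{\min}=x(d+1)$. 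The constraints become $\alpha\leqslant\frac{c+1}{d}$ and $\frac{c-1}{d}-\alpha\leqslant x(d+1)$, with $x\leqslant0$, which is the feasible set of $s_{2,2}^{\text{Dual}}(p)$.
\end{itemize}
In both cases the middle eigenvalue $x(d-1)$ lies between $0$ and $x(d+1)$, so it imposes nothing extra.

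\emph{Conclusion.} The feasible region of Lem.~\ref{lem:PVB_Simplified_Dual_Formulation} in the variables $(\alpha,x,c)$ is the union of these two polyhedra, which share the hyperplane $x=0$. The objective $\alpha pd+2pd^2x-c$ is the same on both pieces. The supremum over the union is therefore the maximum of the two LP optima, which gives $s_2^{\text{Dual}}(p)=\max\{s_{2,1}^{\text{Dual}}(p),s_{2,2}^{\text{Dual}}(p)\}$.

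I expect no real obstacle beyond the care points already flagged: the swap-symmetrization must be legitimate, and the zero eigenvalue must be present so that $\lambda_{\min}$ and $\lambda_{\max}$ are identified correctly in each sign case.
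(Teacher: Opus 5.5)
Your proposal is correct and follows the paper's own route. You symmetrize $x_2,y_2$ into $x$ with $\alpha=x_1+y_1$, use the spectrum $\{x(d+1),\,x(d-1),\,0\}$ of $M(x,x)$, split on the sign of $x$ to obtain the two polyhedra, and take the maximum of the objective over their union. Your convexity-plus-swap argument for the symmetrization and your check that the zero eigenvalue is actually present (multiplicity $d^3-2d>0$) make explicit two points the paper only asserts.
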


Since the problem reduces to a linear program, the optimum is attained at the boundary of the feasible region.
For $s_{2,1}^{\text{Dual}}(p)$, maximizing over $x$ gives
\begin{align}
    \max \quad 
    &  \alpha pd - \frac{2pd^2 \alpha}{d+1} + \frac{2pd (c+1)}{d+1} - c \\
    \text{s.t.} \quad 
    &\frac{c - 1}{d} \leqslant \alpha, \\
    &0 \leqslant c \leqslant 1.
\end{align}
The coefficient of $\alpha$ in the objective function is $pd(1-d)/(d+1)\leqslant0$. 
Consequently, the optimal value is attained at the minimal feasible $\alpha$, namely $\alpha=(c-1)/d$.
Substituting this relation simplifies the optimization to
\begin{align}
    \max \quad 
    &  p(c-1) + \frac{4pd}{d+1} - c \\
    \text{s.t.} \quad 
    &0 \leqslant c \leqslant 1.
\end{align}
A final maximization over $c$ yields the optimal value 
\begin{align}
    s_{2,1}^{\text{Dual}}(p)=p\left(\frac{3d-1}{d+1}\right).
\end{align}

For $s_{2,2}^{\text{Dual}}(p)$, maximizing over $x$ leads to
\begin{align}
    \max \quad 
    &  \alpha pd - c, \\
    \text{s.t.} \quad 
    &\alpha \leqslant \frac{c+1}{d}, \\
    &0 \leqslant c \leqslant 1.
\end{align}
Since the objective function increases with $\alpha$, with the optimum is attained at the largest feasible value, i.e., $\alpha=(c+1)/d$. 
Substituting this relation reduces the problem to
\begin{align}
    \max \quad 
    &  -(1-p)c+p, \\
    \text{s.t.} \quad 
    &0 \leqslant c \leqslant 1.
\end{align}
The optimal value is obtained at $c=0$, yielding
\begin{align}
    s_{2,2}^{\text{Dual}}(p)=p.
\end{align}

Now, by Lem.~\ref{lem:PB_LP}, the sample complexity $s_{2}^{\text{Dual}}(p)$ is given by
\begin{align}
    s_{2}^{\text{Dual}}(p)=\max\left\{
    p\left(\frac{3d-1}{d+1}\right), p
    \right\}.
\end{align}
Here $d$ denotes the system dimension. 
For quantum systems one always has $d\geqslant2$, which implies that $(3d-1)/(d+1)>1$. 
Thus, $s_{2,1}^{\text{Dual}}(p)$ is strictly larger than $s_{2,2}^{\text{Dual}}(p)$, i.e., $s_{2,1}^{\text{Dual}}(p)>s_{2,2}^{\text{Dual}}(p)$, leading to the following theorem.

\begin{mythm}{Analytical Solution for Probabilistic Virtual Broadcasting}{Analytical_Solution_PVB}
    The minimal sample complexity $s_{2}^{\text{Dual}}(p)$ for probabilistic virtual broadcasting, characterized by the SDP in Eq.~\eqref{eq:PVB_SDP}, with success probability $p$, takes the closed form
    \begin{equation}\label{eq:Probabilistic_SDP_Opt}
        s_{2}^{\text{Dual}}(p) = p\left(\frac{3d-1}{d+1}\right).
    \end{equation}
    The optimal value therefore scales linearly with the success probability $p$.
\end{mythm}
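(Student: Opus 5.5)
The plan is to prove the closed form by matching an upper bound, obtained from an explicit primal decomposition, with a lower bound, obtained from an explicit dual-feasible point. Weak duality then gives equality, and I do not need to invoke strong duality abstractly.

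\textbf{Upper bound.} I will take an optimal feasible point $\{a^*,b^*,J_1^*,J_2^*\}$ of the exact virtual broadcasting SDP in Eq.~\eqref{eq:SDP_BC}. By Lem.~\ref{lem:SC_EVB} it satisfies $a^*+b^*=(3d-1)/(d+1)$, and, as shown in Subsec.~\ref{subsec:AVB}, also $a^*-b^*=1$. Rescaling by $p$, I set $J_i=pJ_i^*$, $a=pa^*$ and $b=pb^*$. Each constraint of Eq.~\eqref{eq:PVB_SDP} then holds:
\begin{itemize}
\item positivity is preserved;
\item $\1_{BC}\star J_1=pa^*\1_A=a\1_A$, and likewise for $J_2$;
\item the marginals become $p\Gamma_{AB}$ and $p\Gamma_{AC}$;
\item $a-b=p\leqslant 1$.
\end{itemize}
Hence $s_2(p)\leqslant p(3d-1)/(d+1)$.

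\textbf{Lower bound.} I will exhibit one feasible point of the dual in Eq.~\eqref{eq:Probabilistic_Dual_From}, guided by the LP analysis of $s_{2,1}^{\text{Dual}}(p)$. Because the objective coefficient of $c$ there is $p-1\leqslant 0$, the optimum sits at $c=0$. With the choices
\begin{align}
c=0,\quad \alpha=-\tfrac{1}{d},\quad x=\tfrac{2}{d(d+1)},
\end{align}
the dual point is $P_A=Q_A=\1_A/d$, $X_{AC}=\alpha\1/2+x\Gamma_{AC}$ and $Y_{AB}=\alpha\1/2+x\Gamma_{AB}$. Feasibility reduces to the operator sandwich
\begin{align}
-\tfrac{1}{d}\1\leqslant \alpha\1+M(x,x)\leqslant\tfrac{1}{d}\1 .
\end{align}
I will verify this directly from the spectrum of $M(x,x)$ computed in Subsec.~\ref{subsec:M_xy}, namely $\{x(d+1),\,x(d-1),\,0\}$. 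This spectrum lies in $[0,\,2/d]$, so after the shift by $\alpha=-1/d$ the eigenvalues lie in $[-1/d,\,1/d]$, as required. The objective is $p(\alpha d+2xd^2)-c=p\bigl(-1+4d/(d+1)\bigr)=p(3d-1)/(d+1)$. Weak duality then gives $s_2(p)\geqslant p(3d-1)/(d+1)$.

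\textbf{Combining.} The two bounds give $s_2(p)=s_2^{\text{Dual}}(p)=p(3d-1)/(d+1)$. As a consistency check, this agrees with the case analysis of Lem.~\ref{lem:PB_LP}: for $d\geqslant2$ we have $s_{2,1}^{\text{Dual}}(p)=p(3d-1)/(d+1)>p=s_{2,2}^{\text{Dual}}(p)$.

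\textbf{Main obstacle.} The delicate step is ensuring that the symmetry reductions do not lose optimality: the restriction to twirled variables (Lem.~\ref{lem:Probabilistic_Dual_Formulation_TTI}) and the symmetrization $x_2=y_2$ obtained by averaging over the $B\leftrightarrow C$ swap. The explicit-certificate route above sidesteps this, because it only needs one dual-feasible point and one primal-feasible point. Accordingly, the only computation I will actually check carefully is the eigenvalue sandwich for $M(x,x)$, including that the zero eigenspace $V^\perp$ also respects the bounds. It does, since $\alpha=-1/d\in[-1/d,1/d]$.
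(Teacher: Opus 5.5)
Your proof is correct, and it takes a different route from the paper's.

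\textbf{The paper's route.} The paper works entirely on the dual side. It twirls the dual variables and symmetrizes over $B\leftrightarrow C$. It then splits into the cases $x\geqslant0$ and $x\leqslant0$ and solves both linear programs, obtaining $\max\{p(3d-1)/(d+1),\,p\}$. Finally it states that strong duality ``can be verified directly'', without doing so.

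\textbf{Your route.} You sandwich the value between two explicit certificates. The primal point is $p$ times an optimal exact virtual-broadcasting decomposition. The dual point is $c=0$, $\alpha=-1/d$, $x=2/(d(d+1))$, which is exactly the optimizer of the paper's first LP. I checked both ingredients you need:
\begin{itemize}
\item Weak duality holds for Eq.~\eqref{eq:Probabilistic_Dual_From}, via the chain $p\Tr[X\Gamma]+p\Tr[Y\Gamma]-c\leqslant a(1+c)+b(1-c)-c\leqslant a+b$. This uses $P,Q\geqslant0$, $c\geqslant0$ and $a-b\leqslant1$.
\item The spectrum $\{2/d,\;2(d-1)/(d(d+1)),\;0\}$ of $M(x,x)$ lies in $[0,2/d]$, so the shifted operator respects the bounds $\pm1/d$.
\end{itemize}

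\textbf{What your route buys.}
\begin{itemize}
\item It closes the duality gap that the paper leaves open. You therefore determine the primal $s_2(p)$, which is the actual sample-complexity overhead, and not only $s_2^{\text{Dual}}(p)$.
\item It needs neither the argument that the twirling and swap reductions are lossless nor the analysis of the $x\leqslant0$ branch. A single feasible dual point suffices for the lower bound, and the primal point caps everything.
\item It exhibits an optimal protocol explicitly as the rescaled deterministic one. This makes the linearity in $p$ transparent: lowering $p$ never helps.
\end{itemize}

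\textbf{What it costs.}
\begin{itemize}
\item Your upper bound rests entirely on the cited Lem.~\ref{lem:SC_EVB}.
\item You must know the answer in advance. The paper's LP computation derives the dual value from scratch.
\item The paper presents its computation as an alternative route to the $p=1$ result. That claim itself depends on the omitted strong-duality step, and your certificate pair supplies that step only by borrowing the very result one would want to recover.
\end{itemize}
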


The strong duality between the primal formulation $s_{2}(p)$ of probabilistic virtual broadcasting in Eq.~\eqref{eq:PVB_SDP} and the dual formulation $s_{2}^{\text{Dual}}(p)$ in Eq.~\eqref{eq:pvb_sdp_dual} can be verified directly and is therefore omitted. 
Theorem~\ref{thm:Analytical_Solution_PVB} also provides an alternative route to recover the main result of~\cite{PhysRevA.110.012458} as a special case. 
In particular, when the success probability is set to $p=1$, the probabilistic protocol reduces to deterministic virtual broadcasting without error. 
The optimal sample complexity overhead then reduces to $(3d-1)/(d+1)$.

%%%%%%%%%%%%%%%%%%%%%%%%%%%%%%%%%%%%%%%%%%%%%%%%%%%%%%%%%%%%%%%%%%%%%%%%%%%%%%%%%%%%%%%%%%%%%%%%%%%%%%%%%%%%%%%%%%%%%%

\subsection{No Practical Virtual Broadcasting: Even Probabilistically}
\label{subsec:SE_PB}

When virtual protocols are considered, the accompanying increase in sample complexity must also be taken into account. 
Without loss of generality, consider a scenario in which the sender Alice distributes quantum information to two receivers, Bob and Claire. 
Each receiver subsequently performs a local estimation task, characterized by observables $\mO_{B}$ and $\mO_{C}$, respectively. 
Suppose that Bob and Claire are required to satisfy their own error thresholds, more precisely, an $\epsilon_1-\delta_1$ test for Bob and an 
$\epsilon_2-\delta_2$ test for Claire. 
In this case, the total number of copies that Alice must prepare is
\begin{align}
    n_Q:=\max\{n_1, n_2\},
\end{align}
where the individual sample requirements $n_1$ and $n_2$ are given by
\begin{align}\label{eq:n_i}
    n_i=\mO\left(\frac{c^2}{\epsilon_i^2}
    \ln{\frac{1}{\delta_i}}\right),\quad i\in\{1,2\}.
\end{align}
A detailed derivation of these quantities is presented in Subsec.~\ref{subsec:SAnalysis_SComplexity}.

A natural benchmark is provided by the most direct strategy for distributing quantum information. 
In this approach, Alice simply prepares a sufficient number of identical copies of the quantum state and sends them independently to Bob and Claire. 
Ensuring that both receivers pass their respective $\epsilon-\delta$ tests then requires at least $n_1+n_2$ copies in total. 
For clarity in the discussion below, this naive strategy will be represented by the simplified cost $2n_Q$, which serves as the reference sample complexity against which the virtual protocol will be compared.

We now return to the probabilistic protocol considered here. 
The analysis developed for the deterministic case cannot be applied directly here, and a modification of the results in Subsec.~\ref{subsec:SAnalysis_SComplexity} is required.
To illustrate the idea, consider Bob's side. 
In the case of exact broadcasting, given an initial state $\rho$, Bob's task is to estimate $\Tr[\rho\mO]$. 
In the present setting, however, the quantity $p\Tr[\rho\mO]$ is estimated through a virtual subchannel (see Eq.~\eqref{eq:PVB_SDP}). 
If Bob is still able to pass the $\epsilon-\delta$ test, the event that the estimation error exceeds $\epsilon$ can be written as
\begin{align}
    |\overline{X}-p\Tr[\rho\mO]|\geqslant\epsilon,
\end{align}
where $\overline{X}$ denotes the statistical average of the measurement outcomes obtained from measuring $\Tr[\Tr_C[\mE_{A\to BC}(\rho)]\mO]$ on Bob's subsystem. 
When $p\neq0$, above inequality can be recast as
\begin{align}
    |\frac{1}{p}\overline{X}-\Tr[\rho\mO]|\geqslant\frac{\epsilon}{p}.
\end{align}
If we wish to maintain the error tolerance at $\epsilon$, we can replace the original threshold $\epsilon/p$ with a new parameter $\epsilon$. 
This modification increases the required sample size from
\begin{align}
    \mO\left(\frac{c^2}{\epsilon^2}
    \ln{\frac{1}{\delta}}\right)
\end{align}
to
\begin{align}
    \frac{1}{p^2}\mO\left(\frac{c^2}{\epsilon^2}
    \ln{\frac{1}{\delta}}\right).
\end{align}
On the other hand, $\overline{X}/p$ arises from the following virtual operations.
\begin{align}
    \frac{1}{p}(a\mE_1-b\mE_2).
\end{align}
The subchannels $\mE_1$ and $\mE_2$ can always be extended to the corresponding quantum channels $\mF_1$ and $\mF_2$ by assigning the failure outcome the label 0. 
This procedure leaves the statistical average obtained in the experiment unchanged. 
The measurement process can therefore be regarded as arising from the following operations.
\begin{align}
    \frac{1}{p}(a\mF_1-b\mF_2)=\frac{a+b}{p}\left(\frac{a}{a+b}\mF_1-\frac{b}{a+b}\mF_1\right).
\end{align}
The maximal difference between measurement outcomes is therefore updated from $c$ to $c(a+b)/p$, which in turn yields the following sample complexity required required on Bob's side for the probabilistic protocol.
\begin{align}
    \frac{(a+b)^2}{p^4}\mO\left(\frac{c^2}{\epsilon^2}
    \ln{\frac{1}{\delta}}\right). 
\end{align}
The same argument applies to Claire's side. 
We are now ready to present the following lemma, which characterizes the sample complexity associated with probabilistic virtual broadcasting.

\begin{mylem}
{Sample Complexity of Probabilistic Virtual Protocols}
{SC_PVP}
If Bob and Claire require $n_1$ and $n_2$ copies, respectively, to pass their $\epsilon-\delta$ test, and we denote their maximum by $n_Q$, then under the probabilistic protocol the minimal sample complexity $n_{\text{prob}}(p)$ becomes
\begin{align}
    n_{\text{prob}}(p)=\frac{s_{2}^2(p)}{p^4}n_Q,
\end{align}
where $p$ denotes the success probability and $s_{2}$ is given in Eq.~\eqref{eq:PVB_SDP}.
\end{mylem}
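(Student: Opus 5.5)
The argument combines the two overhead sources identified just before the statement: conditioning on success, and rescaling of outcomes in the quasi-probability decomposition. We then optimize over decompositions using the SDP in Eq.~\eqref{eq:PVB_SDP}.

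\emph{Step 1: fix the estimator.} Fix Bob's side and an optimal feasible point $\{a,b,J_1,J_2\}$ of Eq.~\eqref{eq:PVB_SDP}. By Lems.~\ref{lem:Local_Success_Probability} and~\ref{lem:Global_Success_Probability}, which hold for virtual subchannels as remarked after Eq.~\eqref{eq:PVB_SDP}, the marginal on Bob's side is $p\,\id_{A\to B}$. The marginal on Claire's side is the same. Extend the subchannels $\mE_1,\mE_2$ to channels $\mF_1,\mF_2$ by assigning outcome $0$ to the failure flag; this leaves expectation values unchanged. In each run we sample $\mF_1$ with probability $a/(a+b)$ and $\mF_2$ with probability $b/(a+b)$, measure $\mO_B$, and multiply the outcome by $\pm(a+b)/p$. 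The resulting random variables are i.i.d. Their mean is exactly $\Tr[\rho\mO_B]$, since $\frac1p\Tr[\Tr_C(a\mE_1-b\mE_2)(\rho)\,\mO_B]=\Tr[\rho\mO_B]$. Their range is enlarged from $c$ to $c(a+b)/p$.

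\emph{Step 2: apply Hoeffding.} Lem.~\ref{lem:Hoeffding}, used as in Cor.~\ref{cor:Virtual Sample Cost}, then shows that passing the $\epsilon$--$\delta$ test for $\Tr[\rho\mO_B]$ needs $\frac{(a+b)^2}{p^2}\cdot\frac{c^2}{2\epsilon^2}\ln\frac2\delta$ samples. Add the accuracy rescaling discussed above: the raw statistic estimates $p\Tr[\rho\mO]$, and its error tolerance must shrink by $p$ to certify error $\epsilon$ on $\Tr[\rho\mO]$. This contributes a further factor $1/p^2$, giving $\frac{(a+b)^2}{p^4}n_1$.

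\emph{Step 3: combine the receivers and optimize.} Each run of the broadcasting map produces joint outputs for Bob and Claire from one input copy. A single batch of runs therefore serves both receivers, and the required number is $\max\{\cdot\,n_1,\cdot\,n_2\}=\frac{(a+b)^2}{p^4}n_Q$. Minimizing over all admissible decompositions $\mE=a\mE_1-b\mE_2$ amounts to minimizing $a+b$, which is exactly $s_2(p)$. This yields $n_{\text{prob}}(p)=s_2^2(p)\,n_Q/p^4$.

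\emph{Main obstacle.} The delicate point is justifying the total factor $1/p^4$ rather than $1/p^2$. The two factors of $1/p$ in the range, from outcome rescaling, and the two factors of $1/p$ from tolerance rescaling must be accounted for separately without double counting. I would follow the paper's convention: state explicitly that Hoeffding is applied to the unrescaled statistic $\overline X$ with range $c(a+b)/p$ and tolerance $\epsilon p$. I would also remark that the bound is the Hoeffding-based sufficient count, matching how $n_Q$ itself is defined, so "minimal" refers to optimality over decompositions within this estimation scheme.
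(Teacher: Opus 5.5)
Your route is the same as the paper's. You extend $\mE_1,\mE_2$ to channels with failure recorded as $0$, enlarge the outcome range to $c(a+b)/p$, add a further factor $1/p^2$ from ``tolerance rescaling'', and minimise $a+b$ to get $s_2(p)$. The gap is exactly the step you flagged as the main obstacle. Following the paper's convention does not close it, because the paper's derivation makes the same double count.

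After your Step~1, the i.i.d.\ variables $Y_j=\pm\frac{a+b}{p}\,o_j$ (or $0$ on failure) already have mean exactly $\Tr[\rho\mO_B]$ and range $c(a+b)/p$. Hoeffding with tolerance $\epsilon$ then gives $n\geqslant\frac{(a+b)^2}{p^2}\cdot\frac{c^2}{2\epsilon^2}\ln\frac{2}{\delta}$, and nothing is left to add. The remark that ``the raw statistic estimates $p\Tr[\rho\mO]$, so its tolerance must shrink to $p\epsilon$'' concerns the \emph{unrescaled} average $\overline{X}$. Its range is $c(a+b)$, not $c(a+b)/p$, and pairing that range with tolerance $p\epsilon$ again gives $(a+b)^2/p^2$. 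Only the mixed pairing in your last paragraph---range $c(a+b)/p$ together with tolerance $p\epsilon$---produces $1/p^4$, and it applies the same factor $1/p$ twice. There is also no separate conditioning cost: failures are recorded as $0$, so your estimator never post-selects.

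What your argument actually proves is $n_{\text{prob}}(p)=s_2^2(p)\,n_Q/p^2$. This is also what one should expect. Both marginals of $\frac{1}{p}\mE$ equal $\id$, so $\frac{1}{p}\mE=\frac{a}{p}\mE_1-\frac{b}{p}\mE_2$ is an exact virtual broadcasting map. It is implemented, with the same failure-to-$0$ extension, at outcome rescaling $(a+b)/p$, and the reasoning of Cor.~\ref{cor:Virtual Sample Cost} charges $(a+b)^2/p^2$. With $s_2(p)=p\,(3d-1)/(d+1)$ from Thm.~\ref{thm:Analytical_Solution_PVB}, this equals $((3d-1)/(d+1))^2\,n_Q$, independent of $p$.

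To establish the stated $1/p^4$ you would need a genuinely independent source of a further $1/p^2$, and neither your proposal nor the paper supplies one. The main conclusions built on the lemma survive: at $p=1$ the two counts coincide, and the $p$-independent value already exceeds $2n_Q$, so the 1-to-2 no-go still follows. The claimed deterioration for $p<1$, however, does not survive.
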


Whenever the following condition 
\begin{align}
    n_{\text{prob}}(p)<2n_Q
\end{align}
is satisfied, the corresponding probabilistic virtual broadcasting protocol will be referred to as sample efficient (SE).

Theorem~\ref{thm:Analytical_Solution_PVB} provides a closed-form expression for the optimal sample-complexity overhead of probabilistic virtual broadcasting, which allows $s_{2}^2(p)/p^4$ to be evaluated explicitly. 
Its minimal value is attained at $p=1$, namely the case of deterministic virtual broadcasting. 
Even in this most favorable case, however, the overhead $(3d-1)^2/(d+1)^2$ remains strictly larger than 2, as discussed in Eq.~\eqref{eq:Upper_Bound_No_VBroadcasting}. 
Probabilistic virtual broadcasting therefore never satisfies the SE condition.
The situation becomes even less favorable when the success probability is reduced, as the required overhead increases further. 
This observation leads naturally to the following no Practical probabilistic virtual broadcasting theorem.

\begin{mythm}
{No Practical Virtual Broadcasting: Even Probabilistically}
{No_Practical_Probabilistic_Virtual_Broadcasting} 
1-to-2 probabilistic virtual broadcasting cannot satisfy the sample efficiency (SE) condition and is therefore inherently impractical.
\end{mythm}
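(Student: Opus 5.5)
The plan is to combine Lem.~\ref{lem:SC_PVP} with the closed form of Thm.~\ref{thm:Analytical_Solution_PVB}, and then reduce the SE condition to a one-variable inequality in $p$ and $d$. By Lem.~\ref{lem:SC_PVP}, any 1-to-2 probabilistic virtual broadcasting protocol with success probability $p\in(0,1]$ consumes at least $n_{\text{prob}}(p)=s_2^2(p)\,n_Q/p^4$ copies. This holds because any decomposition $a\mE_1-b\mE_2$ feasible for Eq.~\eqref{eq:PVB_SDP} has $a+b\geqslant s_2(p)$, and the cost $(a+b)^2 n_Q/p^4$ is monotone in $a+b$. The case $p=0$ is excluded, since no information is delivered and the $\epsilon-\delta$ test on $\Tr[\rho\mO]$ cannot be passed.

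Next I would insert the optimal value. Strong duality between Eq.~\eqref{eq:PVB_SDP} and Eq.~\eqref{eq:pvb_sdp_dual} gives $s_2(p)=s_2^{\text{Dual}}(p)=p(3d-1)/(d+1)$. To be safe I would actually only need weak duality: the dual feasible point used in the derivation of $s_{2,1}^{\text{Dual}}(p)$ (namely $c$ chosen optimally, $\alpha=(c-1)/d$, $x=((c+1)/d-\alpha)/(d+1)$) certifies $s_2(p)\geqslant p(3d-1)/(d+1)$. This lower bound is all the no-go statement requires. Substituting it yields
\begin{align}
n_{\text{prob}}(p)\geqslant \frac{1}{p^2}\left(\frac{3d-1}{d+1}\right)^2 n_Q .
\end{align}

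Finally, since $0<p\leqslant1$ gives $1/p^2\geqslant1$, it suffices to show $((3d-1)/(d+1))^2\geqslant2$ for every integer $d\geqslant2$. The map $d\mapsto(3d-1)/(d+1)=3-4/(d+1)$ is increasing, so its minimum over $d\geqslant2$ is $5/3$, and $(5/3)^2=25/9>2$; equivalently, this is the bound in Eq.~\eqref{eq:Upper_Bound_No_VBroadcasting}. Hence $n_{\text{prob}}(p)>2n_Q$ for all $p$ and $d$, so SE fails.

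The main obstacle is not the algebra but justifying the lower bound $s_2(p)\geqslant p(3d-1)/(d+1)$ rigorously, because the paper only asserts strong duality. I would therefore verify explicitly that the twirled dual point above satisfies the operator constraint $-(1-c)/d\,\1\leqslant \alpha\1+M(x,x)\leqslant(1+c)/d\,\1$. This check uses the spectrum $\{xd\pm|x|,0\}$ of $M(x,x)$ from Subsec.~\ref{subsec:M_xy}, and it makes the argument independent of strong duality.
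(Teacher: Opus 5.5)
Your proposal is correct and follows the paper's own argument. Lem.~\ref{lem:SC_PVP} together with $s_2(p)=p(3d-1)/(d+1)$ from Thm.~\ref{thm:Analytical_Solution_PVB} gives $n_{\text{prob}}(p)=\bigl((3d-1)/(d+1)\bigr)^2 n_Q/p^2\geqslant (25/9)\,n_Q>2n_Q$. Your point that only weak duality is needed is a sound tightening of the paper's unproved strong-duality remark: the point $c=0$, $\alpha=-1/d$, $x=2/(d(d+1))$ is dual feasible, since the shifted spectrum $\{1/d,\,(d-3)/(d(d+1)),\,-1/d\}$ lies in $[-1/d,1/d]$, and it attains the value $p(3d-1)/(d+1)$.
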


%%%%%%%%%%%%%%%%%%%%%%%%%%%%%%%%%%%%%%%%%%%%%%%%%%%%%%%%%%%%%%%%%%%%%%%%%%%%%%%%%%%%%%%%%%%%%%%%%%%%%%%%%%%%%%%%%%%%%%

\section{General 1-to-$N$ Quantum Broadcasting}
\label{sec:N_QB}

So far, most of the analysis has focused on 1-to-$2$ virtual broadcasting, in which a single sender distributes quantum information to two receivers. 
This section extends the framework to the general 1-to-$N$ setting, where the sender broadcasts to $N$ receivers, and investigates the corresponding sample complexity overhead in both the approximate and probabilistic regimes. 
In particular, the framework of 1-to-$N$ approximate virtual broadcasting is developed in Subsec.~\ref{subsec:N_AVB}, while 1-to-$N$ probabilistic virtual broadcasting is analyzed in Subsec.~\ref{subsec:N_PVB}.
Some concluding remarks and discussions on virtual broadcasting are presented in Subsec.~\ref{subsec:Closing_Remarks}.

%%%%%%%%%%%%%%%%%%%%%%%%%%%%%%%%%%%%%%%%%%%%%%%%%%%%%%%%%%%%%%%%%%%%%%%%%%%%%%%%%%%%%%%%%%%%%%%%%%%%%%%%%%%%%%%%%%%%%%

\subsection{General 1-to-$N$ Approximate Virtual Broadcasting}
\label{subsec:N_AVB}

In this subsection, we extend the analysis of approximate virtual broadcasting from two receivers to the general setting of $N$ receivers.
We derive the associated sample complexity overhead and compare its performance with the naive strategy of preparing sufficiently many copies and distributing them directly to all receivers.

More precisely, consider a sender Alice who broadcasts her initial quantum state to $N$ receivers, denoted $B_1$, $\ldots$, $B_N$, via a virtual broadcasting map $\mE:A\to B_1\cdots B_N$, where each receiver is allowed an error $\epsilon_i$ with $i\in\{1,\ldots,N\}$. 
Without loss of generality, we assume that the marginal channel from Alice to receiver $B_i$ is a depolarizing channel $\mD_{p_i}$ (see Subsec.~\ref{subsec:Channel_Twirling}). 
\begin{align}
    \mD_{p_i}(\cdot)=p_{i}\frac{\1_{B_1}}{d}\otimes\Tr_{A}[\cdot]+(1-p_i)\id_{A\to B_i}(\cdot),
\end{align}
with
\begin{align}\label{eq:p_i_N}
    p_i=\frac{d^2 \epsilon_i}{d^2-1}, \quad
    \forall\, i\in\{1, \cdots, N\}.
\end{align}
Under this assumption, the overall sample complexity overhead is characterized by

\begin{align}\label{eq:SDP_Approximate_v_N}
    v_N(p_1, \cdots, p_N):=
    \min \quad 
    & a+b\\
    \text{s.t.} \quad 
    &\1_{B_1\cdots B_{i-1}B_{i+1}\cdots B_{N}}\star(J_1-J_2)=
    \frac{p_i}{d}\,\1_{AB_{i}}
    +
    (1-p_i) \Gamma_{AB_{i}},
    \quad i\in\{1,\ldots,N\}
    \label{eq:Approx_C_v_i}\\
    & J_1\geqslant0,\,\, 
    \Tr_{BC}[J_1]=a\,\1_A,\label{eq:J_1_v_N}\\
    &J_2\geqslant0,\,\,
    \Tr_{BC}[J_2]=b\,\1_A,\label{eq:J_2_v_N}\\
    &a-b=1.\label{eq:VB_TP_Approximate_v_N}
\end{align}
Here $\star$ denotes the link product (see Def.~\ref{def:Link_Product}).
Throughout, all systems are taken to have the same dimension $d$, namely, $d=\dim A=\dim B_1=\cdots=\dim B_N$.

Remark that the sample complexity in Eq.~\eqref{eq:SDP_Approximate} can in principle be extended directly to the general 1-to-$N$ broadcasting scenario. 
However, by employing the same symmetry arguments used in Subsec.~\ref{subsec:Marginals}, twirling can be applied to simplify the analysis, replacing the symmetric group $\mathfrak{S}_3$ with $\mathfrak{S}_{N+1}$.
We therefore begin with this simplified formulation of Eq.~\eqref{eq:SDP_Approximate_v_N}, in which the approximation condition is already captured by depolarizing marginals, rather than by the broadcasting fidelity defined in Eq~\eqref{eq:Approx_C}.

We now examine whether practical virtual quantum broadcasting can emerge in the approximate protocol for 1-to-$N$ broadcasting by analyzing the corresponding broadcasting rate. 
Following Subsec.~\ref{subsec:SE_AB}, let $n_i$ denote the number of copies required for receiver $B_i$ to pass the $\epsilon-\delta$ test under quantum operations, and define 
\begin{align}\label{eq:n_Q_N}
    n_Q:=\max_i\{n_i\}.
\end{align}
The broadcasting rate for 1-to-$N$ broadcasting is then given by
\begin{align}
    R(\epsilon_1,\cdots, \epsilon_N):=\frac{F_{\text{Bcast}}(\mE)}{n(\epsilon_1, \cdots, \epsilon_N)}
    =
    \frac{\frac{1}{N}\left((1-\frac{d^2-1}{d^2}p_1)+\cdots+(1-\frac{d^2-1}{d^2}p_N)\right)}{v^2_N(p_1, \cdots, p_N)n_Q}
    =
    \frac{1-\frac{\sum_i \epsilon_i}{N}}{v^2_N(p_1, \cdots, p_N)n_Q}.
\end{align}

\begin{figure}
    \centering
    \includegraphics[width=1\linewidth]{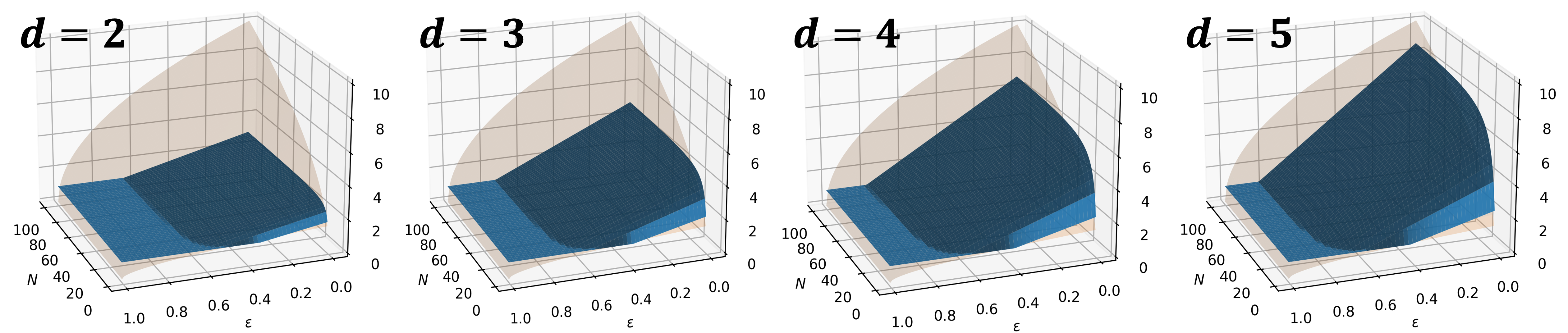}
    \caption{(Color online) \textbf{Sample Efficient Approximate Virtual Broadcasting}.
    For varying receiver number $N$ and uniform error $\epsilon$ across all receivers, we plot the corresponding minimal sample complexity overhead $v_N(p_1, \cdots, p_N)$, given by the left-hand side of Eq.~\eqref{eq:ABC_Numerical_N}, with $p_i=d^2\epsilon/(d^2-1)$ as defined in Eq.~\eqref{eq:p_i_N}, shown in blue, alongside the sample complexity of the naive prepare-and-distribute protocol, shown in grey. The regimes in which sample efficiency is achieved are highlighted in dark blue, for system dimensions $d=2$ to $4$.
    }
    \label{fig:4copy}
\end{figure}

By contrast, the naive protocol consists of Alice preparing $Nn_Q$ copies of the initial state and distributing $n_Q$ copies to each receiver. 
The broadcasting rate $R_{\text{naive}}$ can be expressed as
\begin{align}
    R_{\text{naive}}=\frac{\frac{1}{N}(1+\cdots+1)}{n_Q+\cdots+n_Q}=\frac{1}{Nn_Q}.
\end{align}
Since $n_Q>0$, the existence of a practical advantage in sample efficiency is equivalent to verifying the following inequality
\begin{align}
    \frac{1-\frac{\sum_i \epsilon_i}{N}}{v^2_N(p_1, \cdots, p_N)}
    \geqslant
    \frac{1}{N},
\end{align}
equivalently,
\begin{align}\label{eq:ABC_Numerical_N}
    v_N(p_1, \cdots, p_N)\leqslant
    \sqrt{N-\sum_i \epsilon_i}.
\end{align}
Numerical results for general 1-to-$N$ approximate virtual broadcasting are shown in Fig.~\ref{fig:4copy}.
To facilitate comparison and clearly illustrate Eq.~\eqref{eq:ABC_Numerical_N}, we assume identical error tolerances across all receivers.
We then vary $N$ to determine the regimes in which sample efficiency is achieved.

%%%%%%%%%%%%%%%%%%%%%%%%%%%%%%%%%%%%%%%%%%%%%%%%%%%%%%%%%%%%%%%%%%%%%%%%%%%%%%%%%%%%%%%%%%%%%%%%%%%%%%%%%%%%%%%%%%%%%%

\subsection{General 1-to-$N$ Probabilistic Virtual Broadcasting}
\label{subsec:N_PVB}

We now consider probabilistic virtual broadcasting in the 1-to-$N$ setting. 
Following the approach developed in Sec.~\ref{sec:PQB}, we fix a success probability and analyze the resulting sample complexity overhead. 
More precisely, for a given success probability $p$, this overhead can be characterized by the following SDP

\begin{align}\label{eq:PVB_SDP_N}
    s_{N}(p)
    =
    \min \quad 
    & 
    a+b
    \\
    \text{s.t.} \quad 
    &\1_{B_1\cdots B_{i-1}B_{i+1}\cdots B_{N}}\star (J_1-J_2)=p\Gamma_{AB_i},\\
    &J_1\geqslant0,\,\, \1_{B_1\cdots B_N}\star J_1\leqslant a\1_{A},\\
    &J_2\geqslant0,\,\, \1_{B_1\cdots B_N}\star J_2\leqslant b\1_{A},\\
    &a-b\leqslant1.
\end{align}
The constraint $a-b\leqslant1$ guarantees that the overall virtual operation is trace-nonincreasing (TNI), and hence describes a probabilistic operation.

The technique used in Subsec.~\ref{subsec:PVB_Analytical_Solution} extends directly to the 1-to-$N$ probabilistic virtual broadcasting.
The associated Lagrangian for Eq.~\eqref{eq:PVB_SDP_N} takes the form

\begin{align}
    \mathcal{L} :=& (a+b) + (a-b-1)c \\
    & - \sum_i \Tr_{AB_i}\qty[(\Tr_{B_i\cdots B_{i-1}B_{i+1}\cdots B_N}[J_1 - J_2]) - p \Gamma_{AB_i}) \cdot X_{AB_i}^{(i)}]
    - \Tr[J_1\cdot E]
    - \Tr[J_2\cdot F] \\
    & + \Tr_{A}\qty[ \left(\Tr_{B_1\cdots B_N}[J_1]-a\1_{A} \right) \cdot P_A] +
    \Tr_{A}\qty[ \qty(\Tr_{B_1 \cdots B_N}[J_2]-b\1_{A}) \cdot Q_A ],
\end{align}
where $X_{AB_i}^{(i)}$ are Hermitian operators, $E_{AB_1\cdots B_N}, F_{AB_1\cdots B_N}, P_A$ and $Q_A$ are positive semidefinite, and real number $c \geqslant 0$.
Optimizing the Lagrangian over the primal variables yields the corresponding dual programming:

\begin{align}
    s_{N}^{\text{Dual}}(p)
    =
    \max \quad 
    & 
    p \sum_i \Tr[X_{AB_i} \cdot \Gamma_{AB_i}] - c 
    \\
    \text{s.t.} \quad 
    & \Tr[P_A] = 1 + c, \\
    & \Tr[Q_A] = 1 - c, \\
    & - Q_A \otimes \mathbb{1}_{B_1 \cdots B_N} \leqslant \sum_i \Gamma_{AB_i} \otimes \mathbb{1}_{B_1\cdots B_{i-1}B_{i+1}\cdots B_N} \leqslant P_{A} \otimes \mathbb{1}_{B_1 \cdots B_N}, \\
    & P_A \geqslant 0, \quad Q_A \geqslant 0, \\
    & c \geqslant 0.
\end{align}

Moreover, following an argument similar to that in Lem.~\ref{lem:Probabilistic_Dual_Formulation_TTI}, the dual problem is invariant under unitary twirling.
The operators $P_A, Q_A$ and $X^{(i)}_{AB_i}$ may therefore be replaced by their twirled counterparts without affecting the optimal value.
\begin{align}
    \mathcal{T}(P_A) &= \frac{1+c}{d} \mathbb{1}_A,\\
    \mathcal{T}(Q_A) &= \frac{1-c}{d} \mathbb{1}_A,\\
    \mathcal{T}(X_{AB_i}^{(i)}) &= x_i \mathbb{1}_A + y_i \Gamma_{AB_i}.
\end{align}
Owing to the symmetry of the problem under permutations of the systems $B_i$ and $B_j$, the variables $x_i$ and $y_i$ can be replaced by their averages
\begin{align}
    x &:= \frac{\sum_i x_i}{N},\\
    y &:= \frac{\sum_i y_i}{N}.
\end{align}
The dual problem then reduces to the following simplified form: 

\begin{mylem}
{Simplified Dual Formulation of 1-to-$N$ Probabilistic Virtual Broadcasting}
{VPB_SDF_N} 
The sample complexity $s_{N}^{\text{Dual}}(p)$ 
associated with 1-to-$N$ probabilistic virtual broadcasting, defined by the SDP in Eq.~\eqref{eq:PVB_SDP_N}, admits the following simplified dual formulation:
\begin{align}\label{eq:pvb_sdp_dual_N}
    s_{N}^{\text{Dual}}(p)
    = \max \quad 
    &  p(Ndx + Nd^2y) - c \\
    \text{s.t.} \quad 
    &\qty(\frac{c - 1}{d} - Nx)\1_{abc}
    \leqslant
    yZ
    \leqslant
    \qty(\frac{c + 1}{d} - Nx)\1_{abc}, \\ \label{ln:N_spectral_constraint}
    &0 \leqslant c \leqslant 1,
\end{align}
where the operator $Z$ is defined as
\begin{align}\label{eq:Z}
    Z := \sum_i \Gamma_{AB_i} \otimes \mathbb{1}_{B_1\cdots B_{i-1}B_{i+1}\cdots B_N}.
\end{align}
Here $\Gamma_{AB_i}$ is the unnormalized maximally entangled state (see Eq.~\eqref{eq:UMES}) shared between systems $A$ and $B_i$, $p$ represents the success probability, and $d$ denotes the system dimension, namely $d=\dim A=\dim B_i$ for all $i\in\{1, \cdots, N\}$.
\end{mylem}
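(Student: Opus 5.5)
The plan is to start from the dual program $s_N^{\text{Dual}}(p)$ obtained from the Lagrangian of Eq.~\eqref{eq:PVB_SDP_N}. I read its two-sided operator constraint as
\begin{align}
-Q_A\otimes\1_{B_1\cdots B_N}\leqslant\sum_i X^{(i)}_{AB_i}\otimes\1_{B_1\cdots B_{i-1}B_{i+1}\cdots B_N}\leqslant P_A\otimes\1_{B_1\cdots B_N}.
\end{align}
The Hermitian multipliers $X^{(i)}$ are what appear there, and the $\Gamma_{AB_i}$ in the displayed dual is a typo. I would then reduce this dual by two successive symmetrizations, and finally read off the scalar program.

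\textbf{Step 1: unitary twirling.} For any unitary $U$, conjugate $P_A,Q_A$ by $U^*$ and each $X^{(i)}_{AB_i}$ by $U^*\otimes U$.
\begin{itemize}
\item The trace constraints $\Tr[P_A]=1+c$ and $\Tr[Q_A]=1-c$ and the positivity of $P_A,Q_A$ are unchanged.
\item The two-sided operator inequality is preserved: it is conjugated as a whole by $U^*\otimes U^{\otimes N}$, which fixes every identity term.
\item The objective $p\sum_i\Tr[X^{(i)}\Gamma_{AB_i}]-c$ is unchanged, because $(U^*\otimes U)\Gamma(U^{\T}\otimes U^\dagger)=\Gamma$, exactly as in Lem.~\ref{lem:Probabilistic_Dual_Formulation_TTI}.
\end{itemize}
The feasible set is convex and the objective is linear, so the Haar average of an optimal dual point is again feasible and optimal. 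By the isotropic-twirl computation of Subsec.~\ref{subsec:Channel_Twirling} and Lem.~\ref{lem:Dual_Formulation_TTI}, the averaged point has $P_A=\frac{1+c}{d}\1_A$, $Q_A=\frac{1-c}{d}\1_A$ and $X^{(i)}=x_i\1+y_i\Gamma_{AB_i}$.

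\textbf{Step 2: permutation averaging.} Any permutation $\pi$ of the receivers $B_1,\dots,B_N$ maps a feasible tuple $\{x_i,y_i\}$ to the feasible tuple $\{x_{\pi(i)},y_{\pi(i)}\}$. The reason is that conjugating by the permutation operator on $B_1\cdots B_N$ sends $\Gamma_{AB_i}\otimes\1$ to $\Gamma_{AB_{\pi(i)}}\otimes\1$ and fixes the identity. The objective is a symmetric sum, so it is unchanged. Averaging over $\mathfrak{S}_N$, again by convexity, lets me set $x_i=x$ and $y_i=y$ for all $i$ without loss.

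\textbf{Step 3: substitution.} With these forms, $\Tr[X^{(i)}\Gamma_{AB_i}]=xd+yd^2$, since $\Tr\Gamma=d$ and $\Tr\Gamma^2=d^2$. The objective therefore becomes $p(Ndx+Nd^2y)-c$. The middle operator becomes $Nx\1+yZ$, with $Z$ as in Eq.~\eqref{eq:Z}. Subtracting $Nx\1$ from all three parts of the inequality gives exactly the constraint in Eq.~\eqref{ln:N_spectral_constraint}. Positivity of $Q_A$ together with $c\geqslant0$ yields $0\leqslant c\leqslant1$.

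\textbf{Step 4: converse.} Any scalar point $(x,y,c)$ satisfying the reduced constraints lifts back to a feasible point of the full dual. Hence the two optimal values coincide, which establishes the simplified formulation.

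The only genuinely delicate point is justifying that symmetrization does not lose optimality. This rests on three facts: the feasible set is convex and closed; the objective is linear and invariant under both group actions; and the operator inequality is preserved under conjugation by unitaries that fix the identity. I expect this to be the main thing to state carefully, while the remaining steps are direct computations.
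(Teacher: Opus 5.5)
Your proposal is correct and follows essentially the same route as the paper: derive the dual from the Lagrangian, twirl $P_A$, $Q_A$, $X^{(i)}_{AB_i}$ into $\frac{1\pm c}{d}\1_A$ and $x_i\1+y_i\Gamma_{AB_i}$, average over permutations of the receivers, and substitute to obtain the scalar constraint on $yZ$. You are also right that the $\Gamma_{AB_i}$ in the paper's displayed dual constraint is a typo for $X^{(i)}_{AB_i}$; your explicit converse (lifting scalar feasible points back to the full dual) and your derivation of $0\leqslant c\leqslant 1$ from $Q_A\geqslant 0$ just make explicit what the paper leaves implicit.
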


It is straightforward to verify the strong duality between the primal and dual formulations of the SDP for 1-to-$N$ probabilistic virtual broadcasting.
The constraints in Eq.~\eqref{ln:N_spectral_constraint} are equivalent to
\begin{align}
    &\frac{c - 1}{d} - Nx \leqslant \min\left\{y \lambda_{\min}(Z), y\lambda_{\max}(Z)\right\}, \\
    &\frac{c + 1}{d} - Nx \geqslant \max\left\{y \lambda_{\max}(Z), y\lambda_{\min}(Z)\right\},
\end{align}
where $\lambda_{\min}(Z)$ and $\lambda_{\max}(Z)$ denote the minimum and maximum eigenvalues of the operator $Z$.
Since $Z$, is a sum of positive semidefinite operators, it follows that
\begin{align}
    \lambda_{\min}(Z)\geqslant0.
\end{align}
Furthermore,
\begin{align}
Z\left(\ket{0}_{A}\ket{1}_{B_1}\cdots\ket{1}_{B_N}\right)=\mathbf{0},
\end{align}
as each term $\Gamma_{AB_i} \otimes \mathbb{1}_{B_1\cdots B_{i-1}B_{i+1}\cdots B_N}$ annihilates this state.
This implies $\lambda_{\min}(Z)=0$. 
Here $\mathbf{0}$ denotes the zero vector.
The dual problem therefore reduces to

\begin{mylem}
{1-to-$N$ Virtual Probabilistic Broadcasting via LP}
{PB_LP_N} 
The sample complexity $s_{N}^{\text{Dual}}(p)$ associated with 1-to-$N$ probabilistic virtual broadcasting in Eq.~\eqref{eq:PVB_SDP_N} is determined by the larger of two contributions, $s_{N,1}^{\text{Dual}}(p)$ and $s_{N,2}^{\text{Dual}}(p)$, namely
\begin{align}
    s_{N}^{\text{Dual}}(p):=\max\left\{s_{N,1}^{\text{Dual}}(p),s_{N,2}^{\text{Dual}}(p)\right\}.
\end{align}
Here $s_{N,1}^{\text{Dual}}(p)$ and $s_{N,2}^{\text{Dual}}(p)$ are the optimal values of the following two linear programmings (LPs): 
\begin{equation}
\begin{aligned}
    s_{N,1}^{\text{Dual}}(p):=\max \quad 
    &  pN(dx + d^2y) - c \\
    \text{s.t.} \quad 
    &\frac{c - 1}{d} \leqslant Nx, \\
    &\frac{c + 1}{d} - Nx \geqslant y \lambda_{\max}(Z) \geqslant 0, \\
    &0 \leqslant c \leqslant 1.
\end{aligned}
\quad\quad\text{and}\quad\quad
\begin{aligned}
    s_{N,2}^{\text{Dual}}(p):=\max \quad 
    &  pN(dx + d^2y) - c \\
    \text{s.t.} \quad 
    &\frac{c + 1}{d} \geqslant Nx, \\
    &\frac{c - 1}{d} - Nx \leqslant y \lambda_{\max}(Z) \leqslant 0, \\
    &0 \leqslant c \leqslant 1.
\end{aligned}
\end{equation}
Here $p$ is the success probability, and $d$ the system dimension, with $d=\dim A=\dim B_i$ for all $i\in\{1, \cdots, N\}$.
\end{mylem}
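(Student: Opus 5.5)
The plan is to start from the simplified dual of Lem.~\ref{lem:VPB_SDF_N} and translate its operator sandwich into scalar inequalities on the spectrum of $Z$, using only the two extreme eigenvalues. Since $Z$ in Eq.~\eqref{eq:Z} is Hermitian, $yZ$ is Hermitian for every real $y$. For real scalars $\ell,u$, the condition $\ell\,\1\leqslant yZ\leqslant u\,\1$ is therefore equivalent to
\begin{align}
\ell\leqslant\lambda_{\min}(yZ)\quad\text{and}\quad\lambda_{\max}(yZ)\leqslant u .
\end{align}
Because the spectrum of $yZ$ is $y$ times the spectrum of $Z$, its extreme eigenvalues are $y\lambda_{\min}(Z)$ and $y\lambda_{\max}(Z)$, in an order set by the sign of $y$. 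This yields the $\min$/$\max$ form of the constraints stated just before the lemma, with $\ell=(c-1)/d-Nx$ and $u=(c+1)/d-Nx$. Into this I substitute $\lambda_{\min}(Z)=0$, which was already established from $Z\geqslant0$ and the annihilated vector $\ket{0}_A\ket{1}_{B_1}\cdots\ket{1}_{B_N}$. I also note that $\lambda_{\max}(Z)>0$ is finite, since for instance $\bra{\Gamma}_{AB_1}\otimes\bra{0}\cdots$ gives a positive expectation value. Hence $\lambda_{\max}(Z)$ is simply a fixed positive constant.

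Next I split the feasible set according to the sign of $y$. For $y\geqslant0$ we have $\min\{0,y\lambda_{\max}\}=0$ and $\max\{0,y\lambda_{\max}\}=y\lambda_{\max}$. The constraints then become $(c-1)/d\leqslant Nx$ and $(c+1)/d-Nx\geqslant y\lambda_{\max}(Z)\geqslant0$, where the last inequality encodes $y\geqslant0$ because $\lambda_{\max}(Z)>0$. For $y\leqslant0$ the roles swap: $\min=y\lambda_{\max}$ and $\max=0$. This gives $(c+1)/d\geqslant Nx$ and $(c-1)/d-Nx\leqslant y\lambda_{\max}(Z)\leqslant0$. In both cases the condition $0\leqslant c\leqslant1$ is carried along unchanged, and the objective $pN(dx+d^2y)-c$ is the same.

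Each piece is now a linear program in $(x,y,c)$. The full feasible set is exactly the union of the two pieces, which overlap only on $y=0$. The maximum of a common objective over a union equals the larger of the two maxima, so $s_N^{\text{Dual}}(p)=\max\{s_{N,1}^{\text{Dual}}(p),s_{N,2}^{\text{Dual}}(p)\}$. I do not expect a genuine obstacle. The only point requiring care is the first step: the sandwich condition must be justified as depending only on the extreme eigenvalues, and the sign of $y$ must correctly swap which extreme is $y\lambda_{\max}(Z)$. Both are handled by the diagonalisation argument above; the explicit value of $\lambda_{\max}(Z)$ is not needed for this lemma and is left for the subsequent evaluation of the LPs.
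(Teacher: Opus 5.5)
Your proposal is correct and follows essentially the same route as the paper: you reduce the operator sandwich $\ell\,\1\leqslant yZ\leqslant u\,\1$ to conditions on the extreme eigenvalues of $yZ$, substitute $\lambda_{\min}(Z)=0$, and split on the sign of $y$ to get the two LPs, whose feasible sets together cover the original feasible set. Your explicit check that $\lambda_{\max}(Z)>0$, so that $y\lambda_{\max}(Z)\geqslant0$ really encodes $y\geqslant0$, is a small point the paper leaves implicit.
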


Since the problem reduces to a linear programming, the optimum is attained at the boundary of the feasible region.
For $s_{N,1}^{\text{Dual}}(p)$, maximizing over $y$ yields
\begin{align}
    \max \quad 
    &  pN\qty(dx + d^2 \qty( \frac{c+1}{d\lambda_{\max}(Z)} - x\frac{N}{\lambda_{\max}(Z)})) - c \\
    \text{s.t.} \quad 
    &\frac{c - 1}{d} \leqslant Nx, \\
    &0 \leqslant c \leqslant 1.
\end{align}
The coefficient of $x$ in the objective function is $pdN(1-dN)/\lambda_{\max} \leqslant0$. 
Hence the maximum is attained at the smallest feasible value of $x$, namely $x=(c-1)/{dN}$.
Substituting this relation into the objective gives
\begin{align}
    pN\qty(dx + d^2 \qty( \frac{c+1}{d\lambda_{\max}(Z)} - x\frac{N}{\lambda_{\max}(Z)})) - c
    &= pdN\frac{c-1}{dN} + pd^2N \qty( \frac{c+1}{d\lambda_{\max}(Z)} - \frac{c-1}{dN}\frac{N}{\lambda_{\max}(Z)}) - c \\
    &= p(c-1) + pd^2N \qty( \frac{c+1}{d\lambda_{\max}(Z)} - \frac{c-1}{d\lambda_{\max}(Z)}) - c \\
    &= p(c-1) + \frac{2pdN}{\lambda_{\max}(Z)} - c.
\end{align}
Therefore the optimization problem is equivalent to
\begin{align}
    \max \quad 
    &  c(p-1) - p + \frac{2pdN}{\lambda_{\max}(Z)} \\
    \text{s.t.} \quad 
    &0 \leqslant c \leqslant 1.
\end{align}
A final maximization over $c$ gives the optimal value 
\begin{align}
    s_{N,1}^{\text{Dual}}(p)=p\left(\frac{2dN-\lambda_{\max}(Z)}{\lambda_{\max}(Z)}\right).
\end{align}

For $s_{2,2}^{\text{Dual}}(p)$, maximizing over $y$ leads to
\begin{align}
    \max \quad 
    &  pdNx - c, \\
    \text{s.t.} \quad 
    &Nx \leqslant \frac{c+1}{d}, \\
    &0 \leqslant c \leqslant 1.
\end{align}
Since the objective function increases monotonically with $x$, the optimum is attained at the largest feasible value, i.e., $x=(c+1)/dN$. 
Substituting this relation reduces the problem to
\begin{align}
    \max \quad 
    &  -(1-p)c+p, \\
    \text{s.t.} \quad 
    &0 \leqslant c \leqslant 1.
\end{align}
The maximum is achieved at $c=0$, yielding
\begin{align}
    s_{N,2}^{\text{Dual}}(p)=p.
\end{align}

Combining these results with Lem.~\ref{lem:PB_LP_N}, the sample complexity $s_{N}^{\text{Dual}}(p)$ is therefore given by
\begin{align}
    s_{N}^{\text{Dual}}(p)=\max\left\{
    p\left(\frac{2dN-\lambda_{\max}(Z)}{\lambda_{\max}(Z)}\right), p
    \right\}.
\end{align}
Here $p$ is the success probability, and $d$ denotes the system dimension.

It worth notice that, each term in the sum $Z = \sum_i \Gamma_{AB_i} \otimes \mathbb{1}_{B_1\cdots B_{i-1}B_{i+1}\cdots B_N}$ is positive semidefinite with largest eigenvalue $d$.
As a result, the maximum eigenvalue $\lambda_{\max}(Z)$ of $Z$ is bounded above by $Nd$.
It follows that $(2dN-\lambda_{\max}(Z))/\lambda_{\max}(Z)\geqslant1$, and
\begin{align}
    p\left(\frac{2dN-\lambda_{\max}(Z)}{\lambda_{\max}(Z)}\right)\geqslant p.
\end{align}
Therefore, the minimal sample complexity overhead $s_{N}^{\text{Dual}}(p)$ is obtained as
\begin{align}
    s_{N}^{\text{Dual}}(p)
    =
    \max\left\{s_{N,1}^{\text{Dual}}(p),s_{N,2}^{\text{Dual}}(p)\right\}
    =
    p\left(\frac{2dN-\lambda_{\max}(Z)}{\lambda_{\max}(Z)}\right).
\end{align}

Several questions remain unresolved for 1-to-$N$ probabilistic virtual broadcasting.
In particular, the maximal eigenvalue of the operator $Z$ defined in Eq.~\eqref{eq:Z} is not known, and therefore a closed-form expression for $s_{N}^{\text{Dual}}(p)$ is still lacking.
To make further progress, it is instructive to revisit the analysis in Subsec.~\ref{subsec:PVB_Analytical_Solution}.
There we observed that, for probabilistic quantum broadcasting implemented solely with quantum channels, the case $p=1$, which reduces to the deterministic setting, reveals that the sample complexity of exact broadcasting appears as the coefficient multiplying $p$. 
This observation provides an important clue for resolving the present problem.
Fortunately, a recent work~\cite{yao2025quantifyingunextendibilityvirtualstate} establishes a closed-form solution for general 1-to-$N$ virtual broadcasting in the exact regime. 
This result is summarized in the following lemma.

\begin{mylem}
{Sample Complexity of 1-to-$N$ Exact Virtual Broadcasting~\cite{yao2025quantifyingunextendibilityvirtualstate}}
{SC_EVB_N}
For systems of dimension $d$, with $\dim A=\dim B_i=d$ for all $i\in\{1, \cdots, N\}$, the sample complexity of 1-to-$N$ exact virtual broadcasting defined as $v_N(0, \cdots, 0)$ (see Eq.~\eqref{eq:SDP_Approximate_v_N}), whose optimal value is given by
\begin{align}
    v_N(0, \cdots, 0)=\frac{2dN}{N+d-1}-1.
\end{align}
\end{mylem}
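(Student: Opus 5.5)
The plan is to obtain the lemma as the $p=1$ specialization of the 1-to-$N$ probabilistic analysis above, so that everything reduces to computing a single eigenvalue. Two facts are already in hand. First, the probabilistic dual was solved in closed form:
\begin{align}
s_N^{\text{Dual}}(p)=p\,\frac{2dN-\lambda_{\max}(Z)}{\lambda_{\max}(Z)},
\end{align}
with $Z$ defined in Eq.~\eqref{eq:Z}. Second, strong duality was asserted for that SDP. It therefore suffices to do two things: show that $v_N(0,\ldots,0)=s_N(1)$, and prove that $\lambda_{\max}(Z)=N+d-1$. Substituting the eigenvalue gives
\begin{align}
\frac{2dN-(N+d-1)}{N+d-1}=\frac{2dN}{N+d-1}-1,
\end{align}
which is the claim.

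For the identification $v_N(0,\ldots,0)=s_N(1)$, note first that every feasible point of $v_N(0,\ldots,0)$ is feasible for $s_N(1)$, so $s_N(1)\leqslant v_N(0,\ldots,0)$. For the converse, take a feasible point of $s_N(1)$ and apply the $(N+1)$-fold twirl $U^*\otimes U^{\otimes N}$, exactly as in Thm.~\ref{thm:Twirling_3}. This preserves feasibility and the objective, and afterwards $\Tr_{B_1\cdots B_N}[J_k]$ is proportional to $\1_A$, say $a'\1_A$ and $b'\1_A$ with $a'\leqslant a$ and $b'\leqslant b$. Tracing a marginal constraint at $p=1$ gives $a'-b'=1$. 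Hence $\{a',b',\mT(J_1),\mT(J_2)\}$ is feasible for $v_N(0,\ldots,0)$ with cost $a'+b'\leqslant a+b$, which gives the reverse inequality.

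The main step is the spectral computation. I would factor $Z=WW^\dagger$, where
\begin{align}
W:=\bigoplus_{i=1}^N\left(\ket{\Gamma}_{AB_i}\otimes\1_{\text{rest}_i}\right)
\end{align}
maps $\bigoplus_i(\mathbb{C}^d)^{\otimes(N-1)}$ into $\mH_A\otimes\mH_{B_1}\otimes\cdots\otimes\mH_{B_N}$. Then $\lambda_{\max}(Z)=\|W^\dagger W\|$. The Gram operator $W^\dagger W$ has diagonal blocks $\langle\Gamma|\Gamma\rangle\,\1=d\,\1$. Its off-diagonal blocks $(i,j)$ with $i\neq j$ are obtained by contracting $\bra{\Gamma}_{AB_i}$ with $\ket{\Gamma}_{AB_j}$, which, as a tensor-network identity, leaves a relabeling (permutation) unitary $T_{ij}$ between the two copies of the remaining $N-1$ systems. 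Thus $W^\dagger W=d\,\1+K$, where $K$ is a block matrix with unitary off-diagonal entries and zero diagonal. Each block row of $K$ contains $N-1$ unitaries, so a Schur-test (block triangle) bound gives $\|K\|\leqslant N-1$, and hence $\lambda_{\max}(Z)\leqslant N+d-1$.

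For the matching lower bound, I would use the vector $\bigoplus_i\ket{0}^{\otimes(N-1)}$. Each $T_{ij}$ fixes $\ket{0}^{\otimes(N-1)}$, so this vector is an eigenvector of $W^\dagger W$ with eigenvalue $d+N-1$. As a sanity check, the case $N=2$ reproduces $\lambda_+=d+1$ from Subsec.~\ref{subsec:M_xy}.

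I expect the main obstacle to be identifying the off-diagonal blocks $T_{ij}$ precisely as unitaries. In particular, I need to confirm that no stray factors of $d$ arise from the unnormalized $\Gamma$, and that the partial transposes implicit in the contraction still yield permutations rather than partially transposed operators. The remaining steps are routine once that is settled.
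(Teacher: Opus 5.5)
Your proof is correct, but it runs in the opposite logical direction from the paper. The paper does not prove this lemma: it imports the value $\frac{2dN}{N+d-1}-1$ from the cited work. It then uses that value, together with the asserted identity $s_N^{\text{Dual}}(1)=v_N(0,\ldots,0)$, to \emph{deduce} $\lambda_{\max}(Z)=N+d-1$, which it had flagged as unknown. You instead compute $\lambda_{\max}(Z)$ directly and feed it into the paper's dual LP.

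The step you were worried about goes through cleanly. By the zig-zag identity, $(\bra{\Gamma}_{AB_i}\otimes\1_{B_j})(\1_{B_i}\otimes\ket{\Gamma}_{AB_j})=\sum_m\ket{m}_{B_j}\bra{m}_{B_i}$. So each off-diagonal block of $W^\dagger W$ is exactly the relabeling unitary $B_i\to B_j$, with no factor of $d$ and no leftover transpose. This gives $W^\dagger W=d\1+K$ with $\|K\|\leqslant N-1$, and the bound is attained by $\bigoplus_i\ket{0}^{\otimes(N-1)}$.

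The remaining steps also hold. Your twirling argument for $v_N(0,\ldots,0)=s_N(1)$ is sound; indeed $\Tr_{B_1\cdots B_N}[J_1-J_2]=\1_A$ holds even before twirling. It supplies a justification the paper only asserts. The strong duality you borrow is genuine as well: $X^{(i)}=0$, $c=\tfrac12$, $P_A=\tfrac{3}{2d}\1_A$, $Q_A=\tfrac{1}{2d}\1_A$ is strictly feasible for the dual, and the dual value is finite.

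What your route buys is a self-contained, non-circular derivation of both this lemma and the closed form in Thm.~\ref{thm:Analytical_Solution_PVB_N}, with no external input. What the paper's citation buys is an independent confirmation from a separate analysis. Your value, by contrast, stands or falls with the dual LP reduction, including the condition $\lambda_{\max}(Z)\leqslant Nd$ that fixes the sign of the $x$-coefficient; $N+d-1$ satisfies it.
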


By Lem.~\ref{lem:SC_EVB_N}, setting $p=1$ recovers exact virtual broadcasting, and $s_{N}^{\text{Dual}}(p)$ satisfies
\begin{align}
    s_{N}^{\text{Dual}}(1)=v_N(0, \cdots, 0),
\end{align}
which is equivalent to
\begin{align}
    \frac{2dN-\lambda_{\max}(Z)}{\lambda_{\max}(Z)}
    =
    \frac{2dN}{N+d-1}-1.
\end{align}
This relation implies
\begin{align}
    \lambda_{\max}(Z)=N+d-1,
\end{align}
thus resolving the open question identified above.
We can now obtain the analytic solution for 1-to-$N$ probabilistic virtual broadcasting, which is characterized by the following theorem.

\begin{mythm}{Analytical Solution for 1-to-$N$ Probabilistic Virtual Broadcasting}{Analytical_Solution_PVB_N}
    The minimal sample complexity $s_{N}^{\text{Dual}}(p)$ (see Eq.~\eqref{eq:pvb_sdp_dual_N}) for 1-to-$N$ probabilistic virtual broadcasting with success probability $p$ is given by
    \begin{equation}\label{eq:Probabilistic_SDP_Opt_N}
        s_{N}^{\text{Dual}}(p) = p\left(\frac{2dN}{N+d-1}-1\right).
    \end{equation}
Here $d$ denotes the system dimension, namely $d=\dim A=\dim B_i$ for all $i\in\{1, \cdots, N\}$.
\end{mythm}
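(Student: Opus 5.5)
The preceding reduction already gives $s_{N}^{\text{Dual}}(p)=p\bigl(2dN-\lambda_{\max}(Z)\bigr)/\lambda_{\max}(Z)$, with $Z=\sum_i\Gamma_{AB_i}\otimes\1$. My plan is therefore to compute $\lambda_{\max}(Z)=N+d-1$ directly. I would rather not infer this value from Lem.~\ref{lem:SC_EVB_N} by matching at $p=1$, because that inference presupposes $s_N^{\text{Dual}}(1)=v_N(0,\ldots,0)$. The final formula then follows by substitution, together with strong duality $s_N=s_N^{\text{Dual}}$. Strong duality I would justify by Slater's condition: on the dual side, take $y=0$, $c=1/2$ and $x$ strictly inside its interval, which makes every inequality strict.

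For the spectrum of $Z$, the first step is to rewrite each term via the partial transpose on $A$. We have $\Gamma_{AB_i}^{\T_A}=F_{AB_i}$, the swap operator. Conjugating by the transpose on $A$ is not unitary, so the argument should work on the Choi side instead. Note that $\Gamma_{AB_i}=d\,\phi^+_{AB_i}$ is $d$ times a rank-one projector. Hence $Z=d\sum_i \ketbra{\phi^+}{\phi^+}_{AB_i}\otimes\1$, and its nonzero spectrum coincides with that of the $N\times N$ block Gram operator $G$ built from $K_i:=\sqrt d\,\bra{\phi^+}_{AB_i}$, since $Z=\sum_i K_i^\dagger K_i$.

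The next step is to evaluate the Gram entries. A direct computation gives $K_iK_i^\dagger=d\,\1$ on the remaining $N-1$ output systems. For $i\neq j$, $K_iK_j^\dagger$ is the operator that identifies system $B_j$ with $B_i$ through the maximally entangled state; it equals $\Pi_{ij}$, the relabelling map, which has norm $1$. So $G=d\,\1+\mathcal{P}$, where $\mathcal{P}$ collects the off-diagonal relabelling blocks. The largest eigenvalue should then come from the symmetric subspace: on product vectors $\ket{\psi}^{\otimes N}$, arranged so that every off-diagonal block acts as the identity, each row picks up $d+(N-1)$. This gives the lower bound $\lambda_{\max}(Z)\geqslant N+d-1$. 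Concretely, the witness is $\ket{w}=\sum_i\ket{\Gamma}_{AB_i}\otimes\ket{\psi}^{\otimes(N-1)}$ with $\ket{\psi}$ fixed, and checking $Z\ket{w}=(N+d-1)\ket{w}$ is routine.

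The main obstacle is the matching upper bound $\lambda_{\max}(Z)\leqslant N+d-1$, since the off-diagonal blocks do not commute with one another. I would first use the unitary symmetry $U^*\otimes U^{\otimes N}$ and the $\mathfrak{S}_N$ symmetry of $Z$, then decompose via Schur--Weyl duality on $B_1\cdots B_N$. After partial transposition on $A$, $Z^{\T_A}=\sum_iF_{AB_i}$ is a sum of transpositions in $\mathfrak{S}_{N+1}$, but this is a different operator from $Z$, and I need to be careful. The cleaner route is $Z=\sum_i K_i^\dagger K_i$, which is unitarily equivalent (on its support) to the Gram operator $G$ acting on $\mathbb{C}^N\otimes(\mathbb{C}^d)^{\otimes N-1}$. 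Its off-diagonal part is a sum of permutation-type operators of norm one. In each $\mathfrak{S}_N$ irrep the relabelling part acts like $\sum_{i\neq j}\Pi_{ij}$ restricted to a standard-type representation, and this is maximized on the trivial (symmetric) component with value $N-1$. If that bookkeeping becomes too heavy, the fallback is to invoke Lem.~\ref{lem:SC_EVB_N} exactly as in the text: $p=1$ reduces the problem to exact 1-to-$N$ virtual broadcasting, so $\lambda_{\max}(Z)=N+d-1$. Substituting then yields $s_N(p)=p\bigl(2dN/(N+d-1)-1\bigr)$, as stated in Thm.~\ref{thm:MT-PBC-Optimal-Sample-Complexity}.
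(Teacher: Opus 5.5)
Your proposal is correct, but its central step differs from the paper's. The paper never computes the spectrum of $Z$. Having reached $s_N^{\text{Dual}}(p)=p\,(2dN-\lambda_{\max}(Z))/\lambda_{\max}(Z)$, it back-solves $\lambda_{\max}(Z)=N+d-1$ by equating the $p=1$ value with the imported exact 1-to-$N$ result, Lem.~\ref{lem:SC_EVB_N}.

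You compute $\lambda_{\max}(Z)$ directly, and both halves are sound. Your $\ket{w}$ is a genuine eigenvector: each cross term obeys $\Gamma_{AB_j}\ket{\Gamma}_{AB_i}\ket{\psi}_{B_j}\otimes\cdots=\ket{\Gamma}_{AB_j}\ket{\psi}_{B_i}\otimes\cdots$, i.e.\ it reproduces the $j$-th summand. Your Gram form $G=KK^\dagger=d\1+\mathcal{P}$, with unitary relabelling blocks $\Pi_{ij}$, is also right.

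The step you flag as the main obstacle is only a Cauchy--Schwarz estimate on what you already have. For $v=(v_1,\ldots,v_N)$,
\begin{align*}
\langle v,\mathcal{P}v\rangle=\sum_{i\neq j}\langle v_i,\Pi_{ij}v_j\rangle\leqslant\sum_{i\neq j}\|v_i\|\,\|v_j\|=\Bigl(\sum_i\|v_i\|\Bigr)^2-\sum_i\|v_i\|^2\leqslant(N-1)\sum_i\|v_i\|^2,
\end{align*}
so $\lambda_{\max}(Z)=\lambda_{\max}(G)\leqslant N+d-1$. Use this instead of the $\mathfrak{S}_N$-irrep sketch, which as written only asserts that the symmetric component dominates. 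Drop the fallback too: it is exactly the paper's proof and would undercut your reason for going direct.

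What your route buys is self-containment. It needs neither the cited result nor the identification $s_N(1)=v_N(0,\ldots,0)$, which the paper uses without comment. That identification does hold: the slack in the constraints on $\1_{B_1\cdots B_N}\star J_k$ can be absorbed by adding a common positive term $R_A\otimes\sigma$ to $J_1$ and $J_2$ without increasing $a+b$. Combined with it, your computation re-derives Lem.~\ref{lem:SC_EVB_N} rather than relying on it. The paper's route is shorter, but it is only as strong as the imported lemma.

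Your Slater point also lifts to the operator dual as $P_A=\frac{3}{2d}\1$, $Q_A=\frac{1}{2d}\1$, $X^{(i)}_{AB_i}=0$, $c=\frac12$. This supplies the strong duality that the paper leaves as ``straightforward to verify''.
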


\begin{figure}
    \centering
    \includegraphics[width=1\linewidth]{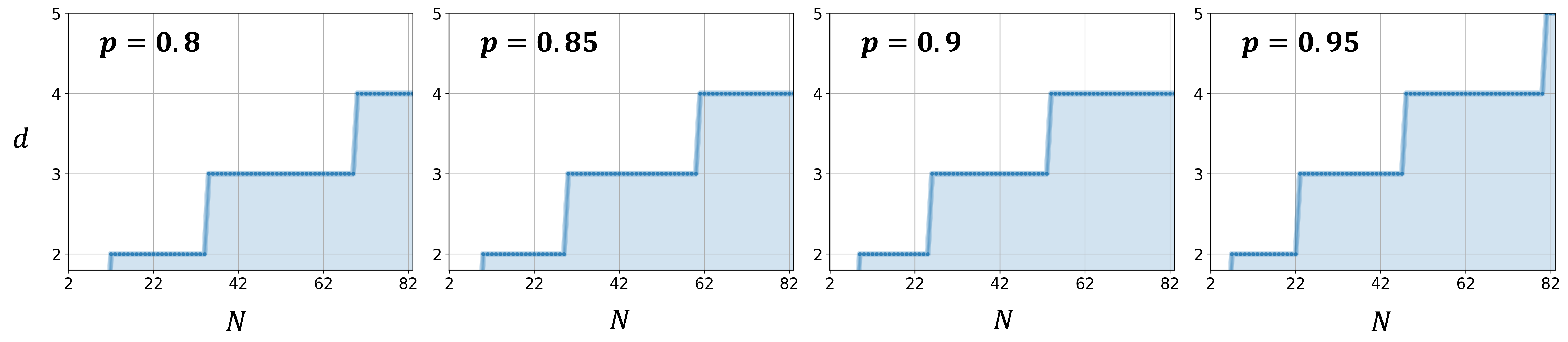}
    \caption{(Color online) \textbf{Sample Efficient Probabilistic Virtual Broadcasting}.
    For a given success probability $p$ from 0.8 to 0.95, varying the system dimension $d$ leads to different minimal scales $N$ required for 1-to-$N$ probabilistic virtual broadcasting to become practical, that is, to satisfy the sample efficiency condition of Eq.~\eqref{eq:n_prob_N}.
    }
    \label{fig:prob4copy}
\end{figure}

We now examine whether 1-to-$N$ probabilistic virtual broadcasting can satisfy the condition of sample efficiency (SE). 
To this end, we compare the required sample complexity with that of the naive strategy.
Suppose that receiver $B_i$ requires $n_i$ copies of the initial state in order to pass their $\epsilon-\delta$ test, and denote the maximum among them by $n_Q$ (see Eq.~\eqref{eq:n_Q_N}).
In the naive protocol, Alice prepares $Nn_Q$ copies of the initial state and distributes $n_Q$ copies to each receiver, ensuring that all receivers pass their $\epsilon-\delta$ tests. 
The discussion then turns to the probabilistic virtual broadcasting protocol.
The modification of the measurement outcomes rescales the coefficient $c$ in Eq.~\eqref{eq:c} by the factor $s^2_{N}(p)/p^2$, where $s^2_{N}(p)$ corresponds to the optimal value of the SDP in Eq.~\eqref{eq:PVB_SDP_N}. 
Accounting for the success probability $p$ of the protocol introduces an additional overhead $1/p^2$, so that the overall sample complexity reads
\begin{align}
    n_{\text{prob}}(p):=\frac{s^2_{N}(p)n_Q}{p^4}.
\end{align}
A probabilistic protocol is regarded as more efficient than the naive protocol whenever the number of samples required for its implementation is smaller than that required by the naive strategy, namely,
\begin{align}\label{eq:n_prob_N}
    n_{\text{prob}}(p)<Nn_Q.
\end{align}
Since $n_Q>0$, the problem reduces to determining when $N-(s^2_{N}(p)/p^4)>0$. 
Using Thm.~\ref{thm:Analytical_Solution_PVB_N}, the coefficient of $n_{\text{prob}}(p)$ can be computed directly as
\begin{align}
    \frac{s^2_{N}(p)}{p^4}
    =
    \frac{1}{p^2}\left(\frac{(2N-1)d-(N-1)}{d+(N-1)}\right)^2
    .
\end{align}
Here we consider only the case of nonzero success probability.
The SE condition for probabilistic virtual broadcasting, namely that this quantity be smaller than $N$, can thus be written as
\begin{align}\label{eq:prob_SE_constraint}
    -\frac{(N-1)(p\sqrt{N}-1)}{2N+p\sqrt{N}-1}< 
    d<
    \frac{(N-1)(p\sqrt{N}+1)}{2N-p\sqrt{N}-1}.
\end{align}
Remark that we implicitly assume $N\geqslant2$, as the broadcasting task involves at least two receivers.
Equation~\eqref{eq:prob_SE_constraint} leads to the following corollary.

\begin{mycor}
{Practical 1-to-$N$ Probabilistic Virtual Broadcasting}
{Practical_PVB_N}
A 1-to-$N$ probabilistic virtual broadcasting protocol with success probability $p$ can be sample efficient (SE) only if the system dimension satisfies
\begin{align}
    -\frac{(N-1)(p\sqrt{N}-1)}{2N+p\sqrt{N}-1}< 
    d<
    \frac{(N-1)(p\sqrt{N}+1)}{2N-p\sqrt{N}-1}.
\end{align}
\end{mycor}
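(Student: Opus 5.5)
The plan is to start from the sample-efficiency criterion Eq.~\eqref{eq:n_prob_N} and rewrite it, using Thm.~\ref{thm:Analytical_Solution_PVB_N}, as an explicit inequality in $d$. The whole argument is elementary algebra; the only care needed is with signs when taking square roots and dividing.

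First, since $n_Q>0$ and $p>0$, the condition $n_{\text{prob}}(p)<Nn_Q$ is equivalent to $s_N^2(p)/p^4<N$. Substituting $s_N(p)=p\bigl(\tfrac{2dN}{N+d-1}-1\bigr)$ gives
\begin{align}
    \frac{1}{p^2}\left(\frac{(2N-1)d-(N-1)}{d+N-1}\right)^2<N .
\end{align}
Multiplying through by $p^2>0$ and taking square roots is legitimate because both sides are nonnegative. We also have $d+N-1>0$. Together these turn the inequality into
\begin{align}
    \bigl|(2N-1)d-(N-1)\bigr|<p\sqrt{N}\,(d+N-1),
\end{align}
which is equivalent to the pair of linear inequalities
\begin{align}
    -p\sqrt{N}(d+N-1)<(2N-1)d-(N-1)<p\sqrt{N}(d+N-1).
\end{align}

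Next, I solve each side for $d$.

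\textbf{Right inequality.} Collecting terms gives $d\,(2N-1-p\sqrt{N})<(N-1)(1+p\sqrt{N})$. Dividing by $2N-1-p\sqrt{N}$ preserves the direction only if this coefficient is positive. That holds because $p\leqslant1$ and $N\geqslant2$ give $2N-1-p\sqrt{N}\geqslant 2N-1-\sqrt{N}>0$; indeed $2N-1>\sqrt{N}$ for all $N\geqslant1$. This yields the upper bound $d<\tfrac{(N-1)(p\sqrt{N}+1)}{2N-p\sqrt{N}-1}$.

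\textbf{Left inequality.} Collecting terms gives $d\,(2N-1+p\sqrt{N})>(N-1)(1-p\sqrt{N})$. The coefficient $2N-1+p\sqrt{N}$ is clearly positive, so dividing yields $d>-\tfrac{(N-1)(p\sqrt{N}-1)}{2N+p\sqrt{N}-1}$.

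Combining the two bounds gives exactly the window stated in Cor.~\ref{cor:Practical_PVB_N}, and hence the necessary condition for sample efficiency.

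The main point to watch is the sign bookkeeping rather than any deep step: the square-root step needs $d+N-1>0$, and the division in the right-hand inequality needs $2N-1-p\sqrt{N}>0$. Both checks use the standing assumptions $N\geqslant2$ and $0<p\leqslant1$, which is why those assumptions are invoked explicitly. Since every step is an equivalence, one could also read the argument as an ``if and only if''. The corollary, however, only asserts the necessary direction, and the derivation above establishes it directly.
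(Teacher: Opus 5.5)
Your proposal is correct and follows the same route as the paper: substitute the closed form of Thm.~\ref{thm:Analytical_Solution_PVB_N} into $s_N^2(p)/p^4<N$, take square roots, and solve the two resulting linear inequalities in $d$, and your sign checks ($d+N-1>0$ and $2N-1-p\sqrt{N}>0$) supply steps the paper leaves implicit. One minor slip: $2N-1>\sqrt{N}$ is strict only for $N>1$, with equality at $N=1$, which is harmless since you assume $N\geqslant2$.
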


The above corollary highlights a key distinction between conventional quantum broadcasting and the practical broadcasting framework considered here, where sample efficiency (SE) plays a central role.
In standard quantum information theory, broadcasting is typically studied in the single-shot setting and restricted to quantum channels. In this regime, the impossibility of 1-to-2 broadcasting immediately implies the impossibility of 1-to-$N$ broadcasting for any $N\geqslant2$.
Indeed, if a 1-to-$N$ broadcasting map existed, one could simply trace out the additional output systems to obtain a 1-to-2 broadcasting map. 
This elementary fact can be summarized as 
\begin{align}\label{eq:Implies_Quantum_BC}
    \nexists\,\,\text{1-to-2 Quantum Broadcasting}
    \implies
    \nexists\,\,\text{1-to-$N$ Quantum Broadcasting}.
\end{align}
This conclusion no longer holds once sample complexity requirements, captured by the notion of sample efficiency (SE), are taken into account. 
In Thm.~\ref{thm:No_Practical_Probabilistic_Virtual_Broadcasting}, we proved that no 1-to-2 probabilistic virtual broadcasting protocol exists for any success probability, including the deterministic case $p=1$. 
However, as will be shown later through numerical experiments of Cor.~\ref{cor:Practical_PVB_N},
probabilistic virtual broadcasting can exist when $N$ is sufficiently large.
Figure~\ref{fig:prob4copy} illustrates the minimal $N$ required to guarantee sample efficiency across different success probabilities and system dimensions.

\begin{figure}[t]
    \centering
    \includegraphics[width=1\linewidth]{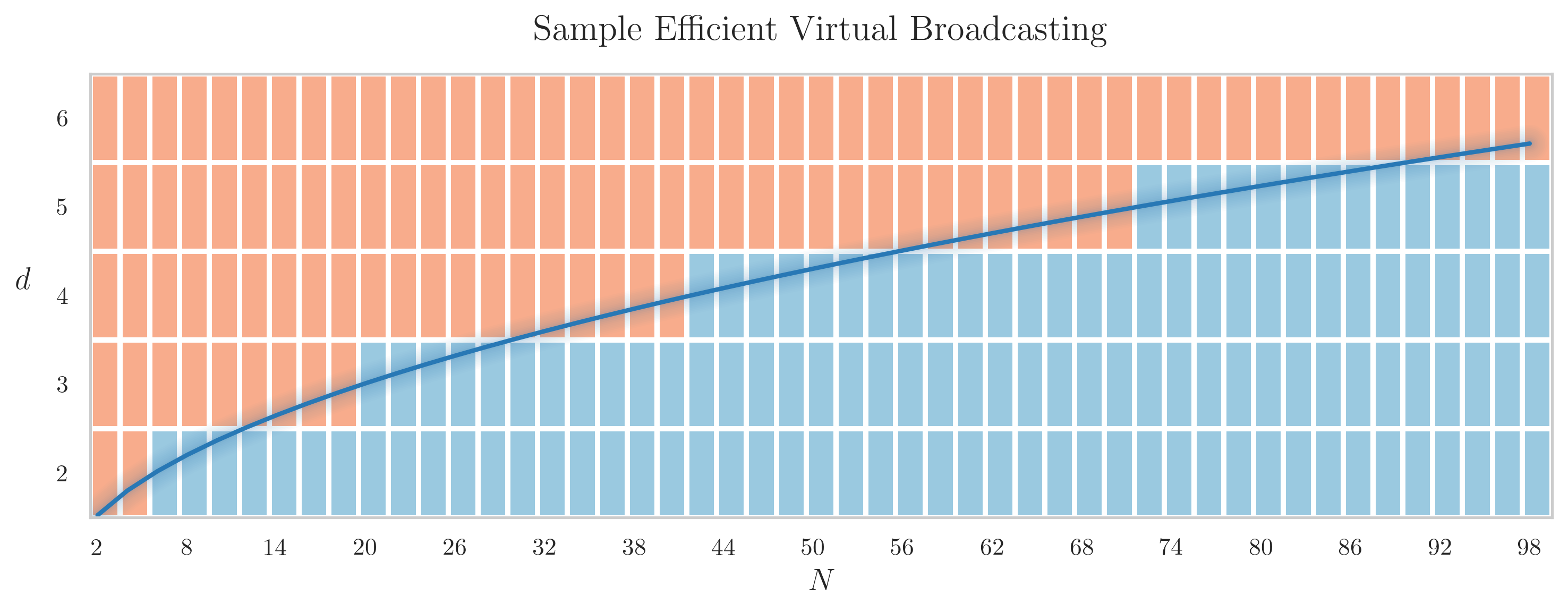}
    \caption{(Color online) \textbf{Sample Efficient Virtual Broadcasting}.
    Numerical illustration of the relation between $d$ and $N$ in Eq.~\eqref{eq:Exact_VBC_SE}.
    The red region corresponds to parameter regimes where 1-to-$N$ virtual broadcasting with system dimension $d$ violates the sample efficiency (SE) condition, while the blue region corresponds to regimes where the SE condition holds.}
    \label{fig:efficient_virtual_broadcasting}
\end{figure}

To simplify the discussion, we focus on the deterministic case $p=1$. 
In this setting, the SE condition (see Eq.~\eqref{eq:n_prob_N}) becomes
\begin{align}\label{eq:N_d_p_1}
    \left(\frac{(2N-1)d-(N-1)}{d+(N-1)}\right)^2<N.
\end{align}
This condition is equivalent to requiring that
\begin{align}
    -\frac{(\sqrt{N}-1)^2}{2\sqrt{N}-1}
    <d<
    \frac{(\sqrt{N}+1)^2}{2\sqrt{N}+1}.
\end{align}
Since the left-hand side is smaller than 1, the condition becomes trivial, yielding the following simplified form.
\begin{align}\label{eq:Exact_VBC_SE}
d<\frac{(\sqrt{N}+1)^2}{2\sqrt{N}+1}.
\end{align}
Note that the right-hand side of the above inequality increases monotonically with $N$. 
For any fixed system dimension $d$, a sufficiently large $N$ always allows virtual broadcasting while satisfying the sample efficiency (SE) condition. 
For example, in the qubit case, i.e., $d=2$, Eq.~\eqref{eq:Exact_VBC_SE} implies the condition $N>3+2\sqrt{2}$. 
Since $N$ represents the number of receivers in virtual broadcasting and must be an integer, the smallest admissible value is $N=6$.
Consequently, the SE condition rules out 1-to-2, 1-To-3, 1-To-4, and 1-To-5
broadcasting for qubit systems. 
By contrast, virtual broadcasting becomes feasible for any 1-to-$N$ scenario with $N\geqslant6$. 
This illustrates a striking feature of the SE constraint: small-scale broadcasting is excluded, whereas sufficiently large-scale virtual broadcasting remains permitted.
A similar behavior arises in higher dimensions. 
For $d=3$, the constraint becomes $N>10+4\sqrt{6}$, yielding $N\geqslant20$.
For $d=4$, one obtains $N>21+12\sqrt{3}$, corresponding to $N\geqslant42$. 
Additional examples can be obtained directly from Eq.~\eqref{eq:Exact_VBC_SE} and are illustrated in Fig.~\ref{fig:efficient_virtual_broadcasting}.
We summarize this observation
\begin{align}\label{eq:Implies_Practical_BC}
    \nexists\,\,\text{1-to-2 Practical Broadcasting}
    \nRightarrow
    \nexists\,\,\text{1-to-$N$ Practical Broadcasting}.
\end{align}
Here 1-to-$N$ practical broadcasting does not generally imply 1-to-2 practical broadcasting, since tracing out additional systems does not reduce the number of samples required to realize the virtual operation.
A numerical illustration of the relation between $d$ and $N$ in Eq.~\eqref{eq:Exact_VBC_SE} is shown in Fig.~\ref{fig:efficient_virtual_broadcasting}.

%On the other hand, for a given success probability $p$ and system dimension $d$, the minimal number of receivers $N$ required for PBC to become practical, namely, to satisfy SE, can be determined from Eq.~\eqref{eq:n_prob_N}. 
%To this end, we first reformulate Eq.~\eqref{eq:n_prob_N}.
%\begin{align}\label{eq:n_prob_N_re}
%    \frac{1}{p^2}
%    \left(\frac{(2d-1)N-(d-1)}{N+(d-1)}\right)^2<N.
%\end{align}
%Observe that, in the above expression, both the numerator and the denominator within the parentheses are strictly positive. 
%It follows that

%%%%%%%%%%%%%%%%%%%%%%%%%%%%%%%%%%%%%%%%%%%%%%%%%%%%%%%%%%%%%%%%%%%%%%%%%%%%%%%%%%%%%%%%%%%%%%%%%%%%%%%%%%%%%%%%%%%%%%

\subsection{Closing Remarks}
\label{subsec:Closing_Remarks}

We conclude by examining which properties can be preserved in the construction of practical virtual broadcasting and which cannot, thereby uncovering the fundamental connections among them.

\begin{figure}
    \centering
    \includegraphics[width=1\linewidth]{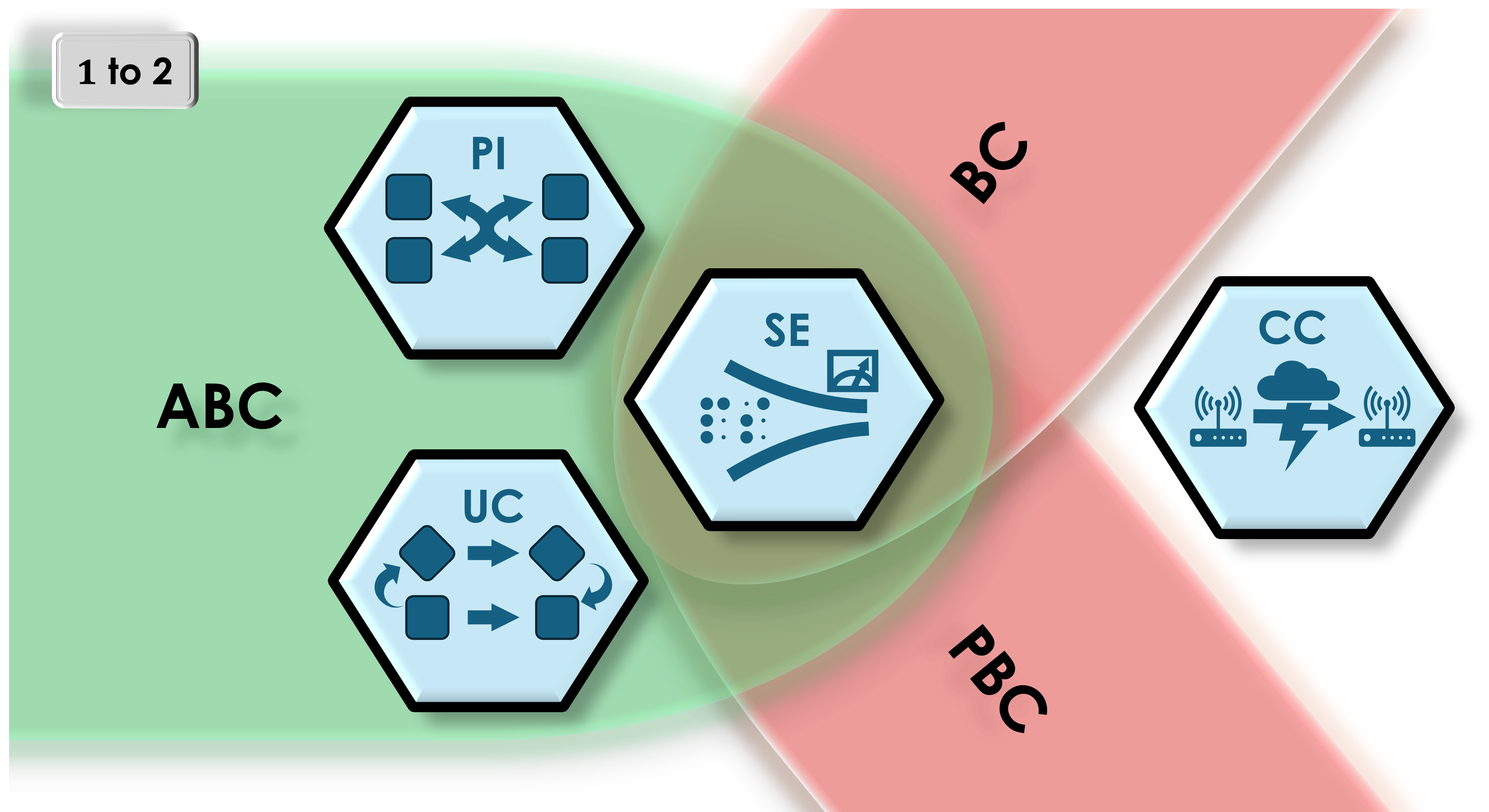}
    \caption{(Color online) \textbf{Virtual Broadcasting and Associated Properties (1-to-2 Case)}.
    For 1-to-2 virtual broadcasting, no practical protocol exists for deterministic cases, namely, no protocol satisfies SE (see Def.~\ref{def:Sample_Efficiency}), for either exact broadcasting (BC) or probabilistic broadcasting (PBC). 
    By contrast, when errors are allowed, approximate virtual broadcasting (ABC) can satisfy SE. Although BC lies at the intersection of ABC and PBC, the diagram is not intended as a Venn diagram and BC is therefore not depicted within either region. 
    Red indicates incompatibility between properties, whereas green denotes compatibility. 
    Beyond SE, ABC can additionally be compatible with UC (see Def.~\ref{def:Unitary_Covariance}) and PI (see Def.~\ref{def:Permutation_Invariance}).}
    \label{fig:Set_1to2}
\end{figure}

So far, we have considered three types of broadcasting maps: 
standard broadcasting, approximate broadcasting, and probabilistic broadcasting, which we denote by BC, ABC, and PBC, respectively. 
Clearly, BC constitutes the common core of both ABC and PBC, namely
\begin{align}
    \text{BC}\subset\text{ABC},
    \quad\text{and}\quad
    \text{BC}\subset\text{PBC}.
\end{align}
Indeed, ABC reduces to BC when the error parameters vanish, while PBC recovers BC when the success probability is set to $p=1$.

The four conditions typically considered are SE (see Def.~\ref{def:Sample_Efficiency}), UC (see Def.~\ref{def:Unitary_Covariance}), PI (see Def.~\ref{def:Permutation_Invariance}), and CC (see Def.~\ref{def:Classical_Consistency});
their precise definitions are given in Subsec.~\ref{subsec:Conditions}. 
In previous work~\cite{z2pr-zbwl,8g6j-w7ld} by some of the present authors, it was shown that for general linear maps no transformation can simultaneously satisfy all four conditions (see Thm.~\ref{thm:no_Qbroadcasting}). 
This indicates that the full set of constraints is overly restrictive.
One must therefore either relax some of them or focus on a smaller subset that captures the essential operational requirements.

For broadcasting tasks, BC itself is indispensable. 
At the same time, without the SE condition, broadcasting cannot be regarded as operationally meaningful.
It is therefore natural to begin with the minimal set consisting of SE and BC. 
However, even under these two conditions, we find that 1-to-2 virtual broadcasting already fails to be practical (see Thm.~\ref{thm:no_Vbroadcasting}). 
This observation motivates the exploration of two alternative routes, namely ABC (see Sec.~\ref{sec:AQB}) and PBC (see Sec.~\ref{sec:PQB}).

Turning to ABC, we find that it remains compatible with the SE constraint even in the 1-to-2 setting. 
Remarkably, the optimal approximate virtual broadcasting map also satisfies the UC condition (see Cor.~\ref{cor:UC_Optimality}).
Furthermore, when the error tolerances on the two receivers are chosen to be identical, the protocol automatically fulfills the PI condition as well. 
Thus, for 1-to-2 virtual broadcasting there exist quantum dynamics that simultaneously satisfy ABC, SE, UC, and PI, as demonstrated in Fig.~\ref{fig:Set_1to2}. 
The analysis can be extended to more general 1-to-$N$ configurations, which we investigate in Subsec.~\ref{subsec:N_AVB}.

\begin{figure}
    \centering
    \includegraphics[width=1\linewidth]{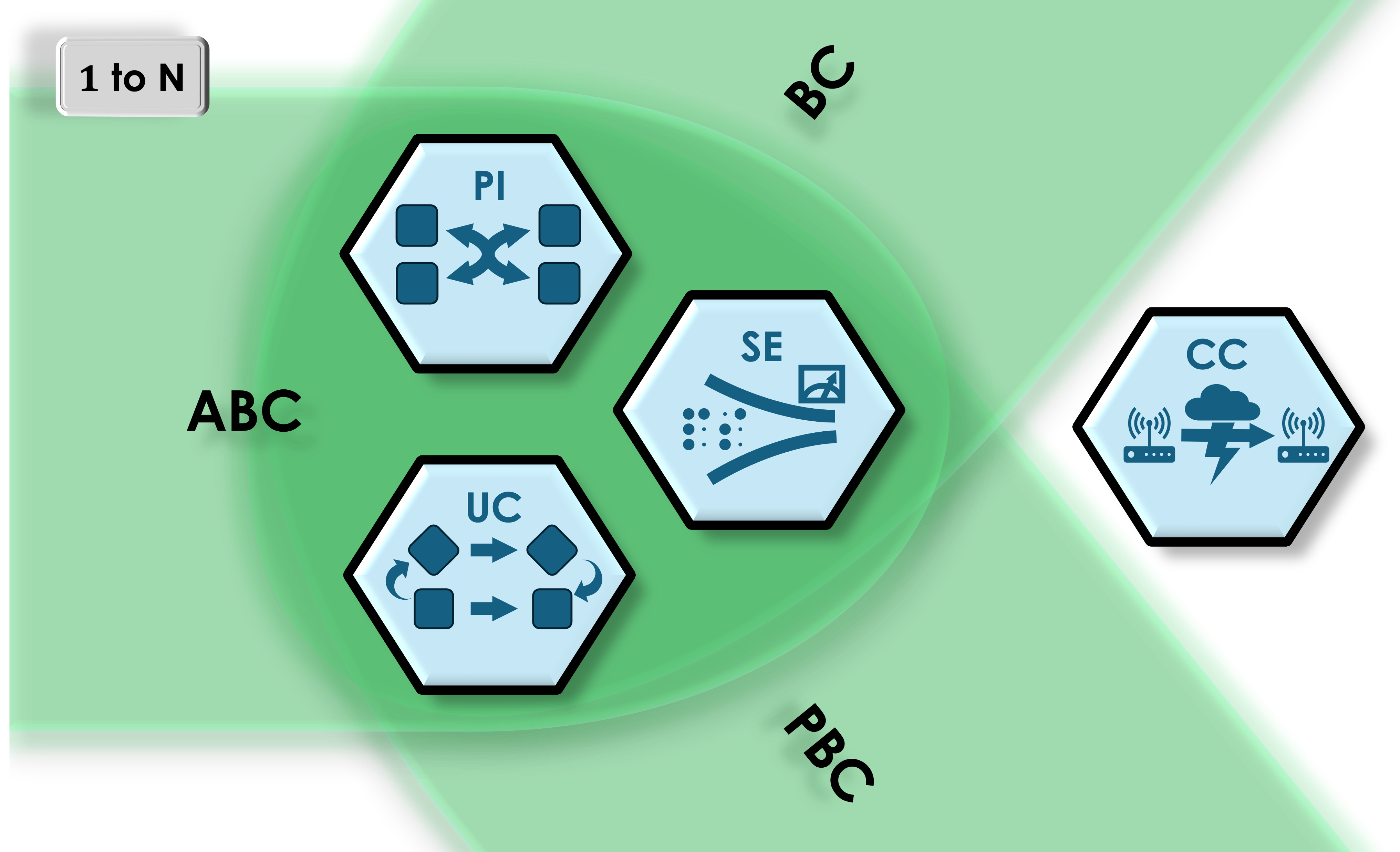}
    \caption{(Color online) \textbf{Virtual Broadcasting and Associated Properties (1-to-N Case)}.
    For 1-to-$N$ virtual broadcasting, once the number of receivers $N$ exceeds the threshold determined by Eq.~1, all classes of protocols, including exact (BC), approximate (ABC), and probabilistic (PBC) broadcasting, become compatible with SE (see Def.~\ref{def:Sample_Efficiency}), UC (see Def.~\ref{def:Unitary_Covariance}), and PI (see Def.~\ref{def:Permutation_Invariance}). CC (see Def.~\ref{def:Classical_Consistency}), however, remains incompatible.
    }
    \label{fig:Set_1toN}
\end{figure}

A different picture emerges for PBC, where the structure becomes more constrained.
We have shown that if a probabilistic broadcasting protocol exists, the success probability must be independent of the input state of the local system (see Lem.~\ref{lem:Local_Success_Probability}) and must also be identical for all receivers (see Lem.~\ref{lem:Global_Success_Probability}).
Focusing on the 1-to-2 case, we prove that no practical PBC protocol can be compatible with the SE condition (see Thm.~\ref{thm:No_Practical_Probabilistic_Virtual_Broadcasting}). 
Strikingly, unlike many other tasks in quantum information processing, introducing probabilistic protocols does not improve practicality; 
instead, it further restricts the practicality of virtual broadcasting (see Lem.~\ref{lem:SC_PVP}).
The situation changes, however, once the number of receivers becomes sufficiently large (see Eq.~\ref{eq:Exact_VBC_SE}). 
In that regime probabilistic broadcasting can become feasible, revealing a counterintuitive feature when sample complexity is taken into account. 
Although 1-to-2 broadcasting is impossible for any system dimension and any success probability, larger configurations can overcome this limitation. For example, in the qubit case even deterministic protocols become feasible once the number of receivers reaches $N=6$. 
This behavior suggests that the fundamental constraints of quantum theory suppress small-scale distribution of quantum information while allowing sufficiently large-scale virtual broadcasting.
Note that, the optimal PBC protocol can, without loss of generality, be taken to satisfy the UC (see Lem.~\ref{lem:Probabilistic_Dual_Formulation_TTI}) and PI (see Lem.~\ref{lem:Global_Success_Probability}) conditions, as illustrated in Fig.~\ref{fig:Set_1toN}.

%%%%%%%%%%%%%%%%%%%%%%%%%%%%%%%%%%%%%%%%%%%%%%%%%%%%%%%%%%%%%%%%%%%%%%%%%%%%%%%%%%%%%%%%%%%%%%%%%%%%%%%%%%%%%%%%%%%%%%

\end{document}